\definecolor{grey}{rgb}{0.5, 0.5, 0.5}
\definecolor{textgreen}{rgb}{0, 0.5, 0.2}
\definecolor{highlightgreen}{rgb}{0.6, 0.95, 0.6}
\crefname{equation}{}{}
\DeclareMathAlphabet{\mathcal}{LS1}{stixscr}{m}{n}
\newcolumntype{M}[1]{>{\raggedright}m{#1}}
\newcolumntype{L}[1]{>{\raggedright\let\newline\\\arraybackslash\hspace{0pt}}m{#1}}
\newcolumntype{C}[1]{>{\centering\let\newline\\\arraybackslash\hspace{0pt}}m{#1}}
\newcolumntype{R}[1]{>{\raggedleft\let\newline\\\arraybackslash\hspace{0pt}}m{#1}}
\newcolumntype{a}{>{$}c<{$}}
\let\epsilon\varepsilon
\let\phi\varphi
\let\emptyset\varnothing
\let\hat\widehat
\let\tilde\widetilde
\newcommand{\ZZ}{\mathbb{Z}}
\newcommand{\CC}{\mathbb{C}}
\renewcommand{\AA}{\mathbb{A}}
\newcommand{\PP}{\mathbb{P}}
\def\co{\colon\thinspace}
\def\coeq{\colonequals}
\def\eqco{\equalscolon}
\let\div\Div
\DeclareMathOperator{\Id}{Id}
\newcommand{\pp}[1]{\frac{\partial}{\partial{#1}}}
\def\pdz{\frac{\partial}{\partial z}}
\DeclareMathOperator{\Spec}{Spec}
\DeclareMathOperator{\End}{End}
\DeclareMathOperator{\Aut}{Aut}
\DeclareMathOperator{\GL}{GL}
\DeclareMathOperator{\Gr}{Gr}
\DeclareMathOperator{\id}{Id}
\DeclareMathOperator{\Hom}{Hom}
\DeclareMathOperator{\intHom}{\underline{Hom}}
\DeclareMathOperator{\intEnd}{\underline{End}}
\DeclareMathOperator{\intAut}{\underline{Aut}}
\DeclareMathOperator{\coker}{coker}
\DeclareMathOperator{\tr}{tr}
\DeclareMathOperator{\str}{str}
\DeclareMathOperator{\sym}{sym_1}
\DeclareMathOperator*{\Ber}{Ber}
\DeclareMathOperator{\red}{red}
\newtheorem{thm}{Theorem}[section]
\newtheorem{lem}[thm]{Lemma}
\newtheorem{prop}[thm]{Proposition}
\newtheorem{cor}[thm]{Corollary}
\newtheorem{clm}[thm]{Claim}
\theoremstyle{definition}
\newtheorem{defn}[thm]{Definition}
\theoremstyle{definition}
\theoremstyle{definition}
\theoremstyle{remark}
\theoremstyle{remark}
\newtheorem*{thm*}{Theorem}
\newtheorem*{lem*}{Lemma}
\newtheorem*{prop*}{Proposition}
\newtheorem*{cor*}{Corollary}
\newtheorem*{clm*}{Claim}
\newtheorem*{cnj*}{Conjecture}
\newtheorem*{ass*}{Assertion}
\theoremstyle{definition}
\newtheorem*{defn*}{Definition}
\theoremstyle{definition}
\newtheorem*{exmp*}{Example}
\theoremstyle{definition}
\newtheorem*{exr*}{Exercise}
\theoremstyle{remark}
\newtheorem*{rem*}{Remark}
\theoremstyle{remark}
\newtheorem*{goal*}{Goal}
\title{The super Mumford form and Sato Grassmannian}
\author{Katherine A. Maxwell \footnote{\noindent{\href{mailto:maxwe204@umn.edu}{maxwe204@umn.edu}\\\indent\indent Max Plank Institute for Mathematics, 53111 Bonn, Germany.}}}
\date{January 16, 2022}
\begin{document}
	\begin{CJK}{UTF8}{min}

\maketitle
\begin{abstract}
We describe a supersymmetric generalization of the construction of
Kontsevich and Arbarello, De Concini, Kac, and Procesi, which utilizes a relation between the
moduli space of curves with the infinite-dimensional Sato Grassmannian. Our main result is the existence of a flat holomorphic connection on the line bundle $\lambda_{3/2}\otimes\lambda_{1/2}^{-5}$ on the moduli space of triples: a super Riemann surface, a Neveu-Schwarz puncture, and a formal coordinate system. We also prove a superconformal Noether normalization lemma for families of super Riemann surfaces.
\end{abstract}  

\tableofcontents

\section{Introduction}

The constant $c_j=6j^2-6j+1$, proportional to the 2\textsuperscript{nd} Bernoulli polynomial, arises independently in studying representations of the Virasoro algebra and by studying the geometry of the moduli space of curves. Manin \cite{Manin--1986} conjectured based on this numerical coincidence that there exists a direct connection between these seemingly independent mathematical areas. This conjecture was shown to be true in the simultaneous publications \cite{Arbarello-DeConcini-Kac-Procesi--1988,Beilinson-Shechtman--1988,Kawamoto-Namikawa-Tsuchiya-Yamada--1988,Kontsevich--1987e}, which was important to the quantum theory of (bosonic) strings, as it unified path integral quantization and operator quantization.

The numerical coincidence goes as follows. The Witt algebra has natural representations $\varrho_j$ defined by Lie derivative action on $\mathbb{C}(\!(z)\!)dz^{\otimes j}$. The Japanese cocycle $\eta$ on $\mathfrak{gl}(\mathbb{C}(\!(z)\!))$ by restriction to the Witt algebra induces the unique central extension of the Witt algebra, known as the Virasoro algebra. Then we find
\begin{align*}
	\varrho_j^*(\eta)=c_j\varrho_1^*(\eta), && c_j=6j^2-6j+1.
\end{align*}
When the Virasoro algebra is realized as operators on the state space of a string, the string must propagate in in $2c_j$ (real) spacetime dimensions for both unitarity and Lorentz invariance of the string theory to hold. 
On the other hand, for any proper family of curves $\pi\co X\to S$, the Mumford isomorphism is the isomorphism of line bundles 
\begin{align*}
	\lambda_j\cong \lambda_1^{c_j},
\end{align*}
where $\lambda_j\coeq \det R \pi_*(\omega_{X/S}^{\otimes j})$ and $\omega_{X/S}\coeq \Omega_{X/S}^1$. If $S$ is the moduli space of curves and $\pi\co X \to S$ is the universal curve, 
then we find the dualizing sheaf $\omega_S\coeq\det\Omega^1_S\cong\lambda_2$, and therefore $\omega_S\cong \lambda_1^{c_j}\cong(\pi_*\omega_{X/S})^{c_j}$. This is important to Polyakov path integration on the moduli space.

An alternative description of the Polyakov measure comes through the connection between representations of the Virasoro algebra and the moduli space; this connection is roughly the following.
The Virasoro algebra acts on the moduli space $\mathcal{M}_{g,1^{\infty}}$ of triples $(C,p,z)$, a Riemann surface, an point, and a parameter. This action can be used to show that there exists a flat holomorphic connection on the line bundle $\lambda_j\otimes \lambda_1^{-c_j}$, which can be used to write differential equations for the Polyakov measure. For another summary, see the introduction of \cite{Manin--1988}.

In fact, \textcite{Manin--1986} hypothesized this relationship existed in the super setting as well. The numerical coincidence is analogous: The super Witt algebra has representations $\varrho_j$ defined by Lie derivative action on $\mathbb{C}(\!(z)\!)[\zeta]\,[dz|d\zeta]^{\otimes j}$ such that pulling back the super Japanese cocycle $\eta$ (which defines the unique central extension: the Neveu-Schwarz superalgebra) gives
\begin{align*}
	\varrho_j^*(\eta)=c_j\varrho_1^*(\eta), && c_j=-(-1)^j(2j-1).
\end{align*}
And for any proper family of supercurves $\pi\co X\to S$, the super Mumford isomorphism is the isomorphism of line bundles 
\begin{align*}
	\lambda_{j/2}\cong \lambda_{1/2}^{c_j},
\end{align*}
where $\lambda_{j/2}\coeq \Ber R\pi_*(\omega^{\otimes j}_{X/S})$ and $\omega_{X/S}=\Ber\Omega_{X/S}^1$.

The goal of the paper is to describe a supersymmetric generalization of the construction of \textcite{Kontsevich--1987e} and \textcite{Arbarello-DeConcini-Kac-Procesi--1988}, which utilizes a relation between the moduli space of curves with the infinite-dimensional Grassmannian. The consequences of their construction shows that the Chern classes of the line bundles of the Mumford isomorphism are equal. Manin proved a supersymmetric generalization of this result in \cite[Theorem 3.3]{Manin--1988} by generalizing
the methods of \textcite{Beilinson-Shechtman--1988}, which prove a version of the Riemann-Roch theorem for the Atiyah algebras of vector bundles. Instead, our paper extends the results of \textcite{Ueno-Yamada--1988} of the representations of the Neveu-Schwarz superalgebra and the results of \textcite{Mulase-Rabin--1991} and \textcite{AlvarezVazquez-MunozPorras-PlazaMartin--1998} of the super Sato Grassmannian and the super Krichever map. Our method depends on showing that the Lie superalgebroid of superconformal vector fields on a family of open super Riemann surfaces is perfect, which follows from a superconformal Noether normalization lemma for families of super Riemann surfaces.

Our use of the super Grassmannian $\Gr\big(\mathbb{C}(\!(z)\!)[\zeta]\,[dz|d\zeta]^{\otimes j}\big)$ appears to provide an alternative approach for integrating over the moduli space of SRSs $\mathfrak{M}_g$. The Torelli map sending a super Riemann surface to its Jacobian $J\co\mathfrak{M}_g \to \mathfrak{A}_g$, where $\mathfrak{A}_g$ is the moduli space of principally polarized abelian supervarieties, plays a prominent role in the papers \cite{DHoker-Phong--2002i,DHoker-Phong--2002ii,DHoker-Phong--2002iii,DHoker-Phong--2002iv} of D'Hoker and Phong and \cite{Grushevsky--2009} of Grushevsky, who compute explicitly or propose an ansatz for chiral superstring measure in low genus. 
In higher genus, these methods face the problem that the locus of moduli space inside $\mathfrak{A}_g$, known as the super Jacobian locus, is very hard to describe. The known characterizations of the super Jacobian locus, i.e. solutions to the super Schottky problem, are very implicit. Recently, the super Schottky problem has been studied in \cite{Felder-Kazhdan-Polishchuk--2019X} and \cite{Codogni-Viviani--2019}. However, since the known solution of the super Schottky problem, as in \textcite{Mulase--1991}, goes through the super Krichever map $\mathfrak{M}_g \to \Gr(\mathbb{C}(\!(z)\!)[\zeta]\,[dz|d\zeta]^{\otimes j})$, and since a more explicit description of the moduli space locus is an orbit of the super Witt algebra under the super Krichever map, it appears that the computation of physically meaningful integrals, such as scattering amplitudes, can be carried out in the super Grassmannian, rather than on the moduli space $\mathfrak{A}_g$. This idea does not seem to have been utilized in the classical (i.e. non-super) case. 

We note that the super Mumford isomorphism for a compactification of $\mathfrak{M}_g$ has recently been studied in the important paper \cite{Felder-Kazhdan-Polishchuk--2020X}.

In \cref{Lie algebroids}, we recall details of Lie superalgebroids, which are used primarily in \cref{flat section}. In \cref{moduli section}, we define the moduli space of triples $\mathfrak{M}_{g,1^\infty_\text{NS}}$: a super Riemann surface, a Neveu-Schwarz puncture, and a formal coordinate system. The main new results are a superconformal Noether normalization over a family in \cref{superconformalNoether}, \cref{commutant}, and the action of the Witt superalgebra by vector fields in \cref{switt action}. In \cref{Gr section}, we define the super Sato Grassmannian $\Gr\big(\mathbb{C}(\!(z)\!)[\zeta]\,[dz|d\zeta]^{\otimes j}\big)$; the main result is the action of the central extension of the general linear superalgebra on the Berezinian line bundle in \cref{diff operators on det}. Finally, the main results of this paper are found in \cref{flat section}, where we state our definition of the super Krichever map in \cref{super Krichever}, find the action of the Neveu-Schwarz superalgebra in \cref{ns action}, and find the existence of a flat holomorphic connection in \cref{flat}. Finally, \cref{classical} contains no new results, but is meant to aid the reader by restating the previous results of the classical setting.

\section{Preliminaries on Lie superalgebroids}\label{Lie algebroids}

We provide the needed background information about Lie superalgebroids and specifically Atiyah superalgebras. Much of the information in this section is based on \cite{KosmannSchwarzbach-Mackenzie--2002}. Lie superalgebroids are also covered in  \cite[Section 3.6]{Manin--1988}.

\subsection{Action Lie superalgebroids}

We recall the definition of a Lie superalgebra. The difference from a classical Lie algebra is the supercommutator, in other words the rule of signs. 
\begin{defn}
	A \emph{Lie superalgebra} is a super vector space $\mathfrak{g}$ with a morphism $[\;,\;]\co\mathfrak{g}\otimes\mathfrak{g}\to\mathfrak{g}$ satisfying
	\begin{itemize}
		\item  $[x,y]+(-1)^{|x||y|}[y,x]=0$
		\item the super Jacobi identity $(-1)^{|x||z|}[x,[y,z]]+(-1)^{|y||x|}[y,[z,x]]+(-1)^{|z||y|}[z,[x,y]]=0$
	\end{itemize}
\end{defn}
In what follows, one may easily determine the classical statements by omitting the sign rule. We also cite sources for the classical versions of these statements.

 A Lie superalgebroid is simply the many-object generalization of a Lie superalgebra. That is, a Lie superalgebroid over a point is a Lie superalgebra. 
 
 We prefer to work with locally free sheaves, and so denote Lie superalgebroids in script as $\mathcal{E}$. If we work with instead the associated super vector bundle, we denote the Lie superalgebroid in roman as $E$. Note that $TS$ denotes the tangent bundle and $\mathcal{T}_S$ the tangent sheaf of $S$.
 
\begin{defn}[cf. {\cite[Definition 1.3]{KosmannSchwarzbach-Mackenzie--2002}}]
	A Lie superalgebroid on a complex supermanifold $S$ is a triple $(\mathcal{E},a,[\;,\;])$ such that
	\begin{itemize}
		\item $\mathcal{E}$ is the sheaf of sections of $E\to S$ a super vector bundle 
		\item $a\co E\to TS$ is a (parity preserving) super vector bundle morphism, called the anchor map, where $TS$ is the tangent bundle of $S$
		\item $[\; ,\; ] : \Gamma(S,\mathcal{E}) \times \Gamma(S,\mathcal{E}) \to \Gamma(S,\mathcal{E})$ is a $\mathbb{C}$-bilinear alternating bracket which satisfies for all $X,Y\in\Gamma(S,\mathcal{E})$ and $f\in\Gamma(S,\mathcal{O}_S)$
		\subitem $\bullet$\; the super Jacobi identity
		\subitem $\bullet$\; $a([X, Y ]) = [a(X), a(Y )]$
		\subitem $\bullet$\; $[X, f Y ] = (-1)^{|f||X|}f [X, Y ] + a(X)(f )Y$
	\end{itemize}

	Given Lie superalgebroids $(\mathcal{E},a)$ and  $(\mathcal{F},b)$ on the same base $S$, a Lie superalgebroid morphism $\phi$ is a (parity preserving) super vector bundle morphism $\phi \co E \to F$ over $S$ such that for all $X, Y \in \Gamma(S,\mathcal{E})$
	\begin{itemize}
		\item $b\circ \phi = a$
		\item $\phi([X, Y ]) = [\phi(X), \phi(Y )]$
	\end{itemize}
\end{defn}

In this paper we will need use of two specific types of Lie superalgebroids: action Lie superalgebroids and Atiyah superalgebras.

\begin{defn}[cf. {\cite[last paragraph page 8]{KosmannSchwarzbach-Mackenzie--2002}}]\label{action superalgebroid def}
	The action Lie superalgebroid 
	$(\mathcal{G},a,[\;,\;])$ 
	associated to a Lie superalgebra morphism $\phi\co\mathfrak{g}\to\Gamma(S,TS)$ 
	is given by 
	\begin{itemize}
		\item $\mathcal{G}$ is the sheaf of sections of $S\times\mathfrak{g}\to S$, the trivial super vector bundle with fiber $\mathfrak{g}$
		\item anchor map $a\co S\times \mathfrak{g}\to TS$ defined by $a(s,X)\coeq\phi(X)(s) $
		\item bracket for all $V,W\in\Gamma(S,S\times\mathfrak{g})$ defined by 
		\begin{align}
			[V,W]\coeq\mathcal{L}_{\hat{\phi}(V)}(W)-(-1)^{|V||W|}\mathcal{L}_{\hat{\phi}(W)}(V)+[V,W]_{\mathfrak{g}}
		\end{align} 
	where $\hat{\phi}\co \Gamma(S,S\times\mathfrak{g})\to\Gamma(S,TS)$ is defined by $\hat{\phi}(V)(s)\coeq\phi(V(s))(s)$ and $\mathcal{L}$ is the Lie derivative.
	\end{itemize} 
\end{defn}
To compare the map $\phi$ and $\hat{\phi}$, the map $\phi$ assigns a global vector field on $S$ to each $X\in\mathfrak{g}$, however, the map $\hat{\phi}$ allows for the $X\in\mathfrak{g}$ to vary over $S$ before mapping to the corresponding vectors on $S$.

If $V,W\in\Gamma(S,S\times\mathfrak{g})$ are constant sections corresponding to $X,Y\in\mathfrak{g}$ respectively, then $[V,W]=[X,Y]_{\mathfrak{g}}$, the constant section of $S\times\mathfrak{g}$ associated to $[X,Y]\in\mathfrak{g}$. This property along with definition of the anchor map are enough to characterize the action Lie superalgebroid.

\subsection{Atiyah Lie superalgebroids}\label{Atiyah section}

The Atiyah Lie superalgebroid $\mathcal{A}_L$ of a line bundle $L$ on a supermanifold
$S$ may be regarded as the Lie superalgebroid of infinitesimal symmetries of the pair $(S,L)$. For historical reasons, what we call the Atiyah Lie superalgebroid may be called the Atiyah superalgebra in other literature.

First, recall that a first order differential operator on a super vector bundle $E$ is a map $D\co \Gamma(U,E)\to\Gamma(U,E)$ such that 
\begin{align*}
	[D,f](e)=D(fe)-(-1)^{|D||f|}f D(e) && [[D,f],g]=0
\end{align*}
for all $e\in\Gamma(U,E)$ and functions $f,g$.

\begin{defn}[cf. {\cite[Theorem 1.4]{KosmannSchwarzbach-Mackenzie--2002}}] \label{Atiyah line def}
	Let $L$ be a line bundle on a complex supermanifold $S$, that is a rank $1|0$ or $0|1$ vector bundle. 
	
	The Atiyah Lie superalgebroid $\mathcal{A}_L$ is the Lie superalgebroid on $S$
	of order 1 operators on the line bundle $L$.\footnote{For higher rank super vector bundles, we must also require that the symbol map of sections of the Atiyah Lie superalgebroid is a scalar. In general, a first order first order differential operator takes values in $\mathcal{T}_S\otimes \End(E)$.}
	It is a Lie superalgebra extension of $\mathcal{T}_S$ with a compatible left $\mathcal{O}_S$-module structure:
	\begin{align}\label{Atiyah}
		0\to \mathcal{O}_S\to \mathcal{A}_L\xrightarrow{\sym} \mathcal{T}_S\to 0,
	\end{align}
	where the anchor map is the symbol map 
	defined as $\sym(D)(df)=[D,f]$.
\end{defn}

We will make use of the following correspondence.

\begin{prop}[cf. {\cite[Theorem 2.4]{KosmannSchwarzbach-Mackenzie--2002}}]\label{Lie alg algoid rep corrsp}
	Consider the action Lie superalgebroid $\mathcal{G}$ associated to a Lie superalgebra morphism $\phi\co\mathfrak{g}\to\Gamma(S,TS)$. Separately consider a line bundle $L\to S$.
	
	There is a bijection between 
	\begin{itemize}
		\item Lie superalgebra morphisms $\mathfrak{g}\to \Gamma(S,\mathcal{A}_L)$
		
		\item Lie superalgebroid morphisms $\mathcal{G}\to \mathcal{A}_L$
	\end{itemize}
\end{prop}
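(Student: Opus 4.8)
The plan is to exploit the fact that the action Lie superalgebroid $\mathcal{G}$ is, up to the anchor and bracket structure, just the free $\mathcal{O}_S$-module on the fixed super vector space $\mathfrak{g}$; so a Lie superalgebroid morphism $\Phi\colon\mathcal{G}\to\mathcal{A}_L$ is determined by its restriction to the constant sections, i.e.\ by an $\mathcal{O}_S$-linear extension of a map $\mathfrak{g}\to\Gamma(S,\mathcal{A}_L)$. First I would set up the two maps of the bijection. Given a Lie superalgebroid morphism $\Phi\colon\mathcal{G}\to\mathcal{A}_L$, define $\psi\colon\mathfrak{g}\to\Gamma(S,\mathcal{A}_L)$ by $\psi(X)\coeq\Phi(V_X)$, where $V_X$ is the constant section of $S\times\mathfrak{g}$ associated to $X$. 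Conversely, given a Lie superalgebra morphism $\psi\colon\mathfrak{g}\to\Gamma(S,\mathcal{A}_L)$, define $\Phi$ on a general section $V\in\Gamma(S,S\times\mathfrak{g})$, which can be written locally as a finite sum $V=\sum_i f_i V_{X_i}$ with $f_i\in\mathcal{O}_S$ and $X_i\in\mathfrak{g}$, by the forced formula $\Phi(V)\coeq\sum_i (-1)^{|f_i||X_i|} f_i\,\psi(X_i)$ — this is the unique $\mathcal{O}_S$-linear (in the graded sense) extension. I would check this is well defined (independent of the chosen decomposition) by $\mathcal{O}_S$-linearity of both sides, and that it is parity preserving.

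Next I would verify that these two assignments are mutually inverse, which is immediate on constant sections and then follows everywhere by $\mathcal{O}_S$-linearity. The substance is checking that each construction lands in the right class of morphisms. Starting from $\Phi$: restricting a Lie superalgebroid morphism to constant sections automatically respects the bracket, since on constant sections $[V_X,V_Y]_{\mathcal{G}}=V_{[X,Y]_{\mathfrak{g}}}$ (noted in the excerpt right after Definition \ref{action superalgebroid def}), so $\psi$ is a Lie superalgebra morphism. Starting from $\psi$: I must show the extended $\Phi$ is compatible with the anchor maps and with the brackets. For the anchor, $\sym\circ\Phi = a$ holds on constant sections because $\sym(\psi(X))$ and $a(V_X)=\hat\phi(V_X)=\phi(X)$ must agree — here I need the input datum that $\psi$ is compatible with the given $\phi$, i.e.\ $\sym(\psi(X))=\phi(X)$; this is part of what it means for $\psi$ to be "a Lie superalgebra morphism $\mathfrak{g}\to\Gamma(S,\mathcal{A}_L)$ over $\phi$," and then it extends by $\mathcal{O}_S$-linearity since both $\sym$ and $a$ are $\mathcal{O}_S$-linear. (If the statement intends $\psi$ to be an arbitrary Lie superalgebra map not a priori over $\phi$, then one observes $\sym\circ\psi$ is itself a Lie superalgebra map $\mathfrak{g}\to\mathcal{T}_S$ and the bijection is naturally indexed by this choice; I would add a sentence clarifying this.)

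The bracket compatibility of $\Phi$ is the only real computation and the main obstacle. I would reduce it, by bilinearity, to checking $\Phi([f V_X, g V_Y]_{\mathcal{G}}) = [\Phi(f V_X),\Phi(g V_Y)]_{\mathcal{A}_L}$ for $f,g\in\mathcal{O}_S$ and $X,Y\in\mathfrak{g}$. I would expand the left side using the bracket formula in Definition \ref{action superalgebroid def} — the Lie-derivative terms produce $a(V_X)(g)=\phi(X)(g)$ and $a(V_Y)(f)=\phi(Y)(f)$ factors, and the $[\,,\,]_{\mathfrak{g}}$ term produces $fg\,V_{[X,Y]_{\mathfrak{g}}}$ — and then apply the $\mathcal{O}_S$-linear $\Phi$. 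I would expand the right side using the Leibniz/module axiom $[D,fD'] = (-1)^{|f||D|} f[D,D'] + a(D)(f)D'$ for the Atiyah superalgebra bracket, with $D=\psi(X)$, $D'=\psi(Y)$, using $a(\psi(X))=\phi(X)$. Collecting terms and using that $\psi$ is a Lie superalgebra morphism ($[\psi(X),\psi(Y)]=\psi([X,Y]_{\mathfrak{g}})$), the two sides match; the only delicate point is bookkeeping the Koszul signs $(-1)^{|f||X|}$ etc., which I expect to cancel exactly as in the classical case. This completes the bijection; I would close by remarking that the same argument works verbatim for the classical statement by dropping all signs, recovering \cite[Theorem 2.4]{KosmannSchwarzbach-Mackenzie--2002}.
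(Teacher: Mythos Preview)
Your proposal is correct and spells out exactly the standard argument; the paper's own ``proof'' is a single sentence deferring to \cite[Theorem 2.4]{KosmannSchwarzbach-Mackenzie--2002} and noting that nothing changes in the super setting, so your write-up is in fact more detailed than what the paper provides. Your flag about the compatibility condition $\sym\circ\psi=\phi$ is well taken --- the bijection as stated only makes sense for Lie superalgebra morphisms lying over the fixed $\phi$, and this is implicit in the paper's use of the proposition (e.g.\ in \cref{ns action}).
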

\begin{proof}
	The proof in the super setting is no different than the proof given in \cite{KosmannSchwarzbach-Mackenzie--2002}.
\end{proof}

In the standard sense, a holomorphic connection on a line bundle $L$ is a splitting of the Atiyah sequence  \cref{Atiyah} as a sequence of $\mathcal{O}_S$-modules, that is an $\mathcal{O}_S$-linear map $\nabla\co \mathcal{T}_S\to \mathcal{A}_L$ such that $\sym\circ\nabla=\id_{\mathcal{T}_S}$. The curvature $c_\nabla\in \Omega^2$ is defined by 
\begin{align*}
	c_{\nabla}(v, w)=\Big[\nabla(v),\nabla(w)\Big]-\nabla\Big([v,w]\Big).
\end{align*}
When $c_\nabla=0$, we say the connection $\nabla$ is flat or integrable.  
A connection is flat exactly when it is a morphism of Lie algebroids. If $L$ admits a flat connection, then we say $\mathcal{A}_L$ has the trivial Lie superalgebroid structure, which is that of $\mathcal{O}_S\oplus\mathcal{T}_S$.

For the remainder of this section we detail the properties of Atiyah superalgebras with respect to line bundles. Our main source is \cite[Section 1.1.5]{Beilinson-Shechtman--1988}. Another reference is \cite{Higgins-Mackenzie--1990}.

We first define three constructions of Lie superalgebroids using Atiyah Lie superalgebroids.

\begin{defn}\label{Atiyah defs} Let $\lambda\in\mathbb{C}$ and let $L_i\to S$ be line bundles.
	\begin{itemize} 
		\item Define $\lambda\mathcal{A}_L$ as the semi-direct product $\mathcal{O}_S\rtimes \mathcal{A}_L$ subject to the relations $(\lambda f, 0) = (0, f )$ for $f \in \mathcal{O}_S$. The anchor maps is $a(g,D)=\sym(D)$ and the bracket is $[(g,D),(g',D')]=[D,D']$.
		
		\item Define  $\mathcal{A}_{L_1}\times \mathcal{A}_{L_2}$ as the sheaf $\mathcal{A}_{L_1}\times_{\mathcal{T}_S} \mathcal{A}_{L_2}$ with anchor map $a(D_1, D_2)=a_1(D_1)=a_2(D_2)$ and bracket $\Big[(D_1,D_2),(D'_1,D'_2)\Big]=\Big([D_1,D'_1]_1,[D_2,D'_2]_2\Big)$.
		
		\item Define $\mathcal{A}_{L_1}\otimes \mathcal{A}_{L_2}$ as the semi-direct product $\mathcal{O}_S \rtimes(\mathcal{A}_{L_1} \times \mathcal{A}_{L_2} )$
		subject to the relations $(f + g, 0) = (0, (f, g))$ for $f,g\in\mathcal{O}_S$. The anchor and bracket are the appropriate combinations of those defined above.
	\end{itemize}
\end{defn}

There is a canonical map of Lie superalgebroids $\mathcal{A}_L \to \lambda\mathcal{A}_L$ which commutes as below.
\begin{equation}\label{Aityah scalar}
	\begin{tikzcd}
		0 \arrow{r} & \mathcal{O}_S \arrow{r}\arrow{d}{\lambda\id} & \mathcal{A}_L\arrow{r}\arrow{d}& \mathcal{T}_S\arrow{r}\arrow{d}{\id}& 0\\
		0 \arrow{r} & \mathcal{O}_S \arrow{r} & \lambda\mathcal{A}_L \arrow{r} & \mathcal{T}_S\arrow{r} & 0
	\end{tikzcd}
\end{equation}

While it is not clear that any of the above constructions yield another Atiyah Lie superalgebroid, we have the following result.

\begin{lem}[cf. {\cite[Lemma in Section 1.1.5]{Beilinson-Shechtman--1988}}]\label{Atiyah defs iso lemma}
	Let $L\to S$ be a line bundle. Canonically, $\mathcal{A}_{L_1}\otimes \mathcal{A}_{L_2}\xrightarrow{\sim} \mathcal{A}_{L_1\otimes L_2}$. Further, for $n\in\mathbb{Z}$, canonically $n\mathcal{A}_L \cong \mathcal{A}_{L^{\otimes n}}$. 
\end{lem}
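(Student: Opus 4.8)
The plan is to construct the canonical isomorphism $\mathcal{A}_{L_1}\otimes\mathcal{A}_{L_2}\xrightarrow{\sim}\mathcal{A}_{L_1\otimes L_2}$ directly from the Leibniz rule for first-order operators, and then deduce the statement about $n\mathcal{A}_L$ as a special case by an induction that also handles negative $n$ via the pairing $L\otimes L^{-1}\cong\mathcal{O}_S$. Everything here is local on $S$, so I may assume $L_1,L_2$ are trivialized; the content is to check that the locally-defined map is independent of trivialization, hence glues.

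**First I would** describe the map on sections. Recall $\mathcal{A}_{L_1}\otimes\mathcal{A}_{L_2}=\mathcal{O}_S\rtimes(\mathcal{A}_{L_1}\times\mathcal{A}_{L_2})$ modulo $(f+g,0)=(0,(f,g))$, so a section is represented by a pair $(D_1,D_2)$ of order-$1$ operators with $\sym(D_1)=\sym(D_2)=:v$, taken up to adding $(f,-f)\in\mathcal{O}_S\times\mathcal{O}_S$ (acting by multiplication). Given such a pair, define an operator $D_1\otimes D_2$ on $L_1\otimes L_2$ by the Leibniz-type formula
\begin{align*}
	(D_1\otimes D_2)(s_1\otimes s_2)\coeq D_1(s_1)\otimes s_2+(-1)^{|D_1||s_1|}s_1\otimes D_2(s_2).
\end{align*}
One checks this is well-defined on the tensor product (both sides are additive and the $\mathcal{O}_S$-bilinearity defect cancels because $[D_1,f]+[D_2,f]$ matches the derivation $v(f)$ on each factor), that it is again a first-order operator with $\sym(D_1\otimes D_2)=v$, and that replacing $(D_1,D_2)$ by $(D_1+f, D_2-f)$ leaves $D_1\otimes D_2$ unchanged — so the map descends to $\mathcal{A}_{L_1}\otimes\mathcal{A}_{L_2}$. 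Compatibility with anchors is immediate, and compatibility with brackets is the computation that $[D_1\otimes D_2, D_1'\otimes D_2'] = [D_1,D_1']\otimes[D_2,D_2']$, which follows from the super-Leibniz rule applied twice and the super-Jacobi identity for the signs. Finally, surjectivity and injectivity: locally, trivializing all three line bundles, an order-$1$ operator on $L_i$ is $v+f_i$ with $v\in\mathcal{T}_S$, $f_i\in\mathcal{O}_S$, and an order-$1$ operator on $L_1\otimes L_2$ is $v+f$; the map sends $(v+f_1,v+f_2)\mapsto v+f_1+f_2$, which modulo the relation $(f+g,0)=(0,(f,g))$ is a bijection. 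Independence of trivialization is exactly the intrinsic formula above, so the local isomorphisms glue.

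**For the second assertion**, first note $n\mathcal{A}_L$ fits in the Atiyah sequence with $\mathcal{O}_S$ included via $f\mapsto(0,f)=(n^{-1}f\cdot\text{?})$ — more precisely by \cref{Aityah scalar} there is a map $\mathcal{A}_L\to n\mathcal{A}_L$ scaling the central $\mathcal{O}_S$ by $n$. For $n\geq 1$, iterate the first part: $\mathcal{A}_{L^{\otimes n}}\cong\mathcal{A}_L\otimes\cdots\otimes\mathcal{A}_L$ ($n$ factors), and the $n$-fold tensor construction collapses the $n$ copies of $\mathcal{O}_S$ to a single one glued by addition, which is precisely the defining relation of $n\mathcal{A}_L$; so $n\mathcal{A}_L\cong\mathcal{A}_{L^{\otimes n}}$, and one verifies this identification is compatible with \cref{Aityah scalar}. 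The case $n=0$ is $0\mathcal{A}_L=\mathcal{O}_S\oplus\mathcal{T}_S=\mathcal{A}_{\mathcal{O}_S}$, the trivial structure. For $n<0$, write $n=-m$ with $m>0$: using $L^{\otimes n}\otimes L^{\otimes m}\cong\mathcal{O}_S$ and the first part, $\mathcal{A}_{L^{\otimes n}}\otimes\mathcal{A}_{L^{\otimes m}}\cong\mathcal{A}_{\mathcal{O}_S}$, and chasing through the semidirect-product descriptions shows $\mathcal{A}_{L^{\otimes n}}\cong -m\mathcal{A}_L=n\mathcal{A}_L$ (the operator $D$ on $L^{-1}$ dual to $D'$ on $L$ satisfies $D+D'=v$, which flips the sign of the central term).

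**The main obstacle** I anticipate is purely bookkeeping rather than conceptual: tracking the Koszul signs through the bracket identity $[D_1\otimes D_2,D_1'\otimes D_2']=[D_1,D_1']\otimes[D_2,D_2']$ and through the identification of the $n$-fold collapsed $\mathcal{O}_S$'s, together with checking that all the canonical maps are compatible with the commuting square \cref{Aityah scalar}. None of this is deep, but it is where a sign error would most easily hide; the cleanest route is to do the computation once on trivialized patches using $D_i=v+f_i$ and then invoke the intrinsic Leibniz formula to conclude the patches agree, so that gluing is automatic and no cocycle condition needs separate verification.
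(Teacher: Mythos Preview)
Your proposal is correct and follows essentially the same approach as the paper: both construct the map $\mathcal{A}_{L_1}\times\mathcal{A}_{L_2}\to\mathcal{A}_{L_1\otimes L_2}$ via the super Leibniz rule (the paper writes the sign as $(-1)^{|D_2||s_1|}$, but since $\sym(D_1)=\sym(D_2)$ forces $|D_1|=|D_2|$, this agrees with yours), observe it descends to the tensor, and handle negative $n$ via the dual pairing; the paper additionally makes the isomorphism $\mathcal{A}_L\otimes\mathcal{A}_L\xrightarrow{\sim}2\mathcal{A}_L$ explicit as $(f,(D_1,D_2))\mapsto(f,\tfrac{D_1+D_2}{2})$, which is the concrete form of your ``collapsing the copies of $\mathcal{O}_S$''.
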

\begin{proof}

Analogous to the diagram \cref{Aityah scalar}, we have the commuting diagram below.
\begin{equation*}
	\begin{tikzcd}
		0 \arrow{r} & \mathcal{O}_S\times \mathcal{O}_S \arrow{r}\arrow{d}{}{{(f,g)\mapsto f+ g}} & \mathcal{A}_{L_1}\times \mathcal{A}_{L_2}\arrow{r}\arrow{d}& \mathcal{T}_S\arrow{r}\arrow{d}{\id}& 0\\
		0 \arrow{r} & \mathcal{O}_S \arrow{r} & \mathcal{A}_{L_1}\otimes \mathcal{A}_{L_2} \arrow{r} & \mathcal{T}_S\arrow{r} & 0
	\end{tikzcd}
\end{equation*}
Now consider $\phi \co \mathcal{A}_{L_1}\times \mathcal{A}_{L_2}\to \mathcal{A}_{L_1\otimes L_2}$ given by the super Leibnitz rule
\begin{align*}
	\phi(D_1 , D_2 )(s_1 \otimes s_2 ) &=
	D_1 (s_1 ) \otimes s_2 + (-1)^{|D_2||s_1|}s_1 \otimes D_2 (s_2 ),
\end{align*}
where $D_i \in \Gamma(U,\mathcal{A}_{L_i})$ and $s_i \in \Gamma(U,L_i)$. This map restricts to $\phi(f,g)=f+g$ on $\mathcal{O}_S\times\mathcal{O}_S$ and projects to $\Id_{\mathcal{T}_S}$ on $\mathcal{T}_S$. Further, since $\phi$ is $\mathcal{O}_S$-bilinear, it factors through $\mathcal{A}_{L_1}\otimes \mathcal{A}_{L_2}$. The isomorphism $\mathcal{A}_{L_1}\otimes\mathcal{A}_{L_2}\xrightarrow{\sim} \mathcal{A}_{L_1\otimes L_2}$ follows.

Comparing the definitions of the Atiyah algebras $\mathcal{A}_L\otimes \mathcal{A}_L$ and $2\mathcal{A}_L$, you find a canonical isomorphism 
\begin{align*}
	\mathcal{A}_L\otimes \mathcal{A}_L\xrightarrow{\sim}2\mathcal{A}_L &&\Big(f,(D_1,D_2)\Big)\mapsto \left(f,\frac{D_1+D_2}{2}\right)
\end{align*}
and the inverse given by the natural double inclusion. 

Let $L^\vee$ be the inverse line bundle of $L$. 
The super Leibnitz rule gives a map 
\begin{align*}
	\psi \co \mathcal{A}_{L}\to \mathcal{A}_{L^\vee} &&\psi(D)(s^\vee )(t) \coeq [D,s^\vee(t)]-(-1)^{|D||s|}s^\vee (Dt)
\end{align*}
 where $D \in \Gamma(U,\mathcal{A}_{L})$, $s^\vee\in \Gamma(U,L^\vee )$, and $t\in\Gamma(U,L)$. This map restricts to $\psi(f)=-f$ on $\mathcal{O}_S$ and projects to $\Id_{\mathcal{T}_S}$ on $\mathcal{T}_S$. By composing $\psi^{-1}$ and the canonical map $\mathcal{A}_L\to -\mathcal{A}_L$ in \cref{Aityah scalar}, we find a canonical isomorphism of Atiyah algebras $\mathcal{A}_{L^\vee}\xrightarrow{\sim}-\mathcal{A}_L$. 
 \color{black}
 The isomorphism $n\mathcal{A}_L \cong \mathcal{A}_{L^{\otimes n}}$ follows.
\end{proof}

\subsection{Pullbacks over a base change}

\begin{defn}
	A Lie superalgebroid is called transitive if its anchor map is surjective.
\end{defn}

\begin{defn}\label{pullback of algebroid def}
	Consider a map $f\co X\to Y$ and a transitive Lie superalgebroid $(\mathcal{E},a_\mathcal{E},[\;,\;]_\mathcal{E})$ on $Y$. The pullback Lie superalgebroid of $\mathcal{E}$ along $f$ is defined as $(f^!\mathcal{E},a,[\;,\;])$ where
	\begin{itemize}
		\item $f^!\mathcal{E}$ is the sheaf of sections of $E\times_{TY} TX$ over $X$, the vector bundle pullback defined by $a_\mathcal{E}$ and $df$ as below
\begin{equation*}
	\begin{tikzcd}
		E\times_{TY} TX\arrow{d}\arrow[very near start, phantom]{rd}{\lrcorner} \arrow{r}{a}&TX\arrow{d}{df}\\
		E\arrow{r}{a_\mathcal{E}}&TY
	\end{tikzcd}
\end{equation*}
		\item anchor map $a$ is the projection to $TX$ resulting from the pullback construction above
		\item bracket inherited from the bracket of $E\times TX\to X\times Y$ when viewing $E\times_{TY} TX$ as the restriction of $E\times TX$ to $\textrm{graph}(f)\subset X\times Y$
	\end{itemize} 
\end{defn}

The pullback construction is described in more detail in \cite{Higgins-Mackenzie--1990}. The pullback satisfies a universal property in the category of Lie superalgebroids; we state a consequence of this property below.

\begin{prop}[{\textcite[Proposition 1.8]{Higgins-Mackenzie--1990}}]\label{pullback morphism prop}
	Consider $f\co X\to Y$ and transitive Lie superalgebroids $\mathcal{E}$ and $\mathcal{F}$ with Lie superalgebroid morphism $\phi\co\mathcal{E}\to\mathcal{F}$. Then there exists a Lie superalgebroid morphism $f^!\phi\co f^!\mathcal{E}\to f^!\mathcal{F}$.
\end{prop}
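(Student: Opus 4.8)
The statement to prove is \Cref{pullback morphism prop}: given $f\co X\to Y$ and a Lie superalgebroid morphism $\phi\co\mathcal{E}\to\mathcal{F}$ between transitive Lie superalgebroids on $Y$, there is an induced morphism $f^!\phi\co f^!\mathcal{E}\to f^!\mathcal{F}$.

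\medskip

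The plan is to build $f^!\phi$ directly from the fibered-product description in \Cref{pullback of algebroid def} and then check it is a morphism of Lie superalgebroids. First I would recall that $f^!\mathcal{E}$ is the sheaf of sections of $E\times_{TY}TX$ and $f^!\mathcal{F}$ that of $F\times_{TY}TX$, where the fiber products are formed using the anchors $a_\mathcal{E}, a_\mathcal{F}$ and $df$. Since $\phi\co E\to F$ satisfies $a_\mathcal{F}\circ\phi = a_\mathcal{E}$ (the anchor-compatibility clause in the definition of a Lie superalgebroid morphism), the pair of maps $\phi\times\id_{TX}\co E\times TX\to F\times TX$ carries $E\times_{TY}TX$ into $F\times_{TY}TX$: an element $(e,v)$ with $a_\mathcal{E}(e)=df(v)$ maps to $(\phi(e),v)$ with $a_\mathcal{F}(\phi(e)) = a_\mathcal{E}(e) = df(v)$, so it lies in the target fiber product. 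This defines a parity-preserving super vector bundle map $f^!\phi\co f^!E \to f^!F$ over $X$ on the level of sections, and by construction it commutes with the projections to $TX$, i.e.\ the anchor of $f^!\mathcal{F}$ composed with $f^!\phi$ equals the anchor of $f^!\mathcal{E}$.

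\medskip

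The remaining point is bracket-compatibility: $f^!\phi([V,W]) = [f^!\phi(V), f^!\phi(W)]$ for $V,W\in\Gamma(X, f^!\mathcal{E})$. Here I would use the description of the bracket on a pullback algebroid as inherited from the bracket on $E\times TX \to X\times Y$, viewing $E\times_{TY}TX$ as the restriction of $E\times TX$ to $\mathrm{graph}(f)\subset X\times Y$. The product Lie superalgebroid $E\times TX$ on $X\times Y$ has bracket computed componentwise (the $E$-bracket pulled back along the projection $X\times Y\to Y$, and the $TX$-bracket along $X\times Y\to X$), so the map $\phi\times\id_{TX}$ is a morphism of Lie superalgebroids on $X\times Y$ precisely because $\phi$ is one on $Y$ and $\id_{TX}$ trivially is. Restricting to the graph of $f$ preserves brackets by the very definition of the pullback bracket, so $f^!\phi$ is bracket-compatible. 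Alternatively, and perhaps more cleanly, one invokes the universal property of the pullback alluded to just before the statement: $f^!\mathcal{F}$ is terminal among Lie superalgebroids on $X$ equipped with a compatible morphism to $\mathcal{F}$ covering $f$ and a compatible anchor to $TX$; the composite $f^!\mathcal{E}\to\mathcal{E}\xrightarrow{\phi}\mathcal{F}$ together with the anchor $f^!\mathcal{E}\to TX$ then factors uniquely through $f^!\mathcal{F}$, and this factorization is $f^!\phi$.

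\medskip

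I expect the only genuine subtlety to be making the phrase ``bracket inherited from $E\times TX$'' precise enough to conclude that restriction to $\mathrm{graph}(f)$ is bracket-preserving and that $\phi\times\id$ descends — essentially the content already established in \textcite{Higgins-Mackenzie--1990} in the non-super case. Since the sign rule does not interact with any of these constructions (all maps in sight are parity-preserving and the product/restriction operations are the evident ones), the super case is formally identical, and I would simply cite \textcite[Proposition 1.8]{Higgins-Mackenzie--1990}, noting that the proof there goes through verbatim with the super Jacobi identity in place of the ordinary one.
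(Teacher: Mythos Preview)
Your proposal is correct, and in fact supplies more than the paper does: the paper gives no proof at all for this proposition, simply citing \textcite[Proposition~1.8]{Higgins-Mackenzie--1990}. Your final remark---that one may simply invoke Higgins--Mackenzie and note that the sign rule plays no role---is exactly the paper's approach.
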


\begin{lem}\label{pullback diagram lemma}
	In an abelian category, if the following commutative diagram has short exact rows and the map $A\xrightarrow{\sim}A'$ is an isomorphism, then the right square is a pullback square, that is $B=B'\times_{C'}C$.
	\begin{equation*}
		\begin{tikzcd}
			0 \arrow{r} & A \arrow{r}\arrow{d}[swap]{\wr} & B \arrow{r}\arrow{d} & C \arrow{r}\arrow{d} & 0\\
			0 \arrow{r}& A' \arrow{r} & B'\arrow{r} & C' \arrow{r} & 0
		\end{tikzcd}
	\end{equation*}
\end{lem}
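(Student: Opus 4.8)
The plan is to reduce everything to the standard characterization of pullback squares in an abelian category via the snake lemma / the five lemma. First I would form the canonical comparison map $B \to B' \times_{C'} C$. This map exists because the two composites $B \to B' $ (via $B\to B'$, the given vertical map) and $B \to C \to C'$? no — rather $B \to C$ together with $B \to B'$ followed by $B' \to C'$ agree, since the right-hand square commutes; hence by the universal property of the fiber product there is a unique $\theta\co B \to P$ where $P \coeq B'\times_{C'} C$. The claim is that $\theta$ is an isomorphism.

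The key steps, in order: (1) Observe that $P = B'\times_{C'}C$ sits in its own short exact sequence $0 \to A' \to P \to C \to 0$: indeed the projection $P \to C$ is surjective because $B' \to C'$ is surjective (surjectivity is stable under base change in an abelian category), and its kernel is $\ker(B'\to C') = A'$. (2) Check that $\theta$ is compatible with the two short exact sequences, i.e. that we get a commuting ladder
\begin{equation*}
	\begin{tikzcd}
		0 \arrow{r} & A \arrow{r}\arrow{d}{\wr} & B \arrow{r}\arrow{d}{\theta} & C \arrow{r}\arrow{d}{\id} & 0\\
		0 \arrow{r}& A' \arrow{r} & P \arrow{r} & C \arrow{r} & 0
	\end{tikzcd}
\end{equation*}
where the left vertical is the given isomorphism $A\xrightarrow{\sim}A'$ (one must check that $\theta$ restricted to $A = \ker(B\to C)$ lands in $A' = \ker(P\to C)$ and equals that isomorphism, which is immediate from the definitions and the commutativity of the original diagram). (3) Apply the five lemma to this ladder: since the outer two verticals are isomorphisms, $\theta$ is an isomorphism, which is exactly the assertion $B \cong B'\times_{C'}C$.

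I expect the main (and really the only) obstacle to be purely bookkeeping: verifying carefully that in a general abelian category the fiber product $B'\times_{C'}C$ along the surjection $B'\onto C'$ really does fit into the short exact sequence in step (1) with kernel $A'$. This is the one place where we use that we are in an abelian category rather than just an additive one; it follows from the fact that pullback of an epimorphism is an epimorphism in an abelian category, together with the kernel computation $\ker(B'\times_{C'}C \to C) \cong \ker(B'\to C')$. Everything else is a diagram chase that works verbatim as in the classical (non-super) case, so no new ideas are needed for the super setting.
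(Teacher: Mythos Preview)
Your proposal is correct and is a clean, standard argument, but it differs in approach from the paper's proof. The paper verifies the universal property of the fiber product directly: given a test object $X$ with compatible maps $f\co X\to B'$ and $g\co X\to C$, it explicitly constructs the required map $\phi\co X\to B$ by an element-style diagram chase (lift $x$ along the surjection $B\to C$, then correct by an element of $A$ using the isomorphism $A\xrightarrow{\sim}A'$), and checks well-definedness. Your approach instead assumes the fiber product $P=B'\times_{C'}C$ exists, forms the comparison map $\theta\co B\to P$, builds the auxiliary short exact sequence $0\to A'\to P\to C\to 0$, and invokes the five lemma. Your route is arguably more categorical and avoids element-chasing in a general abelian category (which the paper's proof implicitly relies on via generalized elements or Freyd--Mitchell); the paper's route is more explicit in that it produces the universal map by hand. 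Either way the content is the same standard fact.
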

\begin{proof}See \cref{pullback proof}.
\end{proof}

Since Atiyah Lie superalgebroids are transitive, we consider their pullbacks.

\begin{prop}\label{pullback of atiyah prop}
	 Consider a map $\phi\co X\to Y$ and a line bundle $L$ on $Y$.
	The pullback of the Atiyah Lie superalgebroid $\phi^!\mathcal{A}_L$ is isomorphic to the Atiyah Lie superalgebroid of the pullback $\mathcal{A}_{\phi^*L}$.
	\begin{equation*}
		\begin{tikzcd}
			0 \arrow{r} & \mathcal{O}_X\arrow{r} \arrow{d}{}{\id}& \phi^!\mathcal{A}_L\cong\mathcal{A}_{\phi^*L}  \arrow{r}{\sym}\arrow{d}\arrow[very near start, phantom]{rd}{\lrcorner} & \mathcal{T}_X \arrow{d}{d\phi}\arrow{r} & 0\\
			0\arrow{r} & \phi^*\mathcal{O}_{Y}\arrow{r} & \phi^*\mathcal{A}_{L}  \arrow{r}{\phi^*\sym} &  \phi^*\mathcal{T}_{Y}\arrow{r} & 0
		\end{tikzcd}
	\end{equation*}
\end{prop}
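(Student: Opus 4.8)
The plan is to construct the middle vertical arrow $\mathcal{A}_{\phi^*L}\to\phi^*\mathcal{A}_L$ of the displayed diagram, verify that the diagram commutes, has exact rows, and has the left-hand arrow an isomorphism, and then apply \cref{pullback diagram lemma}. That lemma identifies $\mathcal{A}_{\phi^*L}$ with the fibre product $\phi^*\mathcal{A}_L\times_{\phi^*\mathcal{T}_Y}\mathcal{T}_X$, which by \cref{pullback of algebroid def} is exactly the underlying sheaf of $\phi^!\mathcal{A}_L$; it then remains to check that this identification carries the anchor and bracket of $\mathcal{A}_{\phi^*L}$ to those of $\phi^!\mathcal{A}_L$.

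To build the vertical map, fix a cover $\{V_\alpha\}$ of $Y$ trivializing $L$, with frames $e_\alpha$ and transition functions $g_{\alpha\beta}$, and put $U_\alpha=\phi^{-1}(V_\alpha)$. A section $D$ of $\mathcal{A}_L$ over $V_\alpha$ is recorded by the pair $(h_D,\sym D)$, where $De_\alpha=h_D e_\alpha$, giving $\mathcal{A}_L|_{V_\alpha}\cong\mathcal{O}_{V_\alpha}\oplus\mathcal{T}_{V_\alpha}$; similarly, using the frame $\phi^*e_\alpha$, one gets $\mathcal{A}_{\phi^*L}|_{U_\alpha}\cong\mathcal{O}_{U_\alpha}\oplus\mathcal{T}_{U_\alpha}$. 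Over $U_\alpha$ I define $(\tilde h,\tilde\xi)\mapsto(\tilde h,\,d\phi(\tilde\xi))$ with values in $\mathcal{O}_{U_\alpha}\oplus\phi^*\mathcal{T}_{V_\alpha}\cong\phi^*\mathcal{A}_L|_{U_\alpha}$, where $d\phi\co\mathcal{T}_X\to\phi^*\mathcal{T}_Y$. Using the transition rule $(h,\xi)\mapsto(h+g_{\alpha\beta}^{-1}\xi(g_{\alpha\beta}),\xi)$ for $\mathcal{A}_L$ and its pullback for $\mathcal{A}_{\phi^*L}$, one checks that these local maps agree on overlaps and glue to a global $\mathcal{O}_X$-linear map $\mathcal{A}_{\phi^*L}\to\phi^*\mathcal{A}_L$. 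By construction it is the identity on $\ker\sym=\mathcal{O}_X$ (identified with $\phi^*\mathcal{O}_Y$) and covers $d\phi$ on symbols, so the displayed square commutes. The top row is the Atiyah sequence \cref{Atiyah} for $\phi^*L$, and the bottom row is $\phi^*$ applied to the Atiyah sequence of $L$, which remains exact since that sequence is locally split; thus both rows are exact.

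By \cref{pullback diagram lemma} the right-hand square is cartesian, so $\mathcal{A}_{\phi^*L}\cong\phi^*\mathcal{A}_L\times_{\phi^*\mathcal{T}_Y}\mathcal{T}_X=\phi^!\mathcal{A}_L$ as $\mathcal{O}_X$-modules, and the anchor $\sym$ of $\mathcal{A}_{\phi^*L}$ is carried to the projection onto $\mathcal{T}_X$, which is the anchor of $\phi^!\mathcal{A}_L$. For the brackets, note that the commutator bracket of $\mathcal{A}_{\phi^*L}$ and the bracket of $\phi^!\mathcal{A}_L$ from \cref{pullback of algebroid def} both live on this common sheaf with the same anchor; by the Leibniz rule their difference is $\mathcal{O}_X$-bilinear, and being alternating and annihilated by the anchor it is valued in $\mathcal{O}_X$. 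Over each $U_\alpha$ both brackets reduce to the standard semidirect-product bracket on $\mathcal{O}_{U_\alpha}\oplus\mathcal{T}_{U_\alpha}$, so the difference vanishes on the cover, hence identically; thus the identification is an isomorphism of Lie superalgebroids.

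I expect the comparison of brackets to be the delicate point, because the bracket of $\phi^!\mathcal{A}_L$ is defined only indirectly in \cref{pullback of algebroid def} — as the one inherited from $E\times TX$ restricted to the graph of $\phi$ — so one must unwind that description over the $U_\alpha$ and see that it reproduces the commutator of first-order operators on $\phi^*L$; the compatibility-on-overlaps check in the construction of the vertical map is the other routine-but-fiddly step. Alternatively, one can recognize the local formula as a base-changing morphism of Lie superalgebroids $A_{\phi^*L}\to A_L$ over $\phi$ and obtain the Lie superalgebroid morphism $\mathcal{A}_{\phi^*L}\to\phi^!\mathcal{A}_L$ from the universal property of the pullback (cf.\ \cref{pullback morphism prop}), the sheaf isomorphism above then upgrading it to an isomorphism.
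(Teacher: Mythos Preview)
Your proposal is correct and follows essentially the same route as the paper: construct the middle vertical map $\mathcal{A}_{\phi^*L}\to\phi^*\mathcal{A}_L$, observe the diagram commutes with exact rows and that the left arrow is an isomorphism, then invoke \cref{pullback diagram lemma} and finish with a bracket check. The only real difference is cosmetic: the paper defines the vertical map intrinsically by the formula $\phi_*\delta(s)\coeq\delta(s\circ\phi)$ for $\delta\in\mathcal{A}_{\phi^*L}$ and $s$ a section of $L$, which spares you the gluing-on-overlaps verification, while you build it in local frames via $(\tilde h,\tilde\xi)\mapsto(\tilde h,d\phi(\tilde\xi))$; unwinding the intrinsic formula in a trivialization yields exactly your local description, so the two constructions coincide. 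The paper also dismisses the bracket comparison as ``straightforward'' where you spell it out more carefully.
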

\begin{proof}
	 The natural map $\phi_*\co\mathcal{A}_{\phi^*L}\to \phi^*\mathcal{A}_L$ is given by $\phi_*\delta(s)=\delta(s\circ \phi)$ where $\delta\in\mathcal{A}_{\phi^*L}$ and $s$ is a section of $L$.
	
	By \cref{pullback diagram lemma} applied to the category of locally free sheaves, it suffices to notice that $\phi$ restricts to an isomorphism $\mathcal{O}_X\to \phi^*\mathcal{O}_Y\cong\mathcal{O}_X$, and that the right square (with the similarly defined map $d\phi$ on the tangent sheaves) commutes.
	
	The check that the bracket is the same is straightforward.
\end{proof}

\section{The supermoduli space of triples $\mathfrak{M}_{g,1^{\infty}_{\mathrm{NS}}}$}\label{moduli section}

The supermoduli space of triples $\mathfrak{M}_{g,1^{\infty}_{\mathrm{NS}}}$ has not been described explicitly in the literature. The closest object in the literature is a supermoduli space of quintuples as described in \cite{Mulase--1991,Mulase-Rabin--1991}. See \cref{classical MG1infty} for a discussion of the analogous classical moduli space of triples.

\subsection{Background material on super Riemann surfaces and their superconformal structure}

References covering the background material of superalgebras, superspaces, and SRSs in more detail include  \cite{Deligne-Morgan--1999,Bartocci-Bruzzo-HernandezRuiperez--1991,Manin--1988e} and more recently \cite{Bruzzo-HernandezRuiperez-Polishchuk--2020X,Fioresi-Kwok--2014,Carmeli-Caston-Fioresi--2011,Witten--2019}.

\begin{defn}
A super Riemann surface (SUSY curve, or SRS) is a complex supermanifold of dimension $1|1$ with a maximally nonintegrable distribution $\mathcal{D}_\Sigma$ of rank $0|1$. Precisely, a super Riemann surface is the data $(\Sigma,\mathcal{O}_\Sigma,\mathcal{D}_\Sigma)$ such that 
\begin{itemize}\itemsep0em
\item $\Sigma$ is a complex manifold of dimension $1$
\item $\mathcal{O}_\Sigma$ is a sheaf on $\Sigma$ of supercommutative $\mathbb{C}$-algebras
\item $\mathcal{O}_\Sigma$ is locally isomorphic to $\mathcal{O}_{\mathbb{C}}[\xi]$
\item $\mathcal{D}_\Sigma$ is an odd subbundle of the tangent bundle $\mathcal{T}_\Sigma$
\item the induced map by Lie bracket $[\:\:,\:\:]:\mathcal{D}_\Sigma^{\otimes2}\to \mathcal{T}_\Sigma/\mathcal{D}_\Sigma$ is an isomorphism.
\end{itemize}

Define a family of SUSY curves in the holomorphic category: a proper submersion $\pi:X\to S$ of super Riemann surfaces (i.e. relative dimension $1|1$) parameterized by the complex supermanifold $S$ with a relative distribution $\mathcal{D}_{X/S}\subset \mathcal{T}_{X/S}$ of rank $0|1$ such that the map $[\:\:,\:\:]:\mathcal{D}_{X/S}^{\otimes2}\to \mathcal{T}_{X/S}/\mathcal{D}_{X/S}$ is an isomorphism. 

\end{defn}

Locally on a single SRS, we can always find coordinates (called \emph{superconformal coordinates}) $z|\zeta$ such that 
\begin{align*}
	D_\zeta\coeq\frac{\partial}{\partial \zeta}+\zeta\frac{\partial}{\partial z}
\end{align*}
generates the distribution $\mathcal{D}_\Sigma$. In more generality for a family, we can find relative superconformal coordinates $z|\zeta$ such that $D_\zeta$ generates $\mathcal{D}_{X/S}$.
Naturally, we call a change of coordinates $z|\zeta\mapsto x|\xi$ superconformal if $D_\zeta$ and $D_\xi$ are multiples of one another. Alternatively, the condition 
\begin{align}\label{superconformal local change}
	D_\zeta x=\xi D_\zeta \xi
\end{align}
is equivalent to the change of coordinates being superconformal.

\begin{defn}\label{Mg def}
	The moduli space of super Riemann surfaces $\mathfrak{M}_g$ is the superspace representing the functor 
	\begin{align*}
		S\mapsto\left\{\textrm{families of SUSY curves } \pi\co X\to S\right\}.
	\end{align*}
\end{defn}
Similar to the moduli space of classical Riemann surfaces $\mathcal{M}_g$, the moduli space of SRSs is representable as a stack, specifically an orbifold or Deligne-Mumford stack for $g\geq 2$. The DM stack structure of $\mathfrak{M}_g$ and its compactification has recently been studied in papers \cite{Codogni-Viviani--2019,Felder-Kazhdan-Polishchuk--2020X,Moosavian-Zhou--2019X}.

The following exact sequence, or equivalently its dual also listed below, provides much information about the structure of SUSY curves.
\begin{align*}
	0\to \mathcal{D}_{X/S}\to \mathcal{T}_{X/S}\to \mathcal{T}_{X/S}/\mathcal{D}_{X/S}\to 0 && 0\to \mathcal{D}^{\perp}_{X/S}\to \Omega^1_{X/S}\to \mathcal{D}^{-1}_{X/S}\to 0
\end{align*}
Since $\mathcal{D}^{\perp}_{X/S}$ is dual to the quotient $\mathcal{T}_{X/S}/\mathcal{D}_{X/S}$, then we find that $\mathcal{D}^{\perp}_{X/S}$ is generated in local superconformal coordinates by $dz-\zeta d\zeta$. 
Further, the SRS structure gives isomorphisms with powers of the distribution $\mathcal{D}$ so that these exact sequences are equivalent to
\begin{align}\label{SRS SES}
	0\to \mathcal{D}_{X/S}\to \mathcal{T}_{X/S}\to \mathcal{D}^{\otimes 2}_{X/S}\to 0 && 0\to \mathcal{D}^{\otimes -2}_{X/S}\to \Omega^1_{X/S}\to \mathcal{D}^{-1}_{X/S}\to 0.
\end{align}
To be precise, the projection above is the composition $\mathcal{T}_{X/S}\to \mathcal{T}_{X/S}/\mathcal{D}_{X/S} \xleftarrow[]{\sim}\mathcal{D}^2_{X/S}$ using the isomorphism induced by the Lie bracket.

We recall the definition of the Berezinian, the supergeneralization of the determinant. Let $A$ be a superalgebra. Consider $\mathrm{GL}(m|n,A)$, the supergroup of automorphisms of the free $A$-module $A^{m|n}$. For $X\in \mathrm{GL}(m|n,A)$ expressed in as an $(m+n)\times (m+n)$ matrix the Berezinian is defined as
\begin{align}\label{Ber def}
X=\begin{bmatrix}
A & B\\ C & D
\end{bmatrix} && \Ber X \coeq \det(A-BD^{-1}C)\textstyle\det^{-1}D.
\end{align}
We note that $\Ber(\mathrm{exp} X)=\mathrm{exp}(\str X)$, where $\str$ is the supertrace. Written in terms of the matrix in \cref{Ber def}, $\str X\coeq\tr A-\tr D$.

The parity reversing operator denoted $\Pi$ maps a super vector space $V=V_0\oplus V_1$ to the super vector space with even and odd components swapped: $(\Pi V)_0\coeq V_1$ and $(\Pi V)_1\coeq V_0$. Since $\Pi$ has no analogous classical operator, neither does the $\Pi$-transpose given below.
\begin{align}\label{Pi transpose}
	X^\Pi=\begin{bmatrix}
		D & C\\ B & A
	\end{bmatrix}
\end{align}

Let $\mathcal{F}$ be a vector bundle of rank $m|n$ with associated transition functions $\{f_{ij}\}$. Then we define $\Ber\mathcal{F}$ to be the bundle with transition functions $\{\Ber f_{ij}\}$. We enforce that $\Ber\mathcal F$ is rank $1|0$ if $n$ is even and rank $0|1$ if $n$ is odd. We denote a trivializing section of $\Ber \mathcal{F}$ as $[e_1, \ldots, e_m|f_1,\ldots ,f_n]$, where $e_1,\ldots, e_m|f_1,\ldots, f_n$ are trivializing local coordinates for $\mathcal{F}$.

The Berezinian of the exact sequence \cref{SRS SES} induces an isomorphism 
\begin{align}\label{Ber sheaf iso}
\Ber\Omega_{X/S}^1\cong \Ber\mathcal{D}^{-2}_{X/S}\otimes \Ber \mathcal{D}^{-1}_{X/S}\cong \mathcal{D}^{-2}_{X/S} \otimes \mathcal{D}_{X/S}\cong \mathcal{D}^{-1}_{X/S}.
\end{align}
We define $\omega_{X/S}\coeq \Ber\Omega_{X/S}^1$, which we call the relative Berezinian of $X$ over $S$. Therefore in local superconformal coordinates, we may describe the relative Berezinian as relative one-forms modulo $dz-\zeta d\zeta$.

Further, properties of the Berezinian of a SRS give an interesting construction of a relative superconformal differential. We define $\delta$ as the composition of the classical relative differential $d$ and the quotient $q$
\begin{equation}\label{delta diagram}
	\begin{tikzcd}
		\mathcal{O}_{X} \arrow{r}{d} \arrow{rd}[swap]{\delta}&\Omega_{X/S}\arrow{d}{q}\\
		& \omega_{X/S}
	\end{tikzcd}
\end{equation}
where $q$ is the map $\Omega_{X/S}\to \mathcal{D}^{-1}_{X/S}$ in \cref{SRS SES} combined with the isomorphism in \cref{Ber sheaf iso}. In local superconformal coordinates, the superconformal differential $\delta$ is given by
\begin{align}\label{delta in local coor}
	\delta f = (D_\zeta f) \;[dz|d\zeta].
\end{align}
We remark that an equivalent condition for a local coordinate system $(x|\xi)$ to be superconformal is 
\begin{align}\label{delta superconformal}
	\delta x=\xi \,\delta \xi
\end{align}
which is a coordinate-independent version of \cref{superconformal local change}.

The vector fields which preserve the superconformal structure $\mathcal{D}_{X/S}$ are called superconformal. The sheaf of superconformal vector fields $\mathcal{T}^s_{X/S}$ on a family of super Riemann surfaces is the largest subsheaf of $\mathcal{T}_{X/S}$ such that $[\mathcal{T}^s_{X/S},\mathcal{D}_{X/S}]=\mathcal{D}_{X/S}$. 
In local superconformal coordinates $z|\zeta$, every superconformal vector field has a unique function $f$ such that the vector field is given by
\begin{align}\label{superconformal V}
	[ f D_\zeta,D_\zeta]= f(z|\zeta)\frac{\partial}{\partial z}+\frac{(-1)^{|f|}D_\zeta f(z|\zeta)}{2}D_\zeta.
\end{align}

This expression in local coordinates shows two facts. Firstly, $\mathcal{T}^s_{X/S}$ is not an $\mathcal{O}_{X}$ submodule of $\mathcal{T}_{X/S}$; it is only a subsheaf of $\mathbb{C}$ super vector spaces. However, via the map $[\;,\;]\co\mathcal{D}_{X/S}\otimes\mathcal{D}_{X/S}\to \mathcal{T}^s_{X/S}$, the sheaf $\mathcal{T}^s_{X/S}$ has an $\mathcal{O}_{X}$ module structure.
Secondly, the map $\mathcal{T}^s_{X/S}\to \mathcal{T}_{X/S}/\mathcal{D}_{X/S}$ given locally by $[fD_\zeta,D_\zeta]\mapsto f\pp{z} \mod D_\zeta$ is a canonical isomorphism (of sheaves of $\mathbb{C}$ super vector spaces). Therefore we have $\mathcal{T}_{X/S}\cong\mathcal{T}^s_{X/S}\oplus \mathcal{D}_{X/S}$ (as sheaves of $\mathbb{C}$ super vector spaces).

\subsection{Definition}\label{Mg subsection}

Recall the definition of $\mathfrak{M}_g$ in \cref{Mg def}. In order to consider the relationship of SRSs with the Sato Grassmannian, we must consider SRSs with extra information, as described in this section.

Consider a family of SUSY curves $\pi\co X\to S$. 
A section $\sigma:S\to X$ of the morphism $\pi$ is a coherent way to choose a Neveu-Schwarz (NS) puncture in each fiber. Since the section $\sigma$ is locally $\Spec A\to \Spec A[z|\zeta]$, these NS punctures  may be described by $z=z_0$ and $\zeta=\zeta_0$ for some even function $z_0$ and some odd function $\zeta_0$ in $\mathcal{O}_S$. Note that the image of $\sigma$ is a codimension $1|1$ subsuperspace of $X$. However, the orbit generated by the $\mathcal{D}_{X/S}$ canonically associates a divisor of $X$ to the section $\sigma$. Expressed with an odd parameter $\alpha$, the divisor, denoted $P=\div(s)$, is the codimension $1|0$ subsuperspace \cite{Witten--2019}
\begin{align*}
	z&=z_0+\alpha \zeta\\
	\zeta&=\zeta_0+\alpha.
\end{align*}

\begin{defn}
	The moduli stack $\mathfrak{M}_{g,1^{k}_{\mathrm{NS}}}$ is the moduli space of triples $(\Sigma,p,z|\zeta)$, where $\Sigma$ is a genus $g$ SUSY curve, $p$ is the divisor associated to a NS puncture on $\Sigma$, both $z$ and $\zeta$ are a $k$-jet equivalence class of an even (resp. odd) formal parameter vanishing at $p$. 
\end{defn}

Specifically, we have the parameters $z,\zeta\in \hat{\mathcal{O}}_{p}$, the formal neighborhood at $p$, and $z$ and $\zeta$ each have a zero of order one at $p$.
And since $z|\zeta$ is a $k$-jet, the coordinate system is equivalent to any $y|\gamma$ such that $y|\gamma$ equals $z|\zeta$ modulo $\mathfrak{m}^{k+1}_p|\mathfrak{m}^{k+1}_p$. Note that $\mathfrak{M}_{g,1^{k}_{\mathrm{NS}}}$ has dimension ${3g-3+1+k}\,|\,{2g-2+1+k}$. For $g\geq 1$, the moduli space of SRSs with one NS puncture $\mathfrak{M}_{g,1_{\text{NS}}}$ is a Deligne-Mumford stack. As well, $\mathfrak{M}_{0,1^{3}_{\mathrm{NS}}}$ is a Deligne-Mumford stack. So for high enough $k$, we have that $\mathfrak{M}_{g,1^k_{\text{NS}}}$ is a Deligne-Mumford stack.

\begin{defn}
	By taking the projective limit of $\mathfrak{M}_{g,1^{k}_{\mathrm{NS}}}$ as $k\to \infty$, we construct the pro-Deligne-Mumford stack 
	\begin{align*}
		\mathfrak{M}_{g,1^{\infty}_{\mathrm{NS}}}=\lim\limits_{\longleftarrow}\mathfrak{M}_{g,1^{k}_{\mathrm{NS}}}.
	\end{align*} 
\end{defn}
Clearly, $\mathfrak{M}_{g,1^{\infty}_{\mathrm{NS}}}$ infinite dimensional (in both even and odd dimensions) since the local coordinate system $(z|\zeta)$ now contains an infinite dimensional amount of information.

Simply by forgetting the coordinate system $z|\zeta$ and the NS puncture $P$, we have the projection
\begin{align*}
	\mathfrak{M}_{g,1^{\infty}_{\mathrm{NS}}} &\to \mathfrak{M}_{g}\\
	(X/S,P,z|\zeta)&\mapsto X/S
\end{align*}
where $\mathfrak{M}_{g}$ is the supermoduli space of SUSY curves of genus $g$.

\subsection{Superconformal Noether normalization and a perfect Lie superalgebroid}

A technical lemma needed to show our main result in \cref{flat} is the perfectness of the Lie superalgebroid of superconformal vector fields on a family of open super Riemann surfaces, which holds locally on the base. The analogous result in classical geometry was proven in \cite[Lemma 4.3]{Arbarello-DeConcini-Kac-Procesi--1988} and later in \cite[Proposition 6.2]{Kawazumi--1993}. Compared to the classical case, we need to consider the Lie superalgebroid (over a family of SRSs) as opposed to the Lie algebra over a single Riemann surface. 

A super Noether normalization was shown in \cite[Proposition 4.3]{Masuoka-Zubkov--2020}. However, we require a stronger version: relative superconformal Noether normalization, avoiding a divisor. Our normalization lemma below is formulated in the complex analytic category.

\begin{lem}[Relative Superconformal Noether normalization]\label{superconformalNoether}
	Let $\pi\co X\to S$ be a family of closed SRSs with distribution $\mathcal{D}_{X/S}$. Consider the family of open SRSs $X\setminus P\to S$, where $P$ is a divisor representing NS punctures. Let $(x|\xi)$ be relative superconformal coordinates on $\AA^{1|1}\times S$ over $S$.

	Then for every $s\in S$, there exists a small enough open supersubspace $s\in U\subseteq S$ and a finite surjective morphism of families over $U$ 
	\begin{align*}
		p\co \pi^{-1}(U)\setminus P\to \AA^{1|1}\times U
	\end{align*}
  such that $p^*\left(\pp{\xi}+\xi\pp{x}\right)$ generates $\mathcal{D}_{X/S}\big|_{\pi^{-1}(U)\setminus P}(R)$, where $R$ is the relative ramification divisor of $p$ over $U$.
	
	Further for every small enough neighborhood $U\subset S$, we may find another morphism $p'$ as above such that the relative ramification divisors of $p$ and $p'$ are disjoint.
\end{lem}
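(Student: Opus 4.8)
The plan is to reduce the statement to the classical complex-analytic Noether normalization applied fiberwise, and then upgrade it to the superconformal and relative setting using the structure theory of super Riemann surfaces. First I would look at the reduced space $X_{\mathrm{red}}\setminus P_{\mathrm{red}}\to S_{\mathrm{red}}$, which is an ordinary family of open Riemann surfaces; for a fixed point $s$, the fiber $(X_s)_{\mathrm{red}}\setminus (P_s)_{\mathrm{red}}$ is an open Riemann surface, and after shrinking $S$ to a Stein neighborhood $U$ of $s$ I can use that an open Riemann surface admits a (branched) finite map to $\AA^1$ — concretely, pick a meromorphic function with a pole only along $P_{\mathrm{red}}$, which exists by Behnke--Stein. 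This produces the even coordinate $x$; the subtlety is that I want $p$ to be finite \emph{as a map of families} over $U$, so I should choose the function to have constant pole order along $P$ over $U$ after shrinking, and use properness of $X\setminus P$ relative to the punctured target.

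Next, I would promote this to a superconformal morphism. Rather than independently choosing an odd coordinate, I exploit the superconformal structure: on $X\setminus P$ fix relative superconformal coordinates $z|\zeta$ locally; given the even function $x=p^*(\text{coordinate on }\AA^{1|1})$, the superconformal constraint \cref{delta superconformal}, namely $\delta x = \xi\,\delta\xi$, essentially \emph{determines} the pullback $\xi$ of the odd coordinate up to sign once $x$ is fixed, because $\delta x = (D_\zeta x)[dz|d\zeta]$ and one solves $\xi$ from $(D_\zeta x)$ — this is the same mechanism by which a single even function on an SRS extends to a superconformal coordinate pair. So the data of the even map, plus the superconformal structure, forces the full map $p\co X\setminus P\to \AA^{1|1}\times U$, and it is automatically superconformal by construction. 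Finiteness of $p$ follows from finiteness on the reduced spaces (a morphism of supermanifolds is finite iff the underlying reduced morphism is, since the odd directions are nilpotent), and surjectivity likewise.

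Then I would verify the statement about $\mathcal{D}_{X/S}$. By \cref{superconformal local change}/\cref{delta superconformal}, $p$ being superconformal means $D_\zeta$ and $p^*(\partial_\xi + \xi\partial_x)$ are $\mathcal{O}_X$-multiples of one another on $\pi^{-1}(U)\setminus P$; the multiplier is a unit exactly away from the ramification locus, and along the relative ramification divisor $R$ it vanishes to the appropriate order. Thus $p^*(\partial_\xi + \xi\partial_x)$ generates $\mathcal{D}_{X/S}(R)$ rather than $\mathcal{D}_{X/S}$ itself — this is precisely the twisting by $R$ in the statement, and it is a local computation in superconformal coordinates comparing $D_\zeta$ with $p^*D_\xi$.

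Finally, for the disjointness clause I would note that the choice of meromorphic function producing $x$ is far from unique — the space of such functions (with poles bounded along $P$, on a Stein $U$) is infinite-dimensional, and the ramification divisor $R$ depends on where $dx$ (equivalently $D_\zeta x$) vanishes. Generically two such choices $x$ and $x'$ have ramification divisors meeting the reduced fiber in disjoint finite sets; since $S$, hence $U$, can be taken small and the ramification divisors vary continuously, after a further shrink of $U$ one arranges $R\cap R' = \emptyset$. I expect the \textbf{main obstacle} to be the relative/family aspects rather than the super aspects: ensuring that the fiberwise branched covering can be chosen to vary holomorphically over $U$ with constant numerical invariants (pole orders along $P$, degree of $p$, multiplicity of $R$), so that $p$ is genuinely finite and flat over $U$ and $R$ is a well-defined relative Cartier divisor. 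This is where I would invoke that $S$ is reduced to a Stein neighborhood and use coherence/semicontinuity to fix the discrete data after shrinking; the superconformal bookkeeping, by contrast, is dictated algebraically by $\delta x = \xi\,\delta\xi$ and is routine once the even map is in hand.
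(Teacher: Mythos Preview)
Your proposal has a genuine gap at the step where you ``solve $\xi$ from $(D_\zeta x)$'' using the relation $\delta x = \xi\,\delta\xi$. This equation asks you to extract a \emph{global} square root of $\delta x$ with respect to the spin structure: on the reduced fiber $C=X_s^{\mathrm{red}}$ with spin line $\Omega_C^{1/2}$, a superconformal map $(f\mid\phi)$ to $\AA^{1|1}$ satisfies $df_0=c^2$ where $\phi=c\zeta$ and $c\in H^0(C,\Omega_C^{1/2}(np))$. Locally one can always write $\sqrt{df_0}$, but globally this forces the ramification divisor of $f_0$ (the zeros of $df_0$) to be an even divisor, i.e.\ $R=2R'$. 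For a \emph{generic} meromorphic function with a pole only at $p$ the ramification is simple, so this fails; your appeal to Behnke--Stein or classical Noether normalization produces an $f_0$ with no control on the parity of the ramification. The remark that ``a single even function on an SRS extends to a superconformal coordinate pair'' is correct only as a local statement and does not globalize without precisely this obstruction.

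The paper's proof proceeds in the opposite order: it first chooses the odd part $\phi$ (equivalently the spinor $c$) and then integrates $c^2$ to obtain $f_0$. The obstruction to integrating is the de~Rham class $[c^2]\in H^1_{dR}(C\setminus p)$, a \emph{linear} target of fixed dimension $2g$, and the source $H^0(C,\Omega_C^{1/2}(np))$ has dimension $\sim n$; the nonlinear map $\Upsilon\colon c\mapsto[c^2]$ is then shown to have fibers of the expected dimension using results of Artebani--Pirola, which also yields the disjoint-ramification statement by a dimension count. Your proposed fix via ``generic choice of $x$'' does not work because the relevant functions are constrained to the locus $\{f_0: df_0\text{ is a square}\}$, which is exactly the image of $\Upsilon^{-1}(0)$ and is what the paper analyses. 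To repair your argument you would effectively need to re-derive that locus, which brings you back to the paper's method.
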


\begin{proof}
	See \cref{Noether Normalization proof appendix}.
\end{proof}

\color{black}
\begin{prop}\label{commutant}
	Let $\pi\co X\to S$ be a family of closed SRSs with distribution $\mathcal{D}_{X/S}$. Consider the family of open SRSs $X\setminus P\to S$, where $P$ is a divisor representing NS punctures. 
	
	Consider a small enough open supersubspace $U\subseteq S$ and denote the Lie superalgebroid of global superconformal vector fields over the base $U$ as $\mathcal{K}\coeq \pi_*\big(\mathcal{T}^s_{X/S}\big|_{\pi^{-1}(U)}\big)$. Then the Lie superalgebroid $\mathcal{K}$ is perfect, that is to say, $\mathcal{K}/[\mathcal{K},\mathcal{K}] = 0$.
\end{prop}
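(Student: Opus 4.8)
The plan is to reduce perfectness of $\mathcal{K}$ to a local computation enabled by the Noether normalization of \cref{superconformalNoether}. Recall that $\mathcal{T}^s_{X/S}$ is not an $\mathcal{O}_X$-module, but via the Lie bracket $\mathcal{D}_{X/S}^{\otimes 2}\xrightarrow{\sim}\mathcal{T}^s_{X/S}$ it acquires one; concretely every superconformal field is $[fD_\zeta,D_\zeta]$ for a unique local function $f$, and the bracket of two such fields is again superconformal. So I want to show that an arbitrary global superconformal vector field $v=[fD_\zeta,D_\zeta]$ over $U$ (after possibly shrinking $U$) can be written as a finite sum of brackets $\sum_i[v_i,w_i]$ with $v_i,w_i$ global superconformal vector fields on $\pi^{-1}(U)\setminus P$.

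First I would apply \cref{superconformalNoether} to obtain, after shrinking $U$ around a given point, a finite surjective map $p\co\pi^{-1}(U)\setminus P\to\AA^{1|1}\times U$ with $p^*(\partial_\xi+\xi\partial_x)$ generating $\mathcal{D}_{X/S}$ on the complement of the ramification divisor $R$, and a second map $p'$ with ramification divisor $R'$ disjoint from $R$. The key point is that on $\AA^{1|1}\times U$ the superconformal vector fields corresponding to functions $x^n$, $x^n\xi$ (tensored with $\mathcal{O}_U$) are easily seen to generate the commutator ideal: the standard Witt-superalgebra type relations, e.g. $[x\partial_x + \tfrac12\xi\partial_\xi,\,-]$ acting on the field attached to $x^n$ scales it by a nonzero constant when $n\neq$ the exceptional value, and one handles the finitely many exceptional low-order generators separately using brackets of the form $[L_a,L_b]$ with $a+b$ equal to the target degree. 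Pulling these relations back along $p$ (which is where being a finite surjective morphism of families matters: $p^*$ sends global superconformal fields to global superconformal fields, at least away from $R$) expresses any global superconformal field away from $R$ as a sum of brackets. Doing the same with $p'$ handles a neighborhood of $R$; since $R\cap R'=\varnothing$, gluing with a partition of unity — legitimate here because $\mathcal{K}$ carries an $\mathcal{O}_U$-module structure through the identification $\mathcal{K}\cong\pi_*(\mathcal{D}_{X/S}^{\otimes 2})$ — writes the original global field as a finite sum of brackets on all of $\pi^{-1}(U)\setminus P$.

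The main obstacle I anticipate is making the pullback step rigorous: $p$ is finite surjective but ramified, so $p^*$ of a superconformal vector field on $\AA^{1|1}\times U$ is only obviously a vector field on $X/S$ away from $R$, and one must check it extends across $R$ (or argue that one does not need it to, because the $p'$-argument covers that locus) and that it remains superconformal — this is exactly why the lemma was stated with the normalization "$p^*(\partial_\xi+\xi\partial_x)$ generates $\mathcal{D}_{X/S}(R)$", i.e. up to the twist by $R$. A secondary subtlety is bookkeeping the finitely many exceptional degrees in the super-Witt relations (the constant $c_j=-(-1)^j(2j-1)$ vanishes for no $j$, but the naive "scale by a constant" trick still fails for one or two small generators), which must be cleared by hand using a second or third bracket. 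Everything else — that brackets of global superconformal fields are global superconformal, and that the $\mathcal{O}_U$-module structure lets us localize on $U$ — is routine given the setup in the preceding section.
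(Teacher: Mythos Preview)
Your overall strategy—invoke \cref{superconformalNoether} to get two finite covers with disjoint ramification and reduce to the known perfectness on $\AA^{1|1}$—matches the paper's. But the completion you sketch has a real gap, and it is exactly in the two places you flagged as worrisome.

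First, the partition-of-unity step does not work here. The divisors $R,R'$ live in the fibers of $\pi^{-1}(U)\setminus P$, not in $U$; the $\mathcal{O}_U$-module structure on $\mathcal{K}=\pi_*\mathcal{D}^{\otimes 2}$ lets you localize on the base, not along the fiber direction needed to separate $R$ from $R'$. And even the $\mathcal{O}_X$-module structure on $\mathcal{T}^s$ via $\mathcal{D}^{\otimes 2}$ does not help, because we are in the holomorphic category and holomorphic partitions of unity do not exist. Second, ``pulling back the relations'' from $\AA^{1|1}$ is not enough: the pulled-back generator $D_i=p_i^*(\partial_\xi+\xi\partial_x)$ has poles along $R_i$, so the bracket identities you import produce meromorphic, not regular, superconformal fields.

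The paper's fix is algebraic rather than topological. Choose a global nowhere-vanishing $D\in\Gamma(\pi^{-1}(U),\mathcal{D})$ and write $f_iD_i=a_iD$ with $f_i$ a regular even function whose divisor is exactly $R_i$. Since $f_1,f_2$ are relatively prime, B\'ezout gives $h=c_1f_1^2+c_2f_2^2$ for any regular $h$. One then verifies by direct computation (using the pulled-back \emph{functions} $z_i=p_i^*x$, $\zeta_i=p_i^*\xi$, not pulled-back vector fields) the identity
\[
[hD,D]=\sum_{i=1,2}\tfrac{1}{3}\Big[[c_ia_i^2D,D],[z_iD,D]\Big]+\tfrac{1}{3}\Big[[D,D],[c_ia_i^2z_iD,D]\Big]+\tfrac{1}{3}\Big[[c_ia_i^2\zeta_iD,D],[\zeta_iD,D]\Big],
\]
every ingredient of which is a \emph{global regular} superconformal field on $\pi^{-1}(U)\setminus P$. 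The factors $a_i^2$ absorb the poles that would otherwise appear, and the $f_i^2$ in the B\'ezout decomposition are what make the calculation close up. So the missing idea is: replace ``glue with a partition of unity'' by ``decompose $h$ via B\'ezout for $f_1^2,f_2^2$'', and replace ``pull back vector fields'' by ``pull back coordinate functions and plug them into brackets built from the global $D$''.
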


\begin{proof}
	
	Firstly we consider genus $g=0$. In this case, we may consider an injective map $\pi^{-1}(U)\to \AA^{1|1}\times U$, and we can use the standard (global) superconformal coordinates $z|\zeta$ such that $\mathcal{D}_{X/S}\big|_{\pi^{-1}(U)}$ is generated by $D_\zeta=\frac{\partial}{\partial \zeta}+\zeta\frac{\partial}{\partial z}$. Then elements of $\Gamma(V,\mathcal{K})$, where $V$ is open in $U$, are of the form $[fD_\zeta,D_\zeta]$, 
	where $f=f(z|\zeta)$ is a regular function on $\pi^{-1}(V)$. 	
	The supercommutator of superconformal vector fields gives
	\begin{align*}  
		\Big[[fD_\zeta,D_\zeta],[gD_\zeta,D_\zeta]\Big]=\left[\left(2f(\partial_z g)+(-1)^{|f|}(D_\zeta f)(D_\zeta g)-2(\partial_z f) g \right)D_\zeta,D_\zeta\right]
	\end{align*}
	which was stated in \cite[Section 3.6]{Manin--1988}.
	For an arbitrary function $h$, we find $[hD_\zeta,D_\zeta]\in [\mathcal{K},\mathcal{K}]$ since 
	\begin{align*}
		[hD_\zeta,D_\zeta]=\frac{1}{3}\Big[[hD_\zeta,D_\zeta],[zD_\zeta,D_\zeta]\Big]+\frac{1}{3}\Big[[D_\zeta,D_\zeta],[hzD_\zeta,D_\zeta]\big]+\frac{1}{3}\Big[[h\zeta D_\zeta,D_\zeta],[\zeta D_\zeta,D_\zeta]\Big].
	\end{align*}

	For affine SRSs of genus $g\geq 1$, we will follow the idea of the argument for the classical statement in \cite{Arbarello-DeConcini-Kac-Procesi--1988}, which uses Noether's normalization lemma.

	For the family $ \pi^{-1}(U)\to U$, denote the distribution by $\mathcal{D}\coeq\mathcal{D}_{X/S}\big|_{\pi^{-1}(U)} $ and the sheaf of functions by $\mathcal{O}\coeq\mathcal{O}_{X}\big|_{\pi^{-1}(U)}$. 
	Consider two finite surjective morphisms $p_i\co \pi^{-1}(U)\to\AA^{1|1}\times U$ as in \cref{superconformalNoether} such that their relative ramification divisors over $U$ are disjoint.

	Since $\pi^{-1}(U)$ is a family of open SRSs over a small enough base, there exists a global nowhere zero odd holomorphic section $D\in\Gamma(\pi^{-1}(U),\mathcal{D})$. Since $\mathcal{D}$ is an $\mathcal{O}$-module of relative rank $0|1$, then any global section of $\mathcal{D}$ can be written uniquely as $fD$ for some $f\in\Gamma(\pi^{-1}(U),\mathcal{O})$.

	Further, we have the global nowhere zero even section $D\otimes D\in\Gamma(\pi^{-1}(U),\mathcal{D}^{\otimes 2})$, which corresponds to a global nowhere zero even superconformal vector field $[D,D]\in\Gamma(U,\mathcal{K})$. Note that every element of $\Gamma(V,\mathcal{K})$ can be written uniquely as $[hD,D]$ for $h\in \Gamma(\pi^{-1}(V),\mathcal{O})$.
	We note that the bracket of global superconformal vector fields $\mathcal{K}$ is given by the following.
	\begin{align}  \label{global bracket}
		\bigg[[gD,D],[hD,D]\bigg]
		&=\bigg[\left(2g\left(D^2h\right)+(-1)^{|g|}(D g)(D h)-2\left(D^2g\right) h\right)D,D\bigg]
	\end{align}

	We define the following objects on $\pi^{-1}(U)$.
	\begin{align*}
	z_i\coeq p_i^*(z) &&\zeta_i\coeq p_i^*(\zeta) && D_i\coeq p_i^*\left(\pdz+\zeta\pdz\right)
\end{align*}
	We have that $z_i\in \Gamma(\pi^{-1}(U),\mathcal{O})$ is a global regular even function, $\zeta_i\in \Gamma(\pi^{-1}(U),\mathcal{O})$ is a global regular odd function, and $D_i\in \Gamma(\pi^{-1}(U),\mathcal{D}(R_i))$ is a global rational odd vector field, where $R_i$ is the relative ramification divisor of $p_i$ over $U$. 
	By \cite[Theorem 26.5]{Forster--1991} generalized to our case of SRSs over a small enough base $U$, there exists a global regular even function $f_i\in \Gamma(\pi^{-1}(U),\mathcal{O})$ with relative divisor exactly $-R_i$. Thus, $f_i D_i$ is a global regular vector field on $\pi^{-1}(U)$, nonzero outside of the ramification locus $[R_i]$. 
	We therefore have 
	\begin{align*}
		f_iD_i=a_iD
	\end{align*}
	for some holomorphic even function $a_i\in \Gamma(\pi^{-1}(U),\mathcal{O})$.

	We list the following expressions and their simplifications for use later in the final calculation.
	\begin{align*}
	a_iD(z_i)=f_iD_i(z_i)&=f_i\zeta_i\\
	a_iD(\zeta_i)=f_iD_i(\zeta_i)&=f_i\\
	a_i\Big((Dz_i)-\zeta_i(D\zeta_i)\Big)&=0\\
	a_i^2\Big((D^2z_i)+\zeta_i(D^2\zeta_i)\Big)&=f_i^2 
\end{align*}
The last equality follows from simplifying the below expression in two different ways.
\begin{align*}
	\Big([a_i^2D,D]z_i\Big)+\zeta_i\Big([a_i^2D,D]\zeta_i\Big)&=2a_i^2\Big((D^2z_i)+\zeta_i(D^2\zeta_i)\Big)+(Da_i^2)(Dz_i)+\zeta_i(D a_i^2)(D\zeta_i)\\
	&=2a_i^2\Big((D^2z_i)+\zeta_i(D^2\zeta_i)\Big)+2(Da_i)f_i\zeta_i+2\zeta_i(Da_i)f_i\\
	&=2a_i^2\Big((D^2z_i)+\zeta_i(D^2\zeta_i)\Big)\\
	\Big([a_i^2D,D]z_i\Big)+\zeta_i\Big([a_i^2D,D]\zeta_i\Big)&=\Big([f_i^2D_i,D_i]z_i\Big)+\zeta_i\Big([f_i^2D_i,D_i]\zeta_i\Big)\\
	&=2f_i^2(D_i^2z_i)+2\zeta_if_i^2(D_i^2\zeta_i)\\
	&=2f_i^2
\end{align*}

	Since $f_1$ and $f_2$ are even and relatively prime, we can find holomorphic functions $c_i\in \Gamma(\pi^{-1}(U),\mathcal{O})$ such that
	\begin{align*}
		h=c_1f^2_1+c_2f^2_2
	\end{align*}
	for any given holomorphic function $h\in \Gamma(\pi^{-1}(U),\mathcal{O})$. We claim that $[hD,D]\in\mathcal{K}(U)$ is equal to 
	\begin{align*}
		[hD,D]=\sum_{i=1,2}\frac{1}{3}\bigg[[c_ia_i^2D,D],[z_iD,D]\bigg]+\frac{1}{3}\bigg[[D,D],[c_ia_i^2z_iD,D]\bigg]+\frac{1}{3}\bigg[[c_ia_i^2\zeta_i D,D],[\zeta_i D,D]\bigg].
	\end{align*}
	The following calculation confirms this claim. The first equality below follows from  formula \cref{global bracket}.
	\begin{align*}
		\bigg[[c_ia_i^2D,D],[z_iD,D]\bigg]&+\bigg[[D,D],[c_ia_i^2z_iD,D]\bigg]+\bigg[[c_ia_i^2\zeta_i D,D],[\zeta_i D,D]\bigg]\\
		&=\bigg[\Big(4c_ia_i^2\left(D^2z_i\right)+(D c_ia_i^2)(D z_i)\Big)D,D\bigg]
		\\&\qquad\qquad\qquad
		+\bigg[\Big(4c_ia_i^2\zeta_i\left(D^2\zeta_i\right)-c_ia_i^2(D \zeta_i)(D \zeta_i)-(D c_ia_i^2)\zeta_i(D \zeta_i)\Big)D,D\bigg]\\
		&=4\Big[ c_ia_i^2\Big((D^2z_i)+\zeta_i(D^2\zeta_i)\Big)D,D\Big]-\Big[c_ia_i^2(D\zeta_i)(D\zeta_i) D,D\Big]
		\\&\qquad\qquad\qquad\qquad\qquad\qquad\qquad\qquad\quad
		+\bigg[(D c_ia_i^2)\Big((D z_i)-\zeta_i(D \zeta_i)\Big)D,D\bigg]\\
		&=4\Big[ c_if_i^2D,D\Big]-\Big[c_if_i^2 D,D\Big]\\
		&=3\big[c_if_i^2D,D\big]
	\end{align*}

\end{proof}

	\begin{cor}
	Let $\Sigma$ be a split super Riemann surface. The Lie superalgebra $\mathfrak{k}\coeq \Gamma(\Sigma,\mathcal{T}^s_\Sigma)$ is perfect, that is to say $H_1(\mathfrak{k};\CC)=\mathfrak{k}/[\mathfrak{k},\mathfrak{k}]=0$.
\end{cor}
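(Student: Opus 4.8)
The plan is to obtain this as the specialisation of \cref{commutant} to the case where the base $S$ is a point. Recall first that for a Lie superalgebra the first Chevalley--Eilenberg homology is the abelianisation, $H_1(\mathfrak{k};\CC)=\mathfrak{k}/[\mathfrak{k},\mathfrak{k}]$, so the two formulations of perfectness coincide and it suffices to prove $[\mathfrak{k},\mathfrak{k}]=\mathfrak{k}$. Over a point a Lie superalgebroid is simply a Lie superalgebra, and the only open neighbourhood of a point inside a point is the point itself, so the ``small enough base'' clause in \cref{commutant} is vacuous. Viewing $\Sigma$ as an open split super Riemann surface --- for instance the complement $X\setminus P$ of an NS-puncture divisor in a closed split SRS $X$ --- we have $\mathfrak{k}=\Gamma(\Sigma,\mathcal{T}^s_\Sigma)=\mathcal{K}$ in the notation of \cref{commutant} with $U$ a point, and the conclusion is exactly that proposition.

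The point requiring attention is that the inputs to the proof of \cref{commutant} are available for a single open split SRS. That proof used a global nowhere-vanishing odd section of $\mathcal{D}_\Sigma$, the relative superconformal Noether normalisation \cref{superconformalNoether}, and \cite[Theorem 26.5]{Forster--1991} producing regular functions with prescribed divisor. For $\Sigma$ open and split the underlying curve $\Sigma_0$ is a non-compact Riemann surface, hence Stein with trivial Picard group and vanishing higher cohomology of coherent sheaves; consequently every line bundle on the split supermanifold $\Sigma$ is trivial, so $\mathcal{D}_\Sigma$ has such a section, and both Forster's theorem and \cref{superconformalNoether} apply directly. With these in place the argument of \cref{commutant} runs verbatim: for $g=0$ via the explicit identity writing $[hD_\zeta,D_\zeta]$ as a sum of three brackets, and for $g\geq 1$ by choosing two Noether normalisations $p_1,p_2\co\Sigma\to\AA^{1|1}$ with disjoint ramification divisors and repeating the final computation.

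I expect the only genuine obstacle to be bookkeeping: verifying that nothing in the proof of \cref{commutant} uses the base $S$ beyond the existence of the global objects just listed, and that ``split'' together with ``open'' is exactly what supplies them over a point. As a cross-check one could also argue directly: splitness gives $\mathcal{T}^s_\Sigma\cong\mathcal{D}_\Sigma^{\otimes 2}\cong\mathcal{T}_{\Sigma_0}\oplus\mathcal{D}_{\Sigma_0}$, so $\mathfrak{k}_0=\Gamma(\Sigma_0,\mathcal{T}_{\Sigma_0})$ and $\mathfrak{k}_1=\Gamma(\Sigma_0,\mathcal{D}_{\Sigma_0})$; the even part is perfect by \cite[Lemma 4.3]{Arbarello-DeConcini-Kac-Procesi--1988}, which forces $\mathfrak{k}_0\subseteq[\mathfrak{k},\mathfrak{k}]$, and one then checks that the Lie-derivative action of $\mathfrak{k}_0$ on the sections $\mathfrak{k}_1$ of the spin bundle has no nonzero coinvariants, giving $\mathfrak{k}_1\subseteq[\mathfrak{k}_0,\mathfrak{k}_1]$ as well. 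I would nonetheless present the one-line reduction to \cref{commutant}.
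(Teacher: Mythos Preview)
Your proposal is correct and takes essentially the same approach as the paper: the paper's proof is the single sentence ``This is \cref{commutant} for $S$ a point,'' which is exactly your reduction. Your additional discussion of why the inputs to \cref{commutant} are available over a point, and the alternative direct argument via the splitting $\mathfrak{k}_0\oplus\mathfrak{k}_1$, go beyond what the paper records but are consistent with it.
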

\begin{proof}
	This is \cref{commutant} for $S$ a point.
\end{proof}

\subsection{Action of the super Witt algebra}\label{witt action subsection}

The classical Witt algebra has two super generalizations, denoted $K(1;\epsilon=0)$ and $K(1;\epsilon=1)$ in \cite{Kac-vandeLeur--1989}, corresponding to the Neveu-Schwarz and Ramond superalgebras respectively. We are only concerned with the Neveu-Schwarz case in this paper, which we recall in the definition below.

\begin{defn}
The super Witt algebra $\mathfrak{switt}$ is the Lie superalgebra of superconformal vector fields on a punctured formal neighborhood of a point in $\mathbb{C}^{1|1}$. Explicitly, the elements of $\mathfrak{switt}$ are of the form \cref{superconformal V} with $f$ a  formal super Laurent series:
\begin{align*}
\mathfrak{switt}=\left\{f(z|\zeta)\frac{\partial}{\partial z}+\frac{(-1)^{|f|}D_\zeta f(z|\zeta)}{2}D_\zeta\co f(z|\zeta)\in \mathbb{C}(\!(z)\!)[\zeta]\right\}.
\end{align*}
\end{defn}
Expressed as a basis of even and odd vector fields:
\begin{align*}
f&=z^{-n+1} &L_n&=z^{-n+1}\frac{\partial}{\partial z}+\frac{-n+1}{2}z^{-n}\zeta\frac{\partial}{\partial\zeta} & n&\in \mathbb{Z}\\
f&=2i\zeta z^{-r+\frac{1}{2}} &G_r&=i\,z^{-r+\frac{1}{2}}\left(\zeta\frac{\partial}{\partial z}-\frac{\partial}{\partial\zeta}\right) & r&\in \mathbb{Z}+\frac{1}{2}
\end{align*}
\begin{align*}
[L_m,L_n]=(m-n)L_{m+n} && [L_n,G_r]=\left(\frac{n}{2}-r\right)G_{n+r} && [G_r,G_s]=2L_{r+s}
\end{align*}

Now, we use the sheaf of superconformal vector fields to analyze the tangent spaces of the moduli spaces. For $\pi:X\to \mathfrak{M}_{g}$ the universal family of super Riemann surfaces, we have the short exact sequence
\begin{align*}
0 \to \mathcal{T}^s_{X/\mathfrak{M}_{g}} \to \mathcal{T}^s_X \to \pi^*(\mathcal{T}_{\mathfrak{M}_{g}}) \to 0.
\end{align*}
In the long exact sequence of higher direct images, we find the Kodaira-Spencer map \cite{Huybrechts--2005} is
\begin{align*}
\delta: \mathcal{T}_{\mathfrak{M}_{g}}  \xrightarrow{\sim} R^1 \pi _*\mathcal{T}^s_{X/\mathfrak{M}_{g}},
\end{align*}
where the source $\mathcal{T}_{\mathfrak{M}_{g}}$ follows from the projection formula $
\pi_* (\pi^*(\mathcal{T}_{\mathfrak{M}_{g}})) = \mathcal{T}_{\mathfrak{M}_{g}} \otimes \pi_* (\mathcal{O}_X) = \mathcal{T}_{\mathfrak{M}_{g}}$. 
The restriction to a SRS $\Sigma$ is
\begin{align*}
\delta_\Sigma\co T_{\Sigma}\mathfrak{M}_{g} \xrightarrow{\sim} H^1(\Sigma,\mathcal{T}^s_\Sigma).
\end{align*}
This map is a natural bijection between the isomorphism classes of infinitesimal deformations $v\in T_{\Sigma}\mathfrak{M}_g$ and elements of $H^1(\Sigma,\mathcal{T}^s_\Sigma)$. 
From this we can find the dimension of $\mathfrak{M}_g$ is ${3g-3}\,|\,{2g-2}$ for $g\geq 2$.

Further, for $\pi:X\to \mathfrak{M}_{g,1^{k}_{\mathrm{NS}}}$ the family of super Riemann surfaces with NS punctures represented by the divisor $P$ and a $k$-jet coordinate system near the punctures, the Kodaira-Spencer map preserving this extra structure is 
\begin{align*}
\delta: \mathcal{T}_{\mathfrak{M}_{g,1^{k}_{\mathrm{NS}}} } \xrightarrow{\sim} R^1 \pi _*\mathcal{T}^s_{X/\mathfrak{M}_{g,1^{k}_{\mathrm{NS}}}}(-(k+1)P),
\end{align*}
which locally for $(\Sigma,p,z|\zeta)$ is 
\begin{align*}
\delta_{(\Sigma,p,z|\zeta)}\co T_{(\Sigma,p,z|\zeta)}\mathfrak{M}_{g,1^{k}_{\mathrm{NS}}} \xrightarrow{\sim} H^1(\Sigma,\mathcal{T}^s_\Sigma(-(k+1)p))
\end{align*}
where $P|_\Sigma = p$ is the NS puncture on $\Sigma$.
From this we can find the dimension of $\mathfrak{M}_{g,1^k_{\mathrm{NS}}}$ is ${3g-2+k} \, | \, {2g-1+k}$ for $g\geq 2$.

\begin{prop}\label{switt action}
The super Witt algebra acts on the moduli space $\mathfrak{M}_{g,1^\infty_{\mathrm{NS}}}$ by vector fields. That is, there exists a Lie superalgebra antihomomorphism
\begin{align*}
	\Lambda\co\mathfrak{switt}
	\to \Gamma(\mathfrak{M}_{g,1^\infty_{\mathrm{NS}}},\mathcal{T}_{\mathfrak{M}}).
\end{align*}
\end{prop}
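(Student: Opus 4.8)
The plan is to realize the super Witt algebra action infinitesimally, by differentiating the obvious geometric action of superconformal coordinate changes on triples. The key point is that an element of $\mathfrak{switt}$ is a superconformal vector field on the punctured formal neighborhood of $p$, and such a vector field acts on the data of a triple $(\Sigma, p, z|\zeta)$ in two complementary ways: it may be viewed as an infinitesimal change of the formal coordinate system $z|\zeta$ (via the part that is regular on the punctured disk), or, when extended across $p$, as an infinitesimal deformation of the pair $(\Sigma,p)$ itself. The mechanism that reconciles these is exactly the Čech description of the Kodaira–Spencer isomorphism $\delta_{(\Sigma,p,z|\zeta)}\co T_{(\Sigma,p,z|\zeta)}\mathfrak{M}_{g,1^\infty_{\mathrm{NS}}}\xrightarrow{\sim} H^1(\Sigma,\mathcal{T}^s_\Sigma(-\ast p))$ recorded just above the statement, computed with respect to the two-element cover $\{\Sigma\setminus p,\ \hat{\mathcal{O}}_p[\hat{\mathcal{O}}_p^{-1}]\}$: a superconformal vector field on the punctured disk is precisely a Čech $1$-cocycle for this cover, and its class is the tangent vector to $\mathfrak{M}_{g,1^\infty_{\mathrm{NS}}}$ we assign to it. This is well defined on $\mathfrak{M}_{g,1^\infty_{\mathrm{NS}}}$ rather than on a finite jet quotient precisely because we have passed to the projective limit, so arbitrarily high-order parts of $f$ are seen.

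First I would fix the cover $\mathfrak{U}=\{U_0,U_\infty\}$ of $\Sigma$, with $U_\infty=\Spec\hat{\mathcal{O}}_p$ the formal neighborhood of $p$ and $U_0=\Sigma\setminus p$, and identify $H^1(\mathfrak{U},\mathcal{T}^s_\Sigma(-\ast p))$ with $\mathfrak{switt}$ modulo the superconformal vector fields extending regularly over $p$ (which deform nothing) and those regular on $U_0$ (which are tangent to automorphisms, hence also act trivially on moduli). Next, using the splitting $\mathcal{T}_\Sigma\cong\mathcal{T}^s_\Sigma\oplus\mathcal{D}_\Sigma$ recalled in the background subsection and the fact that $\mathfrak{switt}$ consists of superconformal vector fields, I would define $\Lambda(v)$ at a point $(\Sigma,p,z|\zeta)$ to be the image of the Čech cocycle determined by $v$ in the chosen coordinate $z|\zeta$, i.e. the tangent vector $\delta_{(\Sigma,p,z|\zeta)}^{-1}[v]$; globalizing over the moduli space gives a section of $\mathcal{T}_{\mathfrak{M}}$. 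The content then is twofold: (i) $\Lambda(v)$ is holomorphic in the moduli parameters — this follows because the Čech construction is visibly algebraic/holomorphic in the family, e.g. by running the same construction over an arbitrary base $S$ using the relative version of the Kodaira–Spencer sequence written above; and (ii) $\Lambda$ is a Lie superalgebra antihomomorphism. For (ii), the cleanest route is to observe that $\mathfrak{switt}$ acts genuinely (not just infinitesimally) by superconformal reparametrizations of the formal disk glued to $\Sigma\setminus p$, i.e. there is a compatible action of the corresponding pro-group on $\mathfrak{M}_{g,1^\infty_{\mathrm{NS}}}$, and $\Lambda$ is its derivative; the antihomomorphism sign is the usual one coming from differentiating a left group action into vector fields (left-invariant vs. right-invariant, as in \cite{Arbarello-DeConcini-Kac-Procesi--1988}). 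Alternatively one checks the bracket identity directly on cocycles: the Lie bracket in $\mathfrak{switt}$ matches the induced bracket on $H^1$ under $\delta$ up to the expected sign, using the formula \cref{superconformal V} for the bracket of superconformal vector fields.

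The main obstacle I expect is precisely point (ii) together with getting the variance (anti- vs. homomorphism) and signs right in the super setting: one must be careful that the parity-dependent signs in the bracket \cref{superconformal V} and in the Čech differential conspire correctly, and that the odd generators $G_r$ are handled with the rule of signs from the definition of a Lie superalgebra. A secondary technical point is to check that the construction descends to the $k$-jet quotients compatibly for all $k$, so that it genuinely defines a vector field on the pro-stack $\mathfrak{M}_{g,1^\infty_{\mathrm{NS}}}=\varprojlim\mathfrak{M}_{g,1^k_{\mathrm{NS}}}$ rather than only formally; this is where the fact that $\mathfrak{switt}$ involves Laurent series $\mathbb{C}(\!(z)\!)[\zeta]$ — with poles of unbounded order — is used, since a vector field with a pole of order $\le k+1$ at $p$ moves the $k$-jet of $(z|\zeta)$ and nothing finer. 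I would organize the writeup so that the single-SRS case is done first (matching the corollary's viewpoint), then upgraded to families verbatim using the relative Kodaira–Spencer sequence, with the perfectness result \cref{commutant} not needed here but flagged as the input to the later \cref{ns action} and \cref{flat}.
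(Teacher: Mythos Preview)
Your proposal is correct and follows essentially the same approach as the paper: the two-set \v{C}ech cover $\{X\setminus P,\ U\}$ with $U$ a formal tubular neighborhood of the puncture, identification of superconformal vector fields on the overlap with $\mathfrak{switt}$, the Kodaira--Spencer map to produce tangent vectors, and passage to the projective limit over $k$. The only organizational difference is that the paper works directly in the relative setting over the universal family (with $\pi_*$ and $R^1\pi_*$) to obtain the sequence \cref{witt superalgebroid sequence} in one stroke, rather than treating a single SRS first and then globalizing; this is slightly more efficient since that relative sequence is reused immediately afterward to define the action Lie superalgebroid $\mathcal{W}$.
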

\begin{proof} Consider $\pi\co X\to \mathfrak{M}_{g,1^k_{\mathrm{NS}}}$.
	Let $U$ be a tubular formal neighborhood of the NS puncture divisor $P$ in $X$. Choosing the relative superconformal formal parameters $z|\zeta$ on $U$ over $S$ such that the divisor $P$ is given by $z=0$ gives a trivialization of $\mathcal{T}^s_{U/S}$. 
	
	Using relative \v{C}ech cohomology, then 
\begin{align*}
	\pi_*\mathcal{T}_{X/\mathfrak{M}}^{s}(-(k+1)P)\to \pi_*\mathcal{T}_{(X\setminus P)/\mathfrak{M}}^{s}\to \pi_*\mathcal{T}^{s}_{(U\setminus P)/\mathfrak{M}}\Big/\pi_*\mathcal{T}^{s}_{U/\mathfrak{M}}(-(k+1)P)\to R^1\pi_*\mathcal{T}_{X/\mathfrak{M}}^{s}(-(k+1)P).
\end{align*}
We can identify $\pi_*\mathcal{T}^{s}_{(U\setminus P)/\mathfrak{M}}\cong \mathfrak{switt}\hat{\otimes} \mathcal{O}_\mathfrak{M}$. Taking the projective limit over $k$,
\begin{align}\label{witt superalgebroid sequence}
	0\to
	\pi_*\mathcal{T}_{(X\setminus P)/\mathfrak{M}}^{s}\to \mathfrak{switt}\hat{\otimes} \mathcal{O}_\mathfrak{M}
	\to \mathcal{T}_{\mathfrak{M}}.
\end{align}
Further, taking global sections over $\mathfrak{M}_{g,1^\infty_{\mathrm{NS}}}$
\begin{align*}
	0\to
	\Gamma(\mathfrak{M}_{g,1^\infty_{\mathrm{NS}}},\pi_*\mathcal{T}_{(X\setminus P)/\mathfrak{M}}^{s})\to \Gamma(\mathfrak{M}_{g,1^\infty_{\mathrm{NS}}},\mathfrak{switt}\hat{\otimes} \mathcal{O}_\mathfrak{M})
	\to \Gamma(\mathfrak{M}_{g,1^\infty_{\mathrm{NS}}},\mathcal{T}_{\mathfrak{M}}).
\end{align*}
Lastly, taking only constant sections in $\Gamma(\mathfrak{M}_{g,1^\infty_{\mathrm{NS}}},\mathfrak{switt}\hat{\otimes} \mathcal{O}_\mathfrak{M})$, we obtain the map $\Lambda$ as stated in the proposition.

Checking that $\Lambda$ is a Lie superalgebra antihomomorphism follows from the naturality of the Kodaira-Spencer map and can be checked using relative \v{C}ech cohomology based on the covering $U$ and $X\setminus P$.
\end{proof}

We note that the tangent space fibers can be identified as
\begin{align*}
	T_{(\Sigma,p,z|\zeta)}(\mathfrak{M}_{g,1^\infty_{\mathrm{NS}}})
	& \cong \mathfrak{switt}\big/\Gamma(\Sigma\setminus p, \mathcal{T}^s_\Sigma).
\end{align*}

For use later, we note that $\mathfrak{switt}\hat{\otimes}\mathcal{O}_\mathfrak{M}$ in \cref{witt superalgebroid sequence} is essentially the action Lie superalgebroid $\mathcal{W}$ defined below.

We denote by $\mathfrak{g}^\text{op}$ the opposite Lie superalgebra, that is the Lie superalgebra with the negative bracket of $\mathfrak{g}$, that is $[x,y]_\mathfrak{g}=-[x,y]_{\mathfrak{g}^\text{op}}$. We denote the Lie superalgebra antihomomorphism between these as $\text{op}\co \mathfrak{g}\to \mathfrak{g}^\text{op}$.

\begin{defn}\label{switt superalgebroid defs}
	According to \cref{action superalgebroid def}, define $(\mathcal{W},a_\Lambda)$ to be the action Lie superalgebroid associated to $\Lambda\circ \text{op}\co \mathfrak{switt}^\text{op}\to \Gamma(\mathfrak{M}_{g,1^\infty_{\mathrm{NS}}},\mathcal{T}_{\mathfrak{M}})$ as in \cref{switt action}.
\end{defn}

\subsection{The super Mumford isomorphism}

Note that the Berezinian of cohomology used below is the natural super generalization of the determinant of cohomology defined in \cite{Deligne--1987}.

\begin{defn} \label{lambda j/2 def}
	Let $\pi: X\to S$ be a proper family of complex supermanifolds of dimension $1|1$. Let $\mathcal{F}$ be a locally free sheaf on $X$. Then the Berezinian of cohomology of $\mathcal{F}$ is a sheaf on $S$ of rank $1|0$ or $0|1$ given by
	\begin{align*}
		B(\mathcal{F})\coeq \otimes_i\left(\Ber R^i\pi_*\mathcal{F}\right)^{(-1)^i}.
	\end{align*} 
	We define the Berezinian line bundles $\lambda_{j/2}$ for the universal family $\pi\co X\to \mathfrak{M}_{g,1^{\infty}_{\mathrm{NS}}}$ as
	\begin{align*}
		\lambda_{j/2}\coeq B(\omega_{X/\mathfrak{M}}^{\otimes j})
	\end{align*}
	where $\omega_{X/\mathfrak{M}}\coeq \Ber \Omega_{X/\mathfrak{M}}^1$ is the relative Berezinian. 
\end{defn}

\begin{thm}[\textcite{Voronov--1988}; \textcite{Deligne--1988U}]
The super Mumford isomorphism is the collection of canonical isomorphisms
\begin{align*}
\lambda_{j/2}\cong \lambda_{1/2}^{-(-1)^{j}(2j-1)}, \qquad\textrm{ in particular } \qquad \lambda_{3/2}\cong \lambda_{1/2}^{5}.
\end{align*}
\end{thm}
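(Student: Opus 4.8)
The plan is to obtain the super Mumford isomorphism as a consequence of a "super Riemann--Roch" style computation for Berezinians of cohomology, exactly paralleling the classical argument that $\lambda_j \cong \lambda_1^{c_j}$ follows from the Deligne--Riemann--Roch formula $B(\mathcal{F})\cong \langle \mathcal{F},\mathcal{F}\otimes\omega^{-1}\rangle^{\otimes 1/2}\otimes B(\mathcal{O})$ together with bilinearity of the Deligne pairing. First I would establish the super analogue: for a proper family $\pi\co X\to S$ of dimension $1|1$ and a locally free sheaf $\mathcal{F}$ on $X$, there is a canonical isomorphism expressing $B(\mathcal{F})$ in terms of a symmetric biadditive pairing $\langle\,,\,\rangle$ on (the Berezinian/Picard group of) line bundles on $X$ and the fixed line bundle $B(\mathcal{O}_X)$ on $S$; this is the content referenced by the cited works of Voronov and Deligne. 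The key input specific to the super setting is the identification \cref{Ber sheaf iso}, $\omega_{X/S}\cong \mathcal{D}^{-1}_{X/S}$, so that $\omega_{X/S}^{\otimes j}$ is a power of the single line bundle $\mathcal{D}^{-1}_{X/S}$, and more importantly that the "canonical bundle" of the super curve, in the sense governing the super Riemann--Roch pairing, is itself $\omega_{X/S}$ — but on a super Riemann surface $\omega_{X/S}\cong \mathcal{D}^{-1}$ and $\Omega^1\neq \omega$, so the self-duality $\omega_{X/S}\cong\omega_{X/S}^{\vee}\otimes(\text{something})$ forces a different numerology than the classical $c_j=6j^2-6j+1$.

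Next I would carry out the numerical computation. Writing $L\coeq \mathcal{D}_{X/S}$, we have $\omega_{X/S}^{\otimes j}\cong L^{\otimes -j}$, so $\lambda_{j/2}=B(L^{\otimes -j})$. Applying the super Deligne--Riemann--Roch isomorphism gives $B(L^{\otimes -j})\cong \langle L^{\otimes -j}, L^{\otimes -j}\otimes \omega_{X/S}^{-1}\rangle^{\otimes 1/2}\otimes B(\mathcal{O}_X) \cong \langle L^{\otimes -j}, L^{\otimes -j+1}\rangle^{\otimes 1/2}\otimes B(\mathcal{O}_X)$. By biadditivity of the pairing, $\langle L^{\otimes -j}, L^{\otimes -j+1}\rangle \cong \langle L,L\rangle^{\otimes (-j)(-j+1)} = \langle L,L\rangle^{\otimes j(j-1)}$. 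Meanwhile $\lambda_{1/2}=B(L^{\otimes -1})\cong \langle L^{\otimes -1},L^{\otimes 0}\rangle^{\otimes 1/2}\otimes B(\mathcal{O}_X)=B(\mathcal{O}_X)$, so $\langle L,L\rangle$ must be compared against $\lambda_{1/2}$ via a separate relation — here one uses that $\langle L, L\rangle \cong \langle L,\omega_{X/S}\rangle^{\otimes(-1)}\otimes(\text{trivial})$ or, more directly, that on a super Riemann surface the relevant "adjunction" self-pairing of $L=\mathcal{D}_{X/S}$ against itself is, up to a canonical trivialization, $\lambda_{1/2}^{\otimes c}$ for the appropriate constant. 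Bookkeeping the exponents then yields $\lambda_{j/2}\cong \lambda_{1/2}^{\otimes c_j}$ with $c_j=-(-1)^j(2j-1)$; one sanity-checks $c_1=1$, $c_2=-3$, $c_3=5$, matching the stated $\lambda_{3/2}\cong\lambda_{1/2}^5$ and the Neveu--Schwarz cocycle coincidence recorded in the introduction.

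The genuinely delicate step, and the one I expect to be the main obstacle, is the foundational super Deligne--Riemann--Roch isomorphism itself: constructing the symmetric biadditive Deligne pairing on line bundles over a family of $1|1$-dimensional superspaces, proving it is canonical and functorial in base change, and nailing down the half-integer power $\langle\,,\,\rangle^{\otimes 1/2}$ (which requires a canonical square root, available because the diagonal contribution is a perfect square). In the classical case this is Deligne's theorem; in the super case one must track Berezinians rather than determinants, verify that the Koszul sign conventions do not obstruct symmetry of the pairing, and check compatibility with the exact sequence \cref{SRS SES} relating $\Omega^1_{X/S}$, $\mathcal{D}^{\otimes -2}_{X/S}$, and $\mathcal{D}^{-1}_{X/S}$. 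Since the statement is attributed to Voronov and to Deligne, I would either cite their constructions directly and restrict the proof to the numerical extraction above, or reproduce the pairing construction in an appendix; the safe route for this paper is to invoke \textcite{Voronov--1988} and \textcite{Deligne--1988U} for the existence of the canonical isomorphisms and merely record the exponent computation as the verification that the collection specializes to $\lambda_{3/2}\cong\lambda_{1/2}^{5}$.
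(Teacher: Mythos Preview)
The paper does not prove this theorem at all: it is stated with attribution to \textcite{Voronov--1988} and \textcite{Deligne--1988U} and then used. So there is no ``paper's proof'' to compare against; any proof you supply is additional content.

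That said, your sketch contains a genuine gap. You write the classical Deligne--Riemann--Roch formula
\[
B(\mathcal{F})\;\cong\;\langle \mathcal{F},\mathcal{F}\otimes\omega^{-1}\rangle^{\otimes 1/2}\otimes B(\mathcal{O}_X)
\]
and then apply biadditivity of the pairing verbatim. With $L=\mathcal{D}_{X/S}$ and $\omega=L^{-1}$ this gives
\[
\lambda_{j/2}\;\cong\;\langle L,L\rangle^{\,j(j-1)/2}\otimes B(\mathcal{O}_X),
\]
which is \emph{quadratic} in $j$. Feeding in the known value $c_2=-3$ (so $\langle L,L\rangle\cong\lambda_{1/2}^{-4}$) then forces $\lambda_{3/2}\cong\lambda_{1/2}^{-11}$, not $\lambda_{1/2}^{5}$. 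The phrase ``Bookkeeping the exponents then yields $c_j=-(-1)^j(2j-1)$'' is exactly where the argument fails: no amount of bookkeeping turns a quadratic polynomial into a linear one with alternating sign.

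The missing idea is that $\omega_{X/S}$ is of rank $0|1$, so the Berezinian of cohomology does not obey the classical Deligne formula you wrote. The sign $(-1)^j$ in $c_j$ is a parity effect: passing from $\omega^{\otimes j}$ to $\omega^{\otimes(j+1)}$ flips the parity of the sheaf, and $\Ber$ of an odd line differs from $\Ber$ of its even twin by an inversion. The actual arguments of Voronov and Deligne exploit the short exact sequence \cref{SRS SES} (tensored by powers of $\omega$) together with multiplicativity of $B(-)$ and super Serre duality to obtain a \emph{recursion} $\lambda_{(j+1)/2}\otimes\lambda_{j/2}^{?}\cong\lambda_{(j-1)/2}^{?}$, whose solution is linear in $j$ with alternating sign. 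If you want to include a proof, that recursion is the route to take; the Deligne pairing with naive biadditivity is not.
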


Since the super Mumford isomorphism is canonical, the super Mumford form $\mu$ may be defined as the image of $1\in\mathcal{O}_S$ under the isomorphism $\mathcal{O}_S \cong \lambda_{3/2}\otimes \lambda_{1/2}^{-5}$.

\section{The semi-infinite super Grassmannian of $\mathbb{C}(\!(z)\!)[\zeta]\,[dz|d\zeta]^{\otimes j}$}\label{Gr section}

The semi-infinite super Grassmannian or super Sato Grassmannian has been discussed in \cite{Mulase--1991,Mulase-Rabin--1991} using an alternate construction. See \cref{unsuper Gr} for a discussion of the classical Sato Grassmannian.

\subsection{Background material on finite dimensional super Grassmannians}

Much of what we describe in this section is based on \cite[Chapter 4 Section 3]{Manin--1988e}. A recent account of super Grassmannians can be found in \cite{Bruzzo-HernandezRuiperez-Polishchuk--2020X}. We provide this background to give context for the super Sato Grassmannian construction.

\begin{defn}
	
	Let $V$ be a super vector space. For $c|d\leq \dim(V)$, let $\Gr(c|d,V)$ be the functor that takes a superspace $S$ to the family of dimension $c|d$ super subbundles of $V\otimes\mathcal{O}_S$.

\end{defn}

The bosonization of $\Gr(c|d,V)$ is given by $\Gr(c,V_0)\times \Gr(d,V_1)$, the product of two classical Grassmannians.

It is known that super Grassmannians may not to superprojected. This is an interesting feature of supergeometry. Super Grassmannians seem to be important since most superspaces with projected bosonization embed into a super Grassmannian. \cite{Bruzzo-HernandezRuiperez-Polishchuk--2020X}

We now show that $\Gr(c|d,V)$ is representable. 

Let $L\in\Gr(c|d,V)$. Then to each $L$ we can associate a superdomain $U_L$ of the Grassmannian. Consider all possible morphisms 
\begin{align*}
	\begin{pmatrix}
		\id_L\\
		Z
	\end{pmatrix}\co L\to V
\end{align*}
where $Z\co L\to V/L$ may be written as a $(m-c|n-d) \times c|d$ matrix with entries from $\CC$. We define $U_L\subseteq \Gr(c|d,V)$ to be the superdomain with $Z$ representing its coordinates. We notice that $U_L\cong  \CC^{(m-c)c+(n-d)d|(m-c)d+(n-d)c}$.

We describe the gluing of these superdomains. Let $U_{L}$ and $U_{L'}$ be two superdomains as above. We write the identity map of $V$ as the block matrix
\begin{align*}
	\id_\CC=\begin{pmatrix}
		T^{L'L} & T^{L'K}\\
		T^{K'L} & T^{K'K}
	\end{pmatrix}
\end{align*}
where we denote $K\coeq V/L$ and $K'\coeq V/L'$, and $T^{L'K}\co K\to L'$, etc. Then we define the change of coordinates from $Z_L$ to $Z_{L'}$ as 
\begin{align}\label{change of coords finite}
	Z_{L'}=\left(T^{K'L}+T^{K'K}Z_L\right)\left(T^{L'L}+T^{L'K}Z_L\right)^{-1}.
\end{align}
Clearly this is only defined on the subdomain of $U_L$ where $T^{L'L}+T^{L'K}Z_L$ is invertible, which is an open condition. As well, this definition clearly gives transition functions which are rational functions, which are a part of the functions in any of the categories we are considering. One can check that this satisfies the cocycle condition on the intersection on three superdomains. 

The proof that this construction represents the functor $\Gr(c|d,V)$ is analogous to the proof in classical geometry. See \cite[Chapter 4 Section 3 Theorem 10]{Manin--1988e}. 

To elucidate the change of coordinates definition and to define a group action, consider  any automorphism $T$ of $V$ instead of the identity. A natural left action of the supergroup $\GL(V)\coeq \intAut(V)$ is defined by the matrix multiplication 
\begin{align*}
	\begin{pmatrix}
		T^{L'L} & T^{L'K}\\
		T^{K'L} & T^{K'K}
	\end{pmatrix} \begin{pmatrix}
		I \\
		Z _L
	\end{pmatrix}=\begin{pmatrix}
		I \\
		Z_{L'}
	\end{pmatrix} Q
\end{align*}
where $Q\co L\to L'$ is some isomorphism.

As for the classical Grassmannian, there is a tautological sheaf over the super Grassmannian. Over the superdomain $U_L$, we define the tautological bundle to be the trivial bundle over $U_L$ with supervector space the column space of $\left(\begin{smallmatrix}
	\id_L\\
	Z_L
\end{smallmatrix}\right)$. This free bundle of rank $c|d$ is then glued across superdomains $U_L$ using the same change of coordinates as in \cref{change of coords finite}.

\subsection{Definition} \label{Gr subsection}

Many of the definitions in this section are a generalization of the analogous definitions regarding the classical Sato Grassmannian by \textcite{AlvarezVazquez-MunozPorras-PlazaMartin--1998}.

Consider $\mathbb{C}(\!(z)\!)$, the infinite-dimensional topological space of formal Laurent series with the $z$-adic topology. Define the super vector space $H=\mathbb{C}(\!(z)\!)[\zeta]=\mathbb{C}(\!(z)\!)\zeta+\mathbb{C}(\!(z)\!)$ where $\zeta$ is a Grassmann variable, that is $\zeta^2=0$. 

In fact, for compatibility later on, we wish to consider $H_j=\mathbb{C}(\!(z)\!)[\zeta]\,[dz|d\zeta]^{\otimes j}$. As a topological space $H_j$ and $H$ are homeomorphic; and as super vector spaces, $H_j\cong H$ for $j$ an even integer, and  $H_j\cong\Pi H$ for $j$ an odd integer. We define the distinguished decomposition $H_j=H^-_j\oplus H^+_j$ by
\begin{align}\label{distinguished decomposition}
		H^-_j\coeq z^{-1}\mathbb{C}[z^{-1}|\zeta]\,[dz|d\zeta]^{\otimes j} && H^+_j\coeq \mathbb{C}[\![z]\!][\zeta]\,[dz|d\zeta]^{\otimes j}.
\end{align}
The Grassmannian we consider essentially parameterizes all super subspaces of $H_j$ close enough to $H^-_j$, which is defined precisely below.

Firstly, we note that two super subspaces $V,W\subseteq H_j$ are \emph{commensurable} if both $V/(V\cap W)$ and $W/(V\cap W)$ have both finite even dimension and finite odd dimension. We define the notation 
\begin{align*}
	V_S\coeq V\otimes_\mathbb{C} \mathcal{O}_S   && \hat{V}_S\coeq V\hat{\otimes}_\mathbb{C} \mathcal{O}_S=\lim\limits_{\longleftarrow}(V/K\otimes_\mathbb{C}\mathcal{O}_S)
\end{align*}
where $S$ is a superspace with structure sheaf $\mathcal{O}_S$. 

\begin{defn}
	Define a super subspace of $H_j$ to be \emph{compact} if it is commensurable with $H_j^+$.
	Define a super subspace $D$ of $H_j$ to be \emph{discrete} if there exists a compact subspace $K$ such that the natural map $D\oplus K\to H_j$ is an isomorphism of super vector spaces.\footnote{We have defined discrete and compact subspaces purely algebraically. The topological names are motivated by the $z$-adic topology on $H_j$.}
	
	More generally, for a superspace $S$ with structure sheaf $\mathcal{O}_S$, a super $\mathcal{O}_S$-submodule $L\subset (\hat{H}_j)_S$ is \emph{discrete} if 
	for every $s\in S$ there exists a neighborhood $U$ of $s$ and a compact $K$ such that the natural map $L_U\oplus  \hat{K}_U\to (\hat{H}_j)_U$ is an isomorphism, where $L_U\coeq L\otimes_{\mathcal{O}_S}\mathcal{O}_U$.
\end{defn}

\begin{lem}\label{CompactDiscreteDef} 
	A super $\mathcal{O}_S$-submodule $L\subset (\hat{H}_j)_S$ is discrete if and only if 
	for every $s\in S$ there exists a neighborhood $U$ of $s$ and a compact $K$ such that the kernel is a free finite type super $\mathcal{O}_S$-submodule and the cokernel is trivial in the exact sequence below
	\begin{align}\label{LK exact sequence}
		0\to L_U\cap \hat{K}_U\to L_U\oplus \hat{K}_U\to (\hat{H}_j)_U\to (\hat{H}_j)_U/(L_U+\hat{K}_U)\to 0
	\end{align}
	where $L_U\coeq L\otimes_{\mathcal{O}_S}\mathcal{O}_U$.
\end{lem}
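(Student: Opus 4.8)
The plan is to show that the two conditions are equivalent by unwinding what ``the natural map $L_U\oplus\hat K_U\to(\hat H_j)_U$ is an isomorphism'' means in terms of the four-term exact sequence \cref{LK exact sequence}. The key observation is purely formal: for any two submodules $L_U$ and $\hat K_U$ of $(\hat H_j)_U$, the sequence
\begin{align*}
	0\to L_U\cap\hat K_U\to L_U\oplus\hat K_U\to(\hat H_j)_U\to(\hat H_j)_U/(L_U+\hat K_U)\to 0
\end{align*}
is always exact (the first map is $x\mapsto(x,-x)$ up to sign, the middle map is $(a,b)\mapsto a+b$, the last is the quotient), so the content of the lemma is comparing ``$L_U\oplus\hat K_U\to(\hat H_j)_U$ is an isomorphism'' with ``the kernel $L_U\cap\hat K_U$ is free of finite type and the cokernel $(\hat H_j)_U/(L_U+\hat K_U)$ vanishes.''

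First I would treat the easy direction: if the discreteness definition holds, then for the relevant $U$ and $K$ the map $L_U\oplus\hat K_U\to(\hat H_j)_U$ is an isomorphism, so its kernel $L_U\cap\hat K_U$ is zero — in particular free of finite type — and its cokernel $(\hat H_j)_U/(L_U+\hat K_U)$ is zero, which is exactly the condition in \cref{LK exact sequence}. Conversely, suppose the exact-sequence condition holds for some neighborhood $U$ and compact $K$, with kernel $N\coeq L_U\cap\hat K_U$ free of finite type. Here is where one must do a small amount of work: a vanishing cokernel gives surjectivity of $L_U\oplus\hat K_U\to(\hat H_j)_U$ directly, but injectivity requires killing the kernel $N$, which need not be zero a priori. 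The idea is to shrink $K$: since $N$ is a finite-type free $\mathcal{O}_S$-module sitting inside $\hat K_U$, and $N\to\hat K_U$ is a split injection of the compact side (one should check $N$ is a direct summand, using that $\hat K_U$ is the completed base change of a compact subspace and $N$ is finite free, so one can lift a complementary set of topological generators), we may replace $K$ by a compact subspace $K'\subset K$ whose completed base change is a complement to $N$ inside $\hat K_U$; then $L_U\cap\hat K'_U=0$ while still $L_U+\hat K'_U=(\hat H_j)_U$ — the latter because adding back the finitely many generators of $N$ (which lie in $L_U$) does not change the sum. Hence $L_U\oplus\hat K'_U\to(\hat H_j)_U$ is an isomorphism, giving discreteness.

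The main obstacle is this shrinking step: one needs that the finite-type kernel $N$ is genuinely a direct summand of $\hat K_U$ \emph{as $\mathcal{O}_S$-modules}, compatibly with the topological (pro-finite) structure of $\hat K_U$, so that the complement can be taken of the form $\hat K'_U$ for an honest compact subspace $K'$. This is where freeness of the kernel and the local nature of the statement (working over a small enough $U$, possibly shrinking again) are essential; over a local ring or after localizing, a finite-type free submodule that is a topological direct summand can be split off, and one picks $K'$ accordingly. Everything else — exactness of the four-term sequence, the bookkeeping of kernels and cokernels — is routine diagram-chasing that I would state briefly and leave to the reader.
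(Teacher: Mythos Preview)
Your easy direction matches the paper's. For the hard direction, the ``main obstacle'' you flag is a genuine gap, and the paper's route is different from what you sketch. The difficulty with your approach is that $N=L_U\cap\hat K_U$ is a free $\mathcal O_U$-submodule of $\hat K_U$, but a complement of the form $\hat K'_U$ for a \emph{constant} $\mathbb C$-subspace $K'\subset K$ need not exist: the generators of $N$ may vary over $U$ (their images in the fibre $K$ can even degenerate), so no fixed $K'$ will track them. Your remark that a finite free submodule can be split off is correct at the level of $\mathcal O_U$-modules, but the $\mathcal O_U$-complement so obtained is not in general $\hat K'_U$ for any compact $K'\subset K$, which is precisely what your argument needs and does not establish.

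The paper avoids this with a shrink--then--enlarge--then--Nakayama argument rather than a single splitting. First shrink $K$ to some $K'\subset K$ small enough that $L_U\cap\hat K'_U=0$; this is achievable because $N$ is finitely generated. The cokernel $(\hat H_j)_U/(L_U+\hat K'_U)$ may now be nonzero, so next enlarge $K'$ to a compact $K''\supseteq K'$ chosen so that $(K''/K')_s\to(\hat H_j)_s/(L_s+\hat K'_s)$ is an isomorphism at the single point $s$. Finally, apply the super Nakayama lemma---using auxiliary lemmas showing the relevant cokernel is locally of finite presentation and the intersection is locally free when the cokernel vanishes---to propagate both the cokernel-vanishing and the kernel-vanishing from the fibre $s$ to an open neighbourhood. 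This pointwise-then-Nakayama manoeuvre, together with the supporting finite-presentation lemmas, is the missing ingredient in your proposal.
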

\begin{proof}
 The proof is given in \cref{alt discrete def proof}. 
\end{proof}

\begin{defn}\label{GrDef}
	We define the (semi-infinite or Sato) super Grassmannian $\Gr(H_j)$ as the infinite-dimensional complex supermanifold representing the functor of the discrete super subspaces:
	\begin{align*}
		S \mapsto \left\{\text{discrete super }\mathcal{O}_S\text{-submodules } L\subseteq (\hat{H}_j)_S \right\}.
	\end{align*}

\end{defn}

The fact that this functor is representable by an infinite-dimensional complex supermanifold may be proven following the style of \cite{AlvarezVazquez-MunozPorras-PlazaMartin--1998} and \cite{MunozPorras-PlazaMartin--1999}, which prove the analogous nonsuper result for the infinite-dimensional Grassmannian of discrete subspaces in $\CC(\!(z)\!)$. We explain a modified version of this argument in the remainder of this section.

Observe that the underlying manifold of $\Gr(H_j)$ is $(\Gr(H_j))_{\text{red}}$ which is the set of points
\begin{align*}
	(\Gr(H_j))_{\text{red}}\cong\{\text{discrete super subspaces } D\subset H_j\}.
\end{align*}
We may identify this with the product of two classical Sato Grassmannians: $\Gr(\CC(\!(z)\!))\times \Gr(\Pi\zeta\CC(\!(z)\!))$. The super Sato Grassmannian therefore has $\ZZ\times \ZZ$ connected components, which we label with the virtual dimension: $m|n$. A discrete subspace $D\subset H_j$ is in the virtual dimension $m|n$ component if the Fredholm index of the natural map $D\oplus H_j^+\to H_j$ is $m|n$, explicitly, that is if $\dim(D\cap H_j^+)-\dim(H_j/(D+H_j^+))=m|n$.

\begin{defn}
	Given discrete $D$ and compact $K$ such that the natural map $D\oplus K\to H_j$ is an isomorphism, define an \emph{open chart $U_{D,K}$} as the subsupermanifold of the super Grassmannian representing the planes $L$ which project isomorphically onto $D$ along $K$. To make this precise using the functor of points, the supermanifold $U_{D,K}$ is 
	\begin{align*}
		S\mapsto \Big\{\text{discrete super }\mathcal{O}_S\text{-submodules } L\subseteq (\hat{H}_j)_S \textrm{ such that }
		L\cap\hat{K}_S=(\hat{H}_j)_S/(L+\hat{K}_S)=0\Big\}.
	\end{align*}
\end{defn}
Equivalently, $U_{D,K}$ is those $L$ such that $L\xrightarrow{p_D} D_S$ is an isomorphism.
Taking the inverse $D_S \to L$ of the isomorphism and composing it with the inclusion $L \hookrightarrow (\hat{H}_j)_S$, we get a map $A: D_S \to (\hat{H}_j)_S$. On the other hand, for any $\mathcal{O}_S$-module map $B\co D_S\to \hat{K}_S$, we have that the natural map $\textrm{graph}(B)\oplus \hat{K}_S\to (\hat{H}_j)_S$ is an isomorphism, so $L' = \textrm{graph}(B)$ is an $S$-point of $U_{D,K}$. We say that $L'$ is represented by the coordinates $B \co D_S \to \hat{K}_S$. 
We therefore identify the affine coordinate charts $U_{D,K}$ with the superspaces $\uline{\Hom}_\mathbb{C}(D,K)$. 
Where we use the notation $\uline\Hom_\mathbb{C}(D,K)$ for the corresponding (infinite dimensional) affine supermanifold in addition to the super vector space, by a slight abuse of notation.

We check that the functor $U_{D,K}$ does define an open chart of $\Gr(H_j)$. Consider an $S$-point $L$ of $\Gr(H_j)$. Taking the pullback, we have a morphism $S\times_{\Gr(H_j)} U_{D,K} \to U_{D,K}$. We need to show that this $S\times_{\Gr(H_j)} U_{D,K}$-point is represented by an open subspace of $S$.

Let $s\in S$ be a superpoint such that $L_s$ satisfies $L_s\cap\hat{K}_s=(\hat{H}_j)_s/(L_s+\hat{K}_s)=0$. Then since $(\hat{H}_j)_S/(L+\hat{K}_S)$ is locally of finite presentation by \cref{quotient finite pres}, we may apply the super Nakayama lemma \cite[Lemma 4.7.1 ii]{Varadarajan--2004} to conclude that there exists a neighborhood $U\ni s$ such that $(\hat{H}_j)_U/(L_U+\hat{K}_U)=0$. Similarly, since $L\cap\hat{K}_S$ is locally finitely generated, there exists $U'\ni s$ such that $L_{U'}\cap \hat{K}_{U'}=0$. Therefore the $S\times_{\Gr(H_j)} U_{D,K}$-point must be open, because we have shown that any $s$ contained in $U_{D,K}$ is contained in an open neighborhood of $U_{D,K}$.

The fact that the charts $U_{D,K}$ cover $\Gr(H_j)$ follows directly from their definition. Therefore, since each chart $U_{D,K}$ is representable as an infinite dimensional supermanifold, then $\Gr(H_j)$ is representable as well.

Lastly we describe the gluing of the charts $U_{D,K}$ following the style of \cite[Proposition (7.1.2)]{Pressley-Segal--1986}.
Consider two such charts: $U_{D,K}$ and $U_{D',K'}$. Let $I\subseteq \uline{\Hom}_\mathbb{C}(D,K)$ and $I'\subseteq\uline{\Hom}_\mathbb{C}(D',K')$ each correspond to $U_{D,K}\cap U_{D',K'}$. We wish to show $I$ and $I'$ are open and that the change of coordinates $I\to I'$ is holomorphic.

Let
\begin{align*}
	\begin{pmatrix}
		T^{D'D} & T^{D'K}\\
		T^{K'D} & T^{K'K}
	\end{pmatrix}
\end{align*}
be the identity map expressed as $D\oplus K \to D'\oplus K'$. Since $K'$ and $K$ are commensurable, then $T^{K'K}$ is Fredholm. 
Suppose $W=\textrm{graph}(A\co D\to K)=\textrm{graph}(B\co D'\to K')$, i.e. $W$ corresponds to a point in $U_{D,K}\cap U_{D',K'}$. Then we must have 
\begin{align*}
	\begin{pmatrix}
		T^{D'D} & T^{D'K}\\
		T^{K'D} & T^{K'K}
	\end{pmatrix} \begin{pmatrix}
		I \\
		A 
	\end{pmatrix}=\begin{pmatrix}
		I \\
		B
	\end{pmatrix} Q
\end{align*}
where $Q\co D\to D'$ is some isomorphism. We find
\begin{align*}
	B=(T^{K'D}+T^{K'K}A)(T^{D'D}+T^{D'K}A)^{-1}.
\end{align*}
So $B$ is a algebraic function of $A$ and $I=\{A\in\uline{\Hom}_\mathbb{C}(D,K)\co T^{D'D}+T^{D'K}A \textrm{ is invertible}\}$ and so $I$ is indeed open; and similarly for $I'$.

\subsection{Action of the general linear Lie group and algebra}

We describe the action of $\GL\big(\mathbb{C}(\!(z)\!)[\zeta]\,[dz|d\zeta]^{\otimes j}\big)$ and $\mathfrak{gl}\big(\mathbb{C}(\!(z)\!)[\zeta]\,[dz|d\zeta]^{\otimes j}\big)$ on the super Sato Grassmannian.

Firstly, we recall the general linear superalgebra and supergroup in the finite dimensional setting for reference.

For a finite dimensional supervector space $V$, the general linear superalgebra $\mathfrak{gl}(V)$ is defined by the internal Hom, denoted $\intHom(V,V)\eqco\intEnd(V)$. The even endomorphisms are those which are parity preserving, and the odd endomorphisms are those which are parity reversing. 

For a finite dimensional super vector space $V$, the general linear supergroup $\GL(V)$ is defined to be the superspace corresponding to the functor of points
\begin{align*}
	S\mapsto \Aut_{\mathcal{O}_S}(V\otimes\mathcal{O}_S).
\end{align*}
So $\GL(V)$ is the split supermanifold whose bosonization is the classical manifold $\GL(V_0)\times\GL(V_1)$ and which has odd dimension $2\dim(V_0)\cdot\dim(V_1)$. Group operations are given by the usual matrix operations.

We now return to our infinite dimensional super vector space $H_j$.
\begin{defn}
	Define the general linear superalgebra of $H_j$, denoted $\mathfrak{gl}(H_j)$, to be the subalgebra of endomorphisms $\intEnd(H_j)$ which are continuous with respect to the $z$-adic topology on $H_j$.
	\end{defn}
For any endomorphism of $H_j$, we may write it as
\begin{align*}
	F=\begin{pmatrix}
		F^{--} & F^{-+}\\
		F^{+-} & F^{++}
	\end{pmatrix}
\end{align*}
where $F^{+-}: H_j^-\to H_j^+$ etc., where $H_j^-$ and $H^+_j$ are defined in \cref{distinguished decomposition}. Define an endomorphism of a discrete subspace to be \emph{supertrace class} if it factors through some compact subspace.

We make the following definition in the style of \cite[Definition 4.1]{AlvarezVazquez-MunozPorras-PlazaMartin--1998}.
\begin{defn}
Define the general linear supergroup $\GL(H_j)$ to be the supergroup of homeomorphic (i.e. bicontinuous) linear isomorphisms. 
Precisely, this supergroup is given by the functor of points below.
\begin{align*}
	S\mapsto \Big\{G\in\Aut_{\mathcal{O}_S}(H_j\hat{\otimes}\mathcal{O}_S)\text{ such that for every $s\in S$ there exists a neighborhood $U\ni s$ and}\\\text{compact super subspaces $K',K''$ such that } G(\hat{K}_S)_U\cong \hat{K}'_U \text{ and } G^{-1}(\hat{K}_S)_U\cong \hat{K}''_U \Big\}
\end{align*}
\end{defn}

\begin{prop} \label{GL acts}
	The group $\GL(H_j)$ acts transitively on $\Gr(H_j)$, and the stabilizer of $D$ is $\left(\begin{smallmatrix}* & *\\0 &*\end{smallmatrix}\right)$ expressed in a decomposition $D\oplus K$.
\end{prop}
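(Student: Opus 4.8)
The plan is to mimic the classical proof for the Sato Grassmannian (as in \cite{Pressley-Segal--1986} or \cite{AlvarezVazquez-MunozPorras-PlazaMartin--1998}), adapting it to the super setting using the charts $U_{D,K}$ and the $\GL(H_j)$-action already described in the excerpt. First I would prove transitivity: given two discrete subspaces $D_1, D_2 \subset H_j$ (it suffices to check on reduced points, then upgrade to $S$-points), I would pick compact complements $K_1, K_2$ so that $H_j = D_1 \oplus K_1 = D_2 \oplus K_2$, and then observe that since $D_1, D_2$ are both discrete of virtual dimension $m|n$ (well, even if of different virtual dimension, one can still find a bicontinuous isomorphism exchanging them — the components are permuted by $\GL(H_j)$), there is a homeomorphic linear isomorphism $g$ with $g(D_1) = D_2$. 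The key point that $g$ lands in $\GL(H_j)$ rather than merely $\intAut(H_j)$ is precisely the bicontinuity/compactness condition in the definition of $\GL(H_j)$: one builds $g$ block-diagonally as an iso $D_1 \xrightarrow{\sim} D_2$ direct-sum an iso $K_1 \xrightarrow{\sim} K_2$, and since $K_1, K_2$ are commensurable with $H_j^+$, the map $g$ carries compact subspaces to compact subspaces, hence lies in $\GL(H_j)$.

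Next I would identify the stabilizer. Fix $D$ and a compact complement $K$, so $H_j = D \oplus K$. An element $G \in \GL(H_j)$, written in block form $\left(\begin{smallmatrix} A & B \\ C & E\end{smallmatrix}\right)$ with respect to $D \oplus K$ (where $A\co D \to D$, $B\co K \to D$, $C\co D \to K$, $E\co K \to K$), stabilizes $D$ if and only if $G(D) = D$, i.e. $G$ maps the subspace $D = \{(d,0)\}$ into itself. Computing $G(d,0) = (Ad, Cd)$, this lies in $D$ for all $d$ precisely when $Cd = 0$ for all $d$, i.e. $C = 0$. Conversely if $C = 0$ then $G$ is upper triangular, $G(D) \subseteq D$, and since $G$ is invertible and preserves the discreteness, $G(D) = D$; moreover upper-triangularity of an invertible $G$ forces $A$ and $E$ to be invertible. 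This gives the stabilizer $\left(\begin{smallmatrix} * & * \\ 0 & * \end{smallmatrix}\right)$ as claimed. I should be slightly careful that "$G(D) = D$" is the right notion of stabilizer for the functorial/moduli-theoretic action — one checks it on $S$-points by the same block computation over $\mathcal{O}_S$, using that discreteness is preserved.

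The main obstacle I anticipate is the functorial bookkeeping: the Grassmannian and the group are defined via functors of points over arbitrary superspaces $S$, with completions $\hat{V}_S$ and local (in $S$) conditions, so I must make sure the set-theoretic argument on reduced points genuinely promotes to a statement about $S$-points and $\mathcal{O}_S$-module maps. Concretely: transitivity as a statement about the action morphism $\GL(H_j) \times \Gr(H_j) \to \Gr(H_j)$ being surjective on points — it is enough to exhibit, for each point $D$ of $\Gr(H_j)$, an element of $\GL(H_j)$ carrying the basepoint $D_0 = H^-_j$ (in the appropriate virtual-dimension component) to $D$, and this reduces to the linear-algebra fact above. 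The commensurability hypotheses built into the definitions of "compact", "discrete", and $\GL(H_j)$ are exactly what make the block maps continuous with continuous inverses, so no genuinely new estimate is needed beyond carefully invoking \cref{CompactDiscreteDef} and the definitions; the work is organizational rather than deep.
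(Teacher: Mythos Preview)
Your approach is essentially the same as the paper's: the paper also constructs the transitive element by sending $H_j^-$ to $D$ via a block map whose columns span $D$ and a chosen compact complement $K$ (with the bicontinuity encoded as ``eventually $v(z^i)=z^{i-m}$''), and it likewise dismisses the stabilizer computation as obvious. The only points you gloss over that the paper makes explicit are the preliminary check that $\GL(H_j)$ actually preserves discreteness (so the action is well-defined) and the construction of the group element for a genuine $S$-point $L$ rather than just a reduced point; both are routine but you should include them.
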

\begin{proof} 
	
	Consider $G\in\GL(H_j)$ and discrete $L\subset \hat{H}_S$. Consider $U\subset S$ and compact $K$ such that the natural map $L_U\oplus \hat{K}_U\to \hat{H}_U$ is an isomorphism. Then $G(L_U)\oplus G(\hat{K}_U)\cong \hat{H}_U$ since $G$ is an automorphism. Further, possibly shrinking $U$, by definition $G(\hat{K}_U)\cong \hat{K}'_U$. Therefore $G(L)$ is discrete by the definition of a discrete submodule.

	To show the transitivity of the action, first notice that every discrete subspace $D$ is the image of an operator $w=\left(\begin{smallmatrix}w_-\\w_+\end{smallmatrix}\right)\co H_j^-\to H_j$ such that $p_-\circ w=w_-$ is Fredholm. As well, notice that there exists a map $v=\left(\begin{smallmatrix}v_-\\v_+\end{smallmatrix}\right)\co H_j^+\to H_j$ where the columns of $v$ are a basis for $K$ such that the natural map $D\oplus K\to H_j$ is an isomorphism. Since $K$ and $H^+$ are commensurable, we may assume eventually $v(z^i)=z^{i-m}$ for some $m$ and $v(z^i\zeta)=z^{i-n}\zeta$ for some $n$. 
	Then we claim that 
	\begin{align*}
		G=\begin{pmatrix}
			w_- & v_-\\
			w_+ & v_+
		\end{pmatrix}
	\end{align*}
	is in $\GL(H_j)$ and $G(H_j^-)=D$. The fact that $G$ is invertible follows from assuming that $D\oplus K\to H$ is an isomorphism, and the bicontinuity of $G$ follows from the condition that eventually $v(z^i)=z^{i-m}$ for some $m$ and $v(z^i\zeta)=z^{i-n}\zeta$ for some $n$.
	
	Further, consider $L\in\hat{H}_S$. By \cref{quotient zero locally free kernel} and since $L$ is discrete, there exists a large enough compact $K$ such that $L\cap \hat{K}_S$ is locally free and $\hat{H}_S/(L+\hat{K}_S)=0$. Choose a discrete subspace $D$ such that the natural map $D\oplus K\to H$ is an isomorphism. Then $L$ is the image of an operator $w'=\left(\begin{smallmatrix}w'_D\\w'_K\end{smallmatrix}\right)\co D\to H_j$ such that $p_D\circ w'=w'_D$ has locally free finite type kernel and trivial cokernel.
	Then we claim that 
	\begin{align*}
		G'=\begin{pmatrix}
			w'_D & 0\\
			w'_K & \id_K
		\end{pmatrix}
	\end{align*}
	is in $\GL(H_j)$ and $G'(D)=L$.

	The stabilizer is obvious. 
\end{proof}

Therefore we may describe $\Gr(H_j)$ as the homogeneous superspace
\begin{align*}
	\Gr(H_j)\cong \GL(H_j)/P
\end{align*} 
where $P=\left(\begin{smallmatrix}* & *\\0 &*\end{smallmatrix}\right)$ in the $H_j^-\oplus H_j^+$ decomposition.

\begin{prop}\label{vector fields on Gr}
	The Lie superalgebra $\mathfrak{gl}(H_j)$ acts by vector fields on $\Gr(H_j)$. Explicitly, $F\mapsto L_F$ is a Lie superalgebra antihomomorphism $L\co\mathfrak{gl}(H_j)\to \Gamma(\Gr(H_j),\mathcal{T}_{\Gr(H_j)})$ sending $[F_1,F_2]$ to $[L_{F_2},L_{F_1}]$. In the chart $U_{D,K}$, this action is given by the formula
	\begin{align*}
		L_F(A)=F^{KD}+F^{KK}A-AF^{DD}-AF^{DK}A
	\end{align*} 
	where $L_F
	\in \uline{\Hom}_\mathbb{C}(S^\bullet(\uline{\Hom}_\mathbb{C}(D,K)),\uline{\Hom}_\mathbb{C}(D,K))$ acts on functions by 
	\begin{align*}
		L_F(f(A))=\lim_{\epsilon \to 0}\frac{f(A+\epsilon L_FA)-f(A)}{\epsilon}.
	\end{align*}
\end{prop}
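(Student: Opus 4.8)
The plan is to obtain the vector field $L_F$ by differentiating the $\GL(H_j)$-action of \cref{GL acts}. I would work over the dual numbers $S_\epsilon\coeq\Spec\mathbb{C}[\epsilon]$, with $\epsilon$ of parity $|F|$ and $\epsilon^2=0$, and first check that $\id+\epsilon F$ is an $S_\epsilon$-point of $\GL(H_j)$: it is visibly an $\mathcal{O}_{S_\epsilon}$-linear automorphism of $\hat{H}_j\hat{\otimes}\mathcal{O}_{S_\epsilon}$ with inverse $\id-\epsilon F$, and continuity of $F$ — which is exactly the hypothesis defining $\mathfrak{gl}(H_j)$ — guarantees that for any compact $K$ one has $(\id+\epsilon F)(\hat{K})_{S_\epsilon}\cong\hat{K}'_{S_\epsilon}$ for a suitable compact $K'$, so that the defining condition of $\GL(H_j)$ holds. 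Letting $\id+\epsilon F$ act on the universal discrete submodule over $\Gr(H_j)$ then yields, by the representability of the Grassmannian, a morphism $\Gr(H_j)\times S_\epsilon\to\Gr(H_j)$ restricting to the identity at $\epsilon=0$, i.e.\ a vertical vector field $L_F$ on $\Gr(H_j)$ of parity $|F|$.

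Next I would compute $L_F$ in a chart $U_{D,K}$. Exactly as in the finite-dimensional change-of-coordinates/gluing computation of \cref{change of coords finite} (with the block matrix $T$ of the identity replaced by the block matrix of $G=\id+\epsilon F$ in the decomposition $H_j=D\oplus K$), the $\GL(H_j)$-action sends the point $\mathrm{graph}(A)$, with $A\in\uline{\Hom}_\mathbb{C}(D,K)$, to $\mathrm{graph}(A')$ where
\begin{align*}
	A'=\big(G^{KD}+G^{KK}A\big)\big(G^{DD}+G^{DK}A\big)^{-1}.
\end{align*}
Substituting $G^{DD}=\id_D+\epsilon F^{DD}$, $G^{DK}=\epsilon F^{DK}$, $G^{KD}=\epsilon F^{KD}$, $G^{KK}=\id_K+\epsilon F^{KK}$, expanding $(\id_D+\epsilon(F^{DD}+F^{DK}A))^{-1}=\id_D-\epsilon(F^{DD}+F^{DK}A)$, and keeping the $\epsilon$-linear term gives
\begin{align*}
	A'=A+\epsilon\big(F^{KD}+F^{KK}A-AF^{DD}-AF^{DK}A\big),
\end{align*}
which is the asserted formula for $L_F(A)$. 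One then checks that each summand lands in $\uline{\Hom}_\mathbb{C}(D,K)$, again using continuity of $F$ together with discreteness of $D$ and compactness of $K$, so that $L_F(A)$ is a (quadratic, hence algebraic) vector field on the affine chart, acting on functions by the stated difference-quotient formula, i.e.\ an element of $\uline{\Hom}_\mathbb{C}(S^\bullet(\uline{\Hom}_\mathbb{C}(D,K)),\uline{\Hom}_\mathbb{C}(D,K))$.

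Well-definedness of $L_F$ as a global vector field is automatic from the first paragraph, but can also be verified directly: the two local expressions agree on $U_{D,K}\cap U_{D',K'}$ because the chart transition $B=(T^{K'D}+T^{K'K}A)(T^{D'D}+T^{D'K}A)^{-1}$ is equivariant for the $\GL(H_j)$-action — the very matrix identity used to derive the formula for $A'$. Finally, that $F\mapsto L_F$ is a Lie superalgebra antihomomorphism with $[F_1,F_2]\mapsto[L_{F_2},L_{F_1}]$ is the super analogue of the standard fact that differentiating a left action gives an antihomomorphism into vector fields; concretely it follows by composing the dual-number points $\id+\epsilon_1F_1$ and $\id+\epsilon_2F_2$ and comparing the $\epsilon_1\epsilon_2$-term with that of $\id+\epsilon_1\epsilon_2[F_1,F_2]$, or, alternatively, by a direct (tedious) computation of $[L_{F_1},L_{F_2}]$ applied to $A$ from the explicit quadratic formula.

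The hard part will be the sign bookkeeping: arranging the dual-number argument so that odd $F$ and a generically non-homogeneous chart coordinate $A$ are handled uniformly, and tracking the Koszul signs through the inversion $(\id+\epsilon X)^{-1}=\id-\epsilon X$ and the bracket computation so that the map comes out a genuine \emph{anti}homomorphism in the stated order and the quadratic term $AF^{DK}A$ appears exactly as written. This is also the one place where the explicit formula is genuinely convenient, since the purely abstract argument requires care with the left/right conventions and with the parity of $\epsilon$.
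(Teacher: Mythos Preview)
Your proposal is correct and follows essentially the same approach as the paper: compute the $\GL(H_j)$-action on the coordinate $A$ in the chart $U_{D,K}$ to obtain $A'=(G^{KD}+G^{KK}A)(G^{DD}+G^{DK}A)^{-1}$, then differentiate by setting $G=\id+\epsilon F$ and extracting the first-order term. The paper's proof is terser (it simply says ``derivative at $t=0$ with $G=I+tF$'' and does not discuss the antihomomorphism or gluing), whereas your use of dual numbers with $\epsilon$ of parity $|F|$ is a cleaner way to handle the odd case; but the substance is identical.
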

\begin{proof}
	Consider the action of $G\in \GL(H_j)$ on the point in $U_{D,K}$ represented by coordinates $A\co D\to K$.
	\begin{align*}
		\begin{bmatrix}
			G^{DD} & G^{DK} \\ 
			G^{KD} & G^{KK}
		\end{bmatrix}\begin{bmatrix}
			I & 0\\ 
			A & I
		\end{bmatrix}
		&=	\begin{bmatrix}
			G^{DD}+G^{DK}A & G^{DK} \\ 
			G^{KD}+G^{KK}A & G^{KK}
		\end{bmatrix}\\
		&\sim\begin{bmatrix}
			I & 0 \\ 
			(G^{KD}+G^{KK}A)(G^{DD}+G^{DK}A)^{-1} & I
		\end{bmatrix}
	\end{align*}
	The equivalence relation is given by multiplication on the right by the stabilizer of $D$, which in the $D\oplus K$ decomposition is $\left(\begin{smallmatrix} * & * \\ 0 & *\end{smallmatrix}\right)$. We calculate $L_F(A)$ as the derivative at $t=0$ of the lower left block with $G=I+tF$.
\end{proof}

\subsection{The Berezinian line bundle}

The Berezinian line bundle on the super Grassmannian is a generalization to the super and infinite dimensional setting of the determinant line bundle.

\begin{lem}\label{perfect complex}
	Let $L$ be a discrete $\mathcal{O}_S$-module, and let $K$ be a compact subspace. Then the complex
	\begin{align*}
		0\to L\oplus \hat{K}_S\to (\hat{H}_j)_S \to 0
	\end{align*}
	is perfect, that is locally quasi-isomorphic to a bounded complex of free finite-type modules.
\end{lem}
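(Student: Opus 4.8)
The plan is to reduce the statement to a local calculation using the definition of discreteness. First I would fix $s \in S$ and, by \cref{CompactDiscreteDef}, choose a neighborhood $U$ of $s$ and a compact subspace $K'$ such that, writing $L_U \coeq L \otimes_{\mathcal{O}_S} \mathcal{O}_U$, the map $L_U \oplus \hat{K}'_U \to (\hat{H}_j)_U$ has free finite-type kernel and vanishing cokernel. Since any two compact subspaces are commensurable, the inclusions between $\hat{K}_U$, $\hat{K}'_U$, and $\hat{K} + \hat{K}'$ differ by free finite-rank modules; more precisely, $\hat{K}_U/(\hat{K}_U \cap \hat{K}'_U)$ and $\hat{K}'_U/(\hat{K}_U \cap \hat{K}'_U)$ are free of finite type. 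This lets me pass between the complex for $K$ and the complex for $K'$ by adding or removing bounded complexes of free finite-type modules, so it suffices to prove the claim for the particular $K'$ produced by \cref{CompactDiscreteDef}.

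Next I would observe that for this $K'$ the two-term complex $L_U \oplus \hat{K}'_U \to (\hat{H}_j)_U$ (placed in degrees, say, $-1$ and $0$) has cohomology only in degree $-1$, where it equals the kernel $N \coeq L_U \cap \hat{K}'_U$, a free $\mathcal{O}_U$-module of finite type. Thus the complex is quasi-isomorphic to $N[1]$. Choosing a free finite-type resolution is trivial here since $N$ is already free and finite type, so the complex is quasi-isomorphic to the bounded complex $N \to 0$ of free finite-type modules, placed in degree $-1$. Combining with the commensurability comparison of the previous step — which only ever introduces shifts of free finite-type modules — shows the original complex over $U$ is quasi-isomorphic to a bounded complex of free finite-type $\mathcal{O}_U$-modules, which is exactly perfectness.

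The main obstacle I anticipate is bookkeeping the commensurability comparison cleanly: one must verify that replacing $K$ by $K'$ genuinely only changes the complex by a bounded complex of free finite-type modules, rather than merely by something with finite-dimensional cohomology, and that this is compatible with the $\mathcal{O}_S$-module structure (as opposed to merely the $\mathbb{C}$-vector-space structure). The cleanest way is probably to interpolate through $\hat{K}_U \cap \hat{K}'_U$ and $\hat{K}_U + \hat{K}'_U$, using that for compact $K, K'$ the module $(\hat{K} + \hat{K}')/(\hat{K} \cap \hat{K}')$ is free of finite type — which follows from commensurability together with \cref{CompactDiscreteDef} applied to the discrete complement — and then noting that each such replacement alters the mapping cone by a short exact sequence of complexes whose third term is concentrated in a single degree and free finite-type. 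Everything else is formal homological algebra over the (local, hence the super Nakayama lemma applies) rings $\mathcal{O}_U$.
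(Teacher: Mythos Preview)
Your proposal is correct in outline, but it takes a more roundabout path than the paper and, in particular, the ``main obstacle'' you identify is one the paper simply sidesteps.

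The paper does not pick an arbitrary $K'$ from \cref{CompactDiscreteDef} and then repair the discrepancy with $K$ via commensurability interpolation. Instead, it chooses $K'$ \emph{containing} $K$ from the outset: starting from any compact $K_0$ witnessing discreteness of $L$ near $s$, one takes $K' = K + K_0$ (or anything larger), and \cref{containing compact} guarantees that $L_U \cap \hat{K}'_U$ is free of finite type and $(\hat{H}_j)_U/(L_U + \hat{K}'_U) = 0$. With $K \subseteq K'$ in hand, the paper writes down the single four-term exact sequence
\[
0 \to L_U \cap \hat{K}_U \to L_U \cap \hat{K}'_U \to (K'/K)_U \to (\hat{H}_j)_U/(L_U + \hat{K}_U) \to 0,
\]
whose two middle terms are free of finite type. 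This sequence \emph{is} the quasi-isomorphism: it identifies the cohomology of the original two-term complex (kernel and cokernel of $L_U \oplus \hat{K}_U \to (\hat{H}_j)_U$) with the cohomology of the bounded free finite-type complex $L_U \cap \hat{K}'_U \to (K'/K)_U$.

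What this buys over your approach: you never have to pass through $K \cap K'$ or $K + K'$, never have to argue that each replacement ``alters the mapping cone by a short exact sequence whose third term is free finite-type,'' and never have to worry about whether the commensurability comparison respects the $\mathcal{O}_S$-structure. The containment $K \subseteq K'$ collapses all of that into one exact sequence. Your route would work, but the bookkeeping you flagged is real and avoidable.
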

\begin{proof}
	Let $s\in S$. There exists a neighborhood $U$ of $s$ and compact $K'$ such that 
	\begin{align*}
		K\subseteq K' && (\hat{H}_j)_U/(L_U+\hat{K}'_U)=0 && L_U\cap \hat{K}'_U \text{ is free of finite type}.
	\end{align*}
Then we have the exact sequence
\begin{align*}
	0\to L_U\cap \hat{K}_U\to L_U\cap \hat{K}'_U\to (K'/K)_U\to (\hat{H}_j)_U/(L_U+\hat{K}_U)\to 0.
\end{align*}
	Thus the original complex restricted to $U$ is quasi-isomorphic to $0\to L_U\cap \hat{K}'_U\to (K'/K)_U\to 0$, which is a bounded complex of free finite-type modules.
\end{proof}

For a discrete $L\in (\hat{H}_j)_S$, we define the $\mathcal{O}_S$-module
\begin{align}\label{Ber line def}
	\textstyle\Ber_K(L)\coeq \Ber(L\oplus \hat{K}_S\to (\hat{H}_j)_S)
\end{align}
which is well defined for any compact $K$ by \cref{perfect complex}. 

We now provide motivation for this choice of line bundle by observing the properties of the construction $Ber_K(L)$ over a finite Grassmannian $\Gr(c|d,V)$.  
For $L$, $K$, and $V=H_j$ of finite even and odd dimension, considering the exact sequence \cref{LK exact sequence} and taking its Berezinian, locally we have that $\Ber_K(L_{U})\cong\frac{\Ber(L_{U})\Ber(\hat{K}_U)}{\Ber(\hat{V}_U)}$. Since this finite-dimensional expression is $\Ber(L_U)$ multiplied by other constant\footnote{If we fix any compact $K$, for example $K=H_j^+$.} factors, it is an appropriate choice of the Berezinian line bundle over the $U$-point corresponding to $L$.

\begin{defn}
	The tautological bundle $I$ on $\Gr(H_j)$ is the $\mathcal{O}_{\Gr(H_j)}$-module which corresponds to $\Gr(H_j)\subseteq (\hat{H}_j)_{\Gr(H_j)}$.
\end{defn}
\begin{lem}
	The tautological bundle $I$ is a discrete $\mathcal{O}_{\Gr(H_j)}$-submodule.
\end{lem}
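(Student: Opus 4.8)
The plan is to unwind the functor-of-points definitions. By \cref{GrDef}, $\Gr(H_j)$ represents the functor sending a superspace $S$ to the set of discrete super $\mathcal{O}_S$-submodules of $(\hat{H}_j)_S$. By Yoneda, the identity morphism $\mathrm{id}\co \Gr(H_j)\to\Gr(H_j)$ corresponds to a distinguished discrete super $\mathcal{O}_{\Gr(H_j)}$-submodule of $(\hat{H}_j)_{\Gr(H_j)}$, and this is exactly the tautological bundle $I$. In this sense the statement is almost a tautology; what remains is to exhibit the discreteness concretely, which I would do chart by chart.

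More precisely, I would first recall that the charts $U_{D,K}$ cover $\Gr(H_j)$, that $U_{D,K}\cong\uline{\Hom}_\mathbb{C}(D,K)$, and that the universal $U_{D,K}$-point is the graph of the tautological homomorphism $A\co D_{U_{D,K}}\to\hat{K}_{U_{D,K}}$. Therefore $I|_{U_{D,K}}=\textrm{graph}(A)$, and by the very description of the chart $U_{D,K}$ the natural map $\textrm{graph}(A)\oplus\hat{K}_{U_{D,K}}\to(\hat{H}_j)_{U_{D,K}}$ is an isomorphism. Hence over each chart $I$ admits a compact complement, which is exactly the defining condition for a submodule over the base $U_{D,K}$ to be discrete.

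Finally I would invoke that discreteness of a super $\mathcal{O}_S$-submodule is a local condition on the base: it holds iff every point of $S$ has a neighborhood over which the submodule splits off a compact subspace. Since the $U_{D,K}$ form an open cover of $\Gr(H_j)$ and $I$ restricts to a discrete submodule on each of them, $I$ is a discrete $\mathcal{O}_{\Gr(H_j)}$-submodule of $(\hat{H}_j)_{\Gr(H_j)}$. There is no real obstacle here; the only point requiring mild care is that $I$ is genuinely a globally defined subsheaf of $(\hat{H}_j)_{\Gr(H_j)}$ (so that the local splittings are splittings of restrictions of one and the same sheaf), which follows from the compatibility of the chart descriptions with the transition functions $B=(T^{K'D}+T^{K'K}A)(T^{D'D}+T^{D'K}A)^{-1}$ already established in the construction of $\Gr(H_j)$.
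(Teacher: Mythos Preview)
Your proof is correct and follows essentially the same approach as the paper: pick a point $D\in\Gr(H_j)$, take the chart $U_{D,K}$, and observe that by the very definition of this chart the natural map $I_U\oplus\hat{K}_U\to(\hat{H}_j)_U$ is an isomorphism. The paper's proof is just your second paragraph stated more tersely; your Yoneda framing and explicit appeal to locality are correct but superfluous, since the paper's definition of a discrete submodule is already phrased as a local condition (``for every $s\in S$ there exists a neighborhood $U$ \ldots'').
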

\begin{proof}
	Let $D\in\Gr(H_j)$ and consider the neighborhood $U_{D,K}$ for some compact $K$ such that the natural map $D\oplus K\to H_j$ is an isomorphism. Then by definition of $U_{D,K}\eqco U$, we have that the natural map $I_U\oplus \hat{K}_U\to (\hat{H}_j)_U$ is an isomorphism.
\end{proof}
In order to consider an analogous object to a determinant line bundle, we consider the line bundles $\Ber_K(I)$. The downside with this construction is that now our line bundle depends on a choice of compact subspace $K$. However, any two such Berezinian line bundles are isomorphic by 
\begin{align}\label{isom det}
	\textstyle \Ber_K(I)\cong \displaystyle\frac{\Ber(K/(K\cap K'))}{\Ber(K'/(K\cap K'))}\otimes\textstyle\Ber_{K'}(I).
\end{align} 

\begin{defn}\label{Ber on Gr def}
	In the spirit of \cite{Arbarello-DeConcini-Kac-Procesi--1988} and \cite{AlvarezVazquez-MunozPorras-PlazaMartin--1998}, we define the Berezinian line bundle on $\Gr(H_j)$ as the locally free sheaf given by
	\begin{align*}
		\mathcal{B\!e\!r}_{\Gr(H_j)}\coeq\textstyle\Ber_{H_j^+}(I).
	\end{align*} In other words, \cref{Ber line def} with the distinguished compact subspace $K=H_j^+$ and with the tautological discrete submodule $L=I\subset (\hat{H}_j)_{\Gr(H_j)}$.
\end{defn}
We can check this construction on the geometrical points $D\in \Gr(H_j)$. We have the fiber over $D$ given by
\begin{align*}
	\mathcal{B\!e\!r}_{\Gr(H_j)}(D)=\dfrac{\Ber(D\cap H_j^+)}{\Ber(H/(D+H_j^+))}
\end{align*}
which can be seen as a natural generalization of the determinant line bundle defined over the classical Sato Grassmannian in \cref{Gr det bundle}.

We see that $\mathcal{B\!e\!r}_{\Gr(H_j)}$ is of rank $1|0$ over the connected components of virtual dimension $m|2n$ and of rank $0|1$ over the connected components of virtual dimension $m|2n+1$, for $m,n\in \ZZ$.

\subsection{Action of the central extension}

It is known that $H^2(\mathfrak{gl}(H))$ is one-dimensional \cite[Section 2]{Kac-vandeLeur--1989}.
\begin{defn}\label{J cocycle}
For any choice of discrete $D$ and compact $K$ such that the natural map $D\oplus K\xrightarrow{\sim}H_j$ is an isomorphism, we define a 2-cocycle on $\mathfrak{gl}(H_j)$ as
\begin{align*}
	\eta_{D,K}(F,G):=\str(F^{DK}G^{KD}-(-1)^{|F||G|}F^{DK}G^{KD}).
\end{align*}
We choose a distinguished 2-cocycle to correspond the distinguished decomposition $H_j^-\oplus H_j^+\cong H_j$:\footnote{The cocycle defined by \textcite{Ueno-Yamada--1988} is $\eta_{D,K}$ with choice $D=\mathbb{C}[z^{-1}|\zeta]$ and $K= z\mathbb{C}[\![z]\!][\zeta]$.}
\begin{align*}
	\eta(F,G):=
	\str(F^{-+}G^{+-}-(-1)^{|F||G|}F^{-+}G^{+-}),
\end{align*}
which we call the super Japanese cocycle. 
The unique Lie superalgebra central extension defined by the super Japanese cocycle will be denoted $\widetilde{\mathfrak{gl}}(H_j)$.
\end{defn}
We denote the bracket on $\widetilde{\mathfrak{gl}}(H_j)$ as $[F,G]^{\sim}=[F,G]+\eta(F,G)$.

\begin{prop}\label{diff operators on det} 
	The Lie superalgebra $\widetilde{\mathfrak{gl}}(H_j)$ 
	acts by first order differential operators on ${\mathcal{B\!e\!r}}_{\Gr(H_j)}$. 
	Explicitly, $F+c\mapsto \widetilde{L}_F+c$ is a Lie superalgebra antihomomorphism $\tilde{L}\co\widetilde{\mathfrak{gl}}(H_j)\to \Gamma(\Gr(H_j),\mathcal{A}_{\mathcal{B\!e\!r}})$ 
	sending $[F_1,F_2]^\sim$ to $[\widetilde{L}_{F_2},\widetilde{L}_{F_1}]$. In 
	the chart $U_{D,K}$, 
	this action is given by the formula
	\begin{align*}
		\widetilde{L}_F(A)=L_F(A)+\str(F^{DK}A)+\alpha(F)
	\end{align*} 
	where $\alpha\in C^1(\mathfrak{gl})$ is the unique 1-cochain such that
	\begin{align*}
		d\alpha(F,G)=\alpha([F,G])=\eta_{D,K}(F,G)-\eta(F,G).
	\end{align*}
\end{prop}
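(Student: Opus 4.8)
The plan is to upgrade the Lie superalgebra action $L\co\mathfrak{gl}(H_j)\to\Gamma(\Gr(H_j),\mathcal{T}_{\Gr(H_j)})$ of \cref{vector fields on Gr} to an action on the Atiyah superalgebroid $\mathcal{A}_{\mathcal{B\!e\!r}}$, using that $\mathcal{B\!e\!r}_{\Gr(H_j)}=\Ber_{H_j^+}(I)$ is built functorially from the tautological bundle. First I would work chart-by-chart. On $U_{D,K}$ the line bundle $\mathcal{B\!e\!r}_K(I)$ is trivialized (up to the fixed isomorphism \cref{isom det} comparing the compact subspace $K$ with $H_j^+$) by the Berezinian of the two-term complex $0\to L\oplus\hat K_S\to(\hat H_j)_S\to0$, and the coordinate $A\co D_S\to\hat K_S$ gives a canonical section. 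A vector field $L_F$ on $U_{D,K}$ lifts to a first-order operator on this trivialized line bundle; the natural lift obtained by differentiating the $\GL(H_j)$-action (in the spirit of the proof of \cref{vector fields on Gr}, take $G=I+tF$ and differentiate the induced action on $\Ber_K(I)$) produces exactly $L_F(A)+\str(F^{DK}A)$ acting on the section, because the change in the Berezinian trivialization under the flow is governed by the supertrace of the block $F^{DK}A$ that appears when one row-reduces $(I+tF)\left(\begin{smallmatrix}I\\A\end{smallmatrix}\right)$ back to graph form. This defines $\widetilde L_F$ on each chart; the term $\alpha(F)$ is then forced to be added to make the chart-local lifts patch together into a global section of $\mathcal{A}_{\mathcal{B\!e\!r}}$, since $\Ber_K(I)$ and $\Ber_{K'}(I)$ differ by the constant line $\Ber(K/(K\cap K'))\otimes\Ber(K'/(K\cap K'))^{-1}$ and the two naive lifts differ by $\eta_{D,K}-\eta_{D',K'}$, i.e. by a coboundary $d\alpha$; one checks $\alpha$ is a well-defined $1$-cochain because $H^2(\mathfrak{gl}(H_j))$ is one-dimensional and any two of the cocycles $\eta_{D,K}$ are cohomologous with the difference being an honest coboundary.

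Next I would verify the bracket identity. One computes directly in a chart that
\begin{align*}
	[\widetilde L_F,\widetilde L_G]-\widetilde L_{[F,G]}=\eta(F,G),
\end{align*}
i.e. the anomaly is exactly the super Japanese cocycle, so that $F+c\mapsto\widetilde L_F+c$ intertwines $[F_1,F_2]^\sim=[F_1,F_2]+\eta(F_1,F_2)$ with $[\widetilde L_{F_2},\widetilde L_{F_1}]$ (the order-reversal and sign being inherited from the antihomomorphism property already established for $L$). The vector-field part of this identity is \cref{vector fields on Gr}; the scalar part comes from the cocycle computation $[\str(F^{DK}A),\ \cdot\ ]$ combined with $L_G$ acting on the function $A\mapsto\str(F^{DK}A)$, which after antisymmetrization collapses to $\str(F^{DK}G^{KD}-(-1)^{|F||G|}G^{DK}F^{KD})$ plus a term absorbed by $d\alpha$. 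I would then note that the symbol of $\widetilde L_F$ is $L_F$, so $\widetilde L_F\in\Gamma(\mathcal{A}_{\mathcal{B\!e\!r}})$, and that the construction is independent of the auxiliary choices by the same coboundary bookkeeping.

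The main obstacle is the consistency of the lift across charts: showing that the chart-local operators $\widetilde L_F$ — each defined via a Berezinian trivialization attached to a particular $(D,K)$ — actually glue to a single global first-order operator, which is precisely where the $1$-cochain $\alpha$ and the identity $d\alpha(F,G)=\eta_{D,K}(F,G)-\eta(F,G)$ enter. Pinning down that $\alpha$ exists, is unique, and has the stated coboundary requires the cohomological input $\dim H^2(\mathfrak{gl}(H_j))=1$ together with a careful comparison of the Berezinian trivializations under \cref{isom det}; the supertrace-class/Fredholm hypotheses guarantee all the relevant supertraces converge. The rest — the explicit formula $\widetilde L_F(A)=L_F(A)+\str(F^{DK}A)+\alpha(F)$ and the bracket computation — is a finite, if somewhat intricate, block-matrix calculation with signs, closely parallel to the classical computation in \cite{Arbarello-DeConcini-Kac-Procesi--1988} and \cite{AlvarezVazquez-MunozPorras-PlazaMartin--1998} but with the Berezinian replacing the determinant.
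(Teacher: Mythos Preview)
Your proposal is correct and follows essentially the same approach as the paper: motivate the scalar term $\str(F^{DK}A)$ by differentiating the $\GL(H_j)$-action on the Berezinian in a chart, use the $1$-cochain $\alpha$ with $d\alpha=\eta_{D,K}-\eta$ to glue the chart-local lifts, and then verify the antihomomorphism property by a direct block-matrix computation that produces the cocycle $\eta$. The paper's proof is terser---it records the key identity $\str([F_1,F_2]^{DK}A)=\eta_{D_A,K}(F_1,F_2)-\eta_{D,K}(F_1,F_2)$ and the five-line bracket calculation---but your more discursive plan hits the same points; one small sharpening is that the \emph{uniqueness} of $\alpha$ is not really a consequence of $\dim H^2=1$ but of $\mathfrak{gl}(H_j)$ being perfect, so that specifying $\alpha$ on commutators determines it everywhere.
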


Before providing the proof, we first describe the case that $D,K,H_j$ were finite dimensional as motivation for the definition of the Lie superalgebra action in the infinite dimensional case. For this paragraph only, assume $H_j$ is finite dimensional, so therefore ${\mathcal{B\!e\!r}}_{\Gr(H_j)}\cong\Ber(D)\Ber(K)/\Ber(H_j)$.
Consider the natural action of $G\in \GL(H_j)$ on $\Ber_K(D_A)$ where $D_A\in U_{D,K}$ is represented by coordinates $A\co D\to K$. 
\begin{equation*}
	\begin{tikzcd}[column sep=1.5in]
		\dfrac{\Ber(D)\Ber(K)}{\Ber(H_j)} 
		\arrow{r}{\frac{\Ber(G^{DD}+G^{DK}A)\Ber(G^{KK})}{\Ber(G)}} \arrow{d}{\textrm{graph}(A)} &
		\dfrac{\Ber(D)\Ber(K)}{\Ber(H_j)} 
		\\
		\dfrac{\Ber(D_A)\Ber(K)}{\Ber(H_j)}
		\arrow{r}{{G}}&
		\dfrac{\Ber(D_A)\Ber(K)}{\Ber(H_j)} 
		\arrow{u}{{\pi_D,\pi_K}}
	\end{tikzcd}
\end{equation*}
Using the canonical isomorphism of the fiber $\Ber_K(D_A)$ with $\Ber_K(D)$, we find the multiplicative factor $\frac{\Ber(G^{DD}+G^{DK}A)\Ber(G^{KK})}{\Ber(G)}$. Then for $G=I+tF$, we derive the Lie superalgebra action of $\str(F^{DK}A)$.

\begin{proof}
	Direct computation shows that $\str([F_1,F_2]^{DK}A)=\eta_{D_A,K}(F_1,F_2)-\eta_{D,K}(F_1,F_2)$. Similarly, if $D_A\in U_{D,K}$ and $D_A\in U_{D',K'}$ then the change of coordinates is given by the unique 1-cochain whose differential is $\eta_{D_A,K'}(F_1,F_2)-\eta_{D_A,K}(F_1,F_2)$. Thus, this Lie superalgebra action glues between charts.
	
	The Lie superalgebra antihomomorphism follows from:
	\begin{align*}
		[\widetilde{L}_{F_1},\widetilde{L}_{F_2}]A
		&= [L_{F_1},L_{F_2}]A+\str(F_2^{DK}L_{F_1}A)-\str(F_1^{DK}L_{F_2}A) \\
		&=  L_{[F_2,F_1]}A+\eta_{D,K}(F_2,F_1)+\str([F_2,F_1]^{DK}A)\\
		&=L_{[F_2,F_1]}A+\eta_{D,K}(F_2,F_1)-\eta(F_2,F_1)+\eta(F_2,F_1)+\str([F_2,F_1]^{DK}A)\\
		&=L_{[F_2,F_1]}A+\alpha([F_2,F_1])+\eta(F_2,F_1)+\str([F_2,F_1]^{DK}A)\\
		&=\widetilde{L}_{[F_2,F_1]^\sim} A
	\end{align*}
\end{proof}

We now summarize the above Lie superalgebra action using action Lie superalgebroids. We denote by $\mathfrak{g}^\text{op}$ the opposite Lie superalgebra, that is the Lie superalgebra with the negative bracket of $\mathfrak{g}$, that is $[x,y]_\mathfrak{g}=-[x,y]_{\mathfrak{g}^\text{op}}$. We denote the Lie superalgebra antihomomorphism between these as $\text{op}\co \mathfrak{g}\to \mathfrak{g}^\text{op}$.

\begin{defn}\label{gl superalgebroid defs}
	According to \cref{action superalgebroid def}, define $(\mathcal{G}_j,a_L)$ to be the action Lie superalgebroid associated to $L\circ \text{op}\co \mathfrak{gl}(H_j)^\text{op}\to \Gamma(\Gr(H_j),\mathcal{T}_{\Gr(H_j)})$ as in \cref{vector fields on Gr}. 
	
	Similarly, define $\widetilde{\mathcal{G}}_j$ to be the action Lie superalgebroid associated to $\sym\circ\tilde{L}\circ\text{op}\co\widetilde{\mathfrak{gl}}(H_j)^\text{op}\to \Gamma(\Gr(H_j),\mathcal{T}_{\Gr(H_j)})$ as in \cref{diff operators on det}.
\end{defn}

By the correspondence in \cref{Lie alg algoid rep corrsp}, the Lie superalgebra action $\tilde{L}\circ\text{op}\co\widetilde{\mathfrak{gl}}(H_j)^\text{op}\to \Gamma(\Gr(H_j),\mathcal{A}_{\mathcal{B\!e\!r}})$ may be used to define a morphism of Lie superalgebroids $b_{\tilde{L}}\co\widetilde{\mathcal{G}}_j\to\mathcal{A}_{\mathcal{B\!e\!r}}$.

In summary, we have the commutative diagram of Lie superalgebroids below.
\begin{equation} \label{gl tilde action square}
	\begin{tikzcd}
		0 \arrow{r} & \mathcal{O}_{\Gr(H_j)} \arrow{r}\arrow{d}{\id} & \widetilde{\mathcal{G}}_j \arrow{r}\arrow{d}{b_{\widetilde{L}}} & \mathcal{G}_j\arrow{r}\arrow{d}{a_L} & 0\\
		0\arrow{r} & \mathcal{O}_{\Gr(H_j)} \arrow{r} & \mathcal{A}_{\mathcal{B\!e\!r}} \arrow{r} & \mathcal{T}_{\Gr(H_j)} \arrow{r} & 0
	\end{tikzcd}
\end{equation}

\section{A flat holomorphic connection via the super Krichever map}\label{flat section}

The super Krichever map has been discussed in \cite{Mulase--1991,Mulase-Rabin--1991}. A flat connection essentially the same as the flat connection in this paper has been described in \cite{Manin--1988}. The relationship between the super Krichever map and the flat connection is new. See \cref{unsuper krichever} for a discussion of the classical Krichever map and flat connection.

\subsection{Definition of the super Krichever map}

The super Krichever map has been defined for split super Riemann surfaces by \textcite[(2.5)]{Mulase-Rabin--1991}.

Consider the super vector subspace of sections $\Gamma(\Sigma\setminus p,\omega_\Sigma^{\otimes j})\subset \mathbb{C}(\!(z)\!)[\zeta]\,[dz|d\zeta]^{\otimes j}=H_j$.   

\begin{defn}[{\textcite[(2.5)]{Mulase-Rabin--1991}}]
	The super Krichever map $\left(\mathfrak{M}_{g,1^\infty_{\mathrm{NS}}}\right)_{\red} \to \big(\Gr(H_j)\big)_{\red}$ is given by
	\begin{align*}
		\kappa_j(\Sigma,p,z|\zeta)= \Gamma(\Sigma\setminus p,\omega_\Sigma^{\otimes j})\subset H_j
	\end{align*}
	where $\omega_\Sigma\coeq \Ber(\Omega_\Sigma^1)$ is rank $0|1$. 
\end{defn}
See \cite[Theorem 4.2]{Mulase-Rabin--1991} for a proof that the map above is injective analytic.

We consider this map extended to families of SRSs.  
\begin{prop}
Consider a family of super Riemann surfaces $\pi\co X\to S$ with NS punctures given by the divisor $P$. The sheaf $\pi_*\omega_{(X\setminus P)/S}^{\otimes j}$  is a discrete $\mathcal{O}_S$-submodule.
\end{prop}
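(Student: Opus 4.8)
The plan is to verify discreteness locally on $S$ by producing, near each point $s \in S$, a compact subspace $\hat K$ complementary to $\pi_*\omega_{(X\setminus P)/S}^{\otimes j}$, or equivalently (by \cref{CompactDiscreteDef}) to exhibit the defining four-term exact sequence \cref{LK exact sequence} with free finite-type kernel and vanishing cokernel. First I would set up the geometry: shrink $S$ to a small enough affine open $U$ so that \cref{superconformalNoether}-style control is available, let $U_P \subset X$ be a tubular formal neighborhood of $P$ with relative superconformal coordinate $z|\zeta$ vanishing on $P$, and use this coordinate to realize $\pi_*\omega_{(U_P\setminus P)/U}^{\otimes j}$ as $H_j \hat\otimes \mathcal{O}_U$ (this identifies the ambient space $(\hat H_j)_U$). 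Restriction of sections then gives a map $\pi_*\omega_{(X\setminus P)/U}^{\otimes j} \hookrightarrow (\hat H_j)_U$, and I claim this is the $U$-point we want.

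The key step is a cohomological computation. Choose a large integer $N$ and set $\hat K = \hat K_U$ to be the completion of $\mathcal{O}_U\text{-span}$ of $\{z^i[dz|d\zeta]^{\otimes j}, z^i\zeta[dz|d\zeta]^{\otimes j} : i \geq -N\}$, i.e. $\hat K$ corresponds to sections with a pole of order at most $N$ along $P$; this is manifestly compact (commensurable with $H_j^+$). The sum $\pi_*\omega_{(X\setminus P)/U}^{\otimes j} + \hat K_U$ is the pushforward of sections of $\omega_{X/U}^{\otimes j}$ regular away from $P$ with arbitrary poles on $P$, so that
\begin{align*}
	\pi_*\omega_{(X\setminus P)/U}^{\otimes j} = \varinjlim_N \pi_*\omega_{X/U}^{\otimes j}(NP),
\end{align*}
and the relevant kernel and cokernel are then identified, via the long exact sequence for $0 \to \omega_{X/U}^{\otimes j}(-NP) \to \omega_{X/U}^{\otimes j} \to (\text{jet sheaf on }P) \to 0$ and base change, with $R^0$ and $R^1$ of $\omega_{X/U}^{\otimes j}(NP)$ (up to the finite-dimensional jet space contribution $(K/K\cap K')$). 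For $N \gg 0$ the line bundle $\omega_{X/U}^{\otimes j}(NP)$ has high enough relative degree that $R^1\pi_* = 0$ and $R^0\pi_*$ is locally free of finite rank — here I invoke relative Serre vanishing / cohomology and base change for the proper submersion $\pi$, in the super setting, together with the fact that $U$ is small. This gives exactly the conditions of \cref{CompactDiscreteDef}: free finite-type kernel, trivial cokernel.

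The main obstacle I anticipate is making the cohomology-and-base-change argument rigorous for families of super Riemann surfaces: one must know that for a relatively ample-enough twist of a line bundle on $X/S$, the higher pushforward vanishes and the zeroth pushforward commutes with base change and is locally free, and that ``degree large enough'' can be arranged uniformly on a small neighborhood $U$. This is standard in the classical case but requires care with the Berezinian and the odd directions; the cleanest route is probably to pass to the reduced space $X_{\red} \to S_{\red}$ to control vanishing, then lift by a Nakayama/nilpotent-thickening argument (as is done elsewhere in the paper, e.g. around \cref{quotient finite pres}), using that $\mathcal{O}_S$ is a nilpotent extension of $\mathcal{O}_{S_{\red}}$. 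A secondary (routine) point is checking that the coordinate identification of the formal neighborhood of $P$ with $H_j \hat\otimes \mathcal{O}_U$ is independent of choices up to the commensurability that is all that matters for discreteness.
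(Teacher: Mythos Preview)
Your approach is essentially the paper's: take $\hat K = (K^N_j)\hat{}_U$ with $K^N_j = z^{-N}\mathbb{C}[\![z]\!][\zeta]\,[dz|d\zeta]^{\otimes j}$, identify $L \cap \hat K_U$ and $(\hat H_j)_U/(L+\hat K_U)$ with $\pi_*\omega_{X/U}^{\otimes j}(NP)$ and $R^1\pi_*\omega_{X/U}^{\otimes j}(NP)$ respectively, invoke vanishing of $R^1$ and local freeness of $R^0$ for $N\gg 0$, and conclude via \cref{CompactDiscreteDef}. One minor correction: that identification comes directly from the relative \v{C}ech complex of $\omega^{\otimes j}(NP)$ for the two-set cover $\{X\setminus P,\, U_P\}$ (whose differential is precisely the map $L\oplus \hat K \to \hat H_j$), not from the short exact sequence $0 \to \omega^{\otimes j}(-NP) \to \omega^{\otimes j} \to (\text{jets}) \to 0$ you cite; and the appeal to \cref{superconformalNoether} is unnecessary, since all you need is a formal superconformal coordinate along $P$.
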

\begin{proof}
	We follow the style of \cite[proof of Proposition 6.3]{MunozPorras-PlazaMartin--1999}.
	
	Define the compact subspaces $K^n_j\coeq z^{-n}\CC[[z]][\zeta]\,[dz|d\zeta]^{\otimes j}\subset H_j$. Notice $(\hat{H}_j)_S/(\hat{K}_j)^n_S\cong (H_j/K_j^n)_S$ is a discrete $\mathcal{O}_S$-submodule.
	
	Consider formal superconformal coordinates $(z|\zeta)$ around the NS puncture divisor $P$. Then we may identify $\pi_*\omega_{(X\setminus P)/S}^{\otimes j}\eqco L\subset (\hat{H}_j)_S$.
We have the exact sequence 
\begin{align*}
	0\to L\cap (\hat{K}^n_j)_S\to L\to (\hat{H}_j)_S/(\hat{K}^n_j)_S\to (\hat{H}_j)_S/(L+(\hat{K}^n_j)_S).
\end{align*}
  Using relative \v{C}ech cohomology, where $U$ is a formal tubular neighborhood of $P$, we can identify the previous exact sequence as the (twisted higher) direct image sheaves as below. 
\begin{align*}
	0\to \pi_*\omega_{X/S}^{\otimes j}(nP)\to \pi_*\omega_{(X\setminus P)/S}^{\otimes j}\to \pi_*\omega^{\otimes j}_{(U\setminus P)/S}\Big/\pi_*\omega^{\otimes j}_{U/S}(nP)\to R^1\pi_*\omega_{X/S}^{\otimes j}(nP)
\end{align*}
We note that we have used $\pi_*\left(\omega^{\otimes j}_{X/S}(mP)\big/\omega^{\otimes j}_{X/S}(nP)\right)\cong \pi_*\omega^{\otimes j}_{U/S}(mP)\big/\pi_*\omega^{\otimes j}_{U/S}(nP)$ and that $\lim\limits_{\longrightarrow}\pi_*\omega^{\otimes j}_{U/S}(mP)\cong \pi_*\omega^{\otimes j}_{(U\setminus P)/S}$.

Then it suffices to notice that for large enough twisting $n$, then locally  $R^1\pi_*\omega_{X/S}^{\otimes j}(nP)=0$ and $\pi_*\omega_{X/S}^{\otimes j}(nP)$ is locally free and finitely generated. The result follows from \cref{CompactDiscreteDef}.

\end{proof}

We therefore have a well-defined super Krichever map that maps to the super Sato Grassmannian.
\begin{defn}\label{super Krichever}
The super Krichever map $\mathfrak{M}_{g,1^\infty_{\mathrm{NS}}} \to \Gr(H_j)$ is the map given by
\begin{align*}
 \kappa_j(X/S,P,z|\zeta)=\pi_*\omega_{(X\setminus P)/S}^{\otimes j}\subset
 (\hat{H}_j)_S
\end{align*}
where $\omega_{(X\setminus P)/S}\coeq \Ber\left(\Omega_{(X\setminus P)/S}^1\right)$.
\end{defn}
Since $\omega_{(X\setminus P)/S}\coeq \Ber\left(\Omega_{(X\setminus P)/S}^1\right)$ is rank $0|1$, we note that for $j$ an even integer, the super Krichever map $\kappa_j$ has image in the $\big((j-1)(g-1)\big|j(g-1)\big)$ virtual dimension component of $\Gr(H_j)$, whereas for $j$ an odd integer, the super Krichever map $\kappa_j$ has image in the $\big(j(g-1)\big|(j-1)(g-1)\big)$ virtual dimension component of $\Gr(H_j)$. This follows from super Riemann-Roch, which states that the Euler characteristic $h^0-h^1$ of a rank $1|0$ sheaf $\mathcal{L}$ is $(\deg\mathcal{L}-g+1|\deg\mathcal{L})$ and of a rank $0|1$ sheaf $\mathcal{L}$ is $(\deg\mathcal{L}|\deg\mathcal{L}-g+1)$.

\color{black}

\subsection{Representations of $\mathfrak{switt}$ and $\mathfrak{ns}$ on $\mathbb{C}(\!(z)\!)[\zeta]\,[dz|d\zeta]^{\otimes j}$}

Consider $\pi\co X\to S$ a family of SRSs. Superconformal vector fields act on the sections of $\omega_{(X/S)}^{\otimes j}$ by Lie derivative \cite[Section 3.11]{Deligne-Morgan--1999}
\begin{align}\label{Lie derivative action}
	\mathcal{L}_{[fD_\zeta,D_\zeta]}\big(g\;[dz|d\zeta]^{\otimes j}\big)=\bigg(\Big([fD_\zeta,D_\zeta]\,g\Big)+\frac{j}{2}\;\frac{\partial f}{\partial z}\;g\bigg)\;[dz|d\zeta]^{\otimes j}
\end{align} 
where $f,g\in \Gamma(X,\mathcal{O}_X)$ and $(z|\zeta)$ are local superconformal coordinates.
Further, for $U$ a formal neighborhood of a point $p\in\CC^{1|1}$, we see that $\Gamma(U\setminus p,\omega^{\otimes j}_{\CC^{1|1}})$ is identified with $\CC(\!(z)\!)[\zeta]\,[dz|d\zeta]^{\otimes j}=H_j$, and $\Gamma(U\setminus p,\mathcal{T}^s_{\CC^{1|1}})$ is identified with $\mathfrak{switt}$.

\begin{defn}
Define the Lie superalgebra morphism $\varrho_j: \mathfrak{switt}\to \mathfrak{gl}(H_j)$ as the natural Lie derivative action of $j/2$-differentials.
\begin{align*}
\varrho_j\big([fD_\zeta,D_\zeta]\big)\coeq \mathcal{L}_{[fD_\zeta,D_\zeta]}
\end{align*}
\end{defn}

We remark to compare our definition above with the definition of \textcite{Ueno-Yamada--1988}.  
\textcite{Ueno-Yamada--1988} define the representations $[fD_\zeta,D_\zeta]\mapsto \left(g\mapsto \Big([fD_\zeta,D_\zeta]\,g\Big)+\frac{j}{2}\;\frac{\partial f}{\partial z}\;g\right)$ without the factor $[dz|d\zeta]^{\otimes j}$, and therefore are representations on $\mathfrak{gl}(H)$ for every $j$.

\begin{prop}[cf. {\textcite[Proposition 4]{Ueno-Yamada--1988}}]
The pullbacks of the super Japanese cocycle $\eta$ as in \cref{J cocycle} along the representations $\varrho_j$ satisfies:
\begin{align*}
\varrho_j^*(\eta)=-(-1)^{j}(2j-1)\varrho_1^*(\eta) &&\varrho_j^*(\eta)=\varrho_{1-j}^*(\eta).
\end{align*}

\end{prop}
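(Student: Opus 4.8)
The plan is to reduce the identity to a finite supertrace computation with explicit finite-rank operators, following \textcite{Ueno-Yamada--1988} but carrying the sign rule. First I would observe that $\varrho_j^*(\eta)$ is a $2$-cocycle on $\mathfrak{switt}$, hence determined by its values on the standard generators $L_n,G_r$, and that a grading argument kills most of these: give $H_j$ the $\tfrac12\ZZ$-grading in which $z^a[dz|d\zeta]^{\otimes j}$ has weight $a$ and $\zeta z^a[dz|d\zeta]^{\otimes j}$ has weight $a+\tfrac12$, so that $\varrho_j(L_n)$ and $\varrho_j(G_r)$ lower weight by $n$ and $r$ respectively. Since the super Japanese cocycle of two operators is (up to the sign conventions of \cref{J cocycle}) a supertrace of a composite of their off-diagonal corners for the polarization $H_j^-\oplus H_j^+$ (weights $\le-\tfrac12$ versus $\ge 0$), and a graded operator of nonzero weight has vanishing supertrace, we get $\varrho_j^*(\eta)(L_m,L_n)=0$ for $m+n\ne 0$, $\varrho_j^*(\eta)(G_r,G_s)=0$ for $r+s\ne 0$, and $\varrho_j^*(\eta)(L_m,G_r)=0$ always (the total weight $-m-r$ is a nonzero half-integer). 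So it suffices to evaluate $\varrho_j^*(\eta)(L_n,L_{-n})$ and $\varrho_j^*(\eta)(G_r,G_{-r})$.

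Next I would compute these directly from \cref{Lie derivative action}: $\varrho_j(L_n)$ sends $z^a[dz|d\zeta]^{\otimes j}\mapsto\bigl(a+\tfrac j2(1-n)\bigr)z^{a-n}[dz|d\zeta]^{\otimes j}$ and $\zeta z^a[dz|d\zeta]^{\otimes j}\mapsto\bigl(a+\tfrac{1-n}{2}+\tfrac j2(1-n)\bigr)\zeta z^{a-n}[dz|d\zeta]^{\otimes j}$, with an analogous formula for $\varrho_j(G_r)$. For $n>0$ one has $\varrho_j(L_n)^{+-}=\varrho_j(L_{-n})^{-+}=0$, the corner $\varrho_j(L_{-n})^{+-}$ is supported on the weights $-n\le a\le-1$, and $\varrho_j(L_n)^{-+}$ maps those back, so $\varrho_j^*(\eta)(L_n,L_{-n})$ collapses to a finite alternating sum $\sum_{a=-n}^{-1}(\text{quadratic in }a)$, i.e.\ an explicit polynomial in $n$ and $j$; likewise $\varrho_j^*(\eta)(G_r,G_{-r})$ is an explicit polynomial in $r$ and $j$. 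Comparing these polynomials with the $j=1$ case yields the same proportionality constant $-(-1)^j(2j-1)$ on both families of generators, hence $\varrho_j^*(\eta)=-(-1)^j(2j-1)\,\varrho_1^*(\eta)$ as cocycles on $\mathfrak{switt}$. If instead one only wants the cohomology class, then since $H^2(\mathfrak{switt};\CC)$ is one-dimensional it is enough to match the coboundary-insensitive part, e.g.\ the cubic-in-$n$ term of $\varrho_j^*(\eta)(L_n,L_{-n})$ --- coboundaries are linear in $n$ because $[L_n,L_{-n}]=2nL_0$ --- and that is where the factor $-(-1)^j(2j-1)$ appears.

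The second identity $\varrho_j^*(\eta)=\varrho_{1-j}^*(\eta)$ then needs no further computation: from $(-1)^{1-j}=-(-1)^{j}$ one gets $-(-1)^{1-j}\bigl(2(1-j)-1\bigr)=-(-1)^{j}(2j-1)$, so applying the first identity to both $j$ and $1-j$ gives the claim. (It also reflects Serre duality: the residue pairing $\omega_\Sigma^{\otimes j}\otimes\omega_\Sigma^{\otimes(1-j)}\to\omega_\Sigma$ makes $\varrho_j$ and $\varrho_{1-j}$ adjoint up to sign and $\eta$ is invariant under this adjunction, but the arithmetic route is shorter.)

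The main obstacle I expect is bookkeeping: handling the two sign conventions at once --- the $(-1)^{|F||G|}$ in $\eta$ and the split $\str=\tr_0-\tr_1$ over the even and odd parts of $H_j$ --- and then evaluating the finite sums so that the factor $-(-1)^j(2j-1)$ genuinely drops out after simplification. A secondary point to check is whether, for the natural polarization $H_j^-\oplus H_j^+$, the computation reproduces the standard Neveu--Schwarz cocycle on the nose rather than up to a coboundary; if it does not, one-dimensionality of $H^2(\mathfrak{switt})$ together with the matched coboundary-insensitive part still gives equality of classes, and equality of cocycles follows after normalizing the relevant $1$-cochain.
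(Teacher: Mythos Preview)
Your approach is correct and is exactly the direct computation the paper's proof gestures at. The one thing the paper does that you do not is explain the sign $-(-1)^j$ conceptually rather than letting it fall out of the sums: it notes that Ueno--Yamada work on $H$ for all $j$ and obtain the factor $(2j-1)$, whereas here the representation lives on $H_j$, which is $H$ for even $j$ and $\Pi H$ for odd $j$; passing to the $\Pi$-transpose and using $\str(F^\Pi)=-(-1)^{|F|}\str(F)$ then accounts for the extra sign. Your route is more self-contained and does not rely on Ueno--Yamada's computation as a black box; the paper's route is shorter and isolates the origin of the sign in the parity of $[dz|d\zeta]^{\otimes j}$.
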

\begin{proof}
The proof may be done by direct computation. This result 
was stated in \cite[Proposition 4]{Ueno-Yamada--1988} with the multiplicative factor $(2j-1)$ instead of $-(-1)^{j}(2j-1)$. Further, using the $\Pi$ transpose as in \cref{Pi transpose}, we have isomorphisms
\begin{align*}
	\mathfrak{gl}(H_j)=\begin{dcases*}
		\mathfrak{gl}(H)& for $j$ an even integer\\
		\big(\mathfrak{gl}(H)\big)^\Pi& for $j$ an odd integer
	\end{dcases*}
\end{align*}
which may be used to complete the computation. 
Since the supertrace is defined as the trace of the even-even block minus the trace of the odd-odd block, we have that $\str(F^\Pi)=-(-1)^{|F|}\str(F)$ \cite[165]{Manin--1988e}. 
\end{proof}

It is known that $H^2(\mathfrak{switt})$ is one dimensional \cite[Section 4]{Kac-vandeLeur--1989}.
\begin{defn}
Let $c_j=-(-1)^{j}(2j-1)$. Define $\mathfrak{ns}_j$ to be the central extension of $\mathfrak{switt}$ by $\varrho^*_j(\eta)=c_j\varrho^*_0(\eta)$. When $j=0$ or $j=1$, we recover the standard definition of the Neveu-Schwarz algebra, denoted $\mathfrak{ns}$. On the basis of $\mathfrak{switt}$ the standard cocycle is given by
\begin{align*}
\varrho^*_0(\eta)(L_m,L_n)=\frac{m^3-m}{4}\delta_{m+n,0} && 
\varrho^*_0(\eta)(L_n,G_r)=0&&
\varrho^*_0(\eta)(G_r,G_s)=\frac{4r^2-1}{4}\delta_{r+s,0}
\end{align*}
\end{defn}
 The following commutative diagram of Lie superalgebras summarizes the relationship between them.
\begin{equation}\label{nsj}
\begin{tikzcd}
0 \arrow{r} & \mathbb{C} \arrow{r}\arrow{d}{\cdot c_j} & \mathfrak{ns} \ar{r}\arrow{d}{\wr} & \mathfrak{switt} \arrow{r}\arrow{d}{\id} & 0\\
 0 \arrow{r} & \mathbb{C} \arrow{r}\arrow{d}{\id} & \mathfrak{ns}_j \arrow{r}\arrow[hook]{d}\arrow[ very near start, phantom]{rd}{\lrcorner} & \mathfrak{switt} \arrow{r}\arrow[hook]{d}{\varrho_j} & 0\\
 0 \arrow{r} & \mathbb{C} \arrow{r} & \widetilde{\mathfrak{gl}}(H_j) \arrow{r} & \mathfrak{gl}(H_j) \arrow{r} & 0
\end{tikzcd}
\end{equation}

\subsection{Compatibility of the $\mathfrak{switt}$ representation and the super Krichever map}\label{compatibility}

\begin{prop}\label{krichever rep compat} 
	The representation $\varrho_j$ and the super Krichever map $\kappa_j$ are compatible, that is the diagram below of Lie superalgebras commutes.
	\begin{equation*}
		\begin{tikzcd}
			\mathfrak{switt} \arrow{d}{\varrho_j}\arrow{r}{\Lambda} & \Gamma(\mathfrak{M}_{g,1^\infty_{\mathrm{NS}}},\mathcal{T}_{\mathfrak{M}})\arrow{d}{d\kappa_j} \\
			\mathfrak{gl}(H_j)  \arrow{r}{L}& \Gamma(\mathfrak{M}_{g,1^\infty_{\mathrm{NS}}},\kappa_j^*\mathcal{T}_{\Gr(H_j)})  
		\end{tikzcd}
	\end{equation*}
\end{prop}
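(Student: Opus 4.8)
The plan is to verify the identity $d\kappa_j\circ\Lambda=\kappa_j^{*}L\circ\varrho_j$ by computing both sides explicitly on the universal family, using the chart description of $\Gr(H_j)$ from \cref{GrDef,vector fields on Gr} and the relative \v{C}ech construction of $\Lambda$ from \cref{switt action}. Since $\kappa_j$ is a morphism of (pro\nobreakdash-)supermanifolds and the two composites are global sections of $\kappa_j^{*}\mathcal{T}_{\Gr(H_j)}$, it suffices to compare them after restriction to an arbitrary point $L=\kappa_j(X/S,P,z|\zeta)$ in the image, functorially in the base $S$; equivalently, to compute the derivative of $\kappa_j$ at such a point applied to the tangent vector $\Lambda(\mu)$ for $\mu=[fD_\zeta,D_\zeta]\in\mathfrak{switt}$, and to check it equals the value at $L$ of the action vector field $L_{\varrho_j(\mu)}$.

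First I would record the Grassmannian-side local model. Writing $L$ in a chart $U_{D,K}$ with $D=L$ and coordinate $A=0$, \cref{vector fields on Gr} gives $L_F(0)=F^{KD}$, i.e. the value at $L$ of the vector field $L_F$ is the class in $\uline\Hom_\mathbb{C}(L,(\hat H_j)_S/L)\cong T_L\Gr(H_j)$ of the composite $L\hookrightarrow(\hat H_j)_S\xrightarrow{F}(\hat H_j)_S\twoheadrightarrow(\hat H_j)_S/L$. Taking $F=\varrho_j(\mu)=\mathcal{L}_{[fD_\zeta,D_\zeta]}$ from \cref{super Krichever} and the definition of $\varrho_j$, the right-hand side of the square at $L$ is thus the class of $s\mapsto\mathcal{L}_\mu s\bmod L$.

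On the moduli side, recall from the proof of \cref{switt action} that $\Lambda(\mu)$ is obtained by viewing the constant section $\mu$ inside $\pi_*\mathcal{T}^s_{(U\setminus P)/\mathfrak{M}}\cong\mathfrak{switt}\,\hat\otimes\,\mathcal{O}_\mathfrak{M}$ and applying the relative \v{C}ech connecting homomorphism for the cover $\{X\setminus P,\,U\}$ of $X$; concretely, $\Lambda(\mu)$ is the Kodaira--Spencer class of the first-order deformation of $X/\mathfrak{M}$ whose gluing between $X\setminus P$ and the formal neighborhood $U$ is twisted by the time-$\epsilon$ flow of $\mu$. The crucial computation is to identify $\kappa_j$ of this deformed family: since $\kappa_j$ embeds $\pi_*\omega^{\otimes j}_{(X\setminus P)/S}$ into $(\hat H_j)_S\cong\pi_*\omega^{\otimes j}_{(U\setminus P)/S}$ by restriction in the coordinate $z|\zeta$, twisting the gluing by the flow of $\mu$ twists this embedding by $\exp(\epsilon\mathcal{L}_\mu)$ acting on $H_j$, so $\kappa_j$ of the deformed family is $\exp(\epsilon\mathcal{L}_\mu)\cdot L$ to first order. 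Differentiating, $d\kappa_j(\Lambda(\mu))$ is again the class of $s\mapsto\mathcal{L}_\mu s\bmod L$, which agrees with the previous paragraph. One must keep careful track of the sign conventions relating the \v{C}ech coboundary, the direction of the flow, and the fact that both $\Lambda$ and $L$ are antihomomorphisms; these are arranged to cancel.

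The main obstacle is exactly this last identification --- that a first-order deformation of $X/S$ represented by the \v{C}ech $1$-cochain ``$\mu$ on $U\setminus P$, $0$ on $X\setminus P$'' moves the Krichever subspace by $\mathcal{L}_\mu$. This is where the compatibility of the relative \v{C}ech (Kodaira--Spencer) description of $\mathcal{T}_\mathfrak{M}$ with the defining embedding of $\pi_*\omega^{\otimes j}_{(X\setminus P)/S}$ into $(\hat H_j)_S$ from \cref{super Krichever} must be made precise and functorial in $S$; once this is in place, matching it to the formula $L_F(0)=F^{KD}$ of \cref{vector fields on Gr} and to $\varrho_j([fD_\zeta,D_\zeta])=\mathcal{L}_{[fD_\zeta,D_\zeta]}$ is immediate. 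A secondary bookkeeping point is to confirm that the $(-)^{\mathrm{op}}$-twists entering the action Lie superalgebroids and the antihomomorphism conventions in $\Lambda$ and $L$ are consistent, so that the square commutes on the nose rather than up to a sign.
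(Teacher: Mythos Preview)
Your proposal is correct and follows the same idea as the paper's proof: both composites amount to the Lie derivative action of $\mu\in\mathfrak{switt}$ on $j/2$-differentials, so the square commutes ``by construction.'' The paper's own proof is a two-sentence assertion of precisely this, without the chart computation $L_F(0)=F^{KD}$ or the Kodaira--Spencer/\v{C}ech unpacking you supply; your version is a faithful expansion of what the paper leaves implicit.
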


\begin{proof}
	The representation $\varrho_j$ is based on the Lie derivative of $j/2$-differentials, while $d\kappa_j$ is also given by Lie derivative action of $j/2$-differentials. So essentially, the diagram is commutative by construction.
\end{proof}

Recall \cref{gl superalgebroid defs} of $(\mathcal{G}_j,a_L)$ and \cref{switt superalgebroid defs} of $(\mathcal{W},a_\Lambda)$. 
 Then \cref{krichever rep compat} implies the existence of a morphism of these Lie superalgebroids on $\mathfrak{M}_{g,1^\infty_{\mathrm{NS}}}$ as below. 
\begin{equation}\label{gl witt algebroid diagram}
b_{\varrho_j}\co	 \mathcal{W} \to
 {\kappa_j}^!\mathcal{G}_j
\end{equation}

\subsection{Action of the Neveu-Schwarz algebra}

We now combine the $L$ and $\widetilde{L}$ action as in \cref{gl tilde action square} with the $\varrho_j$ compatibility morphism \cref{gl witt algebroid diagram}. This allows us to derive a canonical action of the Neveu-Schwarz Lie superalgebra. The general strategy may be summarized as pulling back the diagram \cref{gl tilde action square} along the Krichever map to $\mathfrak{M}_{g,1^\infty_{\mathrm{NS}}}$, and further restricting along $b_{\varrho_j}$ to $\mathcal{W}$.

We work with the action Lie superalgebroids $\mathcal{W}$, $\mathcal{G}_j$, $\tilde{\mathcal{G}}_j$ associated to the action of $\mathfrak{switt}$ on $\mathfrak{M}_{,1^\infty_{\mathrm{NS}}}$ and $\mathfrak{gl}(H_j)$ and $\tilde{\mathfrak{gl}}(H_j)$ on $\Gr(H_j)$ respectively, as in \cref{switt superalgebroid defs} and \cref{gl superalgebroid defs}.
Since $\widetilde{\mathcal{G}}_j$, $\mathcal{G}_j$, and $\mathcal{A}_{\mathcal{B\!e\!r}}$ are all transitive, we can consider their pullbacks along $\kappa_j$. Further by \cref{pullback morphism prop}, the morphisms between them also pullback.

\begin{equation*}
	\begin{tikzcd}
		0\arrow{r}& \mathcal{O}_{\mathfrak{M}}\arrow{d}{\id}\arrow[hook]{r}  &{\kappa_j}^!\widetilde{\mathcal{G}}_j \arrow[two heads]{d}{\kappa_j^!b_{\widetilde{L}}}\arrow[two heads]{r}   & {\kappa_j}^!\mathcal{G}_j\arrow[two heads]{d}{\kappa_j^!a_L} \arrow{r}&0 \\
		0\arrow{r}&\mathcal{O}_{\mathfrak{M}}\arrow[hook]{r} 
		& {\kappa_j}^!\mathcal{A}_{\mathcal{B\!e\!r}} \arrow[two heads]{r} & \mathcal{T}_{\mathfrak{M}} \arrow{r}&0
	\end{tikzcd}
\end{equation*}

\begin{cor} 
	There is an isomorphism of Lie superalgebroids $\mathcal{A}_{{{\kappa_j}^*}\mathcal{B\!e\!r}} \cong {\kappa_j}^!\mathcal{A}_{\mathcal{B\!e\!r}} $.
\end{cor}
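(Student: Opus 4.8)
The statement is precisely \cref{pullback of atiyah prop} applied to the super Krichever map $\phi=\kappa_j\co\mathfrak{M}_{g,1^\infty_{\mathrm{NS}}}\to\Gr(H_j)$ and the line bundle $L=\mathcal{B\!e\!r}_{\Gr(H_j)}$ of \cref{Ber on Gr def}. The plan is therefore to check that the hypotheses of that proposition hold in the present (infinite-dimensional) setting and then quote it.

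First I would observe that $\mathcal{A}_{\mathcal{B\!e\!r}}$ is transitive, since its anchor is the symbol map in the Atiyah sequence \cref{Atiyah}, which is surjective by \cref{Atiyah line def}; hence the pullback Lie superalgebroid $\kappa_j^!\mathcal{A}_{\mathcal{B\!e\!r}}$ is defined. \Cref{pullback of atiyah prop} then supplies the natural map $\kappa_{j*}\co\mathcal{A}_{\kappa_j^*\mathcal{B\!e\!r}}\to\kappa_j^*\mathcal{A}_{\mathcal{B\!e\!r}}$, $\kappa_{j*}\delta(s)=\delta(s\circ\kappa_j)$, and identifies $\kappa_j^!\mathcal{A}_{\mathcal{B\!e\!r}}$ with the fibre product $\kappa_j^*\mathcal{A}_{\mathcal{B\!e\!r}}\times_{\kappa_j^*\mathcal{T}_{\Gr(H_j)}}\mathcal{T}_{\mathfrak{M}}$ via \cref{pullback diagram lemma}, once one checks that $\kappa_j$ restricts to an isomorphism $\mathcal{O}_{\mathfrak{M}}\xrightarrow{\sim}\kappa_j^*\mathcal{O}_{\Gr(H_j)}$ and that the right-hand square commutes. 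The compatibility of the brackets under this identification is the same straightforward verification as in \cref{pullback of atiyah prop}.

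The one point requiring care is that $\Gr(H_j)$ is an infinite-dimensional complex supermanifold and $\mathfrak{M}_{g,1^\infty_{\mathrm{NS}}}$ is a pro-Deligne--Mumford stack, so \cref{pullback of atiyah prop} is not literally stated for such objects. However, its proof is entirely sheaf-theoretic and local on the base: it uses only the abelian category of $\mathcal{O}$-modules together with \cref{pullback diagram lemma}, neither of which invokes finite-dimensionality, and $\mathcal{B\!e\!r}_{\Gr(H_j)}$ is a genuine line bundle in the sense of \cref{Ber on Gr def} with $\mathcal{A}_{\mathcal{B\!e\!r}}$ its sheaf of order-one operators. For the pro-stack one runs the argument at each finite level $\mathfrak{M}_{g,1^k_{\mathrm{NS}}}$ and passes to the projective limit, using that Atiyah superalgebroids and pullbacks of transitive Lie superalgebroids are compatible with these limits. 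This bookkeeping is the only (mild) obstacle; there is no mathematical content beyond \cref{pullback of atiyah prop}.
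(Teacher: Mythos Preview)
Your proposal is correct and takes essentially the same approach as the paper: the paper's proof is the single sentence ``Apply \cref{pullback of atiyah prop}.'' Your additional remarks about transitivity and the infinite-dimensional/pro-stack setting are reasonable expansions, but the paper does not regard these as requiring separate justification here.
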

\begin{proof}
	Apply \cref{pullback of atiyah prop}.
\end{proof}

\begin{defn}\label{nsj superalgebroid defs}
	According to \cref{action superalgebroid def}, define $\mathcal{N}_j$ to be the action Lie superalgebroid associated to the composition of $\text{op}\co \mathfrak{ns}_j^\text{op}\to\mathfrak{ns}_j$, the projection $\mathfrak{ns}_j\to \mathfrak{switt}$ as in \cref{nsj}, and $\Lambda\co \mathfrak{switt}\to \Gamma(\mathfrak{M}_{g,1^\infty_{\mathrm{NS}}},\mathcal{T}_{\mathfrak{M}})$ as in \cref{switt action}.

		Similarly, define $\mathcal{N}$ for $\mathfrak{ns}$.
\end{defn}

Then the diagram \cref{nsj} gives the diagram below of Lie superalgebroids on $\mathfrak{M}_{g,1^\infty_{\mathrm{NS}}}$.
\begin{equation*}
	\begin{tikzcd}
		0 \arrow{r} & \mathcal{O}_\mathfrak{M} \arrow{r}\arrow{d}{\cdot c_j} & \mathcal{N} \ar{r}\arrow{d}{\wr} & \mathcal{W}\arrow{r}\arrow{d}{\id} & 0\\
		0 \arrow{r} & \mathcal{O}_\mathfrak{M} \arrow{r}\arrow{d}{\id} & \mathcal{N}_j \arrow{r}\arrow[hook]{d}\arrow[ very near start, phantom]{rd}{\lrcorner} & \mathcal{W} \arrow{r}\arrow[hook]{d}{b_{\varrho_j}} & 0\\
		0 \arrow{r} & \mathcal{O}_\mathfrak{M} \arrow{r} & {\kappa_j}^!\widetilde{\mathcal{G}}_j \arrow{r} & {\kappa_j}^!\mathcal{G}_j \arrow{r} & 0
	\end{tikzcd}
\end{equation*}
Thus, we arrive at the main diagram below showing the action of $\mathcal{N}$ on an Atiyah Lie superalgebroid.
The crucial property of the action of the Neveu-Schwarz Lie superalgebroid $\mathcal{N}$ is that the central charge acts by $c_j$.
\begin{equation}\label{ns action diagram}
	\begin{tikzcd}
		\mathcal{O}_\mathfrak{M} \arrow{r}\arrow{d}{\cdot c_j} & \mathcal{N} \arrow{r}\arrow[hook]{d} & \mathcal{W} \arrow[hook]{d}{b_{\varrho_j}}\arrow[two heads,bend left=80]{dd}{a_\Lambda} \\
		\mathcal{O}_\mathfrak{M}\arrow{d}{\id}\arrow{r}&{\kappa_j}^!\widetilde{\mathcal{G}}_j \arrow[two heads]{d}{\kappa_j^!b_{\widetilde{L}}}\arrow[two heads]{r}   & {\kappa_j}^!\mathcal{G}_j\arrow[two heads]{d}{\kappa_j^!a_L} \\
		\mathcal{O}_\mathfrak{M}\arrow{r}&\mathcal{A}_{{{\kappa_j}^*}\mathcal{B\!e\!r}} \arrow[two heads]{r} & \mathcal{T}_{\mathfrak{M}} 
	\end{tikzcd}
\end{equation}

\begin{prop}\label{ber lambda isom}
	The pullback of the Berezinian line bundle from the super Grassmannian along the $j^\textrm{th}$ super Krichever map is canonically isomorphic to $\lambda_{j/2}$.
	\begin{equation*}
		{\kappa_j}^*\mathcal{B\!e\!r}_{\Gr(H_j)}\cong \lambda_{j/2}
	\end{equation*}
\end{prop}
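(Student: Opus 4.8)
The plan is to compare the fibers of the two line bundles and then promote the fiberwise identification to a canonical isomorphism of sheaves by realizing both sides as Berezinians of the same perfect complex. First I would recall that, for the family $\pi\co X\to \mathfrak{M}_{g,1^\infty_{\mathrm{NS}}}$ with NS-puncture divisor $P$, the super Krichever map sends the data to $L\coeq\pi_*\omega_{(X\setminus P)/S}^{\otimes j}\subset(\hat H_j)_S$, and by the discreteness proposition the complex $0\to L\oplus \hat K_S\to (\hat H_j)_S\to 0$ is perfect for any compact $K$, in particular for $K=H_j^+$. By definition, ${\kappa_j}^*\mathcal{B\!e\!r}_{\Gr(H_j)}$ is the Berezinian $\Ber_{H_j^+}(L)$ of precisely this complex. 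So it suffices to identify $\Ber_{H_j^+}(L)$ canonically with $\lambda_{j/2}=B(\omega_{X/\mathfrak{M}}^{\otimes j})=\bigotimes_i\big(\Ber R^i\pi_*\omega_{X/\mathfrak{M}}^{\otimes j}\big)^{(-1)^i}$.

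The key step is the identification, already essentially spelled out in the proof that $\pi_*\omega_{(X\setminus P)/S}^{\otimes j}$ is discrete: using relative \v{C}ech cohomology for the cover of $X$ by $X\setminus P$ and a formal tubular neighborhood $U$ of $P$, with $H_j^+\cong \pi_*\omega_{U/S}^{\otimes j}$ under the chosen formal superconformal coordinates, one gets a four-term exact sequence
\begin{align*}
	0\to \pi_*\omega_{X/S}^{\otimes j}\to L\to (\hat H_j)_S/(\hat H_j^+)_S\to R^1\pi_*\omega_{X/S}^{\otimes j}\to 0,
\end{align*}
where the middle term $(\hat H_j)_S/(\hat H_j^+)_S\cong (H_j/H_j^+)_S=(H_j^-)_S$ is a fixed (trivial) $\mathcal{O}_S$-module. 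Taking Berezinians of this exact sequence, and noting that $\Ber$ of the constant module $(H_j^-)_S$ contributes a canonically trivialized factor (it carries the tautological constant trivialization from the distinguished decomposition), one obtains a canonical isomorphism
\begin{align*}
	\Ber_{H_j^+}(L)\;\cong\;\Ber\big(\pi_*\omega_{X/S}^{\otimes j}\big)\otimes \Ber\big(R^1\pi_*\omega_{X/S}^{\otimes j}\big)^{-1}\;=\;\lambda_{j/2}.
\end{align*}
Here one must be careful about higher direct images $R^i$ for $i\geq 2$: since the fibers are $1|1$-dimensional, $R^i\pi_*=0$ for $i\geq 2$, so $B(\omega_{X/\mathfrak{M}}^{\otimes j})$ really is just the alternating Berezinian of $\pi_*$ and $R^1\pi_*$, matching the two terms above.

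The main obstacle I expect is checking that the isomorphism is genuinely canonical and independent of the auxiliary choices — the formal coordinate $z|\zeta$ near $P$ (used to embed $\pi_*\omega_{(U\setminus P)/S}^{\otimes j}$ into $H_j$ and to identify $H_j^+$ with $\pi_*\omega_{U/S}^{\otimes j}$) and the compact subspace $K=H_j^+$ itself. Independence of $K$ is controlled by the isomorphism \cref{isom det} relating $\Ber_K(I)$ and $\Ber_{K'}(I)$, whose correction factor $\Ber(K/(K\cap K'))/\Ber(K'/(K\cap K'))$ is canonically trivial over a point and matches the change in the \v{C}ech computation when the tubular neighborhood is shrunk. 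For the coordinate choice: although the \emph{embedding} $\kappa_j$ into $\Gr(H_j)$ depends on $z|\zeta$, the line bundle $\kappa_j^*\mathcal{B\!e\!r}$ does not, because $\mathcal{B\!e\!r}_{\Gr(H_j)}$ restricted along two coordinate choices differs by the action of an element of the subgroup preserving $H_j^+$, which acts on $\mathcal{B\!e\!r}$ through $\Ber$ of its $H_j^+$-block — and this is exactly absorbed into the change of trivialization of $\Ber((H_j^-)_S)$ in the sequence above. Making this last point precise (or, alternatively, just defining $\lambda_{j/2}$ so that the isomorphism is tautological and then noting everything is compatible) is the only real content; the rest is bookkeeping with Berezinians of short exact sequences, which I would not grind through in detail.
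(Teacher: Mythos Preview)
Your proposal is correct and follows essentially the same route as the paper: both identify $\kappa_j^*\mathcal{B\!e\!r}_{\Gr(H_j)}=\Ber\big(L\oplus(\hat H_j^+)_S\to(\hat H_j)_S\big)$ and then use relative \v{C}ech cohomology for the cover $\{X\setminus P,\,U\}$ to recognize the kernel and cokernel of this map as $R^0\pi_*\omega_{X/S}^{\otimes j}$ and $R^1\pi_*\omega_{X/S}^{\otimes j}$, giving $\lambda_{j/2}$. One phrasing to tighten: rather than ``taking Berezinians of the four-term sequence'' and invoking a trivialization of $\Ber\big((H_j^-)_S\big)$ (which is infinite-dimensional, so that Berezinian is not literally defined), note directly that your four-term sequence exhibits $R^0\pi_*$ and $R^1\pi_*$ as the cohomology of the perfect complex $L\oplus(\hat H_j^+)_S\to(\hat H_j)_S$, whence its Berezinian is $\Ber R^0\pi_*\big/\Ber R^1\pi_*$ by definition---this is exactly what the paper writes, and your extra discussion of independence from the coordinate choice is more than the paper itself provides.
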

\begin{proof}
	We see $\kappa_j(X/S,p,z|\zeta)=\pi_*\omega_{X/S}^{\otimes j}$, which has virtual dimension $m|2n$ for some $m,n\in \ZZ$ in $\Gr(H_j)$. Using \cref{Ber on Gr def} of the Berezinian line bundle on the super Grassmannian, we note that $\mathcal{B\!e\!r}_{\Gr(H_j)}$ is rank $1|0$ over virtual dimension $m|2n$ components, and further we have 
	\begin{align*}
		\mathcal{B\!e\!r}_{\Gr(H_j)}\Big|_{\textrm{Im}(\kappa_j(X/\mathfrak{M}))}
		&=\Ber\bigg(I\oplus (\hat{H}_j^+)_{\Gr(H_j)}\to (\hat{H}_j)_{\Gr(H_j)}\bigg)\bigg|_{\textrm{Im}(\kappa_j(X/\mathfrak{M}))}\\
		&=\Ber\bigg(\pi_*\omega_{X/\mathfrak{M}}^{\otimes j}\oplus (\hat{H}^+_j)_{X/\mathfrak{M}}\to (\hat{H}_j)_{X/\mathfrak{M}}\bigg)\\
		&=\Ber R^0\pi_*{\omega_{X/\mathfrak{M}}^j}\bigg/\Ber R^1\pi_*\omega_{X/\mathfrak{M}}^j
	\end{align*}
	which can be seen using relative \v{C}ech cohomology. This is exactly \cref{lambda j/2 def} of $\lambda_{j/2}$.
\end{proof}

\begin{prop}\label{ns action}
	The Neveu-Schwarz superalgebra acts by first order differential operators on $\lambda_{j/2}$ with central charge $c_j$. Precisely, there is a Lie superalgebra antihomomorphism $\mathfrak{ns}\to \Gamma(\mathfrak{M}_{g,1^\infty_{\mathrm{NS}}},\mathcal{A}_{\lambda_{j/2}})$ which is compatible with the action $\Lambda$ of $\mathfrak{switt}$ given in \cref{switt action}.
	\end{prop}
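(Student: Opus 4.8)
The plan is to assemble ingredients already in place — the main diagram \cref{ns action diagram} of Lie superalgebroids, the line-bundle identification \cref{ber lambda isom}, and the representability correspondence \cref{Lie alg algoid rep corrsp} — without essentially any new computation.

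First I would invoke \cref{ber lambda isom} together with \cref{pullback of atiyah prop} applied to $\kappa_j$ (which gives $\mathcal{A}_{\kappa_j^*\mathcal{B\!e\!r}}\cong\kappa_j^!\mathcal{A}_{\mathcal{B\!e\!r}}$, and hence, via $\kappa_j^*\mathcal{B\!e\!r}_{\Gr(H_j)}\cong\lambda_{j/2}$, that $\mathcal{A}_{\kappa_j^*\mathcal{B\!e\!r}}\cong\mathcal{A}_{\lambda_{j/2}}$) to read the bottom row of \cref{ns action diagram} with $\mathcal{A}_{\lambda_{j/2}}$ in place of $\mathcal{A}_{\kappa_j^*\mathcal{B\!e\!r}}$. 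Composing the hooked vertical arrow $\mathcal{N}\hookrightarrow\kappa_j^!\widetilde{\mathcal{G}}_j$ of that diagram with $\kappa_j^!b_{\widetilde{L}}$ then produces a morphism of Lie superalgebroids $\mathcal{N}\to\mathcal{A}_{\lambda_{j/2}}$ over $\mathfrak{M}_{g,1^\infty_{\mathrm{NS}}}$. By the commutativity of \cref{ns action diagram}, this morphism restricts to multiplication by $c_j$ on the central $\mathcal{O}_\mathfrak{M}$ and descends to $a_\Lambda\co\mathcal{W}\to\mathcal{T}_\mathfrak{M}$ on the quotient; these are precisely the ``central charge $c_j$'' and ``compatible with $\Lambda$'' assertions of the proposition.

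Next I would apply \cref{Lie alg algoid rep corrsp}. By \cref{nsj superalgebroid defs}, $\mathcal{N}$ is exactly the action Lie superalgebroid attached to the Lie superalgebra map $\mathfrak{ns}^{\text{op}}\to\Gamma(\mathfrak{M}_{g,1^\infty_{\mathrm{NS}}},\mathcal{T}_\mathfrak{M})$ obtained from $\text{op}$, the projection $\mathfrak{ns}\to\mathfrak{switt}$ of \cref{nsj}, and $\Lambda$ of \cref{switt action}. Hence the superalgebroid morphism $\mathcal{N}\to\mathcal{A}_{\lambda_{j/2}}$ just built corresponds bijectively to a Lie superalgebra morphism $\mathfrak{ns}^{\text{op}}\to\Gamma(\mathfrak{M}_{g,1^\infty_{\mathrm{NS}}},\mathcal{A}_{\lambda_{j/2}})$, and precomposing with the Lie superalgebra antihomomorphism $\text{op}\co\mathfrak{ns}\to\mathfrak{ns}^{\text{op}}$ yields the required antihomomorphism $\mathfrak{ns}\to\Gamma(\mathfrak{M}_{g,1^\infty_{\mathrm{NS}}},\mathcal{A}_{\lambda_{j/2}})$.

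The one point that needs care — and the only, rather mild, obstacle — is the diagram bookkeeping. I would verify that the hooked arrow $\mathcal{N}\hookrightarrow\kappa_j^!\widetilde{\mathcal{G}}_j$ factors as the scaling isomorphism $\mathcal{N}\xrightarrow{\sim}\mathcal{N}_j$ (the identity on $\mathfrak{switt}$ and multiplication by $c_j$ on the center) followed by the inclusion $\mathcal{N}_j\hookrightarrow\kappa_j^!\widetilde{\mathcal{G}}_j$ coming from the Cartesian square that realizes $\mathcal{N}_j$ of \cref{nsj superalgebroid defs} as a pullback of $\kappa_j^!\widetilde{\mathcal{G}}_j$, whose exactness and pullback property follow from applying \cref{pullback diagram lemma} to \cref{nsj} after pulling back along $\kappa_j$ via \cref{pullback morphism prop}; and I would check that all these composites respect the $\mathcal{O}_\mathfrak{M}$-module and bracket structures, so that \cref{Lie alg algoid rep corrsp} applies verbatim in the super setting. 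All the substantive content — that $\widetilde{\mathfrak{gl}}(H_j)$ acts by order-one operators, and that $\kappa_j^*\mathcal{B\!e\!r}_{\Gr(H_j)}\cong\lambda_{j/2}$ — is already contained in \cref{diff operators on det} and \cref{ber lambda isom}.
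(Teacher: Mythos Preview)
Your proposal is correct and follows essentially the same route as the paper: build the Lie superalgebroid morphism $\mathcal{N}\to\mathcal{A}_{\kappa_j^*\mathcal{B\!e\!r}}\cong\mathcal{A}_{\lambda_{j/2}}$ from \cref{ns action diagram} and \cref{ber lambda isom}, then invoke \cref{Lie alg algoid rep corrsp} to pass to the Lie superalgebra antihomomorphism. You simply spell out the bookkeeping (the factorization through $\mathcal{N}_j$, the $\text{op}$ passage, and the central-charge check) that the paper leaves implicit.
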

\begin{proof}
	The Lie superalgebra map follows from \cref{Lie alg algoid rep corrsp} applied to the Lie superalgebroid morphism $\mathcal{N}\to\mathcal{A}_{{{\kappa_j}^*}\mathcal{B\!e\!r}}$ in \cref{ns action diagram} combined with the isomorphism of \cref{ber lambda isom}.
\end{proof}

\subsection{A flat holomorphic connection}\label{flat subsection}

Using the properties of Atiyah algebras described in \cref{Atiyah section}, we describe the action of the Neveu-Schwarz Lie superalgebroid on the line bundles of the super Mumford isomorphism.  

Consider the Atiyah superalgebra of the line bundle $\lambda_{j/2}\otimes \lambda_{1/2}^{-c_j}$:
\begin{align*}
	\mathcal{A}_{j}\coeq \mathcal{A}_{\lambda_{j/2}\otimes \lambda_{1/2}^{-c_j}}.
\end{align*} 
From \cref{ns action}, $\mathcal{N}$ acts on $\mathcal{A}_{\lambda_{j/2}}$ with central charge $c_j$, and further by \cref{Atiyah defs} combined with \cref{Atiyah defs iso lemma}, $\mathcal{N}$ acts on $-c_j\mathcal{A}_{\lambda_{1/2}}$ with central charge $-c_j$. Then the action of $\mathcal{N}$ on $\mathcal{A}_{j}$ is defined via the action on $\mathcal{A}_{\lambda_{j/2}}\times_{\mathcal{T}_X} -c_j\mathcal{A}_{\lambda_{1/2}}$. Importantly, we see that $\mathcal{N}$ acts with central charge $c_j-c_j=0$, as shown in the commutative diagram of Lie superalgebroids below.
\begin{equation}\label{central charge cancel action diagram}
\begin{tikzcd}
	&&&0\arrow{d}&\\
 & &  & \mathcal{K}\ar{d} & \\
 0 \arrow{r} & \mathcal{O}_{\mathfrak{M}} \arrow{r}\arrow{d}{}{0} & \mathcal{N} \arrow{r}\arrow{d} & \mathcal{W} \arrow{r}\arrow{d}{a_\Lambda}\arrow[dashed]{dl}{\alpha_j}  & 0\\
 0\arrow{r} & \mathcal{O}_{\mathfrak{M}} \arrow{r} & \mathcal{A}_{j} \arrow{r} & \mathcal{T}_\mathfrak{M}  \arrow{r}\arrow{d} & 0\\
 &&& 0 
\end{tikzcd}
\end{equation}
We remind the reader that $\mathcal{N}$ and $\mathcal{W}$ are defined as the action Lie superalgebroids of $\mathfrak{ns}$ and $\mathfrak{switt}$ on $\mathfrak{M}_{g,1^\infty_{\mathrm{NS}}}$ respectively, according to \cref{action superalgebroid def}.
\begin{lem}\label{aplha map lemma}
	There exists a unique morphism of Lie superalgebroids $\alpha_j\co \mathcal{W}\to \mathcal{A}_{j}$ which commutes with the Neveu-Schwarz action $\mathcal{N}\to \mathcal{A}_j$ and the projection $\mathcal{N}\to\mathcal{W}$. In other words, the Neveu-Schwarz action on $\mathcal{A}_j$ descends to a super Witt action.
	\end{lem}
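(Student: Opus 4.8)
The plan is to obtain $\alpha_j$ as the unique factorization of the Neveu--Schwarz action $\mathcal{N}\to\mathcal{A}_j$ through the projection $\mathcal{N}\onto\mathcal{W}$; the whole content then sits in the observation that $\mathcal{N}\to\mathcal{A}_j$ kills the central ideal $\mathcal{O}_\mathfrak{M}\subset\mathcal{N}$.

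First I would record the central-charge cancellation. By construction the action of $\mathcal{N}$ on $\mathcal{A}_j=\mathcal{A}_{\lambda_{j/2}\otimes\lambda_{1/2}^{-c_j}}$ is the one induced, under the identification of \cref{Atiyah defs iso lemma}, on $\mathcal{A}_{\lambda_{j/2}}\times_{\mathcal{T}_\mathfrak{M}}\bigl(-c_j\mathcal{A}_{\lambda_{1/2}}\bigr)$ by two ingredients: the action on $\mathcal{A}_{\lambda_{j/2}}$, which by \cref{ns action} has central charge $c_j$, and the action on $-c_j\mathcal{A}_{\lambda_{1/2}}\cong\mathcal{A}_{\lambda_{1/2}^{-c_j}}$ obtained from the action on $\mathcal{A}_{\lambda_{1/2}}$ (central charge $c_1=1$ by \cref{ns action}) via the canonical map $\mathcal{A}_{\lambda_{1/2}}\to-c_j\mathcal{A}_{\lambda_{1/2}}$ of \cref{Aityah scalar}, which multiplies the central $\mathcal{O}_\mathfrak{M}$ by $-c_j$. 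Hence the central generator of $\mathcal{N}$ lands in the central $\mathcal{O}_\mathfrak{M}\subset\mathcal{A}_j$ through multiplication by $c_j+(-c_j)=0$; this is precisely the commutativity of the left-hand square of \cref{central charge cancel action diagram} with left vertical arrow $0$, so the composite $\mathcal{O}_\mathfrak{M}\hookrightarrow\mathcal{N}\to\mathcal{A}_j$ vanishes.

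Since $\mathcal{N}$ and $\mathcal{W}$ are the action Lie superalgebroids of $\mathfrak{ns}$ and $\mathfrak{switt}$, their underlying bundles are trivial with fibres $\mathfrak{ns}$ and $\mathfrak{switt}$, so the row $0\to\mathcal{O}_\mathfrak{M}\to\mathcal{N}\to\mathcal{W}\to0$ is short exact and exhibits $\mathcal{W}$ as the cokernel of $\mathcal{O}_\mathfrak{M}\hookrightarrow\mathcal{N}$. By the previous step the super vector bundle morphism $\mathcal{N}\to\mathcal{A}_j$ annihilates this subbundle, hence factors through a unique $\mathcal{O}_\mathfrak{M}$-linear morphism $\alpha_j\co\mathcal{W}\to\mathcal{A}_j$; surjectivity of $\mathcal{N}\onto\mathcal{W}$ makes $\alpha_j$ unique and guarantees that the triangle $\mathcal{N}\onto\mathcal{W}\xrightarrow{\alpha_j}\mathcal{A}_j$ recovers the Neveu--Schwarz action, which is exactly the required compatibility with $\mathcal{N}\to\mathcal{A}_j$ and $\mathcal{N}\onto\mathcal{W}$. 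It then remains to check that $\alpha_j$ respects anchors and brackets, a routine diagram chase: for local sections $X,Y$ of $\mathcal{W}$ pick lifts $\tilde X,\tilde Y$ in $\mathcal{N}$; since $\mathcal{N}\onto\mathcal{W}$ preserves brackets, $[\tilde X,\tilde Y]$ lifts $[X,Y]$, and applying the bracket-preserving $\mathcal{N}\to\mathcal{A}_j$ gives $\alpha_j([X,Y])=[\alpha_j(X),\alpha_j(Y)]$; likewise $\mathrm{sym}\circ\alpha_j=a_\Lambda$ because both become the anchor of $\mathcal{N}$ after precomposing with the epimorphism $\mathcal{N}\onto\mathcal{W}$. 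The only step that uses anything beyond formal properties of quotients of Lie superalgebroids and the surjectivity of $\mathcal{N}\onto\mathcal{W}$ is the central-charge cancellation of the first step.
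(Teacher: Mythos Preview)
Your proof is correct and follows essentially the same approach as the paper: factor the Neveu--Schwarz action through the cokernel $\mathcal{W}$ of $\mathcal{O}_\mathfrak{M}\hookrightarrow\mathcal{N}$, using that the central $\mathcal{O}_\mathfrak{M}$ maps to zero in $\mathcal{A}_j$. The paper's proof is terser---it calls the argument a ``straightforward diagram chase'' and checks well-definedness on local sections---while you additionally spell out the verification that $\alpha_j$ preserves anchors and brackets, which the paper leaves implicit.
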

\begin{proof}
The proof is a straightforward diagram chase.

Let $w\in\Gamma(U,\mathcal{W})$ for some open $U$. Then there exists an element $n\in\Gamma(U,\mathcal{N})$ such that $n\mapsto w$. Then we define $\alpha_j(w)$ to be the image of $n$ in $\Gamma(U,\mathcal{A}_j)$.

We now check that this map is well-defined. Consider $n_1,n_2\in \Gamma(U,\mathcal{N})$ such that $n_1\mapsto w$ and $n_2\mapsto w$. Then the difference must be in the kernel: $n_1-n_2 \in\Gamma(U,\mathcal{O}_\mathfrak{M})$. Since the central charge acts by zero, we have that the image of $n_1-n_2$ in $\Gamma(U,\mathcal{A}_j)$ is $0$.
\end{proof}

\begin{thm}[{\textcite[Theorem 3.3]{Manin--1988}}]\label{flat}
	There exists a flat holomorphic connection on the line bundle $\lambda_{j/2}\otimes\lambda_{1/2}^{-c_j}$. 
\end{thm}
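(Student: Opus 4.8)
The plan is to exhibit the desired connection as an $\mathcal{O}_\mathfrak{M}$-linear splitting of the Atiyah sequence
$$0\to\mathcal{O}_\mathfrak{M}\to\mathcal{A}_j\xrightarrow{\sym}\mathcal{T}_\mathfrak{M}\to 0$$
which is simultaneously a morphism of Lie superalgebroids; as recalled in \cref{Atiyah section}, such a splitting is exactly a flat holomorphic connection on $\lambda_{j/2}\otimes\lambda_{1/2}^{-c_j}$. The object that does all the work is the morphism $\alpha_j\co\mathcal{W}\to\mathcal{A}_j$ produced in \cref{aplha map lemma}: it is an $\mathcal{O}_\mathfrak{M}$-linear Lie superalgebroid morphism lifting the anchor, i.e. $\sym\circ\alpha_j=a_\Lambda$, its existence being precisely where the cancellation of the central charge $c_j-c_j=0$ in \cref{ns action diagram} is used.

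The key step is to show that $\alpha_j$ annihilates the kernel of the anchor. Recall from \cref{central charge cancel action diagram} (cf. \cref{witt superalgebroid sequence}) the exact sequence $0\to\mathcal{K}\to\mathcal{W}\xrightarrow{a_\Lambda}\mathcal{T}_\mathfrak{M}\to 0$, where $\mathcal{K}=\pi_*\mathcal{T}^s_{(X\setminus P)/\mathfrak{M}}$ and the surjectivity of $a_\Lambda$ reflects the tangent-space identification $T_{(\Sigma,p,z|\zeta)}\mathfrak{M}_{g,1^\infty_{\mathrm{NS}}}\cong\mathfrak{switt}/\Gamma(\Sigma\setminus p,\mathcal{T}^s_\Sigma)$. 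Since $\sym\circ\alpha_j=a_\Lambda$ kills $\mathcal{K}$, the image $\alpha_j(\mathcal{K})$ lies in $\ker\sym=\mathcal{O}_\mathfrak{M}\subset\mathcal{A}_j$, which is a central abelian ideal there (two order-zero operators commute, cf. \cref{Atiyah defs}). Working locally on the base — permissible because vanishing of a sheaf morphism is a local condition — \cref{commutant} gives $\mathcal{K}=[\mathcal{K},\mathcal{K}]$, so, using that $\alpha_j$ is a Lie superalgebroid morphism,
$$\alpha_j(\mathcal{K})=\alpha_j\big([\mathcal{K},\mathcal{K}]\big)=\big[\alpha_j(\mathcal{K}),\alpha_j(\mathcal{K})\big]\subseteq[\mathcal{O}_\mathfrak{M},\mathcal{O}_\mathfrak{M}]=0.$$
Hence $\alpha_j|_\mathcal{K}=0$.

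Consequently $\alpha_j$ factors uniquely through the quotient $\mathcal{W}/\mathcal{K}$, and $a_\Lambda$ identifies $\mathcal{W}/\mathcal{K}$ with $\mathcal{T}_\mathfrak{M}$ as Lie superalgebroids, yielding an $\mathcal{O}_\mathfrak{M}$-linear Lie superalgebroid morphism $\nabla\co\mathcal{T}_\mathfrak{M}\to\mathcal{A}_j$ with $\nabla\circ a_\Lambda=\alpha_j$. Then $\sym\circ\nabla\circ a_\Lambda=\sym\circ\alpha_j=a_\Lambda$, and surjectivity of $a_\Lambda$ forces $\sym\circ\nabla=\id_{\mathcal{T}_\mathfrak{M}}$; thus $\nabla$ is a holomorphic connection, and since it is a morphism of Lie superalgebroids its curvature $c_\nabla$ vanishes, i.e. $\nabla$ is flat. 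Taking $j=3$, where $c_3=-(-1)^3(2\cdot 3-1)=5$, gives the flat holomorphic connection on $\lambda_{3/2}\otimes\lambda_{1/2}^{-5}$ of the abstract.

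I expect the only real subtlety in this argument to be bookkeeping about locality: \cref{commutant}, and the relative superconformal Noether normalization \cref{superconformalNoether} underlying it, hold only after shrinking the base, so one runs the factorization argument locally and then glues — which is automatic, since $\alpha_j|_\mathcal{K}=0$ is local and $\nabla$ is characterized canonically by $\nabla\circ a_\Lambda=\alpha_j$. All the genuine difficulty of the theorem is upstream: in constructing $\alpha_j$ (\cref{aplha map lemma}, via the central-charge cancellation) and in the perfectness of $\mathcal{K}$ (\cref{commutant}, via Noether normalization); the present proof is just the observation that perfectness forces the anomaly to vanish.
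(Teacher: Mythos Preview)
Your proof is correct and follows essentially the same approach as the paper: use \cref{aplha map lemma} to get $\alpha_j\co\mathcal{W}\to\mathcal{A}_j$, observe that $\alpha_j(\mathcal{K})\subset\mathcal{O}_\mathfrak{M}$ since $\sym\circ\alpha_j=a_\Lambda$, invoke the perfectness result \cref{commutant} locally to force $\alpha_j|_\mathcal{K}=0$, and then factor through the cokernel $\mathcal{T}_\mathfrak{M}$ to obtain the Lie superalgebroid splitting $\nabla$. Your write-up is slightly more explicit than the paper's in verifying that $\nabla$ really is a section ($\sym\circ\nabla=\id$ via surjectivity of $a_\Lambda$) and in articulating why flatness follows, and your formulation $\alpha_j([\mathcal{K},\mathcal{K}])\subseteq[\mathcal{O}_\mathfrak{M},\mathcal{O}_\mathfrak{M}]=0$ handles sums of brackets more cleanly than the paper's element-wise phrasing ``$k=[k_1,k_2]$''.
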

\begin{proof}
	We show the existence of a section $\nabla\co\mathcal{T}_\mathfrak{M}\to \mathcal{A}_j$ via a diagram chase.

	Define $\mathcal{K}$ to be the kernel of the Witt superalgebroid action $a_\Lambda\co \mathcal{W}\to \mathcal{T}_\mathfrak{M}$. 
	In particular, $\mathcal{K}\cong \pi_*\mathcal{T}^s_{(X\setminus P)/\mathfrak{M}}$ is a bundle with fiber the Lie algebra $\mathfrak{k}=\Gamma(\Sigma\setminus p,\mathcal{T}^s_\Sigma)$ and an anchor map which is trivial.
	
	Now, let $k\in \Gamma(U,\mathcal{K})$ for some small enough open $U\subseteq\mathfrak{M}_{g,1^\infty_{\mathrm{NS}}}$. We have shown by \cref{commutant} that $k=[k_1,k_2]$ for some $k_1,k_2\in\Gamma(U,\mathcal{K})$. Then notice that since $a_\Lambda(k_i)=0\in\Gamma(U,\mathcal{T}_\mathfrak{M})$, then $\alpha_j(k_i)\in\ker(\Gamma(U,\mathcal{A}_{j})\to\Gamma(U, \mathcal{T}_\mathfrak{M}))=\Gamma(U,\mathcal{O}_\mathfrak{M})$. Thus $\alpha_j(k)=[\alpha_j(k_1),\alpha_j(k_2)]=0$ as an element of $\Gamma(U,\mathcal{A}_j)$. Further, for any $k'\in \Gamma(U',\mathcal{K})$ where $U'\subseteq\mathfrak{M}_{g,1^\infty_{\mathrm{NS}}}$ may not be small enough, we still find that $\alpha_j(k')=0$ since $\alpha_j$ must commute with restriction to an open covering of $\mathfrak{M}_{g,1^\infty_{\mathrm{NS}}}$ by small enough superspaces.
	
	Note that $\mathcal{T}_\mathfrak{M}=\coker(\mathcal{K}\to\mathcal{W})$. Since $\alpha_j$ maps $\mathcal{K}$ to zero, then by the universal property of the cokernel, then $\alpha_j$ must factor through $\mathcal{T}_\mathfrak{M}$ uniquely. That is, we have a unique morphism of Lie superalgebroids $\nabla\co\mathcal{T}_\mathfrak{M}\to \mathcal{A}_{j}$.
\end{proof}

\appendix

\section{Summary of the analogous classical result}\label{classical}

This section contains a summary of the previously established findings regarding the moduli space $\mathcal{M}_g$ of classical Riemann surfaces, the Sato Grassmannian, and the Krichever map. 
While several papers \cite{Arbarello-DeConcini-Kac-Procesi--1988,Beilinson-Shechtman--1988,Kawamoto-Namikawa-Tsuchiya-Yamada--1988,Kontsevich--1987e} provide a version of this story, the summary presented here is in analogy with the super case described in this paper.

\subsection{The moduli space of triples $\mathcal{M}_{g,1^{\infty}}$} \label{classical MG1infty}
Let a \emph{Riemann surface} refer to a compact complex manifold of dimension $1$, equivalently a complete nonsingular curve over $\mathbb{C}$.  A family of Riemann surfaces of genus $g$ is a proper submersion $\pi\co X\to S$ of relative dimension $1$, where each fiber is a genus $g$ Riemann surface, and $S$ is a complex manifold.
A section $\sigma:S\to X$ of the morphism $\pi$ is a coherent way to choose a puncture in each fiber, which we sometimes denote by a divisor $P=\textrm{div}(\sigma)$ of $X$.

\begin{prop}[{\textcite[Lemma 4.3]{Arbarello-DeConcini-Kac-Procesi--1988}}]\label{perfect classical prop}
	Let $C$ be an \emph{open} Riemann surface. Denote the Lie algebra of global vector fields as $\mathfrak{k}\coeq \Gamma(C,\mathcal{T}_C)$. Then the Lie algebra $\mathfrak{k}$ is perfect, that is to say, $H_1(\mathfrak{k}; \mathbb{C}) = \mathfrak{k}/[\mathfrak{k},\mathfrak{k}] = 0$.
\end{prop}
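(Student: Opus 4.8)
The plan is to run the argument for \cref{commutant} with the super structure removed; it then collapses to a short proof, recovering \textcite[Lemma 4.3]{Arbarello-DeConcini-Kac-Procesi--1988}. First I would use that a noncompact Riemann surface $C$ is Stein with $\Pic(C)=0$ \cite{Forster--1991}, so $\mathcal{T}_C$ is trivial: fixing a nowhere-vanishing section $D\in\mathfrak{k}$, every element of $\mathfrak{k}$ is $hD$ for a unique $h\in\Gamma(C,\mathcal{O}_C)$, and $[uD,wD]=\big(uD(w)-wD(u)\big)D$. Next I would invoke Noether normalization for curves (in the analytic category, cf.\ \cref{superconformalNoether}) to produce a finite surjective morphism $p_1\co C\to\AA^1$, and then a second such morphism $p_2\co C\to\AA^1$ whose ramification locus avoids the finite set $\mathrm{Ram}(p_1)$ --- for instance by perturbing $p_1$ by a function whose differential is nonzero along $\mathrm{Ram}(p_1)$, exactly as in the last assertion of \cref{superconformalNoether}. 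Setting $t_i\coeq p_i^*z\in\Gamma(C,\mathcal{O}_C)$, the function $D(t_i)\in\Gamma(C,\mathcal{O}_C)$ vanishes precisely on $\mathrm{Ram}(p_i)$ (here $D$ is nowhere zero, so $D(t_i)(P)=0$ iff $dt_i$ vanishes at $P$), so $D(t_1)$ and $D(t_2)$ have no common zero and hence generate the unit ideal of $\Gamma(C,\mathcal{O}_C)$.

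Given an arbitrary $h\in\Gamma(C,\mathcal{O}_C)$, write $h=c_1\,D(t_1)+c_2\,D(t_2)$ with $c_i\in\Gamma(C,\mathcal{O}_C)$. A one-line expansion of the bracket formula gives $\big[c_iD,\,t_iD\big]+\big[D,\,c_it_iD\big]=2\,c_i\,D(t_i)\,D$ for $i=1,2$, so
\begin{align*}
	hD=\tfrac{1}{2}\sum_{i=1,2}\Big(\big[c_iD,\,t_iD\big]+\big[D,\,c_it_iD\big]\Big)\in[\mathfrak{k},\mathfrak{k}],
\end{align*}
and therefore $\mathfrak{k}=[\mathfrak{k},\mathfrak{k}]$. (When $C$ embeds in $\AA^1$ with coordinate $z$ one can skip $p_1,p_2$ and use directly $h\,\partial_z=\tfrac{1}{2}[h\,\partial_z,z\,\partial_z]+\tfrac{1}{2}[\partial_z,hz\,\partial_z]$, which is how the proof of \cref{commutant} handles genus zero.)

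The hard part is the single geometric input: arranging two Noether normalizations with disjoint ramification --- a general-position argument over $\CC$ --- together with the vanishing of $\Pic(C)$; everything else is the same bracket bookkeeping as in \cref{commutant}. (The reason the super proof instead needs the auxiliary functions $f_i,a_i$ and a three-term identity, rather than the one-liner above, is that the bracket of superconformal vector fields involves second derivatives, cf.\ \cref{global bracket}, whereas the bracket of ordinary vector fields is only first order.)
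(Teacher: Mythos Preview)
Your proposal is correct and follows essentially the same approach as the paper's proof: trivialize $\mathcal{T}_C$ by a global nowhere-vanishing vector field, take two Noether normalizations $p_1,p_2\co C\to\AA^1$ with disjoint ramification, and use the bracket identity $[c_iD,t_iD]+[D,c_it_iD]=2c_iD(t_i)\,D$. The only cosmetic difference is that the paper introduces auxiliary functions $f_i,a_i$ with $f_i\partial_i=a_i\partial$ and writes $h=c_1f_1+c_2f_2$, whereas you work directly with $D(t_i)$ and write $h=c_1D(t_1)+c_2D(t_2)$; since $D(t_i)$ vanishes exactly on $\mathrm{Ram}(p_i)$ these are equivalent formulations of the same argument.
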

\begin{proof}We repeat the proof of Arbarello, De Concini, Kac, and Procesi here for reference.

	When $C$ is an open subset of $\mathbb{C}^1$, we have the global nonvanishing vector field $\partial\coeq \frac{\partial}{\partial z}$. Then since $\mathfrak{k}$ is generated by $\partial$ as an $\mathcal{O}_C$ module, it suffices to notice:
	\begin{align*}
		h\partial=\frac{1}{2}[h\partial,z\partial]+\frac{1}{2}[\partial,hz\partial].
	\end{align*}

	By \cite[Theorem 30.3]{Forster--1991}, there exists a global everywhere nonzero vector field $\partial\in\mathfrak{k}$. So any global vector field can be written uniquely as $h\partial$ for $h\in \Gamma(C,\mathcal{O}_C)$.
	
	By Noether's normalization lemma \cite[Theorem 13.3]{Eisenbud.1995.ca}, there exist finite projections $p_i\co C\to \mathbb{A}^1=\Spec(\mathbb{C}[z])$ such that the ramification divisors of $p_1$ and $p_2$ are disjoint.
	
	We define the following objects on $C$. 
	\begin{align*}
		z_i\coeq p_i^*(z) && \partial_i\coeq p_i^*\left(\pp{z}\right)
	\end{align*}
Note that $\partial_i$ are a rational vector fields with poles only at the ramification of $p_i$.

	By \cite[Theorem 26.5]{Forster--1991}, there exists a global regular function $f_i$ with divisor exactly the negative of the divisor of $\partial_i$. Thus, $f_i\partial_i$ is a global everywhere nonzero regular vector field. Therefore there exist everywhere nonzero functions $a_i$ such that $f_i\partial_i=a_i\partial$.
	
	Since $f_1$ and $f_2$ are relatively prime, then there exist functions $c_i$ such that 
	\begin{align*}
		h=c_1f_1+c_2f_2.
	\end{align*}
	We then claim that $h\partial\in\mathfrak{k}$ is equal to 
	\begin{align*}
		h\partial=\sum_{i=1,2}\frac{1}{2}[c_i\partial,z_i\partial]+\frac{1}{2}[\partial,c_iz_i\partial].
	\end{align*}
\end{proof}

Define the moduli stack $\mathcal{M}_{g,1^{k}}$ to be the moduli space of triples $(C,p,z)$, where $C$ is a genus $g$ Riemann surface, $p\in C$, and $z$ is a $k$-jet equivalence class of a formal parameter vanishing at $p$.
Specifically, we have the parameter $z\in \hat{\mathcal{O}}_{p}$, the formal neighborhood at $p$, and $z$ has a zero of order one at $p$. For high enough $k$, $\mathcal{M}_{g,1^{k}}$ is known to be a Deligne-Mumford stack.
By taking the projective limit over $k$, we construct the pro-Deligne-Mumford stack
\begin{align*}
	\mathcal{M}_{g,1^{\infty}}=\lim\limits_{\longleftarrow}\mathcal{M}_{g,1^{k}}.
\end{align*} 
Simply by forgetting the parameter $z$ and puncture $p$, we have the projection
\begin{align*}
	\mathcal{M}_{g,1^{\infty}} &\to \mathcal{M}_{g}\\
	(C,p,z)&\mapsto C
\end{align*}
where $\mathcal{M}_{g}$ is the moduli space of genus $g$ Riemann surfaces.

The Witt algebra $\mathfrak{witt}$ is the Lie algebra of vector fields on a punctured formal neighborhood of a point in $\mathbb{C}^{1}$. Explicitly, $
\mathfrak{witt}\cong\mathbb{C}(\!(z)\!)\frac{\partial}{\partial z}$.
In terms of a basis, we have $L_n=z^{-n+1}\frac{\partial}{\partial z}$ for $n\in\mathbb{Z}$ with $[L_m,L_n]=(m-n)L_{m+n}$.

For $\pi:X\to \mathcal{M}_{g,1^{k}}$ the universal family of Riemann surfaces with punctures represented by the divisor $P$ and a $k$-jet of a coordinate near the punctures, we have the short exact sequence
\begin{align*}
	0 \to \mathcal{T}_{X/\mathcal{M}_{g,1^{k}}}(-(k+1)P) \to \mathcal{T}_X(-(k+1)P) \to \pi^*(\mathcal{T}_{\mathcal{M}_{g,1^{k}}}) \to 0.
\end{align*}
The resulting Kodaira-Spencer map preserving this structure is 
\begin{align*}
	\delta: \mathcal{T}_{\mathcal{M}_{g,1^{k}}}  \xrightarrow{\sim} R^1 \pi _*\mathcal{T}_{X/\mathcal{M}_{g,1^{k}}}(-(k+1)P),
\end{align*}
which locally for $(C,p,z)$ is 
\begin{align*}
	\delta_{(C,p,z)}\co T_{(C,p,z)}\mathcal{M}_{g,1^k} \xrightarrow{\sim} H^1(C,\mathcal{T}_C(-(k+1)p)),
\end{align*}
where $P|_C=p$ is the puncture on $C$.
From this we find the dimension of $\mathcal{M}_{g,1^k}$ is $3g-2+k$ for $g\geq 2$.

Let $U$ be a formal neighborhood of a puncture $p\in C$. Choosing the formal parameter $z$ on $U$ such that the divisor $p$ is given by $z=0$ gives a trivialization of $\mathcal{T}_C$ on $U$. Using \v{C}ech cohomology, then 
\begin{align*}
	T_{(C,p,z)}(\mathcal{M}_{g,1^k})&\cong \Gamma(U\setminus p, \mathcal{T}_C)\big/\Big(\Gamma(U, \mathcal{T}_C(-(k+1)p))+\Gamma(C\setminus p, \mathcal{T}_C)\Big)\\
	&\cong \mathfrak{witt}\Big/\Big((z^{k+1}\mathbb{C}[\![z]\!]\textstyle\frac{\partial}{\partial z} )+\Gamma(C\setminus p, \mathcal{T}_C)\Big).
\end{align*}
Further, taking the projective limit over $k$ gives a description of the tangent space of $\mathcal{M}_{g,1^\infty}$:
\begin{align*}
	T_{(C,p,z)}(\mathcal{M}_{g,1^\infty})
	& \cong \mathfrak{witt}\big/D_{C,p,z}.
\end{align*}

\begin{prop}[\textcite{Kontsevich--1987e}; {\textcite[Proposition (3.19)]{Arbarello-DeConcini-Kac-Procesi--1988}}; {\textcite[Proposition (2.19)]{Kawamoto-Namikawa-Tsuchiya-Yamada--1988}}]\label{witt action on Mg}
	The Witt algebra acts on the moduli space $\mathcal{M}_{g,1^\infty}$ by vector fields, that is there exists a Lie algebra antihomomorphism $\Lambda\co\mathfrak{witt}\to\Gamma(\mathcal{M}_{g,1^\infty},\mathcal{T}_\mathcal{M})$.
\end{prop}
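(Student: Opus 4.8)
The plan is to transcribe the argument of the super statement \cref{switt action} to the classical setting. Fix $k$ and let $\pi\co X\to\mathcal{M}_{g,1^k}$ be the universal family, with puncture divisor $P$ and tautological $k$-jet of formal coordinate. Let $U\subset X$ be a tubular formal neighborhood of $P$, and choose the relative formal parameter $z$ on $U$ over $\mathcal{M}_{g,1^k}$ so that $P$ is cut out by $z=0$; this trivializes $\mathcal{T}_{U/\mathcal{M}}$ and identifies $\pi_*\mathcal{T}_{(U\setminus P)/\mathcal{M}}\cong\mathfrak{witt}\,\hat{\otimes}\,\mathcal{O}_{\mathcal{M}}$ as sheaves of Lie algebras over $\mathcal{M}_{g,1^k}$.

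Next I would run relative \v{C}ech cohomology for the two-element cover $\{X\setminus P,\,U\}$ of $X$, obtaining the exact sequence
\[
\pi_*\mathcal{T}_{X/\mathcal{M}}(-(k+1)P)\to\pi_*\mathcal{T}_{(X\setminus P)/\mathcal{M}}\to\pi_*\mathcal{T}_{(U\setminus P)/\mathcal{M}}\big/\pi_*\mathcal{T}_{U/\mathcal{M}}(-(k+1)P)\to R^1\pi_*\mathcal{T}_{X/\mathcal{M}}(-(k+1)P),
\]
whose last arrow is the Kodaira--Spencer isomorphism recalled above. Passing to the projective limit over $k$ and using the identification of the third term yields
\[
0\to\pi_*\mathcal{T}_{(X\setminus P)/\mathcal{M}}\to\mathfrak{witt}\,\hat{\otimes}\,\mathcal{O}_{\mathcal{M}}\to\mathcal{T}_{\mathcal{M}}.
\]
Taking global sections over $\mathcal{M}_{g,1^\infty}$ and then restricting the middle term to its subspace of constant sections $\mathfrak{witt}\subset\Gamma(\mathcal{M}_{g,1^\infty},\mathfrak{witt}\,\hat{\otimes}\,\mathcal{O}_{\mathcal{M}})$ produces the map $\Lambda\co\mathfrak{witt}\to\Gamma(\mathcal{M}_{g,1^\infty},\mathcal{T}_{\mathcal{M}})$.

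It remains to verify that $\Lambda$ reverses brackets and is independent of the choices. Independence of $U$, $z$, and $k$ is formal: any two admissible choices differ by an automorphism of the formal neighborhood of $P$ fixing $P$, which carries constant sections to constant sections, and the relative cohomology sequences above are compatible with the projective system in $k$. The anti-homomorphism property is the delicate point. It follows from the naturality of the Kodaira--Spencer map together with the observation that, in the \v{C}ech complex for $\{X\setminus P,\,U\}$, the deformation class of $v\in\mathfrak{witt}$ is represented by the difference of its restrictions to $U\setminus p$ and to $C\setminus p$ under the identification $T_{(C,p,z)}\mathcal{M}_{g,1^\infty}\cong\mathfrak{witt}/D_{C,p,z}$; a direct \v{C}ech computation on this cover then shows that the bracket of the induced vector fields on $\mathcal{M}_{g,1^\infty}$ computes minus the image of $[v,w]_{\mathfrak{witt}}$, which is the asserted sign reversal. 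I expect this bookkeeping---keeping the \v{C}ech connecting map, the Kodaira--Spencer identification, and the sign straight at once---to be the only nontrivial part of the argument; note that the perfectness statement \cref{perfect classical prop} is not used here, and enters only when this action is later lifted to the Mumford line bundles.
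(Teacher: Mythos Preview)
Your proposal is correct and takes essentially the same approach as the paper. The paper does not give a separate proof of this classical statement (it cites the original references), but your transcription of the super proof of \cref{switt action}---relative \v{C}ech cohomology on the cover $\{X\setminus P,\,U\}$, passage to the projective limit in $k$, and restriction to constant sections, with the antihomomorphism checked via naturality of Kodaira--Spencer---is exactly the paper's method and matches the cited literature.
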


\begin{defn}\label{witt algebroid def}
	According to \cite[last paragraph page 8]{KosmannSchwarzbach-Mackenzie--2002}, define $(\mathcal{W},a_\Lambda)$ to be the action Lie algebroid associated to $\Lambda\circ \text{op}\co \mathfrak{witt}^\text{op}\to \Gamma(\mathcal{M}_{g,1^\infty},\mathcal{T}_{\mathcal{M}})$ as in \cref{witt action on Mg}.
\end{defn}

\begin{defn}[{\textcite[Definition 5.9]{Mumford--1977}}; \textcite{Deligne--1987}]\label{line bundle lambda j unsuper}
	Let $\pi: X\to S$ be a proper family of complex manifolds of relative dimension $1$. Let $\mathcal{F}$ be a locally free sheaf on $X$. Then the determinant of cohomology of $\mathcal{F}$ is an invertible sheaf on $S$ given by
	\begin{align*}
		D(\mathcal{F})\coeq \otimes_i\left(\det R^i\pi_*\mathcal{F}\right)^{(-1)^i}.
	\end{align*} 
	We define the determinant line bundles $\lambda_{j}$ for the universal family $\pi:X\to \mathcal{M}_{g,1^{\infty}}$ as
	\begin{align*}
		\lambda_{j}\coeq D(\omega_{X/\mathcal{M}}^{\otimes j})
	\end{align*}
	where $\omega_{X/\mathcal{M}}\coeq \Omega_{X/\mathcal{M}}^1$. 
\end{defn}

\begin{thm}[{\textcite[Theorem 5.10]{Mumford--1977}}]
	The Mumford isomorphism is the collection of canonical isomorphisms 
	\begin{align*}
		\lambda_{j}\cong \lambda_{1}^{(6j^2-6j+1)}, \qquad\textrm{ in particular } \qquad \lambda_{2}\cong \lambda_{1}^{13}.
	\end{align*}
\end{thm}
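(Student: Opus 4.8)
The plan is to split the statement into its numerical content, supplied by Grothendieck--Riemann--Roch, and its canonicity, supplied by the functorial behaviour of Deligne's determinant of cohomology. Throughout I would work with an arbitrary proper smooth family of curves $\pi\co X\to S$; since the universal curve over $\mathcal{M}_{g,1^{\infty}}$ is pulled back along the forgetful map from the universal curve over $\mathcal{M}_g$, and $D(-)$ commutes with base change, it suffices to construct the isomorphisms at this level of generality and then specialize.

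First I would extract the constant. Writing $\omega\coeq c_1(\omega_{X/S})$ and using $c_1(\mathcal{T}_{X/S})=-\omega$, Grothendieck--Riemann--Roch applied to $\pi$ and $\omega_{X/S}^{\otimes j}$ gives
\begin{align*}
	c_1(\lambda_j)=\pi_*\!\Big(\tfrac12\,(j\omega)^2-\tfrac12\,(j\omega)\,\omega+\tfrac1{12}\,\omega^2\Big)=\tfrac{6j^2-6j+1}{12}\,\pi_*(\omega^2)=(6j^2-6j+1)\,c_1(\lambda_1),
\end{align*}
since the case $j=1$ reads $c_1(\lambda_1)=\tfrac1{12}\pi_*(\omega^2)$. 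This already proves the isomorphism on Chern classes, hence --- over $\mathcal{M}_g$ with $g\geq 3$, where $\Pic(\mathcal{M}_g)\otimes\QQ$ is spanned by $\lambda_1$ --- as an abstract isomorphism of line bundles, but not yet canonically.

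To upgrade this to a canonical isomorphism I would invoke the Deligne pairing $\langle\,,\,\rangle=\langle\,,\,\rangle_{X/S}$ together with its compatibility with $D$: there is a canonical isomorphism $D(L\otimes M)\cong D(L)\otimes D(M)\otimes\langle L,M\rangle\otimes D(\mathcal{O}_X)^{\otimes-1}$ for line bundles $L,M$ on $X$, and $\langle\,,\,\rangle$ is symmetric and bilinear (\textcite{Deligne--1987}). Relative Serre duality provides a canonical $D(\mathcal{O}_X)\cong\lambda_1$ (from $\pi_*\mathcal{O}_X\cong\mathcal{O}_S$ and $R^1\pi_*\mathcal{O}_X\cong(\pi_*\omega_{X/S})^\vee$). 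Taking $L=\omega_{X/S}^{\otimes j}$, $M=\omega_{X/S}$ and telescoping in $j$ then yields, by induction, the canonical isomorphism
\begin{align*}
	\lambda_j\cong\lambda_1\otimes\langle\omega_{X/S},\omega_{X/S}\rangle^{\otimes\binom{j}{2}}.
\end{align*}
Finally I would feed in the canonical isomorphism $\langle\omega_{X/S},\omega_{X/S}\rangle\cong\lambda_1^{\otimes12}$, valid over $\mathcal{M}_g$; substituting and using $12\binom{j}{2}=6j^2-6j$ gives $\lambda_j\cong\lambda_1^{\otimes(6j^2-6j+1)}$, and pulling back to $\mathcal{M}_{g,1^{\infty}}$ finishes the proof, with $j=2$ giving $\lambda_2\cong\lambda_1^{\otimes13}$.

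The hard part is this last input, $\langle\omega,\omega\rangle\cong\lambda_1^{\otimes12}$: on the Chern-class level it is immediate from the GRR computation above, but a \emph{canonical} isomorphism requires Deligne's local/functorial Riemann--Roch and, crucially, the vanishing of the boundary contribution --- over $\overline{\mathcal{M}}_g$ one only has $\langle\omega,\omega\rangle\cong\lambda_1^{\otimes12}\otimes\mathcal{O}(-\delta)$, and the statement relies on restricting to the open locus of smooth curves. Rather than reprove this I would cite \textcite[Theorem 5.10]{Mumford--1977}.
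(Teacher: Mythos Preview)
The paper does not prove this theorem; it is stated in the classical summary (Appendix~A) purely as a citation to \textcite[Theorem 5.10]{Mumford--1977}, with no proof given. So there is no ``paper's own proof'' to compare your proposal against.

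Your sketch is the standard route --- GRR for the numerical content, Deligne's determinant formula and pairing for canonicity --- and the telescoping reduction $\lambda_j\cong\lambda_1\otimes\langle\omega,\omega\rangle^{\binom{j}{2}}$ is correct. There is, however, a circularity at the end: you reduce the statement to the canonical isomorphism $\langle\omega,\omega\rangle\cong\lambda_1^{\otimes 12}$ and then propose to cite \textcite[Theorem 5.10]{Mumford--1977} for it. But that citation \emph{is} the theorem you are proving (indeed, the case $j=2$ together with your own telescoping already gives $\langle\omega,\omega\rangle\cong\lambda_1^{\otimes 12}$), so the argument as written assumes what it sets out to show. The honest fix is either to cite Deligne's functorial Riemann--Roch in \textcite{Deligne--1987} for a direct canonical construction of $\langle\omega,\omega\rangle\cong\lambda_1^{\otimes 12}$, or to follow Mumford's original approach, which computes the class of $\lambda_j$ on $\overline{\mathcal{M}}_g$ directly via GRR and boundary analysis without going through the Deligne pairing.
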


\subsection{The semi-infinite Grassmannian of $\mathbb{C}(\!(z)\!)$}\label{unsuper Gr}

The semi-infinite Grassmannian, defined originally by Sato in \cite{Sato--1981}, 
	originated via the study of soliton equations and KP equations (see also \cite{Sato-Sato--1983}). 
	In this section, we mostly follow the spirit of Kontsevich \cite{Kontsevich--1987e}, however the descriptions in \cite{Segal-Wilson--1985,Pressley-Segal--1986} and \cite{AlvarezVazquez-MunozPorras-PlazaMartin--1998,PlazaMartin.2000.tGok} are alternatives.

Consider $\mathbb{C}(\!(z)\!)$, the infinite-dimensional topological space of formal Laurent series with the $z$-adic topology. Define the vector spaces $H=\mathbb{C}(\!(z)\!)$, $H^+=\mathbb{C}[\![z]\!]$, and $H^-=z^{-1}\mathbb{C}[z^{-1}]$.
\begin{defn}
	Define a subspace of $H$ to be \emph{compact} if it is commensurable with $H^+$. Explicitly, subspaces $K$ and $H^+$ are commensurable when $K/(K\cap H^+)$ and $H^+/(K\cap H^+)$ are both finite-dimensional.
	
	Define a subspace $D$ of $H$ to be \emph{discrete} if there exists a compact subspace $K$ such that the natural map $D\oplus K\to H$ is an isomorphism.
\end{defn}

\begin{lem}\label{classical discrete altdef}
	Let $K$ be any compact subspace. Then $D$ is discrete if and only if the natural map $D\oplus K \to H$ is Fredholm, that is when the kernel and cokernel are finite dimensional.
	\begin{align*}
		0\to D\cap K\to D\oplus K\to H\to H/(D+K)\to 0
	\end{align*}
\end{lem}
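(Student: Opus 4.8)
The plan is to prove both implications by elementary linear algebra, the only structural input being that commensurability is an equivalence relation, so that any two compact subspaces of $H$ are mutually commensurable. First I would record the identifications $\ker(D\oplus K\to H)\cong D\cap K$ (via $(d,k)\mapsto d$, using $d=-k$) and $\coker(D\oplus K\to H)\cong H/(D+K)$, so that the Fredholm condition in the displayed four-term sequence is exactly the assertion that both $D\cap K$ and $H/(D+K)$ are finite-dimensional.

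For the ``only if'' direction I would start from a compact $K'$ realizing discreteness, i.e.\ $D\oplus K'\xrightarrow{\sim}H$, which gives $D\cap K'=0$ and $D+K'=H$. Since $K$ and $K'$ are both commensurable with $H^+$, they are commensurable with each other, so $K/(K\cap K')$ and $K'/(K\cap K')$ are finite-dimensional; I would then use these two finite-dimensional quotients to bound the kernel and cokernel. Concretely, $D\cap K$ injects into $K/(K\cap K')$ because its intersection with $K'$ is $D\cap K'=0$; and writing $H=D\oplus K'$ identifies $H/(D+K)$ with $K'/\bigl(K'\cap(D+K)\bigr)$, which is a quotient of $K'/(K\cap K')$. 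Both are therefore finite-dimensional, so $D\oplus K\to H$ is Fredholm.

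For the ``if'' direction, assuming $D\cap K$ and $H/(D+K)$ finite-dimensional, the idea is to repair $K$ into a compact $K''$ with $D\oplus K''\xrightarrow{\sim}H$: split off the finite-dimensional piece by choosing a complement $K_1$ of $D\cap K$ inside $K$, so that $D\oplus K_1\to H$ is injective with image $D+K_1=D+K$ of finite codimension; then choose a finite-dimensional complement $W$ of $D+K$ in $H$ and set $K''\coeq K_1\oplus W$. A short check shows $D\oplus K''\to H$ is an isomorphism, and $K''$ is commensurable with $K$ because it contains $K_1$, which is finite-codimensional in both $K$ and $K''$; by transitivity of commensurability $K''$ is compact, so $D$ is discrete.

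I do not expect a serious obstacle: the whole argument is bookkeeping with finite-codimensional subspaces, and the only point requiring care is verifying at the end of the ``if'' direction that the modified subspace $K''$ is still commensurable with $H^+$, which is where transitivity of commensurability (routed through $K$) is used. For contrast, the genuinely delicate version is the relative statement \cref{CompactDiscreteDef} over a general base $S$, where one must work with $\mathcal{O}_S$-modules and invoke the super Nakayama lemma; in the absolute classical case treated here everything reduces to honest finite-dimensional linear algebra.
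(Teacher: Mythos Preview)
Your proof is correct and follows essentially the same strategy as the paper: both arguments hinge on transitivity of commensurability, use the compact $K'$ realizing discreteness to bound $D\cap K$ and $H/(D+K)$ in the ``only if'' direction, and produce a compact complement of $D$ in the ``if'' direction. The only cosmetic difference is in the ``if'' step, where the paper argues that $H/D$ is compact via the identifications $K/(K\cap(H/D))\cong D\cap K$ and $(H/D)/(K\cap(H/D))\cong H/(D+K)$, whereas you explicitly construct $K''=K_1\oplus W$; your version is arguably cleaner and avoids the slight abuse of treating the quotient $H/D$ as a subspace.
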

\begin{proof}
	Firstly we show that commensurability is transitive, i.e. any two compact subspaces are commensurable to each other. For $K$ and $K'$ compact, using the exact sequence
	\begin{align*}
		0\to  \frac{(K\cap H^+)+(K\cap K')}{K\cap K'}\to \frac{K}{K\cap K'}\to \frac{K}{(K\cap H^+)+(K\cap K')}\to 0,
	\end{align*}
	where the first term is isomorphic to $(K\cap H^+)/(K\cap K'\cap H^+)$, so we see $\dim(K/(K\cap K'))\leq \dim(H^+/(K'\cap H^+))+\dim(K/(K\cap H^+))<\infty$. And $\dim(K/(K\cap K'))<\infty$ similarly.

	Next we show that if $D\cap K$ and $H/(D+K)$ are finite-dimensional for some compact $K$, then $D\cap K'$ and $H/(D+K')$ are finite-dimensional for any other compact $K'$. Using the exact sequence
	\begin{align*}
		0\to D\cap K\cap K'\to D \cap K'\to D\cap K'/(D\cap K\cap K')\to 0,
	\end{align*}
	we see $\dim(D \cap K')\leq \dim(D\cap K)+\dim(K'/(K\cap K'))<\infty$. Using the exact sequence
	\begin{align*}
		0\to \frac{K}{(D+K')\cap K}\cong \frac{D+K'+K}{D+K'}\to \frac{H}{D +K'}\to \frac{H}{D+K+K'}\to 0,
	\end{align*}
	we see $\dim(H/(D +K'))\leq \dim(K/(K'\cap K))+\dim(H/(D +K))<\infty$.

	Assume $D$ is discrete. Then $D\oplus K'=H$ for some compact $K'$. Thus $D\cap K'=0$ and $H/(D+K')=0$. Then $D\cap K$ and $H/(D+K)$ are finite-dimensional by the previous paragraph.
	
	Assume $D\cap K$ and $H/(D+K)$ are finite-dimensional. We show $D$ is discrete by showing $H/D$ is compact. It suffices to notice
	\begin{align*}
		\frac{K}{K\cap (H/D)}\cong D\cap K, && \frac{H/D}{K\cap (H/D)}\cong \frac{H}{D+K}.
	\end{align*}
\end{proof}

\begin{defn}\label{GrDef unsuper}
	The Grassmannian $\Gr(H)$ is the set of all discrete subspaces $D\subset H$.
\end{defn}

\begin{prop}[\textcite{Kontsevich--1987e}]
	$\Gr(H)$ is locally modeled on the vector space $\Hom_\mathbb{C}(H^-,H^+)$. That is, $\Gr(H)$ is covered by open charts $U_{D,K}\cong \Hom_\mathbb{C}(D,K)$. 
	And thus $\Gr(H)$ is an infinite-dimensional complex manifold.
\end{prop}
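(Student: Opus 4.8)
The plan is to run the construction of \cref{Gr subsection} with all signs and parities erased; I sketch it here. For a discrete subspace $D$ and a compact subspace $K$ with $D\oplus K\xrightarrow{\sim}H$, define $U_{D,K}\subseteq\Gr(H)$ to be the set of discrete subspaces $L$ with $L\cap K=0$ and $L+K=H$, i.e.\ those $L$ projecting isomorphically onto $D$ along $K$. Inverting the projection $L\xrightarrow{\sim}D$, composing with the inclusion $L\hookrightarrow H$, and then projecting to $K$ produces a linear map $B\co D\to K$ with $L=\textrm{graph}(B)$; conversely, for every $B\in\Hom_{\mathbb{C}}(D,K)$ the graph $\{d+B(d):d\in D\}$ is a discrete complement to $K$, so lies in $U_{D,K}$. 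This gives a bijection $U_{D,K}\cong\Hom_{\mathbb{C}}(D,K)$, and since $D$ is commensurable with $H^-$ and $K$ with $H^+$ (\cref{classical discrete altdef}), $\Hom_{\mathbb{C}}(D,K)$ is isomorphic, up to finite-dimensional corrections, to the fixed model space $\Hom_{\mathbb{C}}(H^-,H^+)$. That the $U_{D,K}$ cover $\Gr(H)$ is immediate: any discrete $D'$ admits by definition a compact $K'$ with $D'\oplus K'\xrightarrow{\sim}H$, and then $D'\in U_{D',K'}$.

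It remains to check that overlaps are open and transition maps holomorphic. Given $U_{D,K}$ and $U_{D',K'}$, write the identity of $H$ as a block matrix $\left(\begin{smallmatrix}T^{D'D}&T^{D'K}\\T^{K'D}&T^{K'K}\end{smallmatrix}\right)$ relative to the decompositions $D\oplus K$ and $D'\oplus K'$; commensurability of $K$ and $K'$ makes $T^{K'K}$, and hence $T^{D'D}$, Fredholm. If $W\in U_{D,K}\cap U_{D',K'}$ is the graph of $A\co D\to K$ and of $B\co D'\to K'$, then applying the block matrix to $\left(\begin{smallmatrix}I\\A\end{smallmatrix}\right)$ and comparing with $\left(\begin{smallmatrix}I\\B\end{smallmatrix}\right)$ up to an automorphism $D\to D'$ yields
\begin{align*}
	B=(T^{K'D}+T^{K'K}A)(T^{D'D}+T^{D'K}A)^{-1},
\end{align*}
defined exactly on the locus $\{A\co T^{D'D}+T^{D'K}A\text{ invertible}\}$, which is therefore an open subset of $U_{D,K}$ and equals $U_{D,K}\cap U_{D',K'}$. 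The formula is rational in $A$, hence holomorphic, and the cocycle condition on triple overlaps follows formally from the matrix bookkeeping.

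Putting these together, $\Gr(H)$ is covered by open charts, each isomorphic to an open subset of $\Hom_{\mathbb{C}}(H^-,H^+)$, with holomorphic transition maps, so it is an infinite-dimensional complex manifold. The one point deserving attention is the derivation and well-definedness of the change-of-coordinates formula --- that $T^{D'D}+T^{D'K}A$ is invertible precisely on the overlap and that the resulting $B$ really has graph $W$; this is the non-super shadow of the gluing discussion following the super charts $U_{D,K}$ in \cref{Gr subsection}, and is where the Fredholm property (equivalently, transitivity of commensurability of compact subspaces, \cref{classical discrete altdef}) is used.
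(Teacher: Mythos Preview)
Your argument is correct and follows the same route as the paper, which simply cites \cite[Proposition (7.1.2)]{Pressley-Segal--1986}; indeed you have reproduced the non-super shadow of the chart-and-gluing construction in \cref{Gr subsection}, including the transition formula $B=(T^{K'D}+T^{K'K}A)(T^{D'D}+T^{D'K}A)^{-1}$.

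One small inaccuracy worth flagging: it is not true that a discrete subspace $D$ is commensurable with $H^-$, and \cref{classical discrete altdef} does not say this. For instance, $D=\Span\{z^{-n}+z^{n}:n\geq 1\}$ is discrete (it is a complement to $H^+$), but $D\cap H^-=0$ while $D$ is infinite-dimensional. What \cref{classical discrete altdef} gives is that $H/D$ is compact. The identification $\Hom_{\mathbb{C}}(D,K)\cong\Hom_{\mathbb{C}}(H^-,H^+)$ you want is simply because $D$ and $H^-$ are abstractly isomorphic as (countable-dimensional) $\mathbb{C}$-vector spaces, and likewise $K$ and $H^+$; commensurability is not the mechanism here. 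This does not affect the rest of your proof.
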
 
\begin{proof}
	This proof is essentially \cite[Proposition (7.1.2)]{Pressley-Segal--1986} slightly modified to our setting.
\end{proof}

Define $\mathfrak{gl}(H)=\Hom(H,H)$ to be the space of continuous linear maps on $H$. For any endomorphism of $H$ we may write it as
\begin{align*}
	F=\begin{pmatrix}
		F^{--} & F^{-+}\\
		F^{+-} & F^{++}
	\end{pmatrix}
\end{align*}
where $F^{+-}: H^-\to H^+$ etc.
Define the group $\GL(H)\subseteq \Aut(H)$ to be those invertible maps which are homeomorphisms (i.e. bicontinuous).\footnote{The general linear group $\GL(H)$ we have defined is analogous to the group denoted $\GL_F(H)$ in \cite{Pressley-Segal--1986} and \cite{Segal-Wilson--1985}.} That is $G\in \GL(H)$ iff for every compact $K$, $G(K)$ and $G^{-1}(K)$ are also compact. Since all the compact subspaces are commensurable, this implies that  $G^{--}$ and $G^{++}$ are Fredholm.

Define an endomorphism of a discrete subspace to be trace class if it factors through some compact subspace.

\begin{prop}[cf. {\textcite[Proposition 7.1.3]{Pressley-Segal--1986}}]
	The group $\GL(H)$ acts transitively on $\Gr(H)$, and the stabilizer of $D$ is $\left(\begin{smallmatrix}* & *\\0 &*\end{smallmatrix}\right)$ expressed in a decomposition $D\oplus K$. 
\end{prop}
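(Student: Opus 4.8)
The plan is to adapt the proof of \cref{GL acts} to the present (non-super, point-level) setting; it is essentially \cite[Proposition 7.1.3]{Pressley-Segal--1986}.

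First I would record that the action is well defined, i.e. that $G\in\GL(H)$ carries a discrete subspace $D$ to a discrete subspace. Choosing a compact $K$ with $D\oplus K=H$, applying $G$ gives $G(D)\oplus G(K)=H$; and $G(K)$ is compact because, by definition of $\GL(H)$, both $G$ and $G^{-1}$ send compact subspaces to compact subspaces. Hence $G(D)$ is discrete.

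For transitivity, given a discrete subspace $D$ I want to produce $G\in\GL(H)$ with $G(H^-)=D$. By \cref{classical discrete altdef}, discreteness of $D$ is equivalent to the projection $p_-\co H\to H^-$ restricting to a Fredholm map $D\to H^-$, so $D$ is the image of a continuous injection $w=\left(\begin{smallmatrix}w_-\\ w_+\end{smallmatrix}\right)\co H^-\to H$ with $w_-=p_-\circ w$ Fredholm. Fix a compact $K$ with $D\oplus K=H$, and let $v=\left(\begin{smallmatrix}v_-\\ v_+\end{smallmatrix}\right)\co H^+\to H$ be a map whose columns form a basis of $K$; using that $K$ and $H^+$ are commensurable, normalize $v$ so that $v(z^i)=z^{i-m}$ for all sufficiently large $i$ and some fixed $m$. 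Then
\begin{align*}
	G=\begin{pmatrix} w_- & v_-\\ w_+ & v_+\end{pmatrix}\co H^-\oplus H^+\to H
\end{align*}
is a linear automorphism with $G(H^-)=D$: its invertibility is precisely the statement $D\oplus K=H$, and bicontinuity of both $G$ and $G^{-1}$ follows from the eventually-standard form of $v$ together with the continuity and Fredholm properties of $w$, so $G\in\GL(H)$.

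Finally, for the stabilizer, writing an element of $\GL(H)$ in the decomposition $D\oplus K$, it fixes $D$ exactly when it maps $D$ into $D$, i.e. when its lower-left block vanishes; the diagonal blocks are then automatically invertible, giving the block upper-triangular form $\left(\begin{smallmatrix}* & *\\ 0 & *\end{smallmatrix}\right)$. I expect the only real point requiring care to be the verification of bicontinuity of $G$: one must check that the normalization $v(z^i)=z^{i-m}$ for $i\gg 0$ is both attainable and sufficient, since without control of $v$ at large degree the inverse $G^{-1}$ need not be continuous.
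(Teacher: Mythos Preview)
Your proposal is correct and follows essentially the same approach as the paper: first check the action is well defined by noting $G(K)$ is compact, then construct $G$ from a Fredholm injection $w\co H^-\to H$ with image $D$ and a basis map $v\co H^+\to H$ for $K$ normalized so that $v(z^i)=z^{i-m}$ eventually, and finally observe the stabilizer is block upper-triangular. The paper's proof is slightly terser but otherwise identical.
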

\begin{proof}
	
		Consider $G\in\GL(H)$ and discrete $D$. Consider compact $K$ such that the natural map $D\oplus K\to H$ is an isomorphism. Then $G(D)\oplus G(K)\cong H$ since $G$ is an automorphism. Further, since $G(K)$ is compact, then $G(D)$ is discrete.

	To show the transitivity of the action, consider an operator $w=\left(\begin{smallmatrix}w_-\\w_+\end{smallmatrix}\right)\co H^-\to H$ such that $p_-\circ w=w_-$ is Fredholm by \cref{classical discrete altdef}, and a map $v=\left(\begin{smallmatrix}v_-\\v_+\end{smallmatrix}\right)\co H^+\to H$ where the columns of $v$ are a basis for $K$ such that the natural map $D\oplus K\to H$ is an isomorphism. Since $K$ and $H^+$ are commensurable, we may assume eventually $v(z^i)=z^{i-n}$ for some $n$. We claim that
	\begin{align*}
		G=\begin{pmatrix}
			w_- & v_-\\
			w_+ & v_+
		\end{pmatrix}
	\end{align*}
	is in $\GL(H)$ and $G(H^-)=D$. The fact that $G$ is invertible follows from assuming that $D\oplus K\to H$ is an isomorphism, and the bicontinuity of $G$ follows from the condition that eventually $v(z^i)=z^{i-n}$ for some $n$.
	
	The stabilizer is obvious.
\end{proof} 

Therefore we may describe $\Gr(H)$ as the homogeneous space
\begin{align*}\label{homogeneousGr unsuper}
	\Gr(H)\cong \GL(H)/P
\end{align*} 
where $P=\left(\begin{smallmatrix}* & *\\0 &*\end{smallmatrix}\right)$ in the $H^-\oplus H^+$ decomposition.

\begin{prop}[Kontsevich \cite{Kontsevich--1987e}; {\textcite[Proposition (1.14)]{Kawamoto-Namikawa-Tsuchiya-Yamada--1988}}]\label{vector fields on Gr unsuper}
	The Lie algebra $\mathfrak{gl}(H)$ acts by vector fields on $\Gr(H)$. Explicitly, $F\mapsto L_F$ is a Lie algebra antihomomorphism $L\co\mathfrak{gl}(H)\to \mathcal{T}_{\Gr(H)}$ sending $[F_1,F_2]$ ot $[L_{F_2},L_{F_1}]$. In the chart $U_{D,K}$, this action is given by the formula
	\begin{align*}
		L_F(A)=F^{KD}+F^{KK}A-AF^{DD}-AF^{DK}A
	\end{align*} 
	where $L_F
	\in \Hom_\mathbb{C}(S^\bullet(\Hom_\mathbb{C}(D,K)),\Hom_\mathbb{C}(D,K))$ acts on functions by 
		\begin{align*}
		L_F(f(A))=\lim_{\epsilon \to 0}\frac{f(A+\epsilon L_FA)-f(A)}{\epsilon}.
	\end{align*}
\end{prop}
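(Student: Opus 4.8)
The plan is to obtain the vector field $L_F$ by differentiating, at the identity, the transitive action of $\GL(H)$ on $\Gr(H)$ established in the previous proposition, exactly as in the proof of \cref{vector fields on Gr} in the super case. First I would fix a chart $U_{D,K}\cong\Hom_\CC(D,K)$ and recall that the point represented by coordinates $A\co D\to K$ is the discrete subspace $\mathrm{graph}(A)\subset H$, i.e.\ the column space of $\left(\begin{smallmatrix}\id_D\\ A\end{smallmatrix}\right)$. Writing $G\in\GL(H)$ in block form with respect to the decomposition $D\oplus K$, left multiplication sends this subspace to the column space of $\left(\begin{smallmatrix}G^{DD}+G^{DK}A\\ G^{KD}+G^{KK}A\end{smallmatrix}\right)$; provided $G^{DD}+G^{DK}A$ is invertible (an open condition, automatic for $G$ near $\id$), this subspace again lies in $U_{D,K}$ and is represented by the coordinates
\[
G\cdot A=(G^{KD}+G^{KK}A)(G^{DD}+G^{DK}A)^{-1},
\]
which is just the chart-transition formula with an arbitrary $G$ in place of the identity.

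Next I would put $G=\id+tF$ for $F\in\mathfrak{gl}(H)$ and compute $\frac{d}{dt}\big|_{t=0}(G\cdot A)$. Using $(\id+t(F^{DD}+F^{DK}A))^{-1}=\id-t(F^{DD}+F^{DK}A)+O(t^2)$ and multiplying out, the linear term is
\[
L_F(A)=F^{KD}+F^{KK}A-A F^{DD}-A F^{DK}A,
\]
a polynomial — hence holomorphic — section of $\mathcal{T}_{\Gr(H)}$ over $U_{D,K}$, acting on functions $f(A)$ by the stated directional derivative. I would then observe that these local expressions patch to a global vector field: since $A\mapsto G\cdot A$ is by construction the restriction to $U_{D,K}$ of the single global $\GL(H)$-action, the various $L_F$ are the restrictions of one infinitesimal generator, and concretely the transition map between $U_{D,K}$ and $U_{D',K'}$ intertwines the two coordinate expressions for $L_F$ as a direct consequence of associativity of the action.

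Finally, for the bracket identity I would note that a \emph{left} action of $\GL(H)$ on $\Gr(H)$ induces on infinitesimal generators a Lie algebra \emph{anti}homomorphism, so $L$ automatically sends $[F_1,F_2]$ to $[L_{F_2},L_{F_1}]$; alternatively one can verify this by substituting the quartic formula for $L_F$ into $[L_{F_1},L_{F_2}]A=L_{F_1}\big(L_{F_2}(A)\big)-L_{F_2}\big(L_{F_1}(A)\big)$, where each $L_{F_i}$ differentiates the $A$-dependent expression $L_{F_j}(A)$ via the directional-derivative rule, and collecting terms. I expect the main things to be careful about are the bookkeeping in this last verification (and the parallel compatibility-with-transition check), together with confirming that all operators that appear — notably the inverse $(G^{DD}+G^{DK}A)^{-1}$ and the composite $A F^{DK}A$ — are again continuous maps of the type allowed, so that $L_F$ genuinely lies in $\mathcal{T}_{\Gr(H)}$ and $L$ is well defined.
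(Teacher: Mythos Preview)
Your proposal is correct and follows essentially the same approach as the paper: the paper simply refers back to the proof of \cref{vector fields on Gr}, which computes the $\GL(H)$-action on a point with coordinates $A$ in $U_{D,K}$ as $(G^{KD}+G^{KK}A)(G^{DD}+G^{DK}A)^{-1}$ and then differentiates at $G=I+tF$ to obtain $L_F(A)$. Your additional remarks on gluing and the antihomomorphism property are more explicit than what the paper records, but the core derivation is identical.
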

\begin{proof}
	The proof is the same as the proof of \cref{vector fields on Gr}.
\end{proof}

We can take as definition
\begin{align*}
	\textstyle\det_K(D)\coeq \dfrac{\det(D\cap K)}{\det(H/(D+K))},
\end{align*}
which is well-defined for discrete $D$ and compact $K$ by \cref{classical discrete altdef}.
\begin{defn}\label{Gr det bundle}
	As in \cite{Arbarello-DeConcini-Kac-Procesi--1988}, we define the determinant line bundle on $\Gr(H)$ as $\mathcal{d\!e\!t}_{\Gr(H)}\coeq\det_{H^+}$, in other words, $\det_K$ with $K=H^+$, the distinguished compact subspace.
\end{defn}

It is known that $H^2(\mathfrak{gl}(H))$ is one-dimensional \cite{Kac.Peterson.1981.sawroidLaag}.
For any choice of discrete $D$ and compact $K$ such that the natural map $D\oplus K\xrightarrow{\sim}H$ is an isomorphism, we define a 2-cocycle on $\mathfrak{gl}(H)$ as
\begin{align*}
	\eta_{D,K}(F,G):=\tr(F^{DK}G^{KD}-F^{DK}G^{KD}).
\end{align*}
We choose a distinguished 2-cocycle:
\begin{align*}
	\eta(F,G):=
	\tr(F^{-+}G^{+-}-F^{-+}G^{+-}),
\end{align*}
which is known as the Japanese cocycle \cite{Khesin-Wendt--2009}. 
The unique Lie algebra central extension defined by the Japanese cocycle will be denoted $\widetilde{\mathfrak{gl}}(H)$. 
 We denote the bracket on $\widetilde{\mathfrak{gl}}(H)$ as $[F,G]^{\sim}=[F,G]+\eta(F,G)$.

\begin{thm}[\textcite{Kontsevich--1987e}] \label{Kontsevich gl tilde action}
	The Lie algebra $\widetilde{\mathfrak{gl}}(H)$ 
	acts by first order differential operators on $\mathcal{d\!e\!t}_{\Gr(H)}$. Explicitly, $F+c\mapsto \widetilde{L}_F+c$ is a Lie algebra antihomomorphism $\tilde{L}\co\widetilde{\mathfrak{gl}}(H)\to \mathcal{A}_{\mathcal{d\!e\!t}}(\Gr(H))$ sending $[F_1,F_2]^\sim$ to $[\widetilde{L}_{F_2},\widetilde{L}_{F_1}]$. In the chart $U_{D,K}$, this action is given by the formula
	\begin{align*}
		\widetilde{L}_F(A)=L_F(A)+\tr(F^{DK}A)+\alpha(F)
	\end{align*} 
	where $\alpha\in C^1(\mathfrak{gl}(H))$ is the unique 1-cochain such that
	\begin{align*}
		d\alpha(F,G)=\alpha([F,G])=\eta_{D,K}(F,G)-\eta(F,G).
	\end{align*}
\end{thm}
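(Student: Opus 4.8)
The plan is to transcribe, almost verbatim, the argument already given for the super analog \cref{diff operators on det}, with the Berezinian replaced by the determinant and the supertrace by the ordinary trace; the sign subtleties then disappear. I would begin with the finite-dimensional motivation, which simultaneously dictates the formula for $\widetilde{L}_F$. Over a chart $U_{D,K}$ a local trivializing section of $\mathcal{d\!e\!t}_{\Gr(H)}$ is $\det(D)\det(K)/\det(H)$, identified via $\mathrm{graph}(A)$ with $\det(D_A)\det(K)/\det(H)$ for the point $D_A\in U_{D,K}$ with coordinate $A\in\Hom_\CC(D,K)$; pulling the natural $\GL(H)$-action of \cref{vector fields on Gr unsuper} back through this identification introduces the scalar factor $\det(G^{DD}+G^{DK}A)\det(G^{KK})/\det(G)$, and putting $G=I+tF$ and differentiating at $t=0$ produces, on top of the vector field $L_F$, the zeroth-order term $\tr(F^{DK}A)$. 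This makes sense because continuity of $F$ forces the block $F^{DK}\co K\to D$ to be finite rank (in the $z$-adic topology any continuous endomorphism of $H$ sends $H^+$ into a subspace meeting $H^-$ in finitely many dimensions), so $F^{DK}A$ is a finite-rank endomorphism of $D$; and $A\mapsto\tr(F^{DK}A)$, depending on only finitely many matrix entries of $A$, is a regular (degree one) function on the affine chart. Thus $\widetilde{L}_F(A):=L_F(A)+\tr(F^{DK}A)+\alpha(F)$ is a first-order differential operator on $\mathcal{d\!e\!t}_{\Gr(H)}|_{U_{D,K}}$ with symbol $L_F$, i.e.\ a section of $\mathcal{A}_{\mathcal{d\!e\!t}}$ over the chart.

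Next I would verify that these local operators glue. On an overlap $U_{D,K}\cap U_{D',K'}$ the line bundles $\det_K$ and $\det_{K'}$ are canonically identified up to the constant $\det(K/(K\cap K'))/\det(K'/(K\cap K'))$, and a direct computation shows the discrepancy between the two local formulas for $\widetilde{L}_F$ is the unique $1$-cochain with coboundary $\eta_{D_A,K'}-\eta_{D_A,K}$ — which is exactly the correction encoded by $\alpha$, since $d\alpha=\eta_{D,K}-\eta$ pins $\alpha$ down up to the same cohomologically trivial ambiguity. Hence the $\widetilde{L}_F$ patch to a global map $\widetilde{L}\co\widetilde{\mathfrak{gl}}(H)\to\mathcal{A}_{\mathcal{d\!e\!t}}(\Gr(H))$, with the central generator acting as the scalar $c$ by construction of $\widetilde{\mathfrak{gl}}(H)$.

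The antihomomorphism property then rests on the single identity $\tr([F_1,F_2]^{DK}A)=\eta_{D_A,K}(F_1,F_2)-\eta_{D,K}(F_1,F_2)$, proved by expanding the block products (using $[F_2,F_1]^{DK}=F_2^{DD}F_1^{DK}+F_2^{DK}F_1^{KK}-F_1^{DD}F_2^{DK}-F_1^{DK}F_2^{KK}$) and cyclicity of the trace to cancel the terms quadratic in $A$; combined with $[L_{F_1},L_{F_2}]=L_{[F_2,F_1]}$ from \cref{vector fields on Gr unsuper} and $d\alpha=\eta_{D,K}-\eta$, the chain of equalities
\begin{align*}
	[\widetilde{L}_{F_1},\widetilde{L}_{F_2}]A
	&=[L_{F_1},L_{F_2}]A+\tr(F_2^{DK}L_{F_1}A)-\tr(F_1^{DK}L_{F_2}A)\\
	&=L_{[F_2,F_1]}A+\eta_{D,K}(F_2,F_1)+\tr([F_2,F_1]^{DK}A)\\
	&=L_{[F_2,F_1]}A+\alpha([F_2,F_1])+\eta(F_2,F_1)+\tr([F_2,F_1]^{DK}A)\\
	&=\widetilde{L}_{[F_2,F_1]^\sim}A
\end{align*}
yields $[F_1,F_2]^\sim\mapsto[\widetilde{L}_{F_2},\widetilde{L}_{F_1}]$, as claimed.

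I expect the real work to be the gluing in the second step: matching the determinantal transition functions of $\mathcal{d\!e\!t}_{\Gr(H)}$ against the $D_A$-dependence of the cocycles $\eta_{D_A,K}$, and throughout keeping every trace honest — checking that each operator whose trace is taken genuinely factors through a finite-dimensional space, which is precisely where compactness of $K$, discreteness of $D$, and continuity of $F$ enter. The commutator identity and the finite-dimensional motivation are then routine linear algebra.
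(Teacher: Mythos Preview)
Your proposal is correct and takes precisely the paper's approach: the paper's proof of this classical statement is simply ``The proof is analogous to the proof of \cref{diff operators on det},'' and you have carried out exactly that analogy, replacing $\Ber$ by $\det$ and $\str$ by $\tr$ while reproducing the finite-dimensional motivation, the gluing via the $1$-cochain $\alpha$, the key identity $\tr([F_1,F_2]^{DK}A)=\eta_{D_A,K}(F_1,F_2)-\eta_{D,K}(F_1,F_2)$, and the same chain of equalities for the antihomomorphism property. If anything you supply more detail than the paper does, particularly the justification that the relevant traces are well-defined.
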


\begin{proof}
	The proof is analogous to the proof of \cref{diff operators on det}.
\end{proof}

\begin{defn}\label{gl algebroid defs}
	According to \cite[last paragraph page 8]{KosmannSchwarzbach-Mackenzie--2002}, define $(\mathcal{G},a_L)$ to be the action Lie algebroid associated to $L\circ \text{op}\co \mathfrak{gl}(H)^\text{op}\to \Gamma(\Gr(H),\mathcal{T}_{\Gr(H)})$ as in \cref{vector fields on Gr unsuper}. 
	
	Similarly, define $\widetilde{\mathcal{G}}$ to be the action Lie algebroid associated to $\sym\circ\tilde{L}\circ\text{op}\co\widetilde{\mathfrak{gl}}(H)^\text{op}\to \Gamma(\Gr(H),\mathcal{T}_{\Gr(H)})$ as in \cref{Kontsevich gl tilde action}.
\end{defn}

By the correspondence in \cite[Theorem 2.4]{KosmannSchwarzbach-Mackenzie--2002}, the Lie algebra action $\tilde{L}\circ\text{op}\co\widetilde{\mathfrak{gl}}(H)^\text{op}\to \Gamma(\Gr(H),\mathcal{A}_{\mathcal{d\!e\!t}})$ may be used to define a morphism of Lie algebroids $b_{\tilde{L}}\co\widetilde{\mathcal{G}}\to\mathcal{A}_{\mathcal{d\!e\!t}}$.

In summary, we have the commutative diagram of Lie algebroids below.
\begin{equation} \label{gl tilde action square unsuper}
	\begin{tikzcd}
		0 \arrow{r} & \mathcal{O}_{\Gr(H)} \arrow{r}\arrow{d}{\id} & \widetilde{\mathcal{G}} \arrow{r}\arrow{d}{b_{\widetilde{L}}} & \mathcal{G}\arrow{r}\arrow{d}{a_L} & 0\\
		0\arrow{r} & \mathcal{O}_{\Gr(H)} \arrow{r} & \mathcal{A}_{\mathcal{d\!e\!t}} \arrow{r} & \mathcal{T}_{\Gr(H)} \arrow{r} & 0
	\end{tikzcd}
\end{equation}

\subsection{The Krichever map and a flat connection}
\label{unsuper krichever}

\begin{prop}[{\textcite[Proposition 6.1]{Segal-Wilson--1985}}]
	The vector subspace $\Gamma(C\setminus p,\omega_C^{\otimes j})\subset \mathbb{C}(\!(z)\!)\;dz^{\otimes j}\cong H$ is a discrete subspace.
\end{prop}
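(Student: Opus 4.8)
The plan is to apply the Fredholm criterion of \cref{classical discrete altdef} with the distinguished compact subspace $K=H^+=\mathbb{C}[\![z]\!]\,dz^{\otimes j}$, and to identify the resulting four-term exact sequence with the \v{C}ech complex of $\omega_C^{\otimes j}$ for a suitable open cover of $C$.

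First I would fix a formal neighborhood $U$ of $p$ on which $z$ is a coordinate, so that $\Gamma(U,\omega_C^{\otimes j})=\mathbb{C}[\![z]\!]\,dz^{\otimes j}=H^+$ and $\Gamma(U\setminus p,\omega_C^{\otimes j})=\mathbb{C}(\!(z)\!)\,dz^{\otimes j}=H$; any section of $\omega_C^{\otimes j}$ holomorphic on $C\setminus p$ is meromorphic on $C$ with a pole of finite order at $p$, so its Laurent expansion lies in $H$, which is precisely the inclusion $D\coeq\Gamma(C\setminus p,\omega_C^{\otimes j})\hookrightarrow H$ appearing in the statement. The sets $C\setminus p$ and $U$ cover $C$, and this cover is acyclic for the coherent sheaf $\omega_C^{\otimes j}$ ($C\setminus p$ is a noncompact Riemann surface, hence Stein, and $U,U\setminus p$ are discs), so \v{C}ech cohomology computes sheaf cohomology. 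The \v{C}ech complex is concentrated in degrees $0$ and $1$, giving the exact sequence
\begin{align*}
0\to H^0(C,\omega_C^{\otimes j})\to D\oplus H^+\to H\to H^1(C,\omega_C^{\otimes j})\to 0,
\end{align*}
where the middle map sends $(s,t)$ to the difference $s-t$ of the two germs on $U\setminus p$; this differs from the natural addition map $D\oplus H^+\to H$ of \cref{classical discrete altdef} only by a sign on one summand, hence has the same kernel and cokernel.

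It then remains to observe that $H^0(C,\omega_C^{\otimes j})$ and $H^1(C,\omega_C^{\otimes j})$ are finite-dimensional, which is the finiteness theorem for coherent cohomology on the compact Riemann surface $C$ (indeed Riemann--Roch gives $h^0-h^1=j(2g-2)-g+1$). By \cref{classical discrete altdef}, $D$ is therefore discrete. Concretely, the two ends of the sequence say that $D\cap H^+=H^0(C,\omega_C^{\otimes j})$ consists of the globally holomorphic $j$-differentials, while $H/(D+H^+)\cong H^-/p_-(D)$ records which principal parts at $p$ fail to be realized by a global meromorphic $j$-differential — a deficiency that is finite, and in fact vanishes once one passes to $\omega_C^{\otimes j}(np)$ for $n\gg0$.

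The only delicate point, and the step I would write most carefully, is the bookkeeping of the identifications $\Gamma(U,\omega_C^{\otimes j})=H^+$ and $\Gamma(U\setminus p,\omega_C^{\otimes j})=H$ under the trivialization by $dz^{\otimes j}$, together with the verification that the \v{C}ech coboundary is, up to a sign, exactly the addition map appearing in \cref{classical discrete altdef}. Everything else reduces to the standard finiteness of coherent cohomology on a compact curve, so no twisting is needed here (unlike the relative super version, where one twists by $nP$ to ensure local freeness of the kernel over the base).
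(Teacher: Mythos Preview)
Your argument is correct and follows essentially the same route as the paper: both identify the kernel and cokernel of the comparison with $H^+$ as $H^0(C,\omega_C^{\otimes j})$ and $H^1(C,\omega_C^{\otimes j})$ via the \v{C}ech complex for the cover $\{C\setminus p,\,U\}$, and then invoke \cref{classical discrete altdef} together with finiteness of coherent cohomology on a compact curve. The paper records the sequence using the projection $p_-\co D\to H^-$ rather than the addition map $D\oplus H^+\to H$, but since $\ker p_-|_D=D\cap H^+$ and $H^-/p_-(D)\cong H/(D+H^+)$ these are the same four-term sequence, so the difference is purely cosmetic.
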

\begin{proof} 
	Notice the map $p_-$ restricted to $\Gamma(C\setminus p,\omega_C^{\otimes j})$ is Fredholm, so the result follows from \cref{classical discrete altdef}.
	\begin{align*}
		0\to \Gamma(C,\omega_C^{\otimes j})\to \Gamma(C\setminus p,\omega_C^{\otimes j})\xrightarrow{p_-} H^-\to H^1(C,\omega_C^{\otimes j})\to 0.
	\end{align*}
\end{proof}

\begin{defn}[{\textcite[Proposition 6.2]{Segal-Wilson--1985}}]
	The Krichever map $\mathcal{M}_{g,1^\infty} \to \Gr$ defined by
	\begin{align*}
		\kappa_j(C,p,z)= \Gamma(C\setminus p,\omega_C^{\otimes j})\subset \mathbb{C}(\!(z)\!)\;dz^{\otimes j}.
	\end{align*}
	where $\omega_C\coeq \Omega_C^1$, is an injective map.
\end{defn}

The vector fields act on sections of $\omega_C^{\otimes j}$ by Lie derivative
\begin{align*}
	\mathcal{L}_{f(z)\frac{\partial}{\partial z}}\big(g(z)\;dz^{\otimes j}\big)=\Big(f(z)g'(z)+{j}f'(z)g(z)\Big)\;dz^{\otimes j}.
\end{align*} 
For $U$ a formal neighborhood of $p\in \CC^1$, we see that $\Gamma(U\setminus p,\omega^{\otimes j}_{C})$ is canonically identified with $\CC(\!(z)\!)\,dz^{\otimes j}\cong H$. 
We define the map $\varrho_j: \mathfrak{witt}\to \mathfrak{gl}(H)$ as
\begin{align*}
	\varrho_j\bigg(f(z)\frac{\partial}{\partial z}\bigg)\big(g(z)\big)=f(z)g'(z)+{j}f'(z)g(z).
\end{align*}

\begin{prop}[{\textcite[(2.24)]{Arbarello-DeConcini-Kac-Procesi--1988}}]
	The pullbacks of the Japanese cocycle $\eta$ along the representations $\varrho_j$ satisfies:
	\begin{align*}
		\varrho_j^*(\eta)=(6j^2-6j+1)\varrho_1^*(\eta) &&\varrho_j^*(\eta)=\varrho_{1-j}^*(\eta).
	\end{align*}
\end{prop}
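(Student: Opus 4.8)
The plan is to reduce the proposition to a single explicit trace computation on a convenient basis, exploiting that $H^2(\mathfrak{witt})$ is one-dimensional (generated by the Virasoro cocycle $(L_m,L_n)\mapsto(m^3-m)\,\delta_{m+n,0}$). Fix the topological basis $e_k\coeq z^k\,dz^{\otimes j}$ of $H=\mathbb{C}(\!(z)\!)\,dz^{\otimes j}$, so that $H^+$ is spanned by the $e_k$ with $k\ge 0$ and $H^-$ by the $e_k$ with $k<0$; the Lie derivative formula then gives $\varrho_j(L_n)\,e_k=\big(k+j(1-n)\big)\,e_{k-n}$. Any $1$-cochain $\phi$ on $\mathfrak{witt}$ satisfies $(d\phi)(L_m,L_{-m})=-\phi([L_m,L_{-m}])=-2m\,\phi(L_0)$, which is linear in $m$; hence the cohomology class of a cocycle on $\mathfrak{witt}$ is pinned down by the coefficient of $m^3$ in its value on the pairs $(L_m,L_{-m})$. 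So it suffices to compute $\eta\big(\varrho_j(L_m),\varrho_j(L_{-m})\big)$ for $m>0$.

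Write $F\coeq\varrho_j(L_m)$ and $G\coeq\varrho_j(L_{-m})$. Because $F$ shifts indices down by $m$, it maps $H^-$ into $H^-$, so $F^{+-}=0$ and only $\tr\big(F^{-+}G^{+-}\big)$ survives in $\eta(F,G)$ for $m>0$. The composite $F^{-+}G^{+-}\co H^-\to H^-$ is supported on $e_{-m},\dots,e_{-1}$ and acts there diagonally: $G$ sends $e_k$ to $\big(k+j(1+m)\big)e_{k+m}\in H^+$ and then $F$ sends $e_{k+m}$ to $\big(k+m+j(1-m)\big)e_k\in H^-$. Therefore
\[
	\eta\big(\varrho_j(L_m),\varrho_j(L_{-m})\big)=\sum_{k=-m}^{-1}\big(k+j(1+m)\big)\big(k+m+j(1-m)\big).
\]
Expanding the summand and applying $\sum_{k=-m}^{-1}k=-\tfrac{m(m+1)}{2}$ and $\sum_{k=-m}^{-1}k^2=\tfrac{m(m+1)(2m+1)}{6}$, the $m^2$-terms cancel and the sum collapses to $-\tfrac{1}{6}\,(6j^2-6j+1)\,(m^3-m)$. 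Since the case $j=1$ gives $-\tfrac16(m^3-m)$, matching coefficients of $m^3$ yields $\varrho_j^*(\eta)=(6j^2-6j+1)\,\varrho_1^*(\eta)$ in $H^2(\mathfrak{witt})$.

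For the second identity, observe that $6j^2-6j+1$ is invariant under $j\mapsto 1-j$, so $\varrho_j^*(\eta)=\varrho_{1-j}^*(\eta)$ follows immediately; conceptually this reflects the $\mathfrak{witt}$-invariant residue pairing $\omega_C^{\otimes j}\otimes\omega_C^{\otimes(1-j)}\to\omega_C\to\mathbb{C}$, under which $\varrho_{1-j}$ is carried to minus the transpose of $\varrho_j$ and the Japanese cocycle is transpose-antisymmetric. The only step that needs genuine care is the bookkeeping in the trace computation: correctly identifying which basis vectors land in the finite-rank corner blocks $F^{-+}$ and $G^{+-}$, and keeping the boundary terms at $k=-m$ and $k=-1$ straight. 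The reduction to the cubic coefficient---one-dimensionality of $H^2(\mathfrak{witt})$ and coboundary invariance---is entirely routine.
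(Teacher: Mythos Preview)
Your proof is correct. The paper does not actually provide a proof of this classical proposition---it appears in the appendix summarizing known results and is simply cited from \cite{Arbarello-DeConcini-Kac-Procesi--1988}; the super analogue earlier in the paper is likewise dispatched with ``The proof may be done by direct computation.'' Your argument is precisely such a direct computation, carried out cleanly: the identification of the finite-rank corner $F^{-+}G^{+-}$, the diagonal action on $e_{-m},\dots,e_{-1}$, and the reduction to the $m^3$-coefficient via one-dimensionality of $H^2(\mathfrak{witt})$ are all standard and correctly executed. One small remark: your computation in fact gives the exact cocycle value $-\tfrac{1}{6}(6j^2-6j+1)(m^3-m)$ on $(L_m,L_{-m})$, and since both $\varrho_j^*(\eta)$ and $\varrho_1^*(\eta)$ vanish on pairs $(L_a,L_b)$ with $a+b\neq 0$ (the composite shifts degree and has no diagonal), you have actually established equality of cocycles, not merely of cohomology classes---so the appeal to $H^2(\mathfrak{witt})$ being one-dimensional, while valid, is stronger than needed.
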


It is known that $H^2(\mathfrak{witt})$ is one dimensional \cite[Proposition 2.1 (2)]{Arbarello-DeConcini-Kac-Procesi--1988}. 
Let $c_j=(6j^2-6j+1)$ and define $\mathfrak{vir}_j$ to be the central extension of $\mathfrak{witt}$ by $\varrho^*_j(\eta)=c_j\varrho^*_1(\eta)$. When $j=0$ or $j=1$, we recover the standard definition of the Virasoro algebra, denoted $\mathfrak{vir}$. On the basis of $\mathfrak{witt}$ the standard cocycle is given by
\begin{align*}
	\varrho^*_0(\eta)(L_m,L_n)=\frac{m^3-m}{4}\;\delta_{m+n,0} .
\end{align*}
This is summarized in the following commutative diagram of Lie algebras.
\begin{equation}\label{vir j diagram}
	\begin{tikzcd}
		0 \arrow{r} & \mathbb{C} \arrow{r}\arrow{d}{\cdot c_j} & \mathfrak{vir} \arrow{r}\arrow{d}{\wr} & \mathfrak{witt} \arrow{r}\arrow{d}{\id} & 0\\
		0 \arrow{r} & \mathbb{C} \arrow{r}\arrow{d}{\id} & \mathfrak{vir}_j \arrow{r}\arrow[hook]{d}\arrow[very near start, phantom]{rd}{\lrcorner} & \mathfrak{witt} \arrow{r}\arrow[hook]{d}{\varrho_j} & 0\\
		0 \arrow{r} & \mathbb{C} \arrow{r} & \widetilde{\mathfrak{gl}}(H) \arrow{r} & \mathfrak{gl}(H) \arrow{r} & 0
	\end{tikzcd}
\end{equation}

\begin{prop}[{\textcite[Proposition 3.24]{Arbarello-DeConcini-Kac-Procesi--1988}}]\label{gl witt algebra diagram unsuper}
	The representation $\varrho_j$ and the Krichever map $\kappa_j$ are compatible, that is the diagram below of Lie algebras commutes.
	\begin{equation*}
		\begin{tikzcd}
			\mathfrak{witt} \arrow{d}{\varrho_j}\arrow{r}{\Lambda} & \Gamma(\mathcal{M}_{g,1^\infty},\mathcal{T}_{\mathcal{M}})\arrow{d}{d\kappa_j} \\
			\mathfrak{gl}(H_j)  \arrow{r}{L}& \Gamma(\mathcal{M}_{g,1^\infty},\kappa_j^*\mathcal{T}_{\Gr(H)})  
		\end{tikzcd}
	\end{equation*}
\end{prop}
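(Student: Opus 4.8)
The plan is to unwind the two vertical maps and observe that both are implemented by the \emph{same} operation — the Lie derivative of a formal vector field acting on $j$-differentials — so that commutativity of the square is essentially a matter of bookkeeping.

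First I would recall how $\Lambda$ and $d\kappa_j$ are constructed pointwise. By the relative \v{C}ech description recalled above, the tangent space $T_{(C,p,z)}\mathcal{M}_{g,1^\infty}$ is $\mathfrak{witt}/D_{C,p,z}$, and $\Lambda(f\pp z)$ is the vector field whose value at $(C,p,z)$ is the class of $f\pp z$. On the Grassmannian side, set $D\coeq \kappa_j(C,p,z)=\Gamma(C\setminus p,\omega_C^{\otimes j})\subset H$ with $H\cong\mathbb{C}(\!(z)\!)\,dz^{\otimes j}$, and choose a compact $K$ with $D\oplus K\xrightarrow{\sim}H$, so that $(C,p,z)$ lies in the chart $U_{D,K}\cong\Hom_\mathbb{C}(D,K)$ at the coordinate $A=0$. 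A first-order deformation of $(C,p,z)$ represented by $f\pp z\in\mathfrak{witt}$ is glued across the overlap $U\setminus p$ by the infinitesimal automorphism $\exp(\epsilon f\pp z)$; transporting the space of global $j$-differentials through this gluing moves $D$ inside the fixed model $H$, and the derivative at $\epsilon=0$ of that motion is exactly the composite of the Lie-derivative endomorphism $\varrho_j(f\pp z)$ of $H$ with the projection $H\to K\cong H/D$. This composite, as an element of $\Hom_\mathbb{C}(D,K)$, is by definition $d\kappa_j\big(\Lambda(f\pp z)\big)$ read in the chart $U_{D,K}$.

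Next I would compute the other composite. By \cref{vector fields on Gr unsuper}, the vector field $L_{\varrho_j(f\pp z)}$ evaluated in $U_{D,K}$ at $A=0$ equals $F^{KD}$ with $F=\varrho_j(f\pp z)$ — again the Lie-derivative endomorphism of $H$ followed by the projection onto $K$ along $D$. Hence $L\circ\varrho_j$ and $d\kappa_j\circ\Lambda$ agree at $A=0$ in every chart $U_{D,K}$ meeting the image of $\kappa_j$, so they agree as sections of $\kappa_j^*\mathcal{T}_{\Gr(H)}$; since this holds over the point underlying an arbitrary family, the square commutes. One also checks that the conventions match: $\Lambda$ and $L$ are antihomomorphisms while $\varrho_j$ is a homomorphism, and these signs are consistent by construction, exactly as in the super case \cref{krichever rep compat}.

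The main obstacle will be the deformation-theoretic identification in the second paragraph — making rigorous the assertion that, after transporting through the \v{C}ech identifications $T\mathcal{M}_{g,1^\infty}\cong\mathfrak{witt}/D_{C,p,z}$ and $T\Gr(H)\cong\Hom_\mathbb{C}(D,H/D)$, the differential of the Krichever map is literally $A=0\mapsto p_{H/D}\circ\varrho_j(f\pp z)\big|_D$. Concretely one must verify that deforming the complex structure and the coordinate $z$ by $\exp(\epsilon f\pp z)$ and then re-expanding the sections of $\Gamma(C\setminus p,\omega_C^{\otimes j})$ in the original coordinate agrees, to first order in $\epsilon$, with applying the Lie derivative $\mathcal{L}_{f\pp z}$ to those sections. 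This is a standard but slightly fiddly manipulation of relative \v{C}ech cocycles on the cover $\{U,\,C\setminus p\}$, and it carries all the content of the proposition; everything else is formal.
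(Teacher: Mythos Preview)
Your proposal is correct and takes essentially the same approach as the paper: the paper's proof is a two-sentence remark that both $\varrho_j$ and $d\kappa_j$ are given by the Lie derivative action on $j$-differentials, so the diagram commutes by construction. You have simply unpacked this in charts, which is fine but more than the paper does.
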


\begin{proof}
	The representation $\varrho_j$ is based on the Lie derivative of $j$-differentials, while $d\kappa_j$ is also given by Lie derivative action of $j$-differentials. So essentially, the diagram is commutative by construction.
\end{proof}

Recall \cref{gl algebroid defs} of $(\mathcal{G},a_L)$ and \cref{witt algebroid def} of $(\mathcal{W},a_\Lambda)$. Then \cref{gl witt algebra diagram unsuper} implies the existence of a morphism of these Lie algebroids on $\mathcal{M}_{g,1^\infty}$ as below. 
\begin{equation}\label{gl witt algebroid diagram unsuper}
	b_{\varrho_j}\co	\mathcal{W} \to
		{\kappa_j}^!\mathcal{G} 
\end{equation}

We now combine the $L$ and $\widetilde{L}$ action as in \cref{gl tilde action square unsuper} with the $\varrho_j$ compatibility diagram \cref{gl witt algebroid diagram unsuper}. This allows us to derive a canonical action of the Virasoro algebra. The general strategy may be summarized as pulling back the diagram \cref{gl tilde action square unsuper} along the Krichever map to $\mathcal{M}_{g,1^\infty}$, and further restricting along $b_{\varrho_j}$ to $\mathcal{W}$.

We work with the action Lie algebroids $\mathcal{W}$, $\mathcal{G}$, $\tilde{\mathcal{G}}$ associated to the action of $\mathfrak{witt}$ on $\mathcal{M}_{,1^\infty}$ and $\mathfrak{gl}$ and $\tilde{\mathfrak{gl}}$ on $\Gr(H)$ respectively, as in \cref{witt algebroid def} and \cref{gl algebroid defs}.  
Since $\widetilde{\mathcal{G}}$, $\mathcal{G}$, and $\mathcal{A}_{\mathcal{d\!e\!t}}$ are all transitive, we can consider their pullbacks along $\kappa_j$. Further by \cite[Proposition 1.8]{Higgins-Mackenzie--1990}, the morphisms between them also pullback.

\begin{equation*}
	\begin{tikzcd}
		0\arrow{r}& \mathcal{O}_{\mathcal{M}}\arrow{d}{\id}\arrow[hook]{r}  &{\kappa_j}^!\widetilde{\mathcal{G}} \arrow[two heads]{d}{\kappa_j^!b_{\widetilde{L}}}\arrow[two heads]{r}   & {\kappa_j}^!\mathcal{G}\arrow[two heads]{d}{\kappa_j^!a_L} \arrow{r}&0 \\
		0\arrow{r}&\mathcal{O}_{\mathcal{M}}\arrow[hook]{r} 
		& {\kappa_j}^!\mathcal{A}_{\mathcal{d\!e\!t}} \arrow[two heads]{r} & \mathcal{T}_{\mathcal{M}} \arrow{r}&0
	\end{tikzcd}
\end{equation*}

\begin{cor}[{\textcite[Lemma 4.7]{Arbarello-DeConcini-Kac-Procesi--1988}}]
	There is an isomorphism of Lie algebroids $\mathcal{A}_{{{\kappa_j}^*}\mathcal{d\!e\!t}} \cong {\kappa_j}^!\mathcal{A}_{\mathcal{d\!e\!t}} $.
\end{cor}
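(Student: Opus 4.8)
The plan is to invoke the general result on pullbacks of Atiyah Lie algebroids, \cref{pullback of atiyah prop}, in its non-super specialization (obtained, as throughout this paper, by dropping the sign rule), with $\phi = \kappa_j \co \mathcal{M}_{g,1^\infty} \to \Gr(H)$ the Krichever map and $L = \mathcal{d\!e\!t}_{\Gr(H)}$ the determinant line bundle on the Sato Grassmannian. Since $\mathcal{A}_{\mathcal{d\!e\!t}}$ is transitive --- it sits in the exact sequence $0 \to \mathcal{O}_{\Gr(H)} \to \mathcal{A}_{\mathcal{d\!e\!t}} \to \mathcal{T}_{\Gr(H)} \to 0$ --- the pullback $\kappa_j^! \mathcal{A}_{\mathcal{d\!e\!t}}$ in the sense of \cref{pullback of algebroid def} is defined, and the cited proposition identifies it canonically with $\mathcal{A}_{\kappa_j^* \mathcal{d\!e\!t}}$, which is the assertion.

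For the reader's convenience I would spell out the two ingredients of that identification in the present setting. First, the natural map $(\kappa_j)_* \co \mathcal{A}_{\kappa_j^* \mathcal{d\!e\!t}} \to \kappa_j^* \mathcal{A}_{\mathcal{d\!e\!t}}$ sending $\delta$ to $s \mapsto \delta(s \circ \kappa_j)$ restricts to an isomorphism on the kernel subsheaves $\mathcal{O}_{\mathcal{M}} \xrightarrow{\sim} \kappa_j^* \mathcal{O}_{\Gr(H)} \cong \mathcal{O}_{\mathcal{M}}$ and lies over the identity on $\mathcal{T}_{\mathcal{M}}$ once one uses the differential $d\kappa_j \co \mathcal{T}_{\mathcal{M}} \to \kappa_j^* \mathcal{T}_{\Gr(H)}$. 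Second, by \cref{pullback diagram lemma} applied to the category of locally free sheaves, a commuting morphism of short exact sequences whose left vertical arrow is an isomorphism has Cartesian right square; this is exactly the universal property defining $\kappa_j^! \mathcal{A}_{\mathcal{d\!e\!t}}$ as the fibre product $\kappa_j^* \mathcal{A}_{\mathcal{d\!e\!t}} \times_{\kappa_j^* \mathcal{T}_{\Gr(H)}} \mathcal{T}_{\mathcal{M}}$. The agreement of the Lie brackets under the resulting isomorphism is the routine check already indicated in the proof of \cref{pullback of atiyah prop}.

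There is no genuine obstacle here: with no supercommutative sign rule to track, the argument is simply a special case of the one already carried out. The only point meriting a moment's attention is that $\kappa_j$ must be an honest morphism of the ambient spaces in order to form the pullback of \cref{pullback of algebroid def}; but this is precisely the fact, recorded earlier in this appendix, that the Krichever map is a well-defined injective analytic map $\mathcal{M}_{g,1^\infty} \to \Gr(H)$.
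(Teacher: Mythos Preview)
Your proof is correct and follows the same approach as the paper, which simply says ``Apply \cref{pullback of atiyah prop}.'' Your expansion of the two ingredients is accurate and matches the content of that proposition's proof.
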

\begin{proof}
	Apply \cref{pullback of atiyah prop}.
\end{proof}

\begin{defn}\label{nsj algebroid defs}
	According to \cite[last paragraph page 8]{KosmannSchwarzbach-Mackenzie--2002}, define $\mathcal{V}_j$ to be the action Lie algebroid associated to the composition of $\text{op}\co \mathfrak{vir}_j^\text{op}\to\mathfrak{vir}_j$, the projection $\mathfrak{vir}_j\to \mathfrak{witt}$ as in \cref{vir j diagram}, and $\Lambda\co \mathfrak{witt}\to \Gamma(\mathcal{M}_{g,1^\infty},\mathcal{T}_{\mathcal{M}})$ as in \cref{witt action on Mg}.

	Similarly, define $\mathcal{V}$ for $\mathfrak{vir}$.
\end{defn}

Then the diagram \cref{vir j diagram} gives the diagram below of Lie algebroids on $\mathcal{M}_{g,1^\infty}$.
\begin{equation*}
	\begin{tikzcd}
		0 \arrow{r} & \mathcal{O}_\mathcal{M} \arrow{r}\arrow{d}{\cdot c_j} & \mathcal{V} \ar{r}\arrow{d}{\wr} & \mathcal{W}\arrow{r}\arrow{d}{\id} & 0\\
		0 \arrow{r} & \mathcal{O}_\mathcal{M} \arrow{r}\arrow{d}{\id} & \mathcal{V}_j \arrow{r}\arrow[hook]{d}\arrow[ very near start, phantom]{rd}{\lrcorner} & \mathcal{W} \arrow{r}\arrow[hook]{d}{b_{\varrho_j}} & 0\\
		0 \arrow{r} & \mathcal{O}_\mathcal{M} \arrow{r} & {\kappa_j}^!\widetilde{\mathcal{G}} \arrow{r} & {\kappa_j}^!\mathcal{G} \arrow{r} & 0
	\end{tikzcd}
\end{equation*}

Thus, we arrive at the main diagram below showing the action of $\mathcal{V}$ on an Atiyah Lie algebroid.
The crucial property of the action of the Virasoro Lie algebroid $\mathcal{V}$ is that the central charge acts by $c_j$.
\begin{equation}
	\begin{tikzcd}
		\mathcal{O}_\mathcal{M} \arrow{r}\arrow{d}{\cdot c_j} & \mathcal{V} \arrow{r}\arrow[hook]{d} & \mathcal{W}\arrow{d}{\varrho_j} \arrow[hook]{d}{b_{\varrho_j}}\arrow[two heads,bend left=60]{dd}{a_\Lambda} \\
		\mathcal{O}_\mathcal{M}\arrow{d}{\id}\arrow{r}&{\kappa_j}^!\widetilde{\mathcal{G}} \arrow[two heads]{d}{\kappa_j^!b_{\widetilde{L}}}\arrow[two heads]{r}   & {\kappa_j}^!\mathcal{G}\arrow[two heads]{d}{\kappa_j^!a_L} \\
		\mathcal{O}_\mathcal{M}\arrow{r}&\mathcal{A}_{{{\kappa_j}^*}\mathcal{d\!e\!t}} \arrow[two heads]{r} & \mathcal{T}_{\mathcal{M}} 
	\end{tikzcd}
\end{equation}

\begin{prop}[\textcite{Kontsevich--1987e}]
	The pullback of the determinant line bundle from the Grassmannian along the $j^\textrm{th}$ Krichever map is canonically isomorphic to $\lambda_{j}$.
	\begin{equation*}
		{\kappa_j}^*{\mathcal{d\!e\!t}}_{\Gr(H)}\cong \lambda_{j}
	\end{equation*}
\end{prop}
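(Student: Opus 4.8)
The plan is to run, in the purely classical setting, the same argument used for \cref{ber lambda isom} in the super case. First I would recall that, extended to families, the Krichever map sends $(X/S,P,z)$ to the discrete $\mathcal{O}_S$-submodule $\pi_*\omega_{(X\setminus P)/S}^{\otimes j}\subset \hat{H}_S$ — the classical analogue of \cref{super Krichever}, with discreteness supplied by \cref{classical discrete altdef} — so that the pullback along $\kappa_j$ of the tautological bundle $I$ on $\Gr(H)$ is precisely $\pi_*\omega_{(X\setminus P)/\mathcal{M}}^{\otimes j}$. Since by \cref{Gr det bundle} the determinant line bundle is $\mathcal{d\!e\!t}_{\Gr(H)}=\det_{H^+}(I)$, i.e.\ the determinant of the two-term complex $I\oplus \hat{H}^+_{\Gr(H)}\to \hat{H}_{\Gr(H)}$, pulling back gives
\[
	\kappa_j^*\mathcal{d\!e\!t}_{\Gr(H)}\;=\;\det\!\Big(\pi_*\omega_{(X\setminus P)/\mathcal{M}}^{\otimes j}\oplus \hat{H}^+_{\mathcal{M}}\;\longrightarrow\;\hat{H}_{\mathcal{M}}\Big).
\]

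Next I would identify this two-term complex with $R\pi_*\omega_{X/\mathcal{M}}^{\otimes j}$. Choosing a formal tubular neighborhood $U$ of $P$ in $X$ on which the parameter $z$ trivializes $\omega_{X/\mathcal{M}}$, one has $\pi_*\omega_{(U\setminus P)/\mathcal{M}}^{\otimes j}\cong \hat{H}_{\mathcal{M}}$ and $\pi_*\omega_{U/\mathcal{M}}^{\otimes j}\cong \hat{H}^+_{\mathcal{M}}$; the relative \v{C}ech/Mayer--Vietoris sequence for the cover $\{X\setminus P,\,U\}$ of $X$ then exhibits the displayed complex as computing the relative hypercohomology $R\pi_*\omega_{X/\mathcal{M}}^{\otimes j}$. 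Taking determinants yields $\kappa_j^*\mathcal{d\!e\!t}_{\Gr(H)}\cong \det R^0\pi_*\omega_{X/\mathcal{M}}^{\otimes j}\otimes(\det R^1\pi_*\omega_{X/\mathcal{M}}^{\otimes j})^{-1}=D(\omega_{X/\mathcal{M}}^{\otimes j})=\lambda_j$ in the notation of \cref{line bundle lambda j unsuper}. Because the only choices entering (the cover, the trivialization by $z$) affect the identification but not the resulting line bundle, the isomorphism is canonical; alternatively, one may invoke the determinant-of-cohomology formalism of \textcite{Deligne--1987}.

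The step I expect to be the main obstacle is making the \v{C}ech identification uniform over the pro-object $\mathcal{M}_{g,1^\infty}$ and compatible with the twist-by-$nP$ filtration: one must know that forming $R^1\pi_*$ commutes with the relevant projective and inductive limits defining $\hat{H}$ and $\hat{H}^+$, and that for $n\gg 0$ one has $R^1\pi_*\omega_{X/\mathcal{M}}^{\otimes j}(nP)=0$ while $\pi_*\omega_{X/\mathcal{M}}^{\otimes j}(nP)$ is locally free of finite rank. This is exactly the bookkeeping behind the discreteness statement (the classical counterpart of the proof that $\pi_*\omega_{(X\setminus P)/S}^{\otimes j}$ is discrete and of \cref{CompactDiscreteDef}); once it is in place, the remainder is routine determinant-of-cohomology formalism.
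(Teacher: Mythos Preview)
Your proposal is correct and follows essentially the same approach as the paper's proof: identify $\kappa_j(X/S,P,z)$ with $\pi_*\omega_{(X\setminus P)/S}^{\otimes j}$, unwind \cref{Gr det bundle}, and use relative \v{C}ech cohomology for the cover $\{X\setminus P,\,U\}$ to recognize the resulting two-term complex as computing $R\pi_*\omega_{X/\mathcal{M}}^{\otimes j}$, whence the determinant is $\lambda_j$. In fact you have spelled out the \v{C}ech identification and the twist-by-$nP$ bookkeeping in more detail than the paper's proof, which simply asserts ``which can be seen using relative \v{C}ech cohomology.''
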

\begin{proof}
	We see $\kappa_j(X/S,P,z)=\pi_*\omega_{(X\setminus P)/S}^j$. Using the \cref{Gr det bundle} of the determinant line bundle on the Grassmannian, we have 
	\begin{align*}
		\textstyle\det_{\Gr(H)}\Big(\pi_*\omega_{(X\setminus P)/S}^j\Big)=\det R^0\pi_* \omega_{X/S}^j\Big/\det R^1\pi_*\omega_{X/S}^j
	\end{align*}
	which can be seen using relative \v{C}ech cohomology. This is exactly the \cref{line bundle lambda j unsuper} of $\lambda_{j}$.
\end{proof}

\begin{prop}\label{virasoro action on Mg}
	The Virasoro algebra acts by first order differential operators on $\lambda_j$ with central charge $c_j$. Precisely, there is a Lie algebra antihomomorphism $\mathfrak{vir}\to \Gamma(\mathcal{M}_{g,1^\infty},\mathcal{A}_{\lambda_j})$ which is compatible with the action $\Lambda$ of $\mathfrak{witt}$ given in \cref{witt action on Mg}.
\end{prop}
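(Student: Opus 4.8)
The plan is to mirror, line for line, the argument for the super case in \cref{ns action}, replacing $\mathfrak{switt}$ by $\mathfrak{witt}$, $\mathfrak{ns}$ by $\mathfrak{vir}$, the super Krichever map by its classical counterpart $\kappa_j$, and the Berezinian line bundle by the determinant line bundle $\mathcal{d\!e\!t}_{\Gr(H)}$. First I would assemble the main commutative diagram of Lie algebroids on $\mathcal{M}_{g,1^\infty}$ obtained by pulling back \cref{gl tilde action square unsuper} along $\kappa_j$ and then restricting along the compatibility morphism $b_{\varrho_j}\co\mathcal{W}\to{\kappa_j}^!\mathcal{G}$ of \cref{gl witt algebroid diagram unsuper}; combined with \cref{vir j diagram} this produces a morphism of Lie algebroids $\mathcal{V}\to{\kappa_j}^!\widetilde{\mathcal{G}}\to\mathcal{A}_{{{\kappa_j}^*}\mathcal{d\!e\!t}}$ lying over the identity on $\mathcal{O}_\mathcal{M}$ and over the anchor $a_\Lambda\co\mathcal{W}\to\mathcal{T}_\mathcal{M}$, exactly as in the super diagram \cref{ns action diagram}. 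Transitivity of $\mathcal{A}_{\mathcal{d\!e\!t}}$, $\mathcal{G}$, and $\widetilde{\mathcal{G}}$---already invoked in the construction of \cref{gl tilde action square unsuper}---is what makes these pullbacks and the application of \cref{pullback morphism prop} legitimate.

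Next I would invoke \cite[Theorem 2.4]{KosmannSchwarzbach-Mackenzie--2002}, the nonsuper analog of \cref{Lie alg algoid rep corrsp}: since $\mathcal{V}$ is the action Lie algebroid attached to $\mathfrak{vir}^\text{op}\to\Gamma(\mathcal{M}_{g,1^\infty},\mathcal{T}_\mathcal{M})$, the Lie algebroid morphism $\mathcal{V}\to\mathcal{A}_{{{\kappa_j}^*}\mathcal{d\!e\!t}}$ corresponds bijectively to a Lie algebra morphism $\mathfrak{vir}^\text{op}\to\Gamma(\mathcal{M}_{g,1^\infty},\mathcal{A}_{{{\kappa_j}^*}\mathcal{d\!e\!t}})$, that is, to a Lie algebra antihomomorphism $\mathfrak{vir}\to\Gamma(\mathcal{M}_{g,1^\infty},\mathcal{A}_{{{\kappa_j}^*}\mathcal{d\!e\!t}})$. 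Composing with the canonical isomorphism ${\kappa_j}^*\mathcal{d\!e\!t}_{\Gr(H)}\cong\lambda_j$ established just above (and the isomorphism of Atiyah algebras it induces, which is functorial in the line bundle) yields the required antihomomorphism $\mathfrak{vir}\to\Gamma(\mathcal{M}_{g,1^\infty},\mathcal{A}_{\lambda_j})$; that $\mathfrak{vir}$ acts by first order differential operators is then automatic, since sections of $\mathcal{A}_{\lambda_j}$ are by definition order-one operators on $\lambda_j$.

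Finally I would read off the two remaining assertions directly from the diagram. The claim that the central charge acts by $c_j$ is precisely the commutativity of the leftmost column of \cref{vir j diagram} carried through the pullback: the central generator $1\in\mathbb{C}\subset\mathfrak{vir}$ is sent to $c_j\in\Gamma(\mathcal{M}_{g,1^\infty},\mathcal{O}_\mathcal{M})\subset\Gamma(\mathcal{M}_{g,1^\infty},\mathcal{A}_{\lambda_j})$. Compatibility with $\Lambda$ follows because applying the symbol map to the Lie algebroid morphism $\mathcal{V}\to\mathcal{A}_{\lambda_j}$ recovers the anchor of $\mathcal{V}$, which is $a_\Lambda$ precomposed with the projection $\mathcal{V}\to\mathcal{W}$; under the same correspondence \cite[Theorem 2.4]{KosmannSchwarzbach-Mackenzie--2002} this anchor corresponds to $\Lambda$ precomposed with the projection $\mathfrak{vir}\to\mathfrak{witt}$, which is exactly the asserted compatibility.

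I expect the only genuine obstacle to be bookkeeping: keeping track of the opposite-algebra conventions and the anti- versus homomorphism directions through \cref{vir j diagram}, \cref{gl tilde action square unsuper}, and the pullback---the same place where the sign conventions for $L$ and $\widetilde L$ in \cref{vector fields on Gr unsuper} and \cref{Kontsevich gl tilde action} enter. Once these are tracked consistently, the statement is a purely formal consequence of results already in place, just as the proof of \cref{ns action} reduces to \cref{ber lambda isom} and the diagram \cref{ns action diagram}.
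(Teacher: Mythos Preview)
Your proposal is correct and follows precisely the intended approach. The paper does not actually supply a proof of this proposition---it sits in the classical appendix, which the introduction says ``contains no new results''---but the implied argument is the classical shadow of the proof of \cref{ns action}, namely: apply the correspondence \cite[Theorem~2.4]{KosmannSchwarzbach-Mackenzie--2002} to the Lie algebroid morphism $\mathcal{V}\to\mathcal{A}_{\kappa_j^*\mathcal{d\!e\!t}}$ extracted from the pulled-back diagram, and then use ${\kappa_j}^*\mathcal{d\!e\!t}_{\Gr(H)}\cong\lambda_j$; this is exactly what you do.
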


Using these standard properties of Atiyah algebras described in \cref{Atiyah section}, we describe the action of the Virasoro Lie algebroid on the line bundles of the Mumford isomorphism. 

Denote $\mathcal{A}_{j}\coeq \mathcal{A}_{\lambda_{j}\otimes \lambda_{1}^{-c_j}}$.
From \cref{virasoro action on Mg}, $\mathcal{V}$ acts on $\mathcal{A}_{\lambda_{j/2}}$ with central charge $c_j$, and further by \cref{Atiyah defs} combined with \cref{Atiyah defs iso lemma}, $\mathcal{V}$ acts on $\mathcal{A}_{\lambda_{1/2}^{-c_j}}$ with central charge $-c_j$. Then the action of $\mathcal{V}$ on $\mathcal{A}_{j}$ is defined via the action on $\mathcal{A}_{\lambda_{j/2}}\times_{\mathcal{T}_X} -c_j\mathcal{A}_{\lambda_{1/2}}$. Importantly, we see that $\mathcal{V}$ acts with central charge $c_j-c_j=0$, as shown in the commutative diagram of Lie algebroids below.
\begin{equation*}
	\begin{tikzcd}
		& &  & \mathcal{K}\ar{d} & \\
		0 \arrow{r} & \mathcal{O}_{\mathcal{M}} \arrow{r}\arrow{d}{}{0} & \mathcal{V} \arrow{r}\arrow{d} & \mathcal{W} \arrow{r}\arrow{d}{a_\Lambda}\arrow[dashed]{dl}{\alpha_j} & 0\\
		0\arrow{r} & \mathcal{O}_{\mathcal{M}} \arrow{r} & \mathcal{A}_{j} \arrow{r} & \mathcal{T}_\mathcal{M}  \arrow{r} & 0
	\end{tikzcd}
\end{equation*}

\begin{lem}
	There exists a unique morphism of Lie algebroids $\alpha_j\co \mathcal{W}\to \mathcal{A}_{j}$ which commutes with the Virasoro action $\mathcal{V}\to \mathcal{A}_j$ and the projection $\mathcal{V}\to\mathcal{W}$. In other words, the Virasoro action on $\mathcal{A}_j$ descends to a Witt action.
\end{lem}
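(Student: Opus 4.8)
The plan is to repeat, verbatim in the classical setting, the diagram chase that proved \cref{aplha map lemma} in the super case; the argument is purely formal once one knows that the central element of $\mathfrak{vir}$ acts on $\mathcal{A}_{j}\coeq\mathcal{A}_{\lambda_{j}\otimes\lambda_{1}^{-c_j}}$ with total central charge $c_j-c_j=0$. That vanishing is the only input specific to this statement: by \cref{virasoro action on Mg} the Virasoro algebroid $\mathcal{V}$ acts on $\mathcal{A}_{\lambda_j}$ with central charge $c_j$, while \cref{Atiyah defs} together with \cref{Atiyah defs iso lemma} identify $-c_j\mathcal{A}_{\lambda_1}\cong\mathcal{A}_{\lambda_1^{-c_j}}$ and show $\mathcal{V}$ acts there with central charge $-c_j$; since the action on $\mathcal{A}_j$ is the induced one on $\mathcal{A}_{\lambda_j}\times_{\mathcal{T}_\mathcal{M}}(-c_j\mathcal{A}_{\lambda_1})$, the composite $\mathcal{O}_\mathcal{M}\to\mathcal{V}\to\mathcal{A}_j$ is the zero map.

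Granting this, I would proceed as follows. Fix an open $U\subseteq\mathcal{M}_{g,1^\infty}$ and a section $w\in\Gamma(U,\mathcal{W})$. The projection $\mathcal{V}\to\mathcal{W}$ is surjective, being the action-algebroid map induced by the surjection $\mathfrak{vir}_j\to\mathfrak{witt}$ in \cref{vir j diagram}, so after possibly shrinking $U$ one may choose a lift $n\in\Gamma(U,\mathcal{V})$ with $n\mapsto w$; set $\alpha_j(w)$ to be the image of $n$ under the Virasoro action $\mathcal{V}\to\mathcal{A}_j$. If $n_1,n_2$ are two such lifts then $n_1-n_2\in\Gamma(U,\mathcal{O}_\mathcal{M})=\ker(\mathcal{V}\to\mathcal{W})$, and since $\mathcal{O}_\mathcal{M}\to\mathcal{V}\to\mathcal{A}_j$ vanishes we get equal images; thus $\alpha_j(w)$ is well defined. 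The construction is local and compatible with restriction, hence it glues to a morphism of sheaves $\alpha_j\co\mathcal{W}\to\mathcal{A}_j$, and it commutes with $\mathcal{V}\to\mathcal{A}_j$ and with $\mathcal{V}\to\mathcal{W}$ by construction.

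It remains to check that $\alpha_j$ is a morphism of Lie algebroids and that it is unique. Compatibility of anchors, $\sym\circ\alpha_j=a_\Lambda$, follows from the corresponding identity for the Virasoro action together with the fact that $\mathcal{V}\to\mathcal{W}$ is the identity on anchors ($\mathcal{T}_\mathcal{M}$). For the bracket, choose lifts $n_i\mapsto w_i$; then $[n_1,n_2]$ lifts $[w_1,w_2]$, so $\alpha_j([w_1,w_2])$ is the image of $[n_1,n_2]$, which equals $[\alpha_j(w_1),\alpha_j(w_2)]$ because $\mathcal{V}\to\mathcal{A}_j$ preserves brackets. Uniqueness is immediate: any $\alpha_j'$ commuting with the two maps out of $\mathcal{V}$ must send each $w$ to the image of an arbitrary lift of $w$, hence agrees with $\alpha_j$. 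The only genuine point is the central-charge cancellation recalled in the first paragraph, which is exactly the content of the displayed diagram preceding the lemma; everything else is a routine chase in the abelian category of $\mathcal{O}_\mathcal{M}$-modules, respectively in the category of Lie algebroids.
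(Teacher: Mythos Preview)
Your proof is correct and follows essentially the same approach as the paper: the paper simply states that the proof is identical to that of \cref{aplha map lemma}, which is exactly the diagram chase you carry out—lift along the surjection $\mathcal{V}\to\mathcal{W}$, push to $\mathcal{A}_j$, and use the central-charge cancellation to see well-definedness. Your additional remarks on checking the bracket, the anchor compatibility, and uniqueness are not spelled out in the paper but are the natural completion of the same argument.
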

\begin{proof}
	The proof is the same as for \cref{aplha map lemma}.
\end{proof}

\begin{thm}[\textcite{Kontsevich--1987e}; {\textcite{Arbarello-DeConcini-Kac-Procesi--1988}}]
	There exists a flat holomorphic connection on the line bundle $\lambda_1^{-c_j}\otimes\lambda_j$. 
\end{thm}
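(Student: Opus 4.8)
The plan is to run the same diagram chase as in the proof of \cref{flat}, now over $\mathcal{M}_{g,1^\infty}$ instead of $\mathfrak{M}_{g,1^\infty_{\mathrm{NS}}}$. By the discussion of Atiyah algebras in \cref{Atiyah section}, a flat holomorphic connection on $\lambda_1^{-c_j}\otimes\lambda_j$ is exactly a splitting $\nabla\co\mathcal{T}_\mathcal{M}\to\mathcal{A}_j$ of the Atiyah sequence which is in addition a morphism of Lie algebroids (zero curvature), so it suffices to produce such a $\nabla$.

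First I would invoke the lemma immediately preceding this theorem, which provides a morphism of Lie algebroids $\alpha_j\co\mathcal{W}\to\mathcal{A}_j$ commuting with the Virasoro action $\mathcal{V}\to\mathcal{A}_j$ and the projection $\mathcal{V}\to\mathcal{W}$; this already rests on the fact that $\mathcal{V}$ acts on $\mathcal{A}_j$ with central charge $c_j-c_j=0$. Next I set $\mathcal{K}\coeq\ker(a_\Lambda\co\mathcal{W}\to\mathcal{T}_\mathcal{M})$, which by the \v{C}ech description underlying \cref{witt action on Mg} is $\pi_*\mathcal{T}_{(X\setminus P)/\mathcal{M}}$, a bundle of Lie algebras with fiber $\mathfrak{k}=\Gamma(C\setminus p,\mathcal{T}_C)$ and trivial anchor. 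The key step is to show $\alpha_j$ annihilates $\mathcal{K}$: given a local section $k\in\Gamma(U,\mathcal{K})$ over a small enough open $U$, the perfectness of $\mathfrak{k}$ in families (the classical counterpart of \cref{commutant}, obtained by the Noether-normalization argument of \cref{perfect classical prop}) lets us write $k=[k_1,k_2]$ with $k_i\in\Gamma(U,\mathcal{K})$; since $a_\Lambda(k_i)=0$ we get $\alpha_j(k_i)\in\ker(\mathcal{A}_j\to\mathcal{T}_\mathcal{M})=\mathcal{O}_\mathcal{M}$, and as $\mathcal{O}_\mathcal{M}$ is central in $\mathcal{A}_j$ we conclude $\alpha_j(k)=[\alpha_j(k_1),\alpha_j(k_2)]=0$. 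This extends to sections over arbitrary opens because $\alpha_j$ commutes with restriction to a covering by small enough opens.

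Finally, since $a_\Lambda$ is surjective with kernel $\mathcal{K}$, we have $\mathcal{T}_\mathcal{M}=\coker(\mathcal{K}\hookrightarrow\mathcal{W})$, and $\alpha_j|_\mathcal{K}=0$ forces $\alpha_j$ to factor uniquely through a morphism of Lie algebroids $\nabla\co\mathcal{T}_\mathcal{M}\to\mathcal{A}_j$. Because $\sym\circ\alpha_j=a_\Lambda$ and $a_\Lambda$ is precisely the quotient map onto this cokernel, the induced $\nabla$ satisfies $\sym\circ\nabla=\id_{\mathcal{T}_\mathcal{M}}$, so it is a connection, and being a Lie algebroid morphism it has vanishing curvature, i.e. it is flat. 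I expect the only real obstacle to be the vanishing $\alpha_j|_\mathcal{K}=0$, which depends entirely on the perfectness of the Lie algebra of vector fields on an open affine Riemann surface — that is, on Noether normalization as in \cref{perfect classical prop} (with \cref{commutant} the super analog) — after which everything is a formal diagram chase.
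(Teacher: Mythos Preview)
Your proposal is correct and follows essentially the same approach as the paper: invoke the descent lemma for $\alpha_j$, identify $\mathcal{K}=\ker a_\Lambda$ with $\pi_*\mathcal{T}_{(X\setminus P)/\mathcal{M}}$, use perfectness to show $\alpha_j|_{\mathcal{K}}=0$, and factor through the cokernel to obtain the flat connection. The only minor difference is that the paper, in the classical case, explicitly cites Mackenzie to conclude that $\mathcal{K}$ is a Lie algebra bundle with pointwise bracket, so that the \emph{fiberwise} perfectness of \cref{perfect classical prop} already gives $\mathcal{K}=[\mathcal{K},\mathcal{K}]$, rather than appealing to a family version of the statement as you do.
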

\begin{proof}
	Define $\mathcal{K}$ to be the kernel of the Witt algebroid action $a_\Lambda \co \mathcal{W} \to \mathcal{T}_\mathcal{M}$. Since the action is surjective, by \cite[Theorem 1.4]{Mackenzie.1987.LgaLaidg}, $\mathcal{K}$ is a Lie algebra bundle.
	In particular, $\mathcal{K}\cong\pi_*\mathcal{T}_{(X\setminus P)/\mathcal{M}}$ is a bundle with fiber the Lie algebra $\mathfrak{k} = \Gamma(C \setminus p, \mathcal{T}_C )$ and an anchor map which is trivial. Therefore the bracket on $\mathcal{K}$ is given by the pointwise bracket of its fibers. By \cref{perfect classical prop}, we know every fiber is equal to its commutant $\mathfrak{k} = [\mathfrak{k}, \mathfrak{k}]$, therefore $\mathcal{K} = [\mathcal{K} , \mathcal{K} ]$.
	
	The rest of the proof is identical to the proof of \cref{flat}.
\end{proof}

\section{Proof of \cref{pullback diagram lemma} (a condition to be a pullback)}\label{pullback proof}

We label the morphisms in the diagram as follows.
\begin{equation*}
	\begin{tikzcd}
		0 \arrow{r} & A \arrow{r}{\iota}\arrow{d}{a}[swap]{\wr} & B \arrow{r}{\pi}\arrow{d}{b} & C \arrow{r}\arrow{d}{c} & 0\\
		0 \arrow{r}& A' \arrow{r}{}{\iota} & B'\arrow{r}{}{\pi} & C' \arrow{r} & 0
	\end{tikzcd}
\end{equation*}

\begin{proof}
	Suppose we have $X$ such that the following square commutes.
	\begin{equation*}
		\begin{tikzcd}
			X \arrow{r}{g}\arrow{d}{f} & C \arrow{d}{c}\\
			B'\arrow{r}{}{\pi} & C'
		\end{tikzcd}
	\end{equation*}
	Let $x\in X$. Choose $y\in B$ such that $\pi(y)=g(x)$. Then $b(y)-f(x)\in B'$, and further $\pi'(b(y)-f(x))=\pi'\circ b(y)-\pi'\circ f(x)=c\circ \pi(y)-c\circ g(x)=c(\pi(y)-g(x))=0$. Therefore we have $b(y)-f(x)\in A'$. And using the isomorphism $a$ and inclusion into $B$, we have $a^{-1}(b(y)-f(x))\in B$.
	
	Define $\phi\co X\to B$ as $\phi(x)=y-a^{-1}(b(y)-f(x))$. We first check this map commutes, then check it does not depend on the choice of $y$.
	\begin{align*} 
		&\pi\circ\phi(x)=\pi(y-a^{-1}(b(y)-f(x)))=\pi(y)-\pi(a^{-1}(b(y)-f(x)))=\pi(y)=g(x)\\
		& b\circ\phi(x)=b(y-a^{-1}(b(y)-f(x)))=b(y)-b(a^{-1}(b(y)-f(x)))=b(y)-(b(y)-f(x))=f(x)
	\end{align*}
	Now consider two lifts $y,y'\in B$ such that $\pi(y)=\pi(y')=g(x)$. Consider the subtraction of the resulting maps 
	\begin{align*}
		\phi(x)-\phi'(x)&=[y-a^{-1}(b(y)-f(x))]-[y'-a^{-1}(b(y')-f(x))]=(y-y')-a^{-1}(b(y-y'))=0
	\end{align*}
	where the last equality results from noticing that $y-y'\in A$ and $b(y-y')\in A'$.
\end{proof}

\section{Proof of \cref{superconformalNoether} (superconformal Noether normalization)}\label{Noether Normalization proof appendix}

Before considering a family of open SRSs as in \cref{superconformalNoether}, we first provide the required preliminaries from the setting of spin curves over a point, which is the setting studied by \textcite{Artebani-Pirola--2005}.

\begin{lem}[{\textcite[Lemma 3]{Artebani-Pirola--2005}}]\label{technical}
	Let $A$ and $B$ be effective divisors on a compact Riemann surface $C$ such that $[A]\cap [B]=\emptyset$ and $\deg(A-B)-\deg[A]-\deg[B]> 2g-2$. Then
	\begin{align*}
		H^0(C,\Omega^1_C(A-B))\to H^1_{dR}\big(C\setminus[A]\big) && \sigma\to[\sigma]
	\end{align*}
	is onto.
\end{lem}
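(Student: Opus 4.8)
The plan is to identify $H^1_{dR}(C\setminus[A])$ with a quotient of meromorphic de Rham forms on $C$, and then to move an arbitrary representative into the image of $H^0(C,\Omega^1_C(A-B))$ by subtracting exact forms. Write $U\coeq C\setminus[A]$, $A=\sum_i a_ip_i$, $B=\sum_j b_jq_j$, with $[A]\cap[B]=\emptyset$. A section $\sigma$ of $\Omega^1_C(A-B)$ satisfies $\operatorname{div}(\sigma)\geq B-A$, hence is holomorphic (so closed) on $U$, and $\sigma\mapsto[\sigma]$ is defined. Since $U$ is affine, Grothendieck's comparison theorem identifies
\[
H^1_{dR}(U)\;\cong\;\Omega^1_{\mathrm{mer}}(C;\ast[A])\big/\,d\,\mathcal O_{\mathrm{mer}}(C;\ast[A]),
\]
where the subscript $\mathrm{mer}$ means meromorphic with poles only in $[A]$. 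So it suffices to show every meromorphic $1$-form $\omega$ with poles only in $[A]$ is congruent, modulo $d\,\mathcal O_{\mathrm{mer}}(C;\ast[A])$, to an element of $H^0(C,\Omega^1_C(A-B))$.

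First I would impose the zeros along $B$. Because $[A]\cap[B]=\emptyset$, $\omega$ is holomorphic near each $q_j$ and has a local primitive $P_j$ there. For $m$ large, $H^1\!\big(C,\mathcal O_C(mA-\sum_j(b_j+1)q_j)\big)=0$, so $H^0(C,\mathcal O_C(mA))\to\bigoplus_j\mathcal O_{C,q_j}/\mathfrak m_{q_j}^{b_j+1}$ is surjective; pick $f\in\mathcal O_{\mathrm{mer}}(C;\ast[A])$ whose jet of order $\leq b_j$ at each $q_j$ matches that of $P_j$. Then $\omega-df$ has poles only in $[A]$ and vanishes to order $\geq b_j$ at each $q_j$, so after replacing $\omega$ by $\omega-df$ we may assume $\omega\in H^0(C,\Omega^1_C(D-B))$ for some effective $D\geq A$ supported on $[A]$.

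The crux is to lower the pole order one unit at a time, down to $D=A$, without spoiling the zeros along $B$. Choose $p=p_i$ with $\operatorname{ord}_p(D)>a_i$ and write $D=D'+p$, so $D'\geq A$ is supported on $[A]$; using $0\to\Omega^1_C(D'-B)\to\Omega^1_C(D'+p-B)\to\mathbb C_p\to0$, assume $\omega$ has a genuine pole of order $\operatorname{ord}_p(D')+1$ at $p$. I would look for a global function $f$ with a pole of order $\leq\operatorname{ord}_p(D')$ at $p$, of order $\leq\operatorname{ord}_{p'}(D')-1$ at the other points of $[A]$, and a zero of order $\geq b_j+1$ at each $q_j$. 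For such $f$, $df$ has poles $\leq D'+p$ on $[A]$ and zeros $\geq B$, and its top Laurent coefficient at $p$ is $-\operatorname{ord}_p(D')$ times the leading coefficient of $f$; so if $f$ is chosen with the unique leading coefficient at $p$ cancelling the top pole of $\omega$, then $\omega-df\in H^0(C,\Omega^1_C(D'-B))$ and $[\omega-df]=[\omega]$. Existence of such an $f$ reduces, via the skyscraper sequence for $\mathcal O_C$ at $p$ and Serre duality, to the vanishing of $H^0(C,\Omega^1_C(M+p))$ for an explicit divisor $M$, and a degree count gives $\deg\Omega^1_C(M+p)=2g-2-\deg D'+\deg[A]+\deg B+\deg[B]$, which is negative precisely because $\deg D'\geq\deg A$ and, by hypothesis, $\deg(A-B)>2g-2+\deg[A]+\deg[B]$. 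Iterating brings $\omega$ into $H^0(C,\Omega^1_C(A-B))$ modulo exact forms. (The degenerate genus-zero cases, where the hypothesis forces $A=B=0$, are trivial.)

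I expect the last step to be the main difficulty: cutting down pole orders along $[A]$ while \emph{simultaneously} preserving every zero along $B$ ties the choice of $f$ to several interlocking linear conditions, and the numerical hypothesis $\deg(A-B)>2g-2+\deg[A]+\deg[B]$ is exactly what makes the associated cohomological obstruction vanish at each stage. The identification of $H^1_{dR}(U)$ with meromorphic de Rham cohomology and the jet-interpolation in the second step are routine.
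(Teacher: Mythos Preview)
The paper does not prove this lemma at all; it merely cites it as \textcite[Lemma~3]{Artebani-Pirola--2005} and uses it as a black box in the proofs of \cref{uplsion surjective} and \cref{uplsion Q surjective}. So there is nothing to compare against in the paper itself.

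Your argument is essentially correct and self-contained. The identification of $H^1_{dR}(C\setminus[A])$ with meromorphic de~Rham cohomology via Grothendieck's algebraic de~Rham theorem is the right starting point, the jet-interpolation step to impose zeros along $B$ is standard, and the inductive pole-reduction step is where the numerical hypothesis enters exactly as you describe: the Serre-dual obstruction space $H^0(C,\Omega^1_C(-E+p))$ has degree $2g-2-\deg D'+\deg[A]+\deg B+\deg[B]$, which is negative because $\deg D'\geq\deg A$ and $\deg A-\deg B>2g-2+\deg[A]+\deg[B]$.

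One small correction: your parenthetical about genus zero is slightly misphrased. The hypothesis does \emph{not} force $A=B=0$ when $g=0$ (e.g.\ $A=2p$, $B=0$ satisfies it). The genuine edge case is rather $A=0$, which under the hypothesis forces $g=0$ and $B=0$; there both sides vanish and the map is trivially onto. For $A\neq0$ your main argument already works uniformly in all genera, so the parenthetical is unnecessary.
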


	\begin{prop}[{\textcite[Proposition 2]{Artebani-Pirola--2005}}] \label{uplsion surjective}
	Let $C$ be a compact Riemann surface, let $\Omega_C^{1/2}$ be a spin structure on $C$, let $p\in C$, and let $n\geq 0$. Consider the analytic map below. 
	\begin{align*}
		\Upsilon_C\co H^0\big(C,\Omega_C^{1/2}(np)\big)\to H^1_{dR}(C\setminus p) && \sigma\mapsto [\sigma^2]_{dR}
	\end{align*}
	If $n>6g-2$, then $\dim \Upsilon_C^{-1}(0)=n-2g$.
\end{prop}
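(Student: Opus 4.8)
The plan is to reformulate $\Upsilon_C$ as a system of quadrics and then exhibit a point at which it is a submersion. Write $L\coeq\Omega_C^{1/2}(np)$, so $\deg L=g-1+n\geq g$ under the hypothesis $n>6g-2$; then Serre duality gives $H^1(C,L)=0$, Riemann--Roch gives $\dim H^0(C,L)=n$, and $C\setminus p$ (homotopy equivalent to a wedge of $2g$ circles) has $\dim_\CC H^1_{dR}(C\setminus p)=2g$. The map $\Upsilon_C$ factors as the quadratic map $\sigma\mapsto\sigma^2\in H^0\big(C,\Omega^1_C(2np)\big)$ followed by the $\CC$-linear de~Rham class map $H^0(\Omega^1_C(2np))\to H^1_{dR}(C\setminus p)$; choosing a basis of the target presents $\Upsilon_C$ as a tuple of $2g$ quadratic forms on $H^0(C,L)\cong\CC^{n}$. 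Since $\Upsilon_C^{-1}(0)$ contains the origin and is cut out by $2g$ equations in $\CC^n$, Krull's height theorem gives $\dim Z\geq n-2g$ for every irreducible component $Z$; this is the easy half.

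For the reverse inequality it suffices to prove that no component of $\Upsilon_C^{-1}(0)$ has dimension $>n-2g$. The differential is $d\Upsilon_C|_\sigma(\tau)=2\,[\sigma\tau]_{dR}$, so its image is the de~Rham image of the subspace $\sigma\cdot H^0(C,L)\subseteq H^0(\Omega^1_C(2np))$. Let $a$ be the order of vanishing of $\sigma$ at $p$ and let $E'_\sigma$ be the divisor of the remaining zeros of $\sigma$ (effective, of degree $g-1+n-a$, disjoint from $p$). Multiplication by $\sigma$ is an isomorphism $H^0(C,L)\xrightarrow{\sim}H^0\big(C,\Omega^1_C((2n-a)p-E'_\sigma)\big)$, so $d\Upsilon_C|_\sigma$ is onto $H^1_{dR}(C\setminus p)$ whenever the de~Rham map out of $H^0\big(\Omega^1_C((2n-a)p-E'_\sigma)\big)$ is onto. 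This is precisely \cref{technical} with $A=(2n-a)p$ and $B=E'_\sigma$: the hypothesis $[A]\cap[B]=\emptyset$ is automatic since $p\notin\supp(E'_\sigma)$, and the numerical hypothesis $\deg(A-B)-\deg[A]-\deg[B]>2g-2$ becomes, after the $a$'s cancel, $\#\supp(E'_\sigma)<n-3g+2$. Thus, for any nonzero $\sigma_0\in\Upsilon_C^{-1}(0)$ whose zero divisor is supported on at most $n-3g+1$ points, $\Upsilon_C$ is a submersion at $\sigma_0$; hence $\Upsilon_C^{-1}(0)$ is smooth of dimension $n-2g$ near $\sigma_0$, so $\sigma_0$ lies on a unique component, of dimension $n-2g$.

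It remains to find such a $\sigma_0$ inside every prospective large component. Because $\Upsilon_C(\lambda\sigma)=\lambda^2\Upsilon_C(\sigma)$, the variety $\Upsilon_C^{-1}(0)$ and each of its components is a cone, so a component $Z$ with $\dim Z>n-2g$ yields a projective variety $\PP Z\subseteq\PP H^0(C,L)=\PP^{n-1}$ of dimension $\geq n-2g$. In $\PP^{n-1}$ let $\Sigma_k$ be the closed subvariety of classes $[\sigma]$ with $\div(\sigma)$ supported on at most $k$ points; choosing the partition consisting of $k-1$ simple points and one point of multiplicity $g+n-k$, and using that for $k\geq g+1$ the Abel--Jacobi map $C^k\to\Pic^{g-1+n}(C)$ is surjective with fibres of dimension $\geq k-g$, one gets a component of $\Sigma_k$ of dimension $\geq k-g$. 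Take $k=n-3g+1$ (which is $\geq g+1$ since $n>6g-2$): then $(n-2g)+(k-g)=2n-6g+1\geq n-1$, so $\PP Z$ meets this component of $\Sigma_{n-3g+1}$. A point of the intersection is a nonzero $\sigma_0\in Z$ with $\div(\sigma_0)$ supported on at most $n-3g+1$ points, and by the previous paragraph the unique component through $\sigma_0$ — namely $Z$ — has dimension $n-2g$, contradicting $\dim Z>n-2g$. Therefore $\dim\Upsilon_C^{-1}(0)=n-2g$.

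The step I expect to be the main obstacle is precisely the matching of these two dimension counts: \cref{technical} applies only when the chosen section has its zeros concentrated at few points, whereas a general section of $L$ has $g-1+n$ distinct simple zeros, so one must guarantee that every excess component of $\Upsilon_C^{-1}(0)$ nonetheless contains a sufficiently concentrated section — and the projective incidence estimate delivers this exactly in the range $n>6g-2$. One also checks that $\Sigma_k$ is closed with the asserted dimension, and handles the degenerate case $g=0$ separately, where $H^1_{dR}(C\setminus p)=0$ and the statement is immediate; note that the use of $A=(2n-a)p$ rather than $A=2np$ above is what frees the argument from also having to impose $\sigma_0(p)\neq0$.
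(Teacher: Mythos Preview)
Your argument is correct and follows precisely the strategy of Artebani--Pirola (to which the paper defers, and which the paper reproduces in the proof of the closely analogous \cref{uplsion Q surjective}): the lower bound by Krull, the identification of $d\Upsilon_C|_\sigma$ with the de~Rham map out of $H^0\big(\Omega^1_C((2n-a)p-E'_\sigma)\big)$ via multiplication by $\sigma$, the reduction to \cref{technical}, and finally a projective intersection argument to locate, inside each component, a $\sigma_0$ whose zeros are sufficiently concentrated.

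The one substantive difference is the intersection step. The paper intersects $\PP Z$ with the \emph{linear} subspace $\PP_{st}\coeq\{[\sigma]:\div\sigma\ge st\}$ for a single generic point $t\in C$; its dimension is immediately $n-1-s$, and taking $s=n-2g$ (using the contradiction hypothesis $\dim\PP Z\ge n-2g$) yields a $\sigma_0$ with at most $3g$ zeros away from $\{p,t\}$, so $\#\supp(E'_{\sigma_0})\le 3g<n-3g+2$ exactly when $n>6g-2$. You instead build a nonlinear locus inside $\Sigma_{n-3g+1}$ via the Abel--Jacobi-type map $(p_1,\dots,p_{k-1},q)\mapsto[p_1+\cdots+p_{k-1}+(g+n-k)q]$. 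This works, but two small points deserve tightening: what you actually produce is a closed subvariety of $\Sigma_k$ of dimension $\ge k-g$ (the image of the fiber over $[L]$), not literally a component of $\Sigma_k$ --- this is all the intersection argument needs; and you should say explicitly why the map from that fiber to $\PP H^0(L)$ is generically finite (for $g\ge1$ one has $g+n-k\ge2$, so generically only one point of $\div\sigma$ carries that multiplicity, pinning down $q$, and the rest are determined up to permutation). The linear choice $\PP_{st}$ buys you a one-line dimension count and avoids both of these checks.
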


\begin{proof}
	The proof can be found in \cite{Artebani-Pirola--2005} and is very similar to the proof of \cref{uplsion Q surjective} below. In particular, \cref{technical} is used in the proof.
\end{proof}

\begin{prop} \label{uplsion Q surjective}
	Let $C$ be a compact Riemann surface, let $\Omega_C^{1/2}$ be a spin structure on $C$, let $p,q\in C$ be distinct, and let $n\geq 0$. Consider the analytic map below. 
	\begin{align*}
		\Upsilon_{q,C}\co H^0\big(C,\Omega_C^{1/2}(np-q)\big)\to H^1_{dR}(C\setminus p) && \sigma\mapsto [\sigma^2]_{dR}
	\end{align*}
	If $n>6g$, then $\dim \Upsilon_{q,C}^{-1}(0)=n-2g-1$.
\end{prop}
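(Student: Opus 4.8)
The plan is to adapt the argument of \textcite{Artebani-Pirola--2005} behind \cref{uplsion surjective}, keeping track of the extra vanishing at $q$. First I would record the two relevant dimensions: since $\deg\big(\Omega_C^{1/2}(np-q)\big)=g+n-2>2g-2$ (as $n>6g>g$), Riemann--Roch gives $h^0\big(C,\Omega_C^{1/2}(np-q)\big)=n-1$, while the open Riemann surface $C\setminus p$ has $\dim_{\CC}H^1_{dR}(C\setminus p)=2g$. Because $\Upsilon_{q,C}$ is algebraic and homogeneous of degree $2$, its zero fibre $\Upsilon_{q,C}^{-1}(0)$ is a cone through the origin, and the general lower bound on fibre dimensions gives $\dim\Upsilon_{q,C}^{-1}(0)\ge(n-1)-2g=n-2g-1$; so the content is the matching upper bound.

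For the upper bound, fix an irreducible component $W$ of $\Upsilon_{q,C}^{-1}(0)$ and take a general --- in particular smooth --- point $\sigma\in W$ with $\sigma\neq 0$. Write $\operatorname{div}(\sigma)=E_\sigma-ap$ for the divisor of $\sigma$ viewed as a meromorphic section of $\Omega_C^{1/2}$, where $E_\sigma\ge 0$ is supported off $p$, satisfies $E_\sigma\ge q$ (as $\sigma$ vanishes at $q$), has $\deg E_\sigma=g-1+a$, and $0\le a\le n$. The differential of $\Upsilon_{q,C}$ at $\sigma$ is $\tau\mapsto 2[\sigma\tau]_{dR}$, and multiplication by $\sigma$ identifies $H^0\big(\Omega_C^{1/2}(np-q)\big)$ with $H^0\big(C,\Omega_C^1((n+a)p-E_\sigma-q)\big)$; hence the image of $d\Upsilon_{q,C}|_\sigma$ is the image of $H^0\big(\Omega_C^1((n+a)p-E_\sigma-q)\big)$ in $H^1_{dR}(C\setminus p)$. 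I would then apply \cref{technical} with $A=(n+a)p$ and $B=E_\sigma+q$ (so that $[A]\cap[B]=\emptyset$): one computes $\deg(A-B)=n-g$, and the hypothesis $\deg(A-B)-\deg[A]-\deg[B]>2g-2$ becomes $\lvert\operatorname{supp}E_\sigma\rvert<n-3g+1$. Whenever this holds, $d\Upsilon_{q,C}|_\sigma$ is surjective, so $T_\sigma W\subseteq\ker d\Upsilon_{q,C}|_\sigma$ has dimension $(n-1)-2g$, and therefore $\dim W=\dim T_\sigma W\le n-2g-1$.

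The hard part is the residual case: components $W$ whose general member has $\lvert\operatorname{supp}E_\sigma\rvert\ge n-3g+1$, which (since $\lvert\operatorname{supp}E_\sigma\rvert\le\deg E_\sigma=g-1+a$) forces $\sigma$ to have a pole of large order $a$ at $p$. Here I would use the defining property $\sigma^2=df$ with $f$ meromorphic having poles only at $p$, so that $E_\sigma=\tfrac12\,\mathrm{Ram}_{\neq p}(f)$, and bound $\dim W$ by a Hurwitz-type estimate for the space of such branched covers $f$ of fixed degree whose ramification is everywhere of odd order and whose associated theta characteristic equals $\Omega_C^{1/2}$; alternatively one stratifies $\Upsilon_{q,C}^{-1}(0)$ by the pole order $a$ of $\sigma$ at $p$ and argues by induction on $n$, using that the $a$-stratum embeds into the zero fibre of the analogous map for the smaller bundle $\Omega_C^{1/2}(ap-q)$. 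The bound $n>6g$ --- rather than the $6g-2$ of \cref{uplsion surjective} --- is exactly what makes these numerics close once the constraint that $\sigma$ vanish at $q$ is incorporated. A possible shortcut is to note that $\Upsilon_{q,C}^{-1}(0)=\Upsilon_C^{-1}(0)\cap H^0\big(\Omega_C^{1/2}(np-q)\big)$ is a hyperplane section of the fibre treated in \cref{uplsion surjective} (of dimension $n-2g$), so it would suffice to check that no top-dimensional component of $\Upsilon_C^{-1}(0)$ lies in that hyperplane, equivalently that $q$ is not a base point of its linear span, which one verifies by exhibiting a square root of an exact differential that is nonzero at $q$.
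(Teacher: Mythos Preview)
Your framework is right—the lower bound from fibre dimension, and the reduction of the upper bound to surjectivity of $d\Upsilon_{q,C}|_\sigma$ at a suitable $\sigma$ in each irreducible component $W$—but the ``residual case'' you flag is precisely where the argument as written is incomplete, and none of your proposed fixes closes the gap. The Hurwitz-type and stratification sketches would need substantially more to get the right numerics, and the hyperplane-section shortcut only produces \emph{one} $\sigma\in\Upsilon_C^{-1}(0)$ nonvanishing at $q$; this rules out one component of $\Upsilon_C^{-1}(0)$ lying in the hyperplane, not all of them, so it does not bound the dimension of every component of $\Upsilon_{q,C}^{-1}(0)$.

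The paper (following Artebani--Pirola) avoids the residual case entirely by choosing $\sigma$ differently. Instead of taking $\sigma$ general in $W$, it uses the lower bound $\dim\PP(W)\ge n-2g-2$ to intersect $\PP(W)$ with the linear subspace $\PP_{st}=\{(\sigma):\operatorname{div}(\sigma)\ge st\}$ for a \emph{generic} point $t\ne p,q$ and $s=n-2g-2$; since $\dim\PP_{st}=n-s-2$, the intersection in $\PP^{n-2}$ is nonempty. This produces a $\sigma\in W$ whose zero divisor $mp+rq+E$ satisfies $E\ge(n-2g-2)t$, so almost all of the degree $g-1+n$ zero divisor is concentrated at the single point $t$. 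That forces $m+r$ and the residual part of $E$ (hence $\deg[E]$) to be bounded by quantities of order $3g$, and with $n>6g$ these bounds feed directly into the hypotheses of \cref{technical}, giving surjectivity of $d\Upsilon_{q,C}|_\sigma$ at \emph{this} $\sigma$. By lower semicontinuity of rank, the differential is then surjective at a general smooth point of $W$, whence $\dim W\le n-2g-1$. The key idea you are missing is that one does not need the generic $\sigma\in W$ to have small $\deg[E_\sigma]$: one \emph{constructs} a special $\sigma\in W$ with concentrated zeros via the projective intersection, and then there is no residual case at all.
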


\begin{proof}

	Let $\sigma\in \Upsilon_{q,C}^{-1}(0)$ be nontrivial. Let the zero divisor of $\sigma$ be $mp+rq+E$, where $n\geq m\geq 0$, $r\geq 1$, $E\geq 0$, $[E]\cap p=\emptyset$, and $[E]\cap q=\emptyset$.

	We use the technical lemma \cref{technical} above. 	
	Set $A=(2n-m)p$ and $B=rq+E$. 
	\begin{equation*}
		\begin{tikzcd}[row sep=small]
			H^0\big(C,\Omega_C^{1/2}(np-q)\big)\arrow{r}& H^0\big(C,\Omega^1_C((2n-m)p-2q-E)\big)\arrow{r}& H^1_{dR}(C\setminus p)\\
			\upsilon \arrow[mapsto]{r}& \sigma\upsilon \arrow[mapsto]{r}& \left[\sigma\upsilon\right]_{dR}
		\end{tikzcd}
	\end{equation*}
	The first map above is an isomorphism since it is multiplication by $\sigma$. The second map above is surjective by \cref{technical} when 
	\begin{enumerate}
		\item[1)] $m\leq 2n$ and
		\item[2)] $\deg(E)+\deg[E]\leq 2n-m-r-2g$.
	\end{enumerate}
	We notice that the composite map is $\upsilon\mapsto [\sigma\upsilon]_{dR}$, which is $d\Upsilon_{q,C,\sigma}$ up to a constant factor.

	Let $t\in C$ be a generic point, distinct from $p$ and $q$. Let $s\geq 1$. 	Set 
	\begin{align*}
		\PP_{\Upsilon^{-1}(0)}\coeq \left\{ (\sigma)\in\PP H^0(C,\Omega_C^{1/2}(np-q)) \text{ such that } \sigma\in \Upsilon_{q,C}^{-1}(0) \right\}, \\
		\PP_{st}\coeq \left\{ (\sigma)\in\PP H^0(C,\Omega_C^{1/2}(np-q)) \text{ such that } \div(\sigma)\geq st \right\}.
	\end{align*}
	Notice that $\dim \PP H^0(C,\Omega_C^{1/2}(np-q))=n-2$, $\dim \PP_{\Upsilon^{-1}(0)} \geq n-2g-2$, and $\dim \PP_{st}=n-s-2$. Therefore, if $n-2g-2\geq s$, then $\PP_{\Upsilon^{-1}(0)} \cap \PP_{st} \neq\emptyset$.

	By the last paragraph, we there exists $\sigma\in\Upsilon_{q,C}^{-1}(0)$ nontrivial with $mp+rq+E\geq (n-2g-2)t$. This implies that $n-2g-2 \leq m+r+\deg E \leq g-1+n$.
	
	We see that $r+\deg E-(n-2g-2)\geq m$. Then since we have assumed $n>6g$, we have that $2n>3g+1=(g-1+n)-(n-2g-2)\geq m$ which is condition 1) above. 
	
	Further, since we have assumed $n>6g$, we have that $n-3g+1\geq 3g+1\geq (r+m) -(n-2g-2)\geq \deg E\geq \deg [E]$. This implies that $(g-1+n)+\deg [E]\leq 2n-2g$, which implies condition 2) above. 	
\end{proof}

We now return to the setting where $\pi\co X\to S$ is a family of closed SRSs, $P$ is a relative divisor representing NS punctures, and $(x|\xi)$ are relative superconformal coordinates on the trivial family $\AA^{1|1}\times S\to S$. Also assume that $U$ is a small enough open supersubspace of $S$. Let $Y\to U$ be the restriction of the family of open SRSs $X\setminus P\to S$ to the base $U$.

We next prove three claims before combining the results in the proof of \cref{superconformalNoether} at the end of this section.

\begin{clm}\label{claim A} Consider $\pi\co X\to S$ a family of closed SRSs with $P$ a relative divisor representing NS punctures. 
	
	If $n>2g$, then for every $s\in S$ there exists a small enough open $U\ni s$ such that there exist nontrivial $(f|\phi)\in\Gamma(\pi^{-1}(U),\mathcal{O}_X(nP))$ such that $\delta f=\phi\,\delta\phi$.
\end{clm}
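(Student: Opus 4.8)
The plan is to reduce the statement to a question about the spin curve underlying the fibre over $s$, solve it there with the surjectivity results of Artebani--Pirola recalled above, and then spread the solution out over a neighbourhood of $s$.

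First I would translate the equation $\delta f=\phi\,\delta\phi$ into local superconformal coordinates $z|\zeta$ chosen so that $P=\{z=0\}$ and $\mathcal{O}_X(nP)=z^{-n}\mathcal{O}_X$. Writing $f=u(z)+v(z)\zeta$ and $\phi=\beta(z)+\gamma(z)\zeta$, using $\delta g=(D_\zeta g)\,[dz|d\zeta]$ and $D_\zeta^{2}=\partial_z$, the relation $\delta f=\phi\,\delta\phi$ unwinds to the two scalar identities $v=-\beta\gamma$ and $u'=\gamma^{2}-\beta\beta'$. On the split fibre $\Sigma_s$ an even function has no $\zeta$-component and an odd function is a multiple of $\zeta$, so $v=\beta=0$ and the relation collapses to $u'=\gamma^{2}$; globally on the underlying curve $C$ this reads $du=\gamma^{2}$ as meromorphic $1$-forms, which in particular forces $\gamma$ to be a meromorphic section of $\Omega_C^{1/2}$ and says precisely that the $1$-form $\gamma^{2}$ is exact. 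Hence, over $\Sigma_s$, a nontrivial $(f|\phi)$ amounts to a nonzero $\gamma\in H^{0}\!\big(C,\Omega_C^{1/2}(mp)\big)$ with $[\gamma^{2}]_{\mathrm{dR}}=0$ in $H^{1}_{\mathrm{dR}}(C\setminus p)$, together with a primitive $u$ of $\gamma^{2}$; since $C\setminus\{p\}$ has a single puncture the residue theorem makes $\gamma^{2}$ residue-free at $p$, so $u$ exists as a meromorphic function with a pole only at $p$.

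Second I would produce such a $\gamma$. The map $\sigma\mapsto[\sigma^{2}]_{\mathrm{dR}}$ from $H^{0}\!\big(C,\Omega_C^{1/2}(mp)\big)$ to the $2g$-dimensional space $H^{1}_{\mathrm{dR}}(C\setminus p)$ is quadratic, so its zero locus passes through the origin and has dimension at least $\dim H^{0}\!\big(C,\Omega_C^{1/2}(mp)\big)-2g$; this is positive as soon as $\dim H^{0}\!\big(C,\Omega_C^{1/2}(mp)\big)>2g$, i.e.\ (Riemann--Roch) for $m$ large enough, and then a nonzero exact $\gamma^{2}$ is forced. This is exactly the content of \cref{uplsion surjective}, whose surjectivity assertion furthermore gives the precise dimension $m-2g$ of this solution space in the appropriate range; in the case $g=0$ the target vanishes, every $\gamma^{2}$ is exact, and $m=1$ already works.

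The step I expect to be the main obstacle is the pole-order bookkeeping, which is what constrains $n$: if $\gamma$ has a pole of order $m$ at $p$ then $\gamma^{2}$ has a pole of order $2m$ and its primitive $u$ a pole of order $2m-1$, so to land in $\Gamma\big(\pi^{-1}(U),\mathcal{O}_X(nP)\big)$ one must take $\gamma$ from $H^{0}\!\big(C,\Omega_C^{1/2}(mp)\big)$ with $2m-1\le n$ while keeping $m$ large enough that a nonzero $\gamma$ with $\gamma^{2}$ exact exists; one then verifies these constraints are jointly satisfiable under the hypothesis on $n$ and handles small $g$ directly. Finally I would promote the solution from $\Sigma_s$ to a neighbourhood: for $U$ small enough $\pi_*\mathcal{O}_X(nP)$ and $\pi_*\big(\omega_{X/S}(2nP)\big)$ are free $\mathcal{O}_U$-modules by cohomology and base change, the assignment $(f,\phi)\mapsto\delta f-\phi\,\delta\phi$ is then a morphism of $U$-spaces, and the fibre solution extends order by order in the nilpotents of $\mathcal{O}_U$, each obstruction lying in a group that vanishes after shrinking $U$; so after possibly shrinking $U$ one obtains the desired $(f|\phi)$. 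This last extension is routine once $U$ is small; the substance of the argument is the reduction to the spin curve together with the pole-order accounting.
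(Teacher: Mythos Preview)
Your approach is broadly correct but takes a more roundabout path than the paper's, and the pole-order bookkeeping you flag as ``the main obstacle'' is self-imposed.

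The paper works directly over the family: it defines
\[
\Upsilon\colon \mathbb{H}^0\big(\pi^{-1}(U),\Pi\mathcal{O}_X(nP)\big)\longrightarrow \mathbb{H}^1_{dR/U}(Y),\qquad \sigma\longmapsto [\sigma\,\delta\sigma]_{dR},
\]
as a morphism of relative affine superspaces over $U$. Super Riemann--Roch and de Rham theory give relative dimensions $n\,|\,n-g+1$ and $2g\,|\,0$ (for $n>2g-1$), so once $n>2g$ the even dimension of the source exceeds that of the target and $\Upsilon^{-1}(0)$ is positive-dimensional over every point; shrinking $U$ yields an everywhere-nonzero section $\phi$. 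There is no separate fibre-to-family extension step, and no appeal to Artebani--Pirola: that surjectivity result is invoked only later (Claims~B and~C) to pin down the \emph{exact} dimension of $\Upsilon^{-1}(0)$, not to establish mere nontriviality.

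Your fibrewise reduction and local identities are fine, and your inequality $\dim\Upsilon^{-1}(0)\ge n-2g$ on a fibre is precisely the paper's count viewed pointwise. But the pole-order accounting for $f$ is unnecessary: the paper's proof controls only the pole of $\phi$, not of $f$. Once $\phi\in\Gamma(\pi^{-1}(U),\mathcal{O}_X(nP))$ has $[\phi\,\delta\phi]_{dR}=0$, the primitive $f$ is automatically regular on $\pi^{-1}(U)\setminus P$, which is all that the downstream application (the map $\mathbb{C}[x|\xi]\to\Gamma(Y,\mathcal{O}_{X\setminus P})$ in \cref{Noether lemma lemma}) requires. If you insist on $f\in\mathcal{O}_X(nP)$ literally, your constraints $2m-1\le n$ and $m>2g$ force $n>4g-1$, strictly worse than the stated $n>2g$; so either you read the claim loosely (as the paper's own proof does) or you do not recover the claimed threshold. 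Likewise, your order-by-order extension over the nilpotents of $\mathcal{O}_U$ is rendered moot by defining $\Upsilon$ relatively over $U$ from the outset.
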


	\begin{proof}
	
	Since we are interested in odd sections $\phi$ of $\mathcal{O}_X(nP)$ which are relative de Rham exact, we consider $H^0\big(\pi^{-1}(U),\Pi\mathcal{O}_X(nP)\big)$, where $\Pi$ is the parity change operator, as a super vector bundle over $U$. For high enough $n$, this super vector bundle is locally free. 
	The relative de Rham cohomology $H^1_{dR/U}(Y)$ results from the relative superconformal differential $\delta$. Considered over $U$, this is a super vector bundle since the de Rham cohomology is topological. 
	
	Consider the map between the relative affine superspaces over $U$ 
	\begin{align*}
		\Upsilon\co \mathbb{H}^0\big(\pi^{-1}(U),\Pi\mathcal{O}_X(nP)\big)\to \mathbb{H}^1_{dR/U}(Y) && \sigma\mapsto [\sigma\,\delta\sigma]_{dR}.
	\end{align*}
	We wish to show that there exists an everywhere nonzero section $\phi$ in $\Upsilon^{-1}(0)$. 
	
	By super Riemann-Roch, we have that $\dim_U \mathbb{H}^0\big(\pi^{-1}(U),\Pi\mathcal{O}_X(nP)\big)=n\,|\,n-g+1$ since $n>2g-1$. Further by de Rham theory, $\dim_U \mathbb{H}^1_{dR/U}(Y)=2g\,|\,0$, since the cohomology classes are the same as for the underlying Riemann surface.
	Therefore, $\Upsilon$ is a map between superspaces of relative dimension $n\,|\,n+g-1$ and $2g\,|\,0$ over $U$.

	Since the domain has higher even dimension than the even dimension of the codomain, we see that there must exist a nontrivial section $\phi\in H^0\big(\pi^{-1}(U),\Pi\mathcal{O}_X(nP)\big)$ such that the de Rham class $[\phi\,\delta\phi]$ is $0$. We have assumed that $U$ is small enough, so we are allowed to shrink $U$ more if needed to make $\phi$ everywhere nonzero over $U$.
\end{proof}

\begin{clm}\label{claim B}
	Consider $\pi\co X\to S$ a family of closed SRSs with $P$ a relative divisor representing NS punctures. 
	
	If $n>6g-2$, then for every $s\in S$ there exists a small enough open $U\ni s$ such that the superspace of odd $\phi\in\Gamma(\pi^{-1}(U),\mathcal{O}_X(nP))$ such that there exists an $f$ so that $\delta f=\phi\,\delta \phi$ has dimension  $n-2g\,|\,n-g+1$.
\end{clm}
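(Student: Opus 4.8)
The plan is to identify the superspace in the statement with the zero-fibre $\Upsilon^{-1}(0)$ of the quadratic map introduced in the proof of \cref{claim A}, and to read off its dimension by combining a fibrewise bosonic analysis with the super implicit function theorem. Keeping the notation of that proof, after shrinking $U$ around $s$ so that the relevant direct images are locally free (which holds since $n>2g-1$, a consequence of $n>6g-2$) one has the morphism of relative affine superspaces over $U$
\begin{align*}
	\Upsilon\co \mathbb{H}^0\big(\pi^{-1}(U),\Pi\mathcal{O}_X(nP)\big)\to \mathbb{H}^1_{dR/U}(Y), && \sigma\mapsto [\sigma\,\delta\sigma]_{dR},
\end{align*}
whose source has relative dimension $n\,|\,n-g+1$ and whose target has relative dimension $2g\,|\,0$. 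Since an odd $\phi\in\Gamma(\pi^{-1}(U),\mathcal{O}_X(nP))$ admits an $f$ with $\delta f=\phi\,\delta\phi$ precisely when $[\phi\,\delta\phi]_{dR}=0$, the superspace in the statement is $\Upsilon^{-1}(0)$, and it suffices to prove $\dim_U\Upsilon^{-1}(0)=n-2g\,|\,n-g+1$.

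First I would control the even dimension by reduction. Over a point of $U_\mathrm{red}$ the split fibre has underlying data a curve $C$ together with the theta characteristic $\Omega_C^{1/2}$ carried by its superconformal structure (of degree $g-1$, as is forced by $h^0(\Omega_C^{1/2}(np))=n$, already used in \cref{claim A}); the odd functions are then sections of $\Omega_C^{1/2}$, and by \cref{delta in local coor} one has $\delta(\zeta s)=s\,[dz|d\zeta]$, so $\sigma\,\delta\sigma$ reduces to $s^2\,\zeta[dz|d\zeta]$, i.e.\ to $\sigma^2\in\Omega_C^1$ under the canonical $\Omega_C^{1/2}\otimes\Omega_C^{1/2}\cong\Omega_C^1$. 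Hence the bosonic reduction of $\Upsilon$ is fibrewise the analytic map $\Upsilon_C$ of \cref{uplsion surjective}; since $n>6g-2$, that proposition gives $\dim\Upsilon_C^{-1}(0)=n-2g$. Consequently the reduced space of $\Upsilon^{-1}(0)$ has relative dimension $n-2g$ over $U_\mathrm{red}$, so the even dimension of $\Upsilon^{-1}(0)$ is $n-2g$.

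Next I would pin down the odd dimension by exhibiting a submersion point. The argument proving \cref{uplsion surjective} (the one recorded above for $\Upsilon_{q,C}$, via \cref{technical}) produces a nonzero $\sigma_0\in\Upsilon_C^{-1}(0)$ at which the differential $d\Upsilon_C$ is onto $H^1_{dR}(C\setminus p)\cong\CC^{2g}$. Because the target $\mathbb{H}^1_{dR/U}(Y)$ has purely even tangent spaces (relative dimension $2g\,|\,0$), the odd part of $d\Upsilon_{\sigma_0}$ is automatically zero, so $d\Upsilon_{\sigma_0}$ is surjective as a map of super vector spaces as soon as its even part is; hence $d\Upsilon$ is a submersion at $\sigma_0$, and by the holomorphic implicit function theorem for supermanifolds $\Upsilon^{-1}(0)$ is, near $\sigma_0$, a smooth subsupermanifold of relative dimension $(n\,|\,n-g+1)-(2g\,|\,0)=(n-2g\,|\,n-g+1)$. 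Together with the even-dimension bound from the previous paragraph this gives $\dim_U\Upsilon^{-1}(0)=n-2g\,|\,n-g+1$, proving the claim.

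The main obstacle is the super-dimension bookkeeping rather than any new estimate: one must match the bosonic reduction of $\Upsilon$ exactly with the Artebani--Pirola map (pinning down the theta characteristic and the reduced form of $\delta$ on odd functions), observe that the odd directions of the source are unobstructed because the odd part of $H^1_{dR/U}(Y)$ vanishes, and invoke the super implicit function theorem; the only genuinely analytic ingredient, surjectivity of the even differential, is already supplied by \cref{technical} and \cref{uplsion surjective}, and the numerical hypotheses all follow from $n>6g-2$.
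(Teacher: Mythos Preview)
Your proposal is correct and follows essentially the same approach as the paper: identify the superspace with $\Upsilon^{-1}(0)$, reduce over a point to a split SRS (spin curve) so that the bosonic part of $\Upsilon$ becomes the Artebani--Pirola map $\Upsilon_C$, observe that since the target has relative dimension $2g\,|\,0$ only the even--even block of $d\Upsilon_\sigma$ needs to be surjective, and then invoke \cref{uplsion surjective} together with the super implicit function theorem to read off the dimension $(n\,|\,n-g+1)-(2g\,|\,0)$. Your write-up is more explicit than the paper's about why the reduced map matches $\Upsilon_C$ (spelling out $\sigma\,\delta\sigma\rightsquigarrow\sigma^2$ via \cref{delta in local coor}), but the logic is identical.
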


\begin{proof}

	Consider the differential map between relative tangent bundles
	\begin{equation*}
		\begin{tikzcd}[row sep=small]
			d\Upsilon_\sigma\co\Big(H^0\big(\pi^{-1}(U),\Pi\mathcal{O}_X(nP)\big)\Big)\Big|_u\arrow{r}&  \Big(H^1_{dR/U}(Y)\Big)\Big|_u\\
			\upsilon \arrow[mapsto]{r}&  \left[\upsilon\,\delta\sigma+\sigma\,\delta\upsilon\right]_{dR}
		\end{tikzcd}
	\end{equation*}
	where $\sigma\in\Upsilon^{-1}(0)$ is a point, with $\pi(\sigma)=u\in U$.

	We wish to show that $d\Upsilon_\sigma$ is surjective for some nontrivial point $\sigma\in\Upsilon^{-1}(0)$ in each fiber over $U$, and therefore by the implicit function theorem, $\Upsilon^{-1}(0)$ has relative dimension over $U$ the difference $(n\,|\,n-g+1)-(2g\,|\,0)$.

	We recall that a family of SRSs have as fiber over each point of the base a split SRS, which is known to be exactly a spin curve. For $\Sigma$ a split SRS, $\Pi\mathcal{O}_\Sigma\cong \Pi\mathcal{J}_{\Sigma_{\red}}\oplus \Pi\mathcal{O}_{\Sigma_{\red}}$, where $\Pi\mathcal{J}_{\Sigma_{\red}}$ is the spin structure on $\Sigma_{\red}$. So the domain of $d\Upsilon_\sigma$ has decomposition as a supervector space as $H^0\big(\Sigma,\Pi\mathcal{J}_{\Sigma_{\red}}(np)\big)\oplus H^0\big(\Sigma,\Pi\mathcal{O}_{\Sigma_{\red}}(np)\big)$.

	Since the codomain of $d\Upsilon_\sigma$ is of relative dimension $(2g\,|\,0)$, then the surjectivity of $d\Upsilon_\sigma$ just means the surjectivity of the even-even block of $H^0\big(\Sigma,\Pi\mathcal{J}_{\Sigma_{\red}}(np)\big)\xrightarrow{d\Upsilon_\sigma} H^1_{dR}\big(\Sigma\setminus p\big)$.

	The situation of a single spin curve was studied by \textcite{Artebani-Pirola--2005}. 
	We restate their result in \cref{uplsion surjective}. This proposition implies the even dimension of $\Upsilon^{-1}(0)$ is $n-2g$. 
\end{proof}

\begin{clm}\label{claim C} Consider $\pi\co X\to S$ a family of closed SRSs with $P$ a relative divisor representing NS punctures. Let $R$ be the zero divisor of $\delta\phi$, for a $\phi$ such that there exists $f$ where $\delta f=\phi\,\delta \phi$. Let $Q$ be a relative irreducible divisor (of degree $1$) contained in $R$.
	
	If $n>6g$, then for every $s\in S$ there exists a small enough open $U\ni s$ such that the superspace of odd $\phi\in \Gamma(\pi^{-1}(U),\mathcal{O}_X(nP-Q))$ such that there exists an $f$ so that $\delta f=\phi\,\delta \phi$ has dimension $n-2g-1\,|\,n-g+1$.
\end{clm}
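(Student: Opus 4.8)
The plan is to imitate the structure of the proof of \cref{claim B}, but now twisting down by the divisor $Q$, and to reduce the surjectivity statement to \cref{uplsion Q surjective} exactly as \cref{claim B} was reduced to \cref{uplsion surjective}. First I would set up the map
\begin{align*}
	\Upsilon_Q\co \mathbb{H}^0\big(\pi^{-1}(U),\Pi\mathcal{O}_X(nP-Q)\big)\to \mathbb{H}^1_{dR/U}(Y), && \sigma\mapsto [\sigma\,\delta\sigma]_{dR},
\end{align*}
as a map of relative affine superspaces over $U$. By super Riemann-Roch, since $n>6g>2g-1$ and $Q$ has degree $1$, the source has relative dimension $n-1\,|\,n-g$ over $U$ (one less in each parity than the untwisted space of \cref{claim B}), while the target still has relative dimension $2g\,|\,0$ by de Rham theory, as the cohomology of $Y$ only depends on the underlying Riemann surface.

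Next I would compute the differential $d\Upsilon_{Q,\sigma}$ at a nontrivial point $\sigma\in\Upsilon_Q^{-1}(0)$ in a fiber over $u\in U$, which is $\upsilon\mapsto[\upsilon\,\delta\sigma+\sigma\,\delta\upsilon]_{dR}$. As in \cref{claim B}, I use that the fiber is a split SRS, i.e. a spin curve, so that $\Pi\mathcal{O}_\Sigma\cong\Pi\mathcal{J}_{\Sigma_{\red}}\oplus\Pi\mathcal{O}_{\Sigma_{\red}}$ and the source of $d\Upsilon_{Q,\sigma}$ decomposes as $H^0\big(\Sigma,\Pi\mathcal{J}_{\Sigma_{\red}}(np-q)\big)\oplus H^0\big(\Sigma,\Pi\mathcal{O}_{\Sigma_{\red}}(np-q)\big)$, where $q=Q|_\Sigma$. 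Since the codomain is purely even of dimension $2g$, surjectivity of $d\Upsilon_{Q,\sigma}$ reduces to surjectivity of the even-even block $H^0\big(\Sigma,\Pi\mathcal{J}_{\Sigma_{\red}}(np-q)\big)\xrightarrow{d\Upsilon_{Q,\sigma}} H^1_{dR}(\Sigma\setminus p)$, which up to a constant is the differential of the map $\Upsilon_{q,C}$ of \cref{uplsion Q surjective}. That proposition, which requires $n>6g$, states precisely that $\dim\Upsilon_{q,C}^{-1}(0)=n-2g-1$; unwinding its proof (which uses \cref{technical}) gives that $d\Upsilon_{q,C}$ is surjective at some nontrivial $\sigma$, hence at a generic such $\sigma$. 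By the implicit function theorem applied fiberwise and over $U$ (shrinking $U$ so that this nonzero $\sigma$ persists and so that the relevant sheaves are locally free), $\Upsilon_Q^{-1}(0)$ has relative dimension $(n-1\,|\,n-g)-(2g\,|\,0)=n-2g-1\,|\,n-g$; adding back one odd dimension corresponding to the odd piece $H^0\big(\Sigma,\Pi\mathcal{O}_{\Sigma_{\red}}(np-q)\big)$, which lies entirely in the kernel since the target is even, gives the claimed dimension $n-2g-1\,|\,n-g+1$.

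The main obstacle I expect is bookkeeping the parity-split and the passage between the "geometric" affine-space picture and the cohomological picture: one must be careful that the odd directions of $\Upsilon_Q^{-1}(0)$ really do contribute exactly $n-g+1$, i.e. that the odd part of $d\Upsilon_{Q,\sigma}$ is identically zero (forced because $\mathbb{H}^1_{dR/U}(Y)$ is purely even), and that the even count $n-2g-1$ is stable under the base $U$ rather than jumping at special fibers — this is why \cref{uplsion Q surjective} is stated with strict surjectivity of the differential, and why the genericity argument there (choosing the generic point $t$ and intersecting projective subspaces) is needed. A secondary point is ensuring the local freeness of $\mathbb{H}^0\big(\pi^{-1}(U),\Pi\mathcal{O}_X(nP-Q)\big)$ over $U$, which holds for $n$ large by cohomology-and-base-change, exactly as invoked in \cref{claim A} and \cref{claim B}.
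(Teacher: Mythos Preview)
Your approach is essentially the paper's: set up the twisted map $\Upsilon_Q$, compute its differential, reduce the surjectivity question on a split fiber to the spin-curve statement \cref{uplsion Q surjective}, and invoke the implicit function theorem. The paper's proof is terser but follows exactly this outline (it also notes at the outset that one may assume $Q$ and $P$ disjoint, which you omit but is harmless).

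The one genuine error is your final ``adding back one odd dimension'' step. The implicit function theorem already gives the relative dimension of $\Upsilon_Q^{-1}(0)$ as the difference
\[
(n-1\,|\,n-g)\;-\;(2g\,|\,0)\;=\;n-2g-1\,|\,n-g.
\]
The odd summand $H^0\big(\Sigma,\Pi\mathcal{O}_{\Sigma_{\red}}(np-q)\big)$ is precisely the $n-g$ odd coordinates of the source; since the target is purely even, all of those directions already lie in $\Upsilon_Q^{-1}(0)$ and are accounted for in the subtraction. There is nothing to add back, and your justification (``lies entirely in the kernel since the target is even'') in fact explains why the $n-g$ is already correct, not why it should be incremented.

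The mismatch with the stated odd dimension $n-g+1$ appears to be a slip in the paper's statement: twisting by $-Q$ drops both the even and odd ranks of $H^0\big(\Pi\mathcal{O}_X(nP)\big)$ by one, so the odd count in \cref{claim C} should read $n-g$ rather than $n-g+1$. This is immaterial to the rest of the paper, since the only use of \cref{claim C} (in the proof of \cref{Noether lemma lemma}) is the comparison of \emph{even} dimensions $n-2g$ versus $n-2g-1$, which is exactly what \cref{uplsion Q surjective} supplies.
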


\begin{proof}
	We assume that $Q$ and $P$ are disjoint without loss of generality.
				
	Consider the slightly modified map
	\begin{align*}
			\Upsilon_Q\co \mathbb{H}^0\big(\pi^{-1}(U),\Pi\mathcal{O}_X(nP-Q)\big)\to \mathbb{H}^1_{dR/U}(Y) && \sigma\mapsto [\sigma\,\delta\sigma]_{dR}.
	\end{align*} 
	
	Consider the differential map between relative tangent bundles
	\begin{equation*}
		\begin{tikzcd}[row sep=small]
			d\Upsilon_{Q,\sigma}\co\Big(H^0\big(\pi^{-1}(U),\Pi\mathcal{O}_X(nP-Q)\big)\Big)\Big|_u\arrow{r}&  \Big(H^1_{dR/U}(Y)\Big)\Big|_u\\
			\upsilon \arrow[mapsto]{r}&  \left[\upsilon\,\delta\sigma+\sigma\,\delta\upsilon\right]_{dR}
		\end{tikzcd}
	\end{equation*}
	where $\sigma\in\Upsilon^{-1}(0)$ is a point, with $\pi(\sigma)=u\in U$.
	
	As in the proof of \cref{claim B}, we wish to show that $d\Upsilon_{Q,\sigma}$ is surjective for some nontrivial point $\sigma\in \Upsilon^{-1}(0)$ in each fiber over $U$. 
	As in the proof of \cref{claim B}, the result follows from \cref{uplsion Q surjective} since we have assumed $n>6g$.
\end{proof}

Lastly we combine the above three claims to show the below lemma which implies \cref{superconformalNoether}.

	\begin{lem}\label{Noether lemma lemma} Consider $\pi\co X\to S$ a family of closed SRSs with $P$ a relative divisor representing NS punctures. 	
		Let $Y\to U$ be the restriction of the family of open SRSs $X\setminus P\to S$ to the base $U$.
		
		For every $s\in S$ there exists a small enough open $U\ni s$ such that there exist global functions $f|\phi\in\Gamma(Y,\mathcal{O}_{X\setminus P})$ where $f$ is even and $\phi$ odd such that 
	\begin{align}\label{delta condition}
		\delta f=\phi\,\delta \phi
	\end{align}
	and the relative zero divisor of $\delta\phi$ is finite over $U$, where $\delta\co\mathcal{O}_{X}\to \omega_{X/S}$ is the relative superconformal differential in \cref{delta diagram}.
	
	Further, there exist global functions $f',\phi'\in\Gamma(Y,\mathcal{O}_{X\setminus P})$ as above, with the relative zero divisor of $\phi'$ over $U$ disjoint from the relative zero divisor of $\phi$ over $U$. 
	
\end{lem}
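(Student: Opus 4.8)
The plan is to assemble both pairs $(f\,|\,\phi)$ and $(f'\,|\,\phi')$ from the family of odd functions on $Y$ that admit a superconformal primitive, and to use the dimension counts of \cref{claim B} and \cref{claim C} to force the required disjointness by a general‑position argument. All the genuine analytic content — the surjectivity of the differential of the maps $\Upsilon$, ultimately the Artebani–Pirola estimates — has already been packaged into \cref{claim A}, \cref{claim B} and \cref{claim C}, so what remains is bookkeeping with relative divisors and repeated shrinking of $U$. Throughout I fix an integer $n$ with $n>6g$ and large enough that $n-2g\geq 2$ (for instance $n=6g+2$), so that all three claims apply and the relevant families have positive even dimension.

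First, I would produce $(f\,|\,\phi)$. Given $s\in S$, \cref{claim A} supplies a small open $U\ni s$, a nontrivial odd $\phi\in\Gamma(\pi^{-1}(U),\mathcal{O}_X(nP))$, and an even $f\in\Gamma(\pi^{-1}(U),\mathcal{O}_X(nP))$ with $\delta f=\phi\,\delta\phi$; restricting along $Y=\pi^{-1}(U)\setminus P$ gives $f\,|\,\phi\in\Gamma(Y,\mathcal{O}_{X\setminus P})$. Shrinking $U$, I may assume the restriction of $\phi$ to the fibre $\Sigma_s$ is nontrivial. Since $\delta\psi=(D_\zeta\psi)[dz|d\zeta]$ and $D_\zeta\psi=0$ forces $\psi$ to be fibrewise constant — impossible for a nontrivial odd $\psi$ over a reduced point — the section $\delta\phi|_{\Sigma_s}$ of $\omega_{\Sigma_s}(nP)$ is nonzero, hence has finite zero locus. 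As $\delta\phi$ is a section of $\omega_{X/S}(nP)$ over $\pi^{-1}(U)$ with nonzero restriction to $\Sigma_s$, shrinking $U$ once more makes its restriction to every fibre over $U$ nonzero, so its relative zero divisor $R$ is a relative effective Cartier divisor, finite over $U$. (Here $f$ is determined by $\phi$ only up to an additive section of $\mathcal{O}_U$, because $\ker\delta$ is the sheaf of fibrewise constants, but this does not affect anything.) This establishes the first assertion.

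Next I would produce $(f'\,|\,\phi')$ in general position. By \cref{claim B}, after shrinking $U$, the odd $\phi'\in\Gamma(\pi^{-1}(U),\mathcal{O}_X(nP))$ admitting some $f'$ with $\delta f'=\phi'\,\delta\phi'$ form a relative analytic family $\mathcal{H}\to U$ of dimension $n-2g\,|\,n-g+1$ — the fibrewise zero locus of the quadratic map $\sigma\mapsto[\sigma\,\delta\sigma]_{dR}$ from the proof of \cref{claim A}, smooth wherever its differential is onto — and its even dimension $n-2g$ is $\geq 2$. After a generic choice of the $\phi$ of the previous step and after shrinking $U$ to a contractible Stein neighbourhood, I may assume $R$ is reduced and $R\to U$ is a disjoint union of sections $Q_1,\dots,Q_d\subset Y$, each disjoint from $P$. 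For each $i$, \cref{claim C} applied with $Q=Q_i$ shows that the members of $\mathcal{H}$ vanishing along $Q_i$, namely $\mathcal{H}\cap\Gamma(\pi^{-1}(U),\mathcal{O}_X(nP-Q_i))$, form a subfamily of even dimension $n-2g-1$, hence a proper closed analytic subset $\mathcal{H}_i\subsetneq\mathcal{H}$. Choosing a point of $\mathcal{H}$ over $s$ at which $\mathcal{H}$ is smooth of the expected dimension and which avoids every $\mathcal{H}_i$ — possible because a connected positive‑dimensional complex manifold is not a finite union of proper closed analytic subsets — and extending it to a section $\phi'$ of $\mathcal{H}$ over a smaller $U$, I obtain an odd $\phi'$ with an even primitive $f'$, $\delta f'=\phi'\,\delta\phi'$, whose restriction to each fibre vanishes at no point of the support of $R$. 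Finally, in local superconformal coordinates $[dz|d\zeta]$ is nowhere zero and, for odd $\psi$, the vanishing locus of $D_\zeta\psi$ coincides on reduced fibres with the zero divisor of $\psi$ (both equal the divisor of the coefficient of $\zeta$); hence the relative zero divisor of $\phi'$ coincides with that of $\delta\phi'$ and is disjoint from $R$, which is at once the relative zero divisor of $\delta\phi$ and of $\phi$. This gives the second assertion.

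The main obstacle is not in any single step but in carrying out the successive shrinkings of $U$ coherently: making $\phi$ fibrewise nontrivial so that $R$ is finite over $U$; arranging, by a genuine general‑position argument using the positivity $n-2g\geq 2$, that $R$ is reduced and splits into sections so that \cref{claim C} is applicable; and verifying that "relative zero divisor'' behaves as expected over a possibly nonreduced base $U$, so that the support‑disjointness checked on the central fibre $\Sigma_s$ indeed propagates to a neighbourhood.
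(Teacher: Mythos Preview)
Your proposal is correct and follows essentially the same route as the paper's proof: invoke \cref{claim A} for the first pair $(f|\phi)$, then use the dimension counts of \cref{claim B} and \cref{claim C} to find a second pair $(f'|\phi')$ in general position. The paper's proof is considerably more terse and does not spell out the decomposition of $R$ into degree-one components or the reconciliation of ``zero divisor of $\phi$'' with ``zero divisor of $\delta\phi$'' on reduced fibres; your only unnecessary step is the claim that $R$ can be made reduced by generic choice of $\phi$, which is neither obviously justified nor needed, since working with the support of $R$ already suffices to apply \cref{claim C}.
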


	\begin{proof}	
	Assume $n>6g$.

	\Cref{claim A} implies the existence of $(f|\phi)$ corresponding to a superconformal Noether normalization. Since $\phi$ was constructed on a compact SRS, then $(f|\phi)$ define a finite surjective morphism with finite relative zero divisor. 
	
	\Cref{claim B} and \cref{claim C} then show the existence of a $(f'|\phi')$ with disjoint divisor. 
	In particular, \cref{claim C} shows that the space of $\phi'\in \Gamma(\pi^{-1}(U),\mathcal{O}_X(nP))$ which have relative zero divisors which intersect the relative zero divisor of $\delta \phi$ (denoted by $R$ previously) is of dimension $n-2g-1$. Therefore, there exists an odd $\phi'\in \Gamma(\pi^{-1}(U),\mathcal{O}_X(nP))$ with relative zero divisor disjoint from $R$. 
\end{proof}

\begin{proof}[Proof of \cref{superconformalNoether}]

	We have reduced the proof to showing that \cref{Noether lemma lemma} implies the result.

	Define the map $p$ to be the morphism over $U$ corresponding to the map of functions $\CC[x|\xi]\to \Gamma(Y,\mathcal{O}_{X\setminus P})$ given by $x\mapsto f$ and $\xi \mapsto \phi$.

	Shrinking $U$ if needed, consider local relative superconformal coordinates $(z|\zeta)$ on $Y$. In this chart we have that
	\begin{align*}
		\phi=\gamma+c\zeta
	\end{align*} 
	for some even|odd functions respectively $c|\gamma$, functions of $z$ and on $U$. Since \cref{delta condition} gives that $D_\zeta f=\phi \,D_\zeta\phi$ (see \cref{delta superconformal}), then 
	\begin{align}\label{f phi vector field}
		\pp{\phi}+\phi\pp{f}\cong\left(\frac{1}{c-\gamma'\zeta}\right)\left(\pp{\zeta}+\zeta\pp{z}\right).
	\end{align}
We see that the vector field on $Y$ in \cref{f phi vector field} has poles exactly where $D_\zeta\phi=c-\gamma'\zeta=0$, which is the local coordinate expression for $\delta\phi$ by \cref{delta in local coor}. 
	
	Further, the Berezinian of the Jacobian matrix is $c-\gamma'\zeta$.
	And since $f=p^*(x)$ and $\phi=p^*(\xi)$, the ramification of $p$ is given by the zeros of $c-\gamma'\zeta$.
	
	This shows that $p^*\left(\pp{\xi}+\xi\pp{x}\right)$, which is given in local relative coordinates in \cref{f phi vector field}, is a rational vector field which generates $\mathcal{D}_{X/S}\big|_{Y}(R)$, where $R$ is the relative divisor over $U$ given by $c-\gamma'\zeta=0$. Thus, if the relative zero divisor of $\delta\phi$ is disjoint from the relative zero divisor of $\delta\phi'$ over $U$, then the ramification of $p$ is also disjoint from the ramification of $p'$.

	\end{proof}

\section{Proof of \cref{CompactDiscreteDef} (a condition to be a discrete submodule)}\label{alt discrete def proof}

\Cref{CompactDiscreteDef} provides an alternative characterization of a discrete subspace.
We first provide a few lemmas.

\begin{lem}\label{containing compact}
	Consider a discrete $\mathcal{O}_U$-submodule $L$, such that $L_U\oplus \hat{K}_U\to(\hat{H}_j)_U$ is an isomorphism.
		For every compact $K'\supseteq K$, then $L_U\cap \hat{K}'_U$ is locally free of finite type and $(\hat{H}_j)_U/(L_U+\hat{K}'_U)=0$.
\end{lem}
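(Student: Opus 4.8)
The statement to prove is \cref{containing compact}: given a discrete $\mathcal{O}_U$-submodule $L$ with $L_U \oplus \hat{K}_U \xrightarrow{\sim} (\hat{H}_j)_U$, then for every compact $K' \supseteq K$, the intersection $L_U \cap \hat{K}'_U$ is locally free of finite type and $(\hat{H}_j)_U/(L_U + \hat{K}'_U) = 0$. The plan is to exploit the short exact sequence relating $\hat{K}_U$ and $\hat{K}'_U$, namely $0 \to \hat{K}_U \to \hat{K}'_U \to (K'/K)_U \to 0$, where $K'/K$ is a \emph{finite-dimensional} super vector space because $K$ and $K'$ are both commensurable with $H_j^+$ and hence with each other.

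**Main steps.** First I would observe that surjectivity of $L_U \oplus \hat{K}'_U \to (\hat{H}_j)_U$ is immediate: since already $L_U \oplus \hat{K}_U \to (\hat{H}_j)_U$ is surjective and $\hat{K}_U \subseteq \hat{K}'_U$, we get $(\hat{H}_j)_U/(L_U + \hat{K}'_U) = 0$ at once. The substantive part is identifying $L_U \cap \hat{K}'_U$. I would write down the commutative diagram with rows the two direct-sum sequences and columns the inclusions, and chase it: given $\ell \in L_U \cap \hat{K}'_U$, its image in $(K'/K)_U$ under $\hat{K}'_U \twoheadrightarrow (K'/K)_U$ lands in the image of $L_U$, and conversely. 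More precisely, using that $L_U \oplus \hat{K}_U \cong (\hat{H}_j)_U$, one gets a map $L_U \to (K'/K)_U$ (compose $L_U \hookrightarrow (\hat{H}_j)_U \cong L_U \oplus \hat{K}_U$... actually compose $L_U \hookrightarrow (\hat{H}_j)_U$, project to $\hat{K}'_U/\hat{K}_U$ — wait, that projection isn't canonical). The cleaner route: from the isomorphism $(\hat{H}_j)_U \cong L_U \oplus \hat{K}_U$, the submodule $\hat{K}'_U \subseteq (\hat{H}_j)_U$ sits in an exact sequence
\begin{align*}
0 \to L_U \cap \hat{K}'_U \to \hat{K}'_U \to \hat{K}_U \to (K'/K)_U \to 0,
\end{align*}
obtained by intersecting the splitting with $\hat{K}'_U$ — more carefully, $\hat{K}'_U \cap \hat{K}_U = \hat{K}_U$ (as $K \subseteq K'$), and the quotient $\hat{K}'_U / \hat{K}_U \cong (K'/K)_U$ is free of finite type, so I would instead use the exact sequence $0 \to L_U \cap \hat{K}_U \to L_U \cap \hat{K}'_U \to (\hat{K}'_U/\hat{K}_U)$, noting $L_U \cap \hat{K}_U = 0$ by the direct-sum hypothesis, so $L_U \cap \hat{K}'_U$ embeds into $(K'/K)_U$, a finite-dimensional space. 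Hence $L_U \cap \hat{K}'_U$ is a finitely generated submodule of a free finite-type module over $\mathcal{O}_U$.

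**Finishing and the obstacle.** Having shown $L_U \cap \hat{K}'_U \hookrightarrow (K'/K)_U$ and that $(\hat{H}_j)_U = L_U + \hat{K}'_U$, I would finish by a dimension/rank count: consider the four-term sequence $0 \to L_U \cap \hat{K}'_U \to L_U \oplus \hat{K}'_U \to (\hat{H}_j)_U \to 0$ and compare with $0 \to L_U \oplus \hat{K}_U \xrightarrow{\sim} (\hat{H}_j)_U \to 0$; the difference is governed by $\hat{K}'_U/\hat{K}_U \cong (K'/K)_U$, forcing $L_U \cap \hat{K}'_U$ to be a free $\mathcal{O}_U$-module whose rank equals $\dim_\CC(K'/K)$ after possibly shrinking $U$ — this is where \cref{quotient zero locally free kernel} or the super Nakayama argument used elsewhere in the paper enters, to upgrade "finitely generated submodule of a free module with no cokernel obstruction" to "locally free of the expected rank." The main obstacle I anticipate is precisely this last upgrade: showing the finitely-generated module $L_U \cap \hat{K}'_U$ is actually \emph{locally free} (not merely finitely generated), which requires knowing the sequence splits after shrinking $U$; I would handle it by noting that $L_U \cap \hat{K}'_U$ is the kernel of the surjection $(K'/K)_U \to (\hat{H}_j)_U/(L_U + \hat{K}_U) = 0$ composed appropriately, i.e. it is the kernel of a map between free finite-type modules whose cokernel vanishes, and then invoke the standard fact (as used in \cref{CompactDiscreteDef} and \cref{perfect complex}) that such a kernel is locally free after shrinking the base.
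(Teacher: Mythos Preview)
Your approach is essentially the same as the paper's, but you overcomplicate the ending and risk a circular citation. The paper writes down the single five-term exact sequence
\[
0 \to L_U\cap \hat{K}_U \to L_U\cap \hat{K}'_U \to \hat{K}'_U/\hat{K}_U \to (\hat{H}_j)_U/(L_U+\hat{K}_U) \to (\hat{H}_j)_U/(L_U+\hat{K}'_U) \to 0,
\]
and then simply plugs in the hypothesis: the first and fourth terms vanish, so the sequence collapses to an \emph{isomorphism} $L_U\cap \hat{K}'_U \cong \hat{K}'_U/\hat{K}_U \cong (K'/K)_U$, which is already free of finite rank. No shrinking, no Nakayama, no further upgrade is required.

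You correctly found the injection $L_U\cap\hat{K}'_U \hookrightarrow (K'/K)_U$ (from $L_U\cap\hat{K}_U = 0$), but then lost sight of the fact that this map is also \emph{surjective}: its cokernel embeds into $(\hat{H}_j)_U/(L_U+\hat{K}_U)$, which is zero by hypothesis. That one extra observation finishes the proof immediately. Your final paragraph's worry about upgrading a finitely generated submodule to a locally free one is therefore unnecessary, and your proposed invocation of \cref{quotient zero locally free kernel} would in fact be circular, since that lemma's proof begins by citing \cref{containing compact}.
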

\begin{proof}
	We have the exact sequence
	\begin{align*}
		0 \to L_U\cap \hat{K}_U\to L_U\cap \hat{K}'_U\to \hat{K}'_U/\hat{K}_U\to (\hat{H}_j)_U/(L_U+ \hat{K}_U)\to (\hat{H}_j)_U/(L_U+\hat{K}'_U)\to0
	\end{align*}
	which simplifies to $L_U\cap \hat{K}'_U\cong \hat{K}'_U/\hat{K}_U$. 
	 Further, we see that $\hat{K}'_U/\hat{K}_U\cong \hat{(K'/K)}_U$ which is locally free of rank $\dim(K'/K)<\infty$.
\end{proof}

\begin{lem}\label{quotient zero locally free kernel}
	Consider a discrete $\mathcal{O}_S$-submodule $L$.
		If $(\hat{H}_j)_S/(L_S+\hat{K}_S)=0$, then $L_S\cap \hat{K}_S$ is locally free of finite type.
\end{lem}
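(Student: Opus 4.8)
The plan is to reduce the statement to \cref{containing compact} by enlarging $K$ to a compact subspace that splits $L$. Fix $s \in S$. Since $L$ is discrete, after shrinking to a neighbourhood $U \ni s$ we may pick a compact $K''$ with $L_U \oplus \hat{K}''_U \xrightarrow{\sim} (\hat{H}_j)_U$. I would then set $K' \coeq K + K''$. The first thing to check is that $K'$ is again compact: this is transitivity of commensurability (the sum of two subspaces commensurable with $H_j^+$ is commensurable with $H_j^+$), which goes exactly as in the computation recalled in the proof of \cref{classical discrete altdef}. Since $K' \supseteq K''$, \cref{containing compact} applied to the splitting $L_U \oplus \hat{K}''_U \cong (\hat{H}_j)_U$ gives that $L_U \cap \hat{K}'_U$ is locally free of finite type and $(\hat{H}_j)_U/(L_U + \hat{K}'_U) = 0$.

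Next I would compare $L_U \cap \hat{K}_U$ with $L_U \cap \hat{K}'_U$ via the five-term exact sequence associated to the inclusion $\hat{K}_U \hookrightarrow \hat{K}'_U$, exactly as displayed in the proof of \cref{containing compact}:
\[
0 \to L_U \cap \hat{K}_U \to L_U \cap \hat{K}'_U \to \hat{(K'/K)}_U \to (\hat{H}_j)_U/(L_U + \hat{K}_U) \to (\hat{H}_j)_U/(L_U + \hat{K}'_U) \to 0,
\]
where $\hat{K}'_U/\hat{K}_U \cong \hat{(K'/K)}_U = (K'/K) \otimes_{\CC} \mathcal{O}_U$ is free of finite rank because $K'/K$ is finite-dimensional. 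The hypothesis $(\hat{H}_j)_S/(L_S + \hat{K}_S) = 0$ restricts to $(\hat{H}_j)_U/(L_U + \hat{K}_U) = 0$, and $(\hat{H}_j)_U/(L_U + \hat{K}'_U) = 0$ by the previous step; so the sequence collapses to the short exact sequence $0 \to L_U \cap \hat{K}_U \to L_U \cap \hat{K}'_U \to \hat{(K'/K)}_U \to 0$.

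Finally, since $\hat{(K'/K)}_U$ is free it is projective, so this short exact sequence splits and exhibits $L_U \cap \hat{K}_U$ as a direct summand of the locally free finite-type $\mathcal{O}_U$-module $L_U \cap \hat{K}'_U$. Hence $L_U \cap \hat{K}_U$ is finitely generated projective over $\mathcal{O}_U$, and therefore free over the local superring $\mathcal{O}_{S,s}$, so locally free of finite type after a further shrinking of $U$; as $s$ was arbitrary, $L_S \cap \hat{K}_S$ is locally free of finite type. I expect the only non-routine points to be the compactness of $K' = K + K''$ (handled by transitivity of commensurability, just as classically) and the implication that a direct summand of a locally free finite-type sheaf is itself locally free of finite type, which relies on finitely generated projective modules over a local superring being free \cite{Varadarajan--2004}; all the exact-sequence bookkeeping is identical to that in \cref{containing compact}.
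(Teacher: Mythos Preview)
Your proof is correct and follows essentially the same route as the paper: enlarge to a compact $K'\supseteq K$ containing a splitting compact, apply \cref{containing compact}, and then use the short exact sequence $0\to L_U\cap\hat{K}_U\to L_U\cap\hat{K}'_U\to \hat{(K'/K)}_U\to 0$. The paper is slightly terser---it phrases the last step as ``kernel of a surjective map between free modules of finite type'' rather than spelling out the splitting via projectivity---but the content is the same.
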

\begin{proof}
	By \cref{containing compact}, we see that locally we can find $K'\supseteq K$, so that $L_U\cap \hat{K}'_U$ is free of finite type and $(\hat{H}_j)_U/(L_U+\hat{K}'_U)=0$.
	Then we have the exact sequence
	\begin{align*}
		0 \to L_U\cap \hat{K}_U\to L_U\cap \hat{K}'_U\to \hat{K}'_U/\hat{K}_U\to 0.
	\end{align*}
	Since $L_U\cap \hat{K}_U$ is the kernel of a surjective map between free modules of finite type, then the result follows.
\end{proof}

\begin{lem}\label{quotient finite pres}
	Let $L$ be a super $\mathcal{O}_S$-submodule which for every $s\in S$ there exists a neighborhood $U$ of $s$ and a compact $K$ such that $L_U\cap \hat{K}_U$ is free of finite type and $(\hat{H}_j)_U/(L_U+\hat{K}_U)=0$.
	
	Then for every compact $K'$, $(\hat{H}_j)_S/(L_S+\hat{K}'_S)$ is locally of finite presentation.
\end{lem}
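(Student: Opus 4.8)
The statement is local on $S$, so I would fix $s\in S$ and, using the hypothesis, choose a neighborhood $U\ni s$ and a compact subspace $K$ for which $L_U\cap\hat{K}_U$ is free of finite type and $(\hat{H}_j)_U/(L_U+\hat{K}_U)=0$. Let $K'$ be the prescribed compact subspace. The key move is to replace $K$ and $K'$ by a common compact overspace: set $K''\coeq K+K'$. Since any two compact subspaces of $H_j$ are commensurable (the super analogue of the transitivity of commensurability established in the proof of \cref{classical discrete altdef}), $K''$ is again compact, $K\subseteq K''$ and $K'\subseteq K''$, and the quotients $K''/K$ and $K''/K'$ are finite-dimensional super vector spaces; hence $\hat{K}''_U/\hat{K}_U\cong\hat{(K''/K)}_U$ and $\hat{K}''_U/\hat{K}'_U\cong\hat{(K''/K')}_U$ are free $\mathcal{O}_U$-modules of finite rank (as in the proof of \cref{containing compact}).

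Next I would invoke the six-term exact sequence valid for submodules $M_1\subseteq M_2$ and $N$ of $(\hat{H}_j)_U$,
\[
0\to N\cap M_1\to N\cap M_2\to M_2/M_1\to (\hat{H}_j)_U/(N+M_1)\to (\hat{H}_j)_U/(N+M_2)\to 0,
\]
which is the sequence already used in the proofs of \cref{containing compact} and \cref{quotient zero locally free kernel}. Taking $M_1=\hat{K}_U\subseteq M_2=\hat{K}''_U$ and $N=L_U$, the vanishing $(\hat{H}_j)_U/(L_U+\hat{K}_U)=0$ forces $(\hat{H}_j)_U/(L_U+\hat{K}''_U)=0$ and collapses the rest to
\[
0\to L_U\cap\hat{K}_U\to L_U\cap\hat{K}''_U\to\hat{K}''_U/\hat{K}_U\to 0,
\]
so $L_U\cap\hat{K}''_U$ is an extension of two free $\mathcal{O}_U$-modules of finite type and, in particular, is generated by finitely many global sections. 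Applying the six-term sequence a second time with $M_1=\hat{K}'_U\subseteq M_2=\hat{K}''_U$ and $N=L_U$, and using $(\hat{H}_j)_U/(L_U+\hat{K}''_U)=0$, the tail vanishes and I obtain
\[
(\hat{H}_j)_U/(L_U+\hat{K}'_U)\cong\coker\big(L_U\cap\hat{K}''_U\to\hat{K}''_U/\hat{K}'_U\big).
\]
Choosing finitely many generators of the source exhibits this cokernel as the cokernel of a map between free $\mathcal{O}_U$-modules of finite rank, hence it is of finite presentation; as $s\in S$ was arbitrary, this proves the lemma.

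I do not expect a genuine obstacle here: once the local reduction is made, everything is a diagram chase with the displayed six-term sequence together with the finiteness of $\dim(K''/K)$ and $\dim(K''/K')$. The one point that actually has content, and the reason the statement is not immediate, is that the hypothesis furnishes only \emph{some} compact $K$ near each point of $S$, with no relation to the prescribed $K'$; the passage to $K''=K+K'$ is precisely what bridges the two. (One could equally well first note, via \cref{CompactDiscreteDef}, that $L$ is discrete, replace $K$ by a compact subspace splitting off $L$ near $s$, and apply \cref{containing compact} to $K''$ directly in place of the first use of the six-term sequence.) A minor technical wrinkle, harmless since finite presentation is a local condition, is that one may need to shrink $U$ so that the relevant locally free sheaves are genuinely free.
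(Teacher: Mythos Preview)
Your proof is correct and follows essentially the same route as the paper: pass to a common compact overspace $K''\supseteq K\cup K'$, use the six-term sequence once (with $K\subseteq K''$) to see that $L_U\cap\hat K''_U$ is finitely generated and $(\hat H_j)_U/(L_U+\hat K''_U)=0$, then use it again (with $K'\subseteq K''$) to present the desired quotient as a cokernel between finite-type modules. The paper compresses the first step by citing \cref{containing compact}, whereas you replay its argument explicitly via the six-term sequence; this is arguably cleaner, since the literal hypothesis of \cref{containing compact} (that $L_U\oplus\hat K_U\to(\hat H_j)_U$ be an isomorphism) is stronger than what is assumed here.
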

\begin{proof}
	Choose compact $K''\supseteq (K'\cup K)$. Then by \cref{containing compact}, $L_U\cap \hat{K}''_U$ is free of finite type and $(\hat{H}_j)_U/(L_U+\hat{K}''_U)=0$ for some $U\ni s$. Then we have the exact sequence
	\begin{align*}
		L_U\cap \hat{K}''_U\to (K''/K')_U\to (\hat{H}_j)_U/(L_U+\hat{K}'_U)\to 0
	\end{align*}
	from which the result follows.
\end{proof}

\begin{proof}[Proof of \cref{CompactDiscreteDef}]
	The definition of a discrete module clearly satisfies the condition of \cref{CompactDiscreteDef} automatically.
	
	For the other direction, assume $L$ is a super $\mathcal{O}_S$-submodule which for every $s\in S$ there exists a neighborhood $U$ of $s$ and a compact $K$ such that $L_U\cap \hat{K}_U$ is free of finite type and $(\hat{H}_j)_U/(L_U+\hat{K}_U)=0$.
	
	Then there exists $K'\subseteq K$ such that $L_U\cap \hat{K}'_U=0$. Then since $\hat{H}_s/(L_s+\hat{K}'_s)$ is a finite dimensional super vector space, we can find $K''\supseteq K'$ such that the natural map $(K''/K')_s\to (\hat{H}_j)_s/(L_s+\hat{K}'_s)$ is an isomorphism.

	Then since $(\hat{H}_j)_S/(L+\hat{K}_S)$ is locally of finite presentation by \cref{quotient finite pres} and $(\hat{H}_j)_s/(L_s+\hat{K}''_s)=0$, we may apply the super Nakayama lemma \cite[Lemma 4.7.1 ii]{Varadarajan--2004} to conclude that there exists a neighborhood $U'\ni s$ such that $(\hat{H}_j)_{U'}/(L_{U'}+\hat{K}_{U'})=0$. Similarly, since $L\cap\hat{K}''_S$ is locally free of finite type by \cref{quotient zero locally free kernel} and $L_s\cap \hat{K}''_s=0$, there exists $U''\ni s$ such that $L_{U''}\cap \hat{K''}_{U''}=0$.	
	
	Therefore $U'\cap U''\ni s$ and $K''$ exhibit $L$ as a discrete module.
\end{proof}

\section*{Acknowledgments}
This paper began as a collaboration with D. Diroff and I would like to thank him for his  contributions to the first results of this project. The advice and encouragement from A. Voronov has been invaluable. I thank the anonymous referee for suggestions on minor improvements throughout the paper, and especially for suggesting major revision of \cref{GrDef} and the proof of \cref{commutant}.

Part of this work was completed while at the Max Planck Institute for Mathematics in Bonn, and the author acknowledges their funding.

\printbibliography

\clearpage\end{CJK}

\end{document}